\newif\iflong
\newif\ifshort
\newtheorem{claim}{Claim}
\newtheorem{theorem}{Theorem}[section]
\newtheorem{lemma}[theorem]{Lemma}
\newtheorem{corollary}[theorem]{Corollary}
\newtheorem{assumption}[theorem]{Assumption}
\newtheorem{definition}[theorem]{Definition}
\newtheorem{observation}[theorem]{Observation}
\newtheorem{rem}[theorem]{Remark}
\newcommand{\app}{$\mathrm{(\spadesuit)}$}
\newcommand{\appno}{$\mathrm{\spadesuit}$}
\newcommand{\paramproblem}[3]{\noindent {\sc #1}
\\
{\bf Given:} #2\\
%{\bf Parameter:} #3\\
{\bf Question:} #3}
\newcommand{\Oh}{{\mathcal O}}
\newcommand{\nat}{\mathbb{N}}
\newcommand{\XP}{\mbox{\sf XP}}
\newcommand{\Pol}{\mbox{\sf P}}
\newcommand{\NP}{\mbox{{\sf NP}}}
\newcommand{\FPT}{\mbox{{\sf FPT}}}
\newcommand{\W}{\mbox{{\sf W}}}
\def\eg{{\em e.g.}}
\def\ie{{\em i.e.}}
\def\etal{{\em et al.}}
\newcommand{\Col}{\chi}
\newcommand{\PPP}{\mathcal{P}}
\newcommand{\VVV}{\mathcal{V}}
\newcommand{\MMM}{\mathcal{M}}
\newcommand{\DDD}{\mathcal{D}}
\newcommand{\TTT}{\mathcal{T}}
\newcommand{\RRR}{\mathcal{R}}
\newcommand{\SSS}{\mathcal{S}}
\newcommand{\WWW}{\mathcal{W}}
\newcommand{\AND}{{\sc and}}
\newcommand{\OR}{{\sc or}}
\newcommand{\mor}{{\sc Colored Path}}
\newcommand{\cmor}{{\sc Colored Path-Con}}
\newcommand{\bcmor}{{\sc Bounded-length Colored Path-Con}}
\newcommand{\bicmor}{{\sc Bounded-intersection Colored Path-Con}}
\newcommand{\gmor}{{\sc Obstacle Removal}}
\newcommand{\gcmor}{{\sc Connected Obstacle Removal}}
\date{}
\title{How to navigate through obstacles?}
\title{How to navigate through obstacles?}
\author{Eduard Eiben\thanks{Algorithms and Complexity Group, TU Wien, Austria \& Department of Informatics, University of Bergen, Bergen, Norway.  
	Email: {\tt eduard.eiben@uib.no}} \and Iyad Kanj\thanks{School of Computing, DePaul University, Chicago, USA. Email: {\tt ikanj@cs.depaul.edu}}}
\begin{document}

\maketitle

\begin{abstract}
Given a set of obstacles and two points in the plane, is there a path between the two points that does not cross more than $k$ different obstacles?
\iflong Equivalently, can we remove $k$ obstacles so that there is an obstacle-free path between the two designated points? \fi
This is a fundamental problem that has undergone a tremendous amount of work by researchers in various areas, including computational geometry, graph theory, wireless computing, and motion planning.
It is known to be \NP-hard, even when the obstacles are very simple geometric shapes (\eg, unit-length line segments). The problem can be formulated and generalized into the following graph problem: Given a planar graph $G$ whose vertices are colored by color sets, two designated vertices $s, t \in V(G)$, and $k \in \nat$, is there an $s$-$t$ path in $G$ that uses at most $k$ colors? If each obstacle is connected, the resulting graph  satisfies the color-connectivity property, namely that each color induces a connected subgraph.

We study the complexity and design algorithms for the above graph problem with an eye on its geometric applications.
\iflong We prove a set of hardness results, among which a result showing that the color-connectivity property is crucial for any hope for fixed-parameter tractable (\FPT) algorithms (even for various restrictions and parameterizations of the problem), as without it, the problem is \W[SAT]-hard parameterized by $k$. Previous results only implied that the problem is \W[2]-hard. A corollary of the aforementioned result is that, unless \W[2] $=$ \FPT{}, the problem cannot be approximated in \FPT{} time to within a factor that is a function of $k$. \fi
\ifshort We prove a set of hardness results, among which a result showing that the color-connectivity property is crucial for any hope for fixed-parameter tractable (\FPT) algorithms, as without it, the problem is \W[SAT]-hard parameterized by $k$. Previous results only implied that the problem is \W[2]-hard. A corollary of this result is that, unless \W[2] $=$ \FPT{}, the problem cannot be approximated in \FPT{} time to within a factor that is a function of $k$. \fi~By describing a generic plane embedding of the graph instances, we show that our hardness results translate to the geometric instances of the problem.

\ifshort
We then focus on graphs satisfying the color-connectivity property. By exploiting the planarity of the graph and the connectivity of the colors, we develop topological results that allow us to prove that, for any vertex $v$, there exists a set of paths whose cardinality is upper bounded by a function of $k$, that ``represents'' the valid $s$-$t$ paths containing subsets of colors from $v$. We employ these structural results to design an \FPT{} algorithm for the problem parameterized by both $k$ and the treewidth of the graph, and extend this result further to obtain an \FPT{} algorithm for the parameterization by both $k$ and the length of the path. The latter result generalizes and explains previous \FPT{} results for various obstacle shapes, such as unit disks and fat regions.\fi

\iflong
We then focus on graphs satisfying the color-connectivity property. By exploiting the planarity of the graph and the connectivity of the colors, we develop topological results that reveal rich structural properties of the problem. These results allow us to prove that, for any vertex $v$ in the graph, there exists a set of paths whose cardinality is upper bounded by some function of $k$, that ``represents'' the valid $s$-$t$ paths containing subsets of colors from $v$. We employ these structural results to design an \FPT{} algorithm for the problem parameterized by both $k$ and the treewidth of the graph, and extend this result further to obtain an \FPT{} algorithm for the parameterization by both $k$ and the length of the path. The latter result directly implies and explains previous \FPT{} results for various obstacle shapes, such as unit disks and fat regions.\fi
 \end{abstract}
 
{\bf Keywords: parameterized complexity and algorithms; motion planning; barrier coverage; planar graphs; colored path}
\pagenumbering{roman}
\pagenumbering{arabic}

\section{Introduction}
\label{sec:intro}
We consider the following problem: Given a set of obstacles and two designated points in the plane, is there a path between the two points that does not cross more than $k$ obstacles?
Equivalently, can we remove $k$ obstacles so that there is an obstacle-free path between the two designated points?
We refer to this problem as \gmor{}, and to its restriction to instances in which each obstacle is connected as \gcmor{}.

By considering the auxiliary plane graph that is the dual of the plane subdivision determined by the obstacles\footnote{We assume that the regions formed by the obstacles can be computed in polynomial time. \iflong We do not assume that the obstacles contain their interiors.\fi If the intersection of two obstacles is not a 2-D region, we can thicken the borders of the obstacles without changing the sets of obstacles they intersect, so that their intersection becomes a 2-D region.}, \gmor{} was formulated and generalized into the following graph problem, referred to as \mor{} (see Figure~\ref{fig:tree} for illustrations):

\paramproblem{\mor} {A planar graph $G$; a set of colors $C$; $\chi: V \longrightarrow 2^{C}$; two designated vertices $s, t \in V(G)$; and $k \in \nat$}{Does there exist an $s$-$t$ path in $G$ that uses at most $k$ colors?} \\

Denote by \cmor{} the restriction of \mor{} to instances in which each color induces a connected subgraph of $G$.

\begin{figure}[htbp]

 \begin{subfigure}{.48\textwidth}
 \begin{center}
\includegraphics[width=6cm]{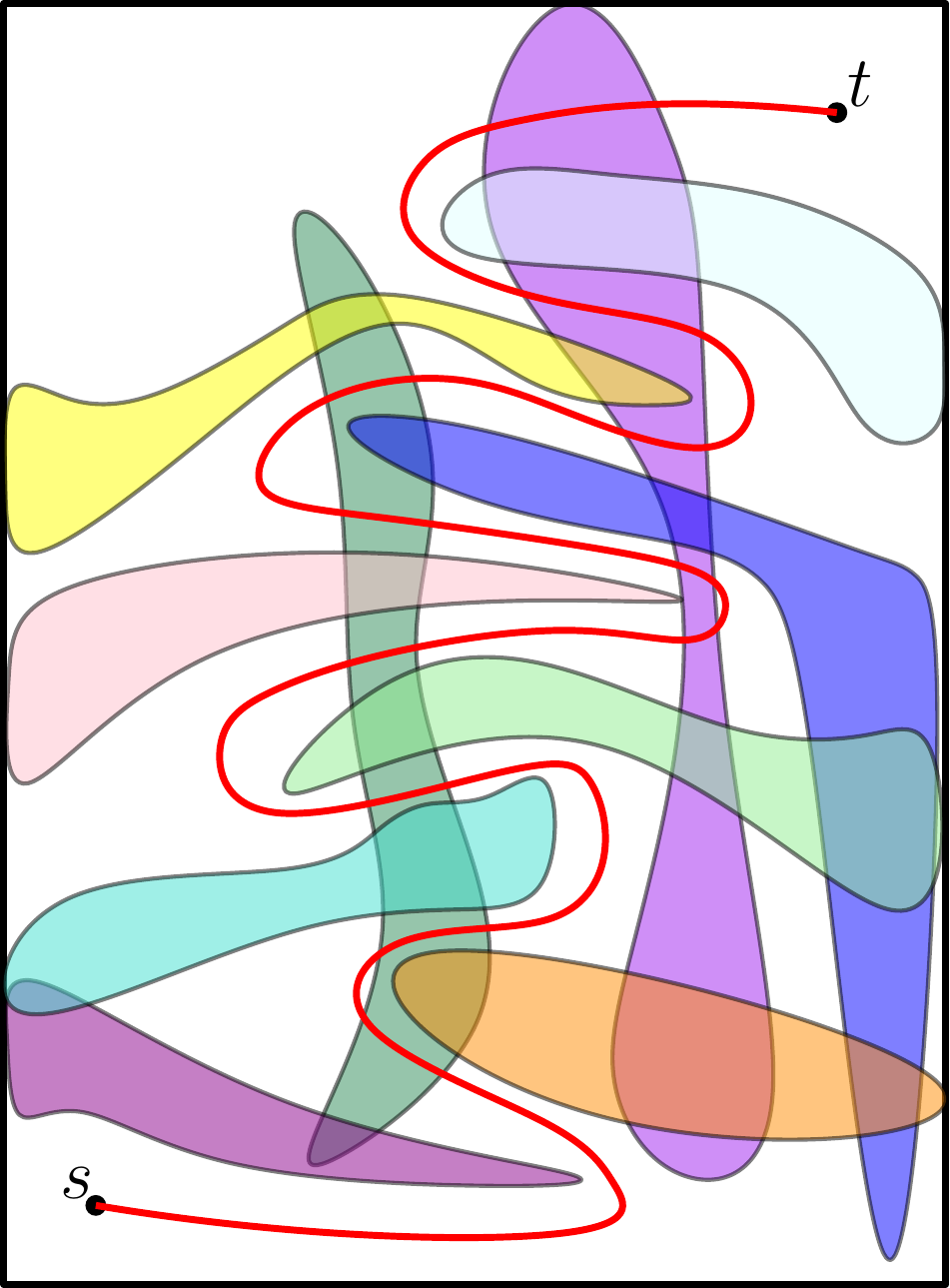}
\end{center}
\end{subfigure}
%\hspace*{-2cm}
 \begin{subfigure}{.48\textwidth}
 \begin{center}
\includegraphics[width=6cm]{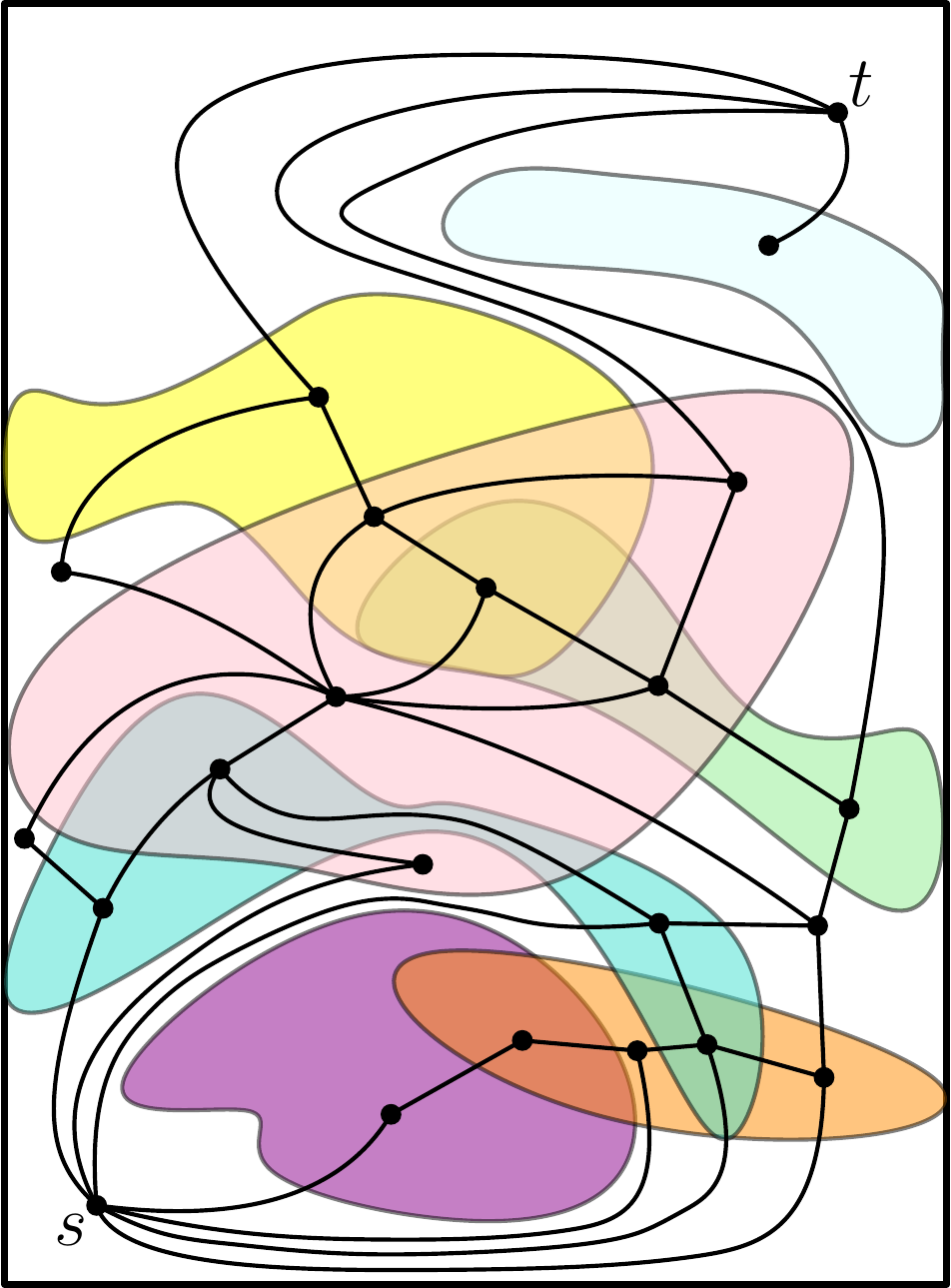}
 \end{center}
\end{subfigure}

\caption{Illustration of instances of the problem under consideration drawn within a bounding box. The left figure shows an instance in which the optimal path crosses two obstacles, zigzagging between the other obstacles. The right figure shows an instance and its auxiliary plane graph.}
\label{fig:tree}
\end{figure}

 As we discuss next, \gcmor{} and \mor{} are fundamental problems that have undergone a tremendous amount of work, albeit under different names and contexts, by researchers in various areas, including computational geometry, graph theory, wireless computing, and motion planning.

\subsection{Related Work}
\label{subsec:related}

In motion planning, the goal is generally to move a robot from a starting position to a final position, while avoiding collision with a set of obstacles. This is usually referred to as the {\em piano-mover}'s problem. \gmor{} is a variant of the piano-mover's problem, in which the obstacles are in the plane and the robot is represented as a point. Since determining if there is an obstacle-free path for the robot in this case is solvable in polynomial time, if no such path exists, it is natural to seek a path that intersects as few obstacles as possible. Motivated by planning applications, \gcmor{} and \mor{} were studied under the name {\sc Minimum Constraint Removal}~\cite{aaai,lavalle,popov,hauser}. \gcmor{} has also been studied extensively, motivated by applications in wireless computing, under the name {\sc Barrier Coverage} or {\sc Barrier Resilience}~\cite{alt,kirkpatrick3,korman,kumar,kirkpatrick1,yangphd}. In such applications, we are given a field covered by sensors (usually simple shapes such as unit disks), and the goal is to compute a minimum set of sensors that need to fail before an entity can move undetected between two given sites.

Kumar \etal~\cite{kumar} were the first to study \gcmor{}. They showed that for unit-disk obstacles in some restricted setting, the problem can be solved in polynomial time. The complexity of the general case for unit-disk obstacles remains open. Several works showed the \NP-hardness of the problem, even when the obstacles are very simple geometric shapes such as line segments (\eg, see~\cite{alt,kirkpatrick1,yangphd}). The complexity of the problem when each obstacle intersects a constant number of other obstacles is open~\cite{lavalle,hauser}.

Bereg and Kirkpatrick~\cite{kirkpatrick3} designed approximation algorithms when the obstacles are unit disks by showing that the length, referred to as the \emph{thickness}~\cite{kirkpatrick3} (\ie, number of regions visited), of a shortest path that crosses $k$ disks is at most $3k$; this follows from the fact that a shortest path does not cross a disk more than a constant number of times.

Korman \etal~\cite{korman} showed that \gcmor{} is \FPT{} parameterized by $k$ for unit-disk obstacles, and extended this result to similar-size fat-region obstacles with a constant \emph{overlapping number}, which is the maximum number of obstacles having nonempty intersection. Their result draws the observation, which was also used in~\cite{kirkpatrick3}, that for unit-disk (and fat-region) obstacles, the length of an optimal path can be upper bounded by a linear function of the number of obstacles crossed (\ie, the parameter). This observation was then exploited by a branching phase that decomposes the path into subpaths in (simpler) restricted regions, enabling a similar approach to that of Kumar \etal~\cite{kumar}.

Motivated by its applications to networking, among other areas, the problem of computing a minimum-colored path in a graph received considerable attention (\eg, see~\cite{carr,varma}). The problem was shown to be \NP-hard in several works~\cite{carr,kirkpatrick2,hauser,varma}.\iflong \footnote{We note that some works consider the edge-colored version of the problem, but for all purposes considered in this paper the two versions are equivalent.}\fi~Most of the \NP-hardness reductions start from {\sc Set Cover}, and result in instances of \mor{} (\ie, planar graphs), as was also observed by~\cite{kirkpatrick3}. These reductions are \FPT-reductions, implying the \W[2]-hardness of \mor{}. Moreover, these reductions imply that, unless \Pol{} $=$ \NP, the minimization version of \mor{} cannot be approximated to within a factor of $c \lg{n}$, for any constant $c < 1$. Hauser~\cite{hauser}, and Gorbenko and Popov~\cite{popov}, implemented exact and heuristic algorithms for the problem on general graphs.
Very recently, Eiben \etal~\cite{aaai} designed exact and heuristic algorithms for \mor{} and \gmor{}, and proved computational lower bounds on their subexponential-time complexity, assuming the Exponential Time Hypothesis.

\iflong Finally, we mention that there is a related problem that is solvable in polynomial time, which has received considerable attention~\cite{dannychen,hersh2,hersh1}, where the goal is to find a shortest path w.r.t.~the Euclidean length between two given points in the plane that intersects at most $k$ obstacles. The \mor{} problem also falls into the category of many computationally-hard problems on colored graphs, where the objective is to compute a graph structure satisfying certain (desired) properties that uses the minimum number of colors (\eg, see~\cite{fgk}).\fi

\subsection{Our Results and Techniques}
\label{subsec:contributions}
We study the complexity and parameterized complexity of \mor{} and \cmor{}, eyeing the implications on their geometric counterparts \gmor{} and \gcmor{}, respectively. \iflong We do not treat the problem on general graphs because, as we point out in Remark~\ref{rem:nonplanar}, this problem is computationally very hard, even when restricted to graphs satisfying the color-connectivity property. \fi \ifshort The hardness results we obtain are too long and technical to be included/sketched given the space limit, and are delegated to the full paper in the appendix, and so are the complete proofs and details for the text marked with \app. \fi

\iflong Our first set of hardness results show that both problems are \NP-hard, even when restricted to graphs of small outerplanarity and pathwidth, and that it is unlikely that they can be solved in subexponential time:

\begin{itemize}
\item[(i)] \mor{} is~\NP-complete, even for outerplanar graphs of pathwidth at most 2 and in which every vertex contains at most one color (Theorem~\ref{thm:pathwidthnphardness}).

\item[(ii)] \cmor{} is \NP-complete even for 2-outerplanar graphs of pathwidth at most 3 (Corollary~\ref{cor:connectedpathwidthnphardness}).

\item[(iii)] Unless ETH fails, \cmor{} (and hence \mor{}) is not solvable in subexponential time, even for 2-outerplanar graphs of pathwidth at most 3 and in which each color appears at most 4 times (Corollary~\ref{cor:pathwidthnphardness}).
\end{itemize}

The reduction used to prove (i) produces instances of \mor{} that can be realized as geometric instances of \gmor{} whose overlapping number is at most 2. Thus, this hardness result extends to the aforementioned restriction of \gmor{}. The same reduction is then modified to yield (ii) and (iii) for \cmor{}; this reduction produces instances of \mor{} that can be realized as geometric instances of \gcmor{} whose overlapping number is at most 4, again showing that the hardness results extend to these restrictions of \gcmor{}.

We then study the parameterized complexity of \mor{} and \cmor{}.\fi~Clearly, \mor{} is in the parameterized class \XP. We show that the color-connectivity property is crucial for any hope for an \FPT-algorithm, since even very restricted instances and combined parameterizations of \mor{} are \W[1]-complete:

\begin{itemize}
\iflong \item[(iv)] \fi \ifshort \item[(i)] \fi \mor{}, restricted to instances of pathwidth at most 4, and in which each vertex contains at most one color and each color appears on at most 2 vertices, is \W\rm{[}1\rm{]}-complete parameterized by $k$ \iflong (Theorem~\ref{thm:wcompleteoccurences})\fi \ifshort(\appno{} Theorem~3.8)\fi.
\iflong \item[(v)] \fi \ifshort \item[(ii)] \fi \mor{}, parameterized by both $k$ and the length of the sought path $\ell$, is \W\rm{[}1\rm{]}-complete \iflong (Theorem~\ref{thm:wcompletelength})\fi \ifshort (\appno{} Theorem~3.7)\fi.
\end{itemize}

Without restrictions, the problem sits high in the parameterized complexity hierarchy:

\begin{itemize}
\iflong \item[(vi)] \fi \ifshort \item[(iii)] \fi \mor{}, parameterized by $k$, is \W[SAT]-hard \iflong (Theorem~\ref{thm:wsathard}) \fi \ifshort (\appno{} Theorem~3.10) \fi and is in \W\rm{[}\Pol\rm{]} \iflong (Theorem~\ref{thm:inwp})\fi \ifshort(\appno{} Theorem~3.9)\fi.
\end{itemize}

A corollary of \iflong (vi) \fi \ifshort (iii) \fi is that, unless \W[2] $=$ \FPT{}, \mor{} cannot be approximated in \FPT{} time to within a factor that is a function of $k$ \iflong (Corollary~\ref{cor:apx})\fi \ifshort (\appno{} Corollary~3.12)\fi.

By producing a generic construction \iflong (Remark~\ref{rem:gadget}) \fi \ifshort (\appno{} Remark~3.4) \fi that can be used to realize any graph instance of \mor{} as a geometric instance of \gmor{}, the hardness results in  \iflong (iv)--(vi)\fi \ifshort (i)--(iii)\fi, and the inapproximability result discussed above, translate to \gmor{}. This geometric realization may slightly increase the overlapping number by at most 2. Previously, \mor{} was only known to be \W[2]-hard, via the standard reduction from {\sc Set Cover}~\cite{carr,hauser,varma}. Our results refine the parameterized complexity and approximability of \mor{} and \gmor{}.

As noted in \iflong Remark~\ref{rem:nonplanar}\fi \ifshort Remark~3.13 \app{}\fi, the color-connectivity property without planarity is hopeless: We can
tradeoff planarity for color-connectivity by adding a single vertex that serves as a color-connector, thus establishing the \W[SAT]-hardness of the problem on apex graphs.

The above hardness results show that we can focus our attention on \cmor{}. We show the following algorithmic result:

\begin{itemize}
\iflong \item[(vii)] \fi \ifshort \item[(iv)] \fi \cmor{}, parameterized by both $k$ and the treewidth $\omega$ of the input graph, is \FPT{} (Theorem~\ref{thm:treewidth}).
\end{itemize}
We remark that bounding the treewidth does not make \cmor{} much easier, as we show in this paper that \cmor{} is \NP-hard even for 2-outerplanar graphs of pathwidth at most 3 \iflong (Corollary~\ref{cor:connectedpathwidthnphardness})\fi \ifshort (\appno{} Corollary~3.2)\fi.

The folklore dynamic programming approach based on tree decomposition, used for the
{\sc Hamiltonian Path/Cycle} problems, does not work for \cmor{} to prove the result in \iflong (vii) \fi \ifshort (iv) \fi for the following reasons. As opposed to the {\sc Hamiltonian Path/Cycle} problems, where it is sufficient to keep track of how the path/cycle interacts with each bag in the tree decomposition, this is not sufficient in the case of \cmor{} because we also need to keep track of which color sets are used on both sides of the bag. Although (by color connectivity) any subset of colors appearing on both sides of a bag must appear on vertices in the bag as well, there can be too many such subsets (up to $|C|^k$, where $C$ is the set of colors), and certainly we cannot afford to enumerate all of them if we seek an \FPT{} algorithm. To overcome this issue, we develop in Section~\ref{sec:structural} topological structural results that exploit the planarity of the graph and the connectivity of the colors to show the following.  For any vertex $w \in V(G)$, and for any pair of vertices $u, v \in V(G)$, the set of (valid) $u$-$v$ paths in $G-w$ that use colors appearing on vertices in the face of $G-w$ containing $w$ can be ``represented'' by a minimal set of paths whose cardinality is a function of $k$.

In Section~\ref{sec:algo}, we extend the notion of a minimal set of paths w.r.t.~a single vertex to a ``representative set'' of paths w.r.t.~a specific bag, and a specific enumerated configuration for the bag, in a tree decomposition of the graph. This enables us to use the upper bound on the size of a minimal set of paths, derived in Section~\ref{sec:structural}, to upper bound the size of a representative set of paths
w.r.t.~a bag and a configuration. This, in turn, yields an upper bound on the size of the table stored at a bag, in the dynamic programming algorithm, by a function of both $k$ and the treewidth of the graph, thus yielding the desired result.

%Clearly, the above result applies to the restriction of \gcmor{} to instances whose auxiliary graphs satisfy the properties in the statements of the results.

In Section~\ref{sec:extension}, we extend the \FPT{} result for \cmor{} in \iflong (vii) \fi \ifshort (iv) \fi w.r.t.~the parameters $k$ and $\omega$, to the parameterization by both $k$ and the length $\ell$ of the path:

\begin{itemize}
\iflong \item[(viii)] \fi \ifshort \item[(v)] \fi \cmor{}, and hence \gcmor{},  parameterized by both $k$ and $\ell$ is \FPT{} (Theorem~\ref{thm:ext_treewidth}).	
\end{itemize}

The dependency on both $\ell$ and $k$ is \emph{essential} for the result in~\iflong (viii) \fi \ifshort (v)\fi: If we parameterize only by $k$, or only by $\ell$, then the problem becomes \W[1]-hard \iflong (Theorem~\ref{thm:bcmor_hard_k} and Theorem~\ref{thm:bcmor_hard_ell})\fi \ifshort (\appno{} Theorem~6.1 and Theorem~6.2)\fi. By \ifshort \app{}  Remark~3.4\fi \iflong Remark~\ref{rem:gadget}\fi, these two results translate to \gcmor{}.

The result in \iflong (viii) \fi \ifshort (v) \fi generalizes and explains Korman \etal's results~\cite{korman} showing that \gcmor{} is \FPT{} parameterized by $k$ for unit-disk obstacles, which they also generalized to similar-size fat-region obstacles with bounded overlapping number. Their results exploit the obstacle shape to upper bound the length of the path by a linear function of $k$, and then use branching to reduce the problems to a simpler setting. Our result directly implies that, regardless of the (connected) obstacle shapes, as long as the path length is upper bounded by some function of $k$ (Corollary~\ref{cor:boundedlengthaapp}), the problem is \FPT{}. The \FPT{} result in \iflong (viii) \fi \ifshort (v) \fi also implies that:

\begin{itemize}
\iflong \item[(ix)] \fi \ifshort \item[(vi)] \fi  For any computable function $h$, \cmor{} restricted to instances in which each color appears on at most $h(k)$ vertices, is \FPT{} parameterized by $k$ (Corollary~\ref{cor:boundedintersection}).
\end{itemize}

Result \iflong (ix) \fi \ifshort (vi) \fi has applications to \gcmor{}, in particular, to the interesting case when the obstacles are convex polygons, each intersecting a constant number of other polygons. The question about the complexity of this problem was posed in~\cite{lavalle,hauser}, and remains open. The result in \iflong (ix) \fi \ifshort (vi) \fi implies that this problem is \FPT{} (\ifshort \appno{} Theorem~6.19\fi \iflong Theorem~\ref{thm:boundedintersectiongeometric}\fi).

We finally mention that it remains open whether \cmor{} and \gcmor{} are \FPT{} parameterized by $k$ only.

\section{Preliminaries}
\label{sec:prelim}
\ifshort We assume familiarity with graph theory and parameterized complexity. We refer the reader to the full paper in the appendix and the standard books~\cite{diestel,dfbook}. \fi

\iflong
We assume familiarity with the basic notations and terminologies in graph theory and parameterized complexity. We refer the reader to \ifshort the full paper in the appendix and to \fi the standard books~\cite{diestel,dfbook} for more information on these subjects.

\paragraph{{\bf Graphs.}} All graphs in this paper are simple (\ie, loop-less and with no multiple edges).
Let $G$ be an undirected graph. For an edge $e=uv$ in $G$, {\em contracting} $e$ means removing the two vertices $u$ and $v$ from $G$, replacing them with a new vertex $w$, and
for every vertex $y$ in the neighborhood of $v$ or $u$ in $G$, adding an edge $wy$ in the new graph, not allowing multiple edges.
Given a vertex-set $S\subseteq V(G)$, {\em contracting} $S$ means contracting the edges between the vertices in $S$ to obtain a single vertex at the end.

A graph is {\it planar} if it can be drawn in the plane without edge
intersections (except at the endpoints). An \emph{apex graph} is a graph in which the removal of a single vertex results in a planar graph.  A {\it plane graph} has a
fixed drawing. Each maximal connected region of the plane minus the
drawing is an open set; these are the {\em faces}. One is unbounded,
called the {\em outer face}. An {\em outerplane graph} is a plane
graph for which every vertex is incident to the outer face; and {\em
outerplanar graph} is a graph that has such a plane embedding. An \emph{$i$-outerplane graph} (resp.~\emph{$i$-outerplaner graph}), for $i > 1$, is defined inductively as a graph such that the removal of its outer face results in an \emph{$(i-1)$-outerplane graph} (resp.~\emph{$(i-1)$-outerplaner graph}) graph.

Let $S$ be a set of points in the plane, and let $C_1, C_2$ be two non self-intersecting curves that meet $S$ in precisely their common endpoints $a$ and $b$.
We say that $C_1$ and $C_2$ are \emph{isotopic} w.r.t.~$S$ (also known as \emph{homotopic rel.~boundary}) if there is a continuous deformation from $C_1$ to $C_2$ through curves between $a$ and $b$ such that no intermediate curve in this deformation meets a vertex of $S$ in its interior.

Let $W_1=(u_1, \ldots, u_p)$ and $W_2=(v_1, \ldots, v_q)$, $p, q \in \nat$, be two walks such that $u_p =v_1$. Define the \emph{gluing} operation $\circ$ that when applied to $W_1$ and $W_2$ produces that walk
$W_1 \circ W_2 = (u_1, \ldots, u_p, v_2, \ldots, v_q)$.

For a graph $G$ and two vertices $u, v \in V(G)$, we denote by $d_G(u, v)$ the \emph{distance} between $u$ and $v$ in $G$, which the length of a shortest path between $u$ and $v$ in $G$.

\paragraph{{\bf Treewidth, Pathwidth and Tree Decomposition.}}
\begin{definition} \rm
Let $G=(V, E)$ be a graph. A {\it tree decomposition} of $G$ is
a pair $({\cal V}, {\cal T})$ where ${\cal V}$ is a collection of
subsets of $V$ such that $\bigcup_{X_i \in {\cal V}} = V$,
and ${\cal T}$ is a rooted tree whose node set is $\cal V$, such that:

\begin{enumerate}
\item For every edge $\{u, v\} \in E$, there is an $X_i \in {\cal V}$, such
 that $\{u, v\} \subseteq  X_i$; and

\item for all $X_i, X_j, X_k \in {\cal V}$, if the node $X_j$ lies on the
   path between the nodes $X_i$ and $X_k$ in the tree ${\cal T}$, then
   $X_i \cap X_k \subseteq X_j$.

The {\it width} of the tree decomposition $({\cal V}, {\cal T})$ is
defined to be $\max\{|X_i| \mid X_i \in {\cal V} \} - 1$. The
{\it treewidth} of the graph $G$ is the minimum width over all
tree decompositions of $G$.
\end{enumerate}
\end{definition}

A \emph{path decomposition} of a graph $G$ is a tree decomposition $({\cal V}, {\cal T})$ of $G$, where $\TTT$ is a path.
The {\it pathwidth} of a graph $G$ is the minimum width over all
path decompositions of $G$.

A tree decomposition $({\cal V}, {\cal T})$ is {\it nice} if it satisfies the following conditions:

\begin{enumerate}

\item Each node in the tree ${\cal T}$ has at most two children.

\item If a node $X_i$ has two children $X_j$ and $X_k$ in the tree ${\cal T}$,
   then $X_i=X_j=X_k$; in this case node $X_i$ is called a {\em join node}.

\item If a node $X_i$ has only one child $X_j$ in the tree ${\cal T}$, then either
   $|X_i|=|X_j|+1$ and $X_j \subset X_i$, and in this case $X_i$ is called an {\em insert node}; or $|X_i|=|X_j|-1$ and
   $X_i\subset X_j$, and in this case $X_i$ is called a {\em forget node}.

\item If $X_i$ is a leaf node or the root, then $X_i=\emptyset$.
\end{enumerate}

\paragraph{{\bf Boolean Circuits and Parameterized Complexity.}}

A {\it circuit} is a directed acyclic graph. The vertices of
indegree $0$ are called the (input) {\it variables}, and are labeled either
by {\it positive literals} $x_i$ or by {\it negative literals}
$\overline{x}_i$. The vertices of indegree larger than $0$ are
called the {\it gates} and are labeled with Boolean operators
{\sc and} or {\sc or}. A special gate of outdegree $0$ is
designated as the {\it output} gate. We do not allow {\sc not}
gates in the above circuit model, since by De Morgan's
laws, a general circuit can be effectively converted into the
above circuit model. A circuit is said to be {\it monotone}
if all its input literals are positive. The {\it depth} of a circuit is the maximum
distance from an input variable to the output gate of the circuit.
A circuit represents a Boolean function in a natural way. The size of a circuit $C$, denoted $|C|$, is the size of the underlying graph (\ie, number of vertices and edges).
An {\em occurrence} of a literal in $C$ is an edge from the literal to a gate in $C$. Therefore, the total number of occurrences of the literals in $C$ is the number of outgoing edges from the literals in $C$ to its gates.

We say that a truth assignment $\tau$ to the variables
of a circuit $C$ {\it satisfies} a gate $g$ in $C$ if $\tau$ makes the gate $g$ have
value $1$, and that $\tau$ {\it satisfies the circuit $C$} if $\tau$
satisfies the output gate of $C$. A circuit $C$ is {\it satisfiable}
if there is a truth assignment to the input variables of $C$ that
satisfies $C$. The {\it weight} of an assignment $\tau$ is the number
of variables assigned value $1$ by $\tau$.

A {\it parameterized problem} $Q$ is a subset of $\Omega^* \times
\mathbb{N}$, where $\Omega$ is a fixed alphabet. Each instance of the
parameterized problem $Q$ is a pair $(x, k)$, where $k \in \nat$ is called the {\it
parameter}. We say that the parameterized problem $Q$ is
{\it fixed-parameter tractable} (\FPT)~\cite{dfbook}, if there is a
(parameterized) algorithm, also called an {\em \FPT-algorithm},  that decides whether an input $(x, k)$
is a member of $Q$ in time $f(k) \cdot |x|^{O(1)}$, where $f$ is a computable function.  Let \FPT{} denote the class of all fixed-parameter
tractable parameterized problems.

A parameterized problem $Q$
is {\it \FPT-reducible} to a parameterized problem $Q'$ if there is
an algorithm, called an \emph{\FPT-reduction}, that transforms each instance $(x, k)$ of $Q$
into an instance $(x', k')$ of
$Q'$ in time $f(k)\cdot |x|^{O(1)}$, such that $k' \leq g(k)$ and $(x, k) \in Q$ if and
only if $(x', k') \in Q'$, where $f$ and $g$ are computable
functions. By \emph{\FPT-time} we denote time of the form $f(k)\cdot |x|^{O(1)}$, where $f$ is a computable function and $|x|$ is the input instance size.

Based on the notion of \FPT-reducibility, a hierarchy of
parameterized complexity, {\it the \W-hierarchy} $\bigcup_{t
\geq 0} \W[t]$, where $\W[t] \subseteq \W[t+1]$ for all $t \geq 0$, has
been introduced, in which the $0$-th level \W[0] is the class {\it
\FPT}. The hardness and completeness have been defined for each level
\W[$i$] of the \W-hierarchy for $i \geq 1$ \cite{dfbook}. It is
commonly believed that $\W[1] \neq \FPT$ (see \cite{dfbook}). The
\W[1]-hardness has served as the main working hypothesis of fixed-parameter
intractability.

The class \W[SAT] contains all parameterized problems that are \FPT-reducible to the weighted satisfiability of
Boolean formulas.  It contains the classes \W[t], for every $t \geq 0$. Boolean formulas can be represented (in polynomial time) by Boolean circuits that are in the {\em normalized} form (see~\cite{dfbook}).
In the normalized form every (nonvariable) gate has outdegree at most 1, and the gates are structured into alternating levels of {\sc or}s-of-{\sc and}s-of-{\sc or}s.... Therefore, the underlying undirected graph of the circuit with the input variables removed is a tree; the input variables can be connected to any gate in the circuit, including the output gate.
The class \W[\Pol] contains all parameterized problems that are \FPT-reducible to the weighted satisfiability of
Boolean circuits of polynomial size, and contains the class \W[SAT].

Let $O$ be a parameterized minimization problem, and $\rho :\nat \rightarrow \mathbb{R}_{\ge 1}$ a computable function such that $\rho(k)\ge 1$ for every $k\ge 1$. A decision algorithm $\mathbb{A}$ is an \emph{\FPT{} cost approximation algorithm} for $O$ with approximation ratio $\rho$~\cite{ChenGG06}, if for every input $(x,k)\in \Sigma^*\times \nat$, %with $\textup{sol}(x)\neq\emptyset$
its output satisfies the following:
\begin{itemize}
	\item If $k\le OPT(x)$, then $\mathbb{A}$ rejects $(x,k)$, and
	\item if $k \ge \rho(OPT(x))\cdot OPT(x)$, then $\mathbb{A}$ accepts $(x,k)$.
\end{itemize}

Furthermore, $\mathbb{A}$ runs in \FPT{}-time.

The {\em Exponential Time Hypothesis} (ETH) states that the satisfiability of {\sc $k$-cnf} Boolean formulas, where $k \geq 3$, is not decidable in subexponential-time $\Oh(2^{o(n)})$, where $n$ is the number of variables in the formula. ETH has become a standard hypothesis in complexity theory for proving hardness results that is closely related to the computational intractability of a large class of well-known NP-hard problems, measured from a number of different angles,
such as subexponential-time complexity, fixed-parameter tractability, and
approximation.

The asymptotic notation $\Oh^{*}$ suppresses a polynomial factor in the input length.

\paragraph{\textbf{\textsc{\mor{}}} and \textbf{\textsc{\cmor{}}}.}
\fi

For a set $S$, we denote by $2^{S}$ the power set of $S$.
Let $G=(V, E)$ be a graph, let $C \subset \mathbb{N}$ be a finite set of colors, and let $\chi: V \longrightarrow 2^{C}$. A vertex $v$ in $V$ is \emph{empty} if
$\chi(v) = \emptyset$. A color $c$ \emph{appears on}, or is \emph{contained in}, a subset $S$ of vertices if $c \in \bigcup_{v \in S} \chi(v)$. For two vertices $u, v \in V(G)$, $\ell \in \mathbb{N}$, a $u$-$v$ path $P=(u=v_0, \ldots, v_r=v)$ in $G$ is \emph{$\ell$-valid} if $|\bigcup_{i=0}^{r} \chi(v_i)| \leq \ell$; that is, if the total number of colors appearing on the vertices of $P$ is at most $\ell$. A color $c \in C$ is {\em connected} in $G$, or simply {\em connected}, if $\bigcup_{c \in C(v)} \{v\}$ induces a connected subgraph of $G$. The graph $G$ is {\em color-connected}, if for every $c \in C$, $c$ is connected in $G$.

%The \mor{} problem is formally defined as follows:

%\paramproblem{\mor} {A planar graph $G$; a set of colors $C$; $\chi: V \longrightarrow 2^{C}$; and two designated vertices $s, t \in V(G)$}{$k$}{Does there exist a $k$-valid $s$-$t$-path in $G$?} \\

%We denote by \cmor{} the restriction of \mor{} to instances in which the input graph $G$ is color-connected.

For an instance $(G, C, \chi, s, t, k)$ of \mor{} or \cmor{}, if $s$ and $t$ are nonempty vertices, we can remove their colors and decrement $k$ by $|\chi(s) \cup \chi(t)|$ because their colors appear on every $s$-$t$ path. If afterwards $k$ becomes negative, then there is no $k$-valid $s$-$t$ path in $G$. Moreover, if $s$ and $t$ are adjacent, then the path $(s, t)$ is a path with the minimum number of colors among all $s$-$t$ paths in $G$. Therefore, we will assume:

\begin{assumption}
\label{ass:sat}
For an instance $(G, C, \chi, s, t, k)$ of \mor{} or \cmor{}, we can assume that $s$ and $t$ are nonadjacent empty vertices.
\end{assumption}

\iflong

\begin{definition}\rm
\label{def:colorcontraction}
Let $s, t$ be two designated vertices in $G$, and let $x, y$ be two adjacent vertices in $G$ such that $\chi(x) = \chi(y)$. We define the following operation to $x$ and $y$, referred to as a \emph{color contraction} operation, that results in a  graph $G'$, a color function $\chi'$, and two designated vertices $s', t'$ in $G'$, obtained as follows:
 \begin{itemize}
 \item $G'$ is the graph obtained from $G$ by contracting the edge $xy$, which results in a new vertex $z$;
 \item $s'=s$ (resp.~$t'=t$) if $s \notin \{x, y\}$ (resp.~$t \notin \{x, y\}$), and $s'=z$ (resp.~$t'=z$) otherwise; and
 \item $\chi': V(G') \longrightarrow 2^{C}$ is the function defined as $\chi'(w) = \chi(w)$ if $w \neq z$, and $\chi'(z)=\chi(x)=\chi(y)$.
 \end{itemize}
$G$ is \emph{irreducible} if there does not exist two vertices in $G$ to which the color contraction operation is applicable.
\end{definition}

\begin{lemma}
\label{lem:contract}
Let $G$ be a color-connected plane graph, $C$ a color set, $\chi: V \longrightarrow 2^{C}$, $s, t \in V(G)$, and $k \in \mathbb{N}$. Suppose that the color contraction operation is applied to two vertices in $G$ to obtain $G'$, $\chi'$, $s', t'$, as described in Definition~\ref{def:colorcontraction}. Then $G'$ is a color-connected plane graph, and there is a $k$-valid $s$-$t$ path in $G$ if and only if there is a $k$-valid $s'$-$t'$ path in $G'$.
\end{lemma}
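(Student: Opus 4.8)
The plan is to verify the three assertions in turn: that $G'$ is plane, that $G'$ is color-connected, and that $k$-valid $s$-$t$ paths correspond to $k$-valid $s'$-$t'$ paths. The first is immediate: contracting an edge of a plane graph yields a plane graph (simply slide $y$ along the edge $xy$ into $x$ in the fixed drawing; no crossings are created), and no multiple edges arise since we explicitly suppress them in the definition of contraction. For color-connectivity, fix a color $c \in C$ and let $V_c = \{v \in V(G) : c \in \chi(v)\}$, which induces a connected subgraph of $G$. Its image $V'_c$ in $G'$ is obtained either by leaving $V_c$ untouched (if $\{x,y\}\cap V_c$ has size $0$) or by contracting an edge or identifying a vertex within $G[V_c]$; in every case the quotient of a connected graph under contracting a subset of its edges is connected, so $G'[V'_c]$ is connected. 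Here I use that $\chi(x)=\chi(y)$, so $x \in V_c \iff y \in V_c$, which guarantees that $z \in V'_c$ exactly when $x,y \in V_c$ and that $V'_c = (V_c \setminus \{x,y\}) \cup \{z\}$ in that case, with $\chi'(z) = \chi(x)$ carrying the color correctly.

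For the path equivalence, I would argue both directions by pushing paths through the contraction map $\pi : V(G) \to V(G')$ that sends $x, y \mapsto z$ and fixes everything else. Given a $k$-valid $s$-$t$ path $P$ in $G$, its image under $\pi$ is an $s'$-$t'$ walk in $G'$: consecutive vertices of $P$ map to adjacent-or-equal vertices of $G'$ (equal only when the edge is $xy$), so deleting the repetition (if $P$ traverses $xy$) gives a genuine $s'$-$t'$ path $P'$. Since $\chi'(\pi(v)) = \chi(v)$ for every $v$ (using $\chi'(z)=\chi(x)=\chi(y)$), we get $\bigcup_{v\in P'}\chi'(v) \subseteq \bigcup_{v\in P}\chi(v)$, so $P'$ is $k$-valid. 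Conversely, given a $k$-valid $s'$-$t'$ path $P'$ in $G'$, I lift it to $G$: replace $z$ (if it occurs) by $x$, by $y$, or by the two-vertex segment $x,y$ (or $y,x$), choosing whichever makes the neighbors on either side of $z$ in $P'$ adjacent to the chosen endpoint(s) in $G$. Concretely, a neighbor $a$ of $z$ in $G'$ is adjacent in $G$ to $x$ or to $y$ (or both); if the predecessor of $z$ attaches to $x$ and the successor to $y$ (or vice versa) we insert the edge $xy$ between them, otherwise we use a single vertex. The resulting $s$-$t$ walk in $G$ may repeat a vertex only in the degenerate situation where $z$ appears as an endpoint coinciding with $s'$ or $t'$, but Assumption~\ref{ass:sat} lets us take $s,t$ empty and nonadjacent, and in any case shortcutting the walk to a path only removes vertices; by $\chi(x)=\chi(y)=\chi'(z)$ the color set is unchanged, $\bigcup\chi$ over the lifted path equals $\bigcup\chi'$ over $P'$, so $k$-validity is preserved.

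The only genuinely delicate point is the lifting direction: one must check that the vertices flanking an occurrence of $z$ in $P'$ really do attach consistently to $x$ and $y$ in $G$, i.e. that we can always realize the two incident edges of $z$ as edges incident to $\{x,y\}$ in $G$, and that when both flanking vertices attach only to, say, $x$ we are free to route through $x$ alone without needing $y$. This follows directly from the definition of edge contraction (every edge $za$ of $G'$ arises from an edge $xa$ or $ya$ of $G$), so there is no obstruction, but it is the step that requires care to state cleanly. A secondary bookkeeping point is handling the cases where $x$, $y$, or $z$ coincides with $s$, $s'$, $t$, or $t'$; these are routine given the explicit definition of $s',t'$ in Definition~\ref{def:colorcontraction} and are dispatched by the same shortcutting argument. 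Everything else is a direct unwinding of definitions.
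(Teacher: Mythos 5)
Your proposal follows essentially the same route as the paper's proof: the forward direction pushes the path through the contraction, the converse lifts an occurrence of $z$ back to $x$, to $y$, or to the segment $x,y$ according to which of them carries the flanking edges, and the color bookkeeping rests on $\chi'(z)=\chi(x)=\chi(y)$ exactly as in the paper. Your more explicit argument for planarity and color-connectivity (which the paper merely asserts) is correct, and your lifting step, including the endpoint cases, matches the paper's case analysis.

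The one step that does not go through as written is in the forward direction. You claim that the image of $P$ under the quotient map fails to be a path only when $P$ traverses the edge $xy$, so that deleting that single consecutive repetition yields a genuine path. But $P$ may contain both $x$ and $y$ without using the edge $xy$ (for instance $P=(s,\dots,x,a,y,\dots,t)$ for some common neighbor $a$); then $z$ occurs twice non-consecutively in the image, and ``deleting the repetition'' does not produce a path. The paper forestalls exactly this by assuming at the outset that $P$ is an induced path, which forces $x$ and $y$, if both present, to be consecutive on $P$ since $xy\in E(G)$. Your argument is repaired the same way, or by shortcutting the image walk between the two occurrences of $z$, which only removes vertices and hence, by your own subset-of-colors observation, preserves $k$-validity; but as stated this case is missed. (Conversely, the repeated-vertex worry you raise in the lifting direction does not actually arise, since $x$ and $y$ are not vertices of $G'$ and $z$ occurs at most once on $P'$, so the lift is automatically a path.)
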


\begin{proof}
Let $x$ and $y$ be the two adjacent vertices in $G$ to which the color contraction operation is applied, and let $z$ be the new vertex resulting from this contraction. It is clear that after the contraction operation the obtained graph $G'$ is a plane color-connected graph.

Suppose that there is a $k$-valid $s$-$t$ path in $G$, and let $P=(s=v_0, \ldots, v_r=t)$ be such a path. We can assume that $P$ is an induced path. If no vertex in $\{x, y\}$ is on $P$, then $P'=P$ is a $k$-valid $s'$-$t'$ path in $G'$. If exactly one vertex in $\{x, y\}$, say $x$, is on $P$, then since the color set of every vertex other than $x$ on $P$ is the same before and after the contraction operation, and since $\chi'(z) = \chi(x)$, the path $P'$ obtained from $P$ by replacing $x$ with $z$ is a $k$-valid $s'$-$t'$ in $G'$. (Note that if $x=s$ then $s'=z$, and replacing $x$ with $z$ on $P$ is obsolete in this case.) Finally, if both $x$ and $y$ are on $P$, then since $P$ is induced, $x$ and $y$ must appear consecutively on $P$. Without loss of generality, assume $x=v_i$ and $y=v_{i+1}$, for some $i \in \{0, \ldots, r-1\}$. Since the color set of every vertex other than $x$ and $y$ on $P$ is the same before and after the operation, and since $\chi'(z)=\chi(x)=\chi(y)$, the path $P'=(s'=v_0, \ldots, v_{i-1}, z, v_{i+1}, \ldots, t=v_r)$ is a $k$-valid $s'$-$t'$ path in $G'$.

Conversely, suppose that there is a $k$-valid $s'$-$t'$ path in $G'$, and let $P'=(s'=v_0', \ldots, v_p'=t')$, where $p > 0$, be such a path. If $z$ does not appear on $P'$ then $P'$ is a $k$-valid $s$-$t$ path in $G$. Otherwise, $z=v_i'$ for some $i \in \{0, \ldots, p\}$. If $i=0$ and $P'$ consists only of vertex $z$, then since $\chi(x)=\chi(y)=\chi'(z)$, either $s=t$, and in which case there is a trivial $k$-valid $s$-$t$ path in $G$, or $s \neq t$, and in this case $P=(x, y)$ is a $k$-valid $s$-$t$ path in $G$. Otherwise, when $i=0$ we must have $s=x$ or $s=y$, $v'_i \in G$ for $i \in [p]$, and $t'=t$; without loss of generality, assume that $s=x$. Since $z$ is adjacent to $v_1'$, either $x=s$ or $y$ (or both) is adjacent to $v_1'$. Since $\chi(x)=\chi(y)=\chi'(z)$, if $x$ is adjacent to $v_1'$ then $P=(s=x, v_1', \ldots, v_p'=t')$ is a $k$-valid $s$-$t$ path in $G$, and if $y$ is adjacent to $v_1'$ then $P=(s=x, y, v_1', \ldots, v_p'=t')$ is a $k$-valid $s$-$t$ path in $G$. The case is similar if $i=p$. Suppose now that $i \neq 0$ and $i \neq p$. If $x$ (resp.~$y$) is adjacent to both $v'_{i-1}$ and $v'_{i-1}$, then the path $P=(s, v_1', \ldots, v'_{i-1}, x, v'_{i+1}, \ldots, v_p'=t)$ (resp.~$P=(s, v_1', \ldots, v'_{i-1}, y, v'_{i+1}, \ldots, v_p'=t)$ is a $k$-valid $s$-$t$-path in $G$; otherwise, one vertex in $\{x, y\}$, say $x$, must be adjacent to $v'_{i-1}$, and the other vertex $y$ must be adjacent to $v'_{i+1}$. In this case the path $P=(s, v_1', \ldots, v'_{i-1}, x, y, v'_{i+1}, \ldots, v_p'=t)$ is a $k$-valid $s$-$t$-path in $G$.
\end{proof}
\fi

\iflong

\section{Hardness Results}
\label{sec:hardness}

In this section, we study the complexity and the parameterized complexity of \mor{} and \cmor{} and their geometric counterparts \gmor{} and \gcmor{}. We start by showing that both problems are \NP-hard, even when restricted to graphs of small outerplanarity and pathwidth.
\begin{figure}[htbp]
 \begin{subfigure}{.6\textwidth}
\resizebox{.7\textwidth}{1.4in}{\includegraphics[width=7.5cm]{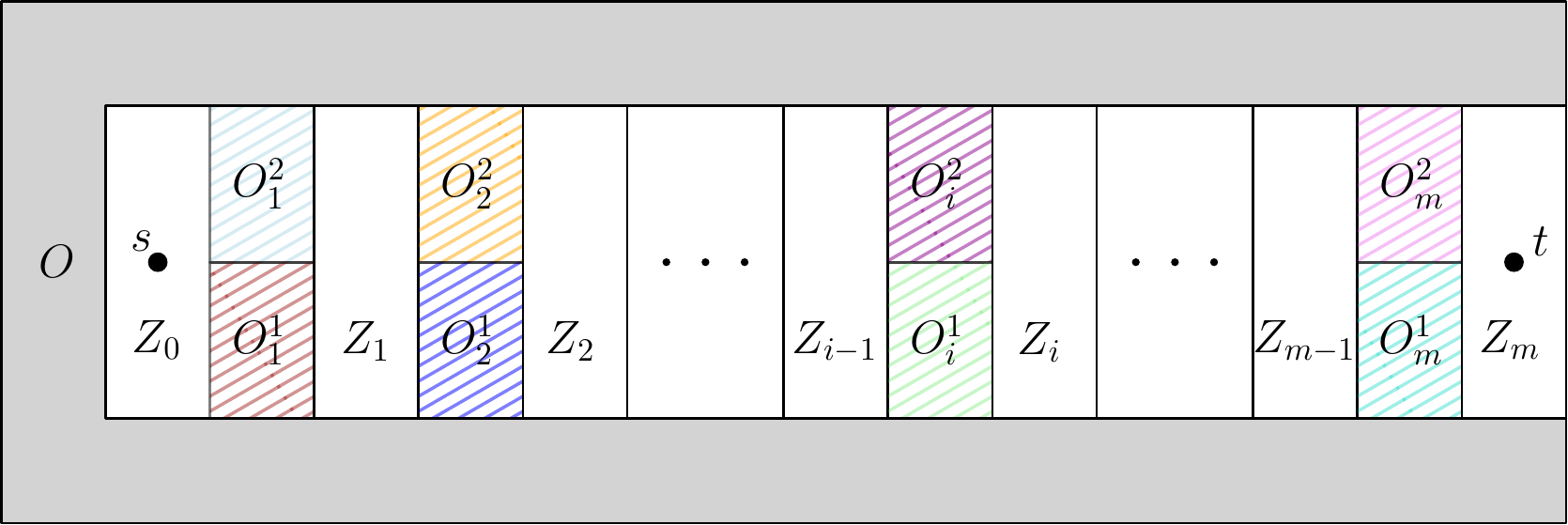}}
\end{subfigure}
\hspace*{-0.5cm}
\begin{subfigure}{.6\textwidth}
\resizebox{.7\textwidth}{1.4in}{\includegraphics[width=7.5cm]{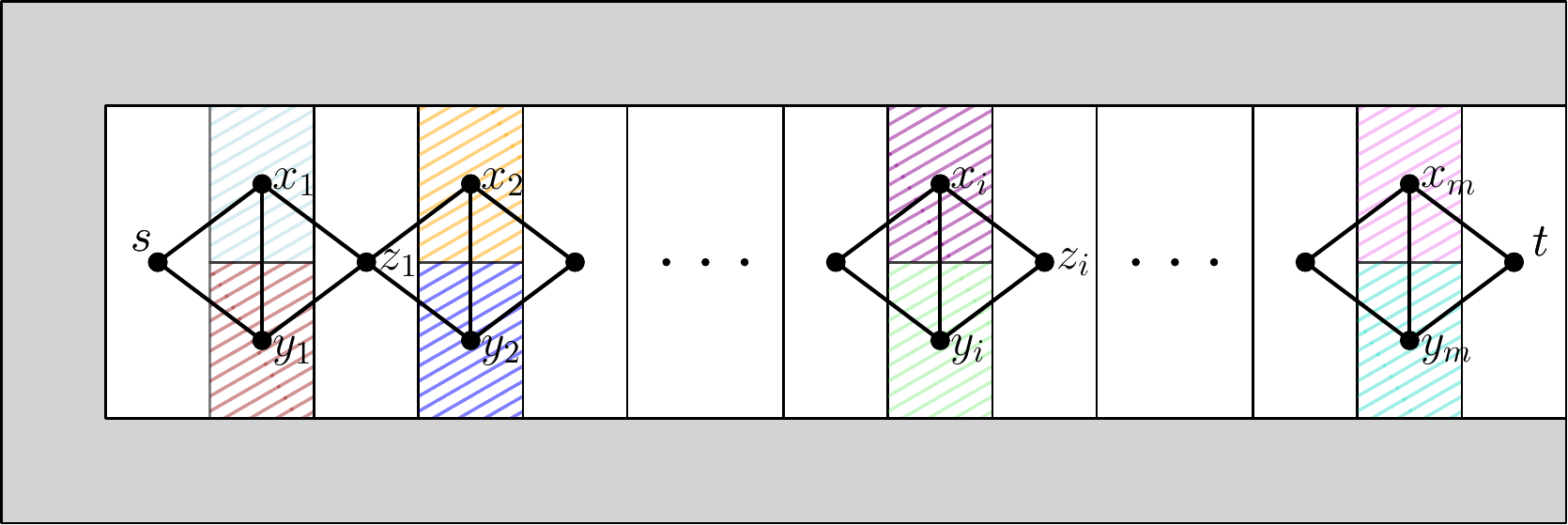}}
\end{subfigure}
\caption{Illustration of the construction in the proof of Theorem~\ref{thm:pathwidthnphardness}. The left figure shows the geometric instance of \mor{}, and the right figure the graph associated with it.}
\label{fig:npreduction}
\end{figure}

\begin{theorem}
\label{thm:pathwidthnphardness}
\mor{}, restricted to outerplanar graphs of pathwidth at most 2 and in which every vertex contains at most one color, is~\NP-complete.
\end{theorem}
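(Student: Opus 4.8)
The plan is to reduce from \textsc{Set Cover} (or, to directly control the pathwidth and outerplanarity, from a variant of \textsc{Vertex Cover} / \textsc{Dominating Set} on a restricted graph class), following the standard template noted in the introduction but being careful about the structural parameters. Let me sketch the \textsc{Set Cover} version. Given a universe $U = \{u_1, \ldots, u_m\}$, a family $\mathcal{S} = \{S_1, \ldots, S_n\}$ of subsets of $U$, and a budget $k$, we want a subfamily of size $\le k$ covering $U$. Membership in \NP{} is immediate: guess an $s$-$t$ path and verify in polynomial time that it uses at most $k$ colors. For hardness, I would build a graph consisting of $m$ consecutive ``element gadgets'' $D_1, \ldots, D_m$ arranged in a path-like fashion from $s$ to $t$. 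The gadget $D_j$ for element $u_j$ is a bundle of parallel internally-disjoint paths, one path $\pi_{j,i}$ for each set $S_i$ with $u_j \in S_i$; every internal vertex of $\pi_{j,i}$ carries the single color $c_i$ (assigning one color per set, so each vertex has at most one color). We chain the gadgets in series: the right endpoint of $D_j$ is identified with the left endpoint of $D_{j+1}$, with $s$ the left endpoint of $D_1$ and $t$ the right endpoint of $D_m$. Any $s$-$t$ path must traverse each $D_j$ through exactly one of its parallel branches $\pi_{j,i}$, thereby ``paying'' for color $c_i$; the set of colors used along the whole path corresponds exactly to a choice, for each element, of a set containing it, i.e.\ a set cover. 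Hence there is a $k$-valid $s$-$t$ path iff there is a set cover of size $\le k$, establishing correctness.

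The remaining — and genuinely delicate — point is to verify the promised structural restrictions: outerplanarity and pathwidth at most $2$. The subgraph consisting of a single ``theta-like'' bundle $D_j$ of parallel $s'$-$t'$ paths is outerplanar (draw the branches nested, all vertices on the outer face) and has pathwidth $2$ (sweep along each branch, always keeping $s', t'$ plus the current vertex in the bag). The series composition at cut vertices $s = \partial_0, \partial_1, \ldots, \partial_m = t$ preserves outerplanarity because gluing outerplanar graphs at a single vertex yields an outerplanar graph, and it preserves pathwidth $\le 2$ because we can concatenate the path decompositions of the $D_j$'s, each of which contains the shared cut vertex $\partial_{j}$ in its first/last bag, so the glued decomposition still has bags of size $\le 3$. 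The ``at most one color per vertex'' requirement is satisfied by construction since each internal vertex of a branch gets exactly the one color of its set. I expect the main obstacle to be the bookkeeping needed to make the branches long enough (internal vertices) so that colors genuinely land on distinct vertices and the pathwidth/outerplanarity arguments go through cleanly, while simultaneously ensuring the resulting planar graph can be realized geometrically as a \gmor{} instance with overlapping number at most $2$ (needed for the remark following the theorem) — that is, showing each color class, which induces the disjoint union of the middle portions of branches $\pi_{j,i}$ over all $j$ with $u_j\in S_i$, can be drawn as (the dual of) a single connected obstacle together with a plane routing of the branches so that at most two obstacles overlap anywhere; this geometric realization is where the care lies, whereas the combinatorial reduction itself is routine.

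Concretely, the steps in order: (1) define the element gadget $D_j$ as a bundle of parallel subdivided edges, one per incident set, coloring internal vertices; (2) define the global graph as the series chaining of $D_1, \ldots, D_m$ with endpoints $s, t$; (3) prove the path-to-cover correspondence in both directions, noting we may assume paths are induced and hence traverse each gadget through a single branch; (4) exhibit an outerplane embedding and a width-$2$ path decomposition of $D_j$, then glue these along cut vertices to get the bounds for the whole graph; (5) observe the one-color-per-vertex property; and (6) describe the plane/geometric realization giving the \gmor{} instance with overlapping number $\le 2$. The only nontrivial verification is step~(4) (the structural parameters) combined with step~(6) (the geometric embedding); everything else is direct.
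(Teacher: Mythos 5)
There is a genuine gap, and it sits exactly at the step you flag as the delicate one. In your \textsc{Set Cover} construction, an element gadget $D_j$ is a bundle of one internally disjoint branch per set containing $u_j$; when an element lies in three or more sets (the typical case for \textsc{Set Cover}), this bundle contains a subdivision of $K_{2,3}$ (three internally disjoint $s'$-$t'$ paths, each with an internal colored vertex), and $K_{2,3}$ is a forbidden minor for outerplanarity. So the claimed ``nested drawing with all vertices on the outer face'' does not exist, and the glued graph is not outerplanar, contradicting the statement of the theorem. A secondary slip: the width-$2$ sweep ``keep $s',t'$ plus the current vertex'' does not cover the edge between two consecutive internal vertices of a branch, so it only gives pathwidth $2$ if every branch has a single internal vertex (which your construction can arrange, but which does not repair the outerplanarity failure).

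The fix is essentially to bound the number of branches per gadget by two, i.e.\ to reduce from \textsc{Vertex Cover} rather than general \textsc{Set Cover} (frequency-$2$ \textsc{Set Cover} \emph{is} \textsc{Vertex Cover}); you mention this option parenthetically but do not carry it out, and your correctness and structural arguments are written for the unrestricted version. This is what the paper does: for each edge $e_i=v_pv_j$ of the \textsc{Vertex Cover} instance it places exactly two colored vertices $x_i,y_i$ (colored $j$ and $p$) in parallel between consecutive empty cut vertices $z_{i-1},z_i$ (together with the edge $x_iy_i$), so each gadget is $K_4$ minus an edge, the chain is outerplanar with pathwidth at most $2$, every vertex carries at most one color, and a $k$-valid $s$-$t$ path corresponds to choosing, per edge, a covering endpoint --- exactly your intended correspondence, but with the structural bounds actually holding. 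With that substitution (two branches per gadget, one colored vertex per branch) the rest of your outline, including the series gluing at cut vertices and the path-to-cover equivalence, goes through.
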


\begin{proof}
It is clear that \mor{} is in \NP. To show its~\NP-hardness, we reduce from the \NP-hard problem {\sc Vertex Cover}~\cite{gj}. Let $(G, k)$ be an instance of {\sc Vertex Cover}, where $V(G) = \{v_1, \ldots, v_n\}$ and $E(G) = \{e_1, \ldots, e_m\}$. In the rest of the proof, when we write $e=uw$ for an edge $e$ in $E(G)$, we assume that $u=v_i$ and $w=v_j$ such that $i < j$ (\ie, the vertex of smaller index always appears first). Although not necessary for the proof, we first describe an instance $I$ of \gmor{} whose associated graph is the desired instance of \mor{}. The regions of $I$ are $O \cup \{Z_0, \ldots, Z_m\} \cup \bigcup_{i=1}^{m} \{O_{i}^{1}, O_{i}^{2}\}$, depicted in Figure~\ref{fig:npreduction} (left figure). The obstacles of $I$ are defined as follows. For each vertex $v_j \in V(G)$, the obstacle corresponding to $v_j$ is the polygon whose boundary is the boundary of the region formed by the union of $O$, each $O_{i}^{1}$ such that $e_i =v_jv_q$, and each $O_{i}^{2}$ such $e_i =v_pv_j$. More formally, the obstacle corresponding to $v_j$ is $\partial(O \cup \bigcup_{e_i=v_jv_q} O_{i}^{1} \cup \bigcup_{e_i=v_pv_j} O_{i}^{2})$. The graph associated with $I$, $G_I$, is defined as follow. Each (empty) region $Z_i$, $i=0, \ldots, m$, corresponds to a vertex $z_i \in V(G_I)$, where $Z_0$ corresponds to $s$ and $Z_m$ to $t$. Each region $O_{i}^{1}$, $i \in [m]$, corresponds to a vertex $y_i$, and each region $O_{i}^{2}$, $i \in [m]$, corresponds to a vertex $x_i$. The set of edges $E(G_I)$ is $E(G_I)= \{z_{i-1}x_i, z_{i-1}y_i, x_iy_i, z_{i}x_i, z_{i}y_i \mid i \in [m]\}$. The color function $\chi: V(G_I) \longrightarrow 2^{C}$, where $C=[n]$, is defined as follows: $\chi(z_i)= \emptyset$, for $i=0, \ldots, m$; $\chi(x_i) = \{j\}$ and $\chi(y_i) = \{p\}$, where $e_i=v_pv_j$, for $i \in [m]$. This completes the construction of $G_I$; see Figure~\ref{fig:npreduction} (right figure) for illustration. It is easy to see that $G_I$ is outerplanar and has pathwidth at most 2.

Define the reduction from {\sc Vertex Cover} to \mor{} that takes an instance $(G, k)$ to the instance $(G_I, C, \chi, s, t, k)$. Clearly, this reduction is polynomial-time computable. Suppose that $Q$, where $|Q|=r \leq k$, is a vertex cover of $G$. Consider the $s$-$t$ path $P=(s, w_1, z_1, , \ldots, w_m, z_m)$ in $G_I$, where $w_i =y_i$ if edge $e_i=v_pv_q$ is covered by $v_p$, and $w_i=x_i$ otherwise, for $i \in [m]$. Clearly this is a $k$-valid $s$-$t$ path in $G_I$ since each edge $e_i$ is covered by a vertex in $Q$, each $w_i$ is colored by the index of one of the vertices in $Q$, and each vertex in $G_I$ (and hence each $w_i$) contains at most one color. Conversely, suppose that $P$ is a $k$-valid $s$-$t$ path in $G_I$. By construction of $G_I$, $P$ has to contain at least one vertex from $\{x_i, y_i\}$, for each $i \in [m]$. If $P$ contains both $x_i$ and $y_i$, for some $i \in [m]$, then clearly, from the construction of $P$, $P$ must contain either $(z_{i-1}, x_i, y_i, z_i)$ or $(z_{i-1}, y_i, x_i, z_i)$, as a subpath, and we can shortcut this subpath by removing one of $x_i, y_i$, to obtain another $k$-valid $s$-$t$ path in $G_I$. Therefore, without loss of generality, we may assume that $P$ contains exactly one vertex $w_i$ from $\{x_i, y_i\}$, for $i \in [m]$.
Now define the set of vertices $Q$ in $G$ as the vertices in $G$ whose indices are the colors appearing on (the $w_i$'s in) $P$. More formally, define $Q=\{v_p \mid w_i=x_i \in P \wedge e_i=v_qv_p \} \cup \{v_p \mid w_i=y_i \in P \wedge e_i=v_pv_q\}$. Since $P$ is a $k$-valid path in $G_I$, the total number of colors appearing on $\{w_1, \ldots, w_m\}$ is at most $k$. Notice that the color of each of $x_i, y_i$ is the index of a vertex in $G$ that covers edge $e_i$. It follows that the set $Q$ of vertices in $G$, that are the indices of the colors on $P$, form a $k$-vertex cover of $G$.
\end{proof}

\begin{figure}[htbp]
\begin{center}
\includegraphics[width=12cm]{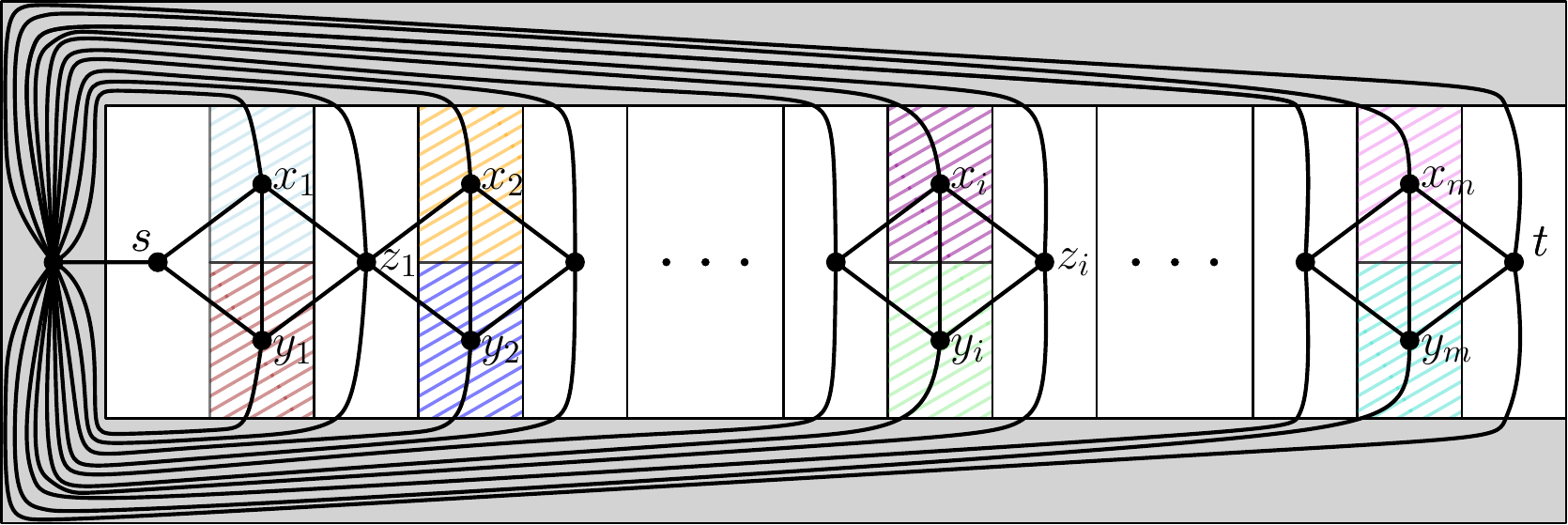}
\end{center}
\caption{Illustration of the proof of Corollary~\ref{cor:connectedpathwidthnphardness}.}
\label{fig:npreductionconnected}
\end{figure}

\begin{corollary}
\label{cor:connectedpathwidthnphardness}
\cmor{}, restricted to 2-outerplanar graphs of pathwidth at most 3, is \NP-complete.
\end{corollary}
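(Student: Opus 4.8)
The plan is to derive Corollary~\ref{cor:connectedpathwidthnphardness} directly from the construction in the proof of Theorem~\ref{thm:pathwidthnphardness}, by modifying the instance $G_I$ so that every color induces a connected subgraph, while controlling the blow-up in outerplanarity and pathwidth. In $G_I$, a color $j \in [n]$ corresponding to vertex $v_j \in V(G)$ appears exactly on the vertices $x_i$ (with $e_i = v_q v_j$) and $y_i$ (with $e_i = v_j v_q$), i.e.\ on one vertex per edge incident to $v_j$; these vertices are pairwise nonadjacent, so color $j$ is in general disconnected. To fix this, for each $v_j \in V(G)$ I would add a new empty ``spine'' path (or a new vertex) and route the occurrences of color $j$ through it: concretely, add for each $v_j$ a path $R_j$ whose vertices all receive color $j$, and attach to $R_j$ one new color-$j$ vertex adjacent to each existing occurrence of color $j$, so that color $j$ becomes connected. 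One must do this carefully so it does not open a shortcut $s$-$t$ path that avoids the old structure: the new color-$j$ gadgets should hang off the outer face of $G_I$ as pendant trees that are not on any $s$-$t$ path, so that every $s$-$t$ path still has the form $(s, w_1, z_1, \dots, w_m, z_m)$ with $w_i \in \{x_i, y_i\}$, and the set of colors used is still exactly the set of vertex-indices chosen to cover the edges. Since the added gadgets introduce no new colors and lie outside all $s$-$t$ paths, the equivalence with {\sc Vertex Cover} is preserved verbatim from the proof of Theorem~\ref{thm:pathwidthnphardness}.

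The key steps, in order, would be: (1) recall $G_I$, $\chi$, $s$, $t$ from Theorem~\ref{thm:pathwidthnphardness}; (2) for each $v_j \in V(G)$, introduce a connected empty-free color-$j$ gadget and attach it to all occurrences of color $j$ in $G_I$, obtaining a new instance $(G_I', C, \chi', s, t, k)$ that is color-connected by construction; (3) argue that every $s$-$t$ path in $G_I'$ can be assumed to avoid the new gadgets (they are pendant and off all $s$-$t$ paths), hence restricts to an $s$-$t$ path of $G_I$ of the same color cost, and conversely any $s$-$t$ path of $G_I$ is an $s$-$t$ path of $G_I'$; this gives $(G,k) \in \textsc{Vertex Cover} \iff (G_I', C, \chi', s, t, k) \in \cmor{}$; (4) verify membership in \NP; and (5) bound the structural width parameters of $G_I'$. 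For step (5), note that $G_I$ is outerplanar of pathwidth $\le 2$; attaching, for each color $j$, a small tree of new vertices to a bounded set of existing vertices and nesting these trees just inside the outer face increases the outerplanarity to at most $2$ and the pathwidth to at most $3$ (each bag of the path decomposition of $G_I$ gains at most one extra vertex to accommodate the attachment of a color gadget processed at that point). The precise placement of the color gadgets so that the embedding is $2$-outerplane and the path decomposition widens by only one is the technical heart of the argument.

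The main obstacle I expect is exactly the simultaneous control of three things: making each color connected, keeping the graph $2$-outerplane, and keeping pathwidth $\le 3$, all without creating a ``cheap'' $s$-$t$ path that shortcuts through a color gadget and breaks the reduction. The cleanest way to handle this is to make each color-$j$ gadget a \emph{caterpillar} hanging off the outer face: a spine all of whose vertices have color $j$, with legs of color $j$ attaching to the (at most $\deg_G(v_j)$ many) occurrences $x_i, y_i$ of color $j$; since $s$ and $t$ are empty and on the outer face, and the gadget is pendant, no shortest (indeed no induced) $s$-$t$ path enters it. One then routes the caterpillars so that in the plane embedding, deleting the outer face of $G_I$ still leaves the picture planar with the caterpillars contributing only one additional ``layer,'' giving $2$-outerplanarity. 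For pathwidth, I would take a path decomposition of $G_I$ of width $2$ following the natural left-to-right order $s, x_1, y_1, z_1, x_2, \dots, z_m=t$ and, when a bag contains an occurrence vertex of color $j$, insert a short sub-path of bags handling the corresponding caterpillar leg and spine segment, each such bag containing the occurrence vertex plus at most one spine vertex plus one leg vertex — width $3$. Writing this out carefully, together with the routine verification that the color cost of an $s$-$t$ path in $G_I'$ equals a vertex cover size exactly as in Theorem~\ref{thm:pathwidthnphardness}, completes the proof.

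\begin{proof}[Proof sketch]
Reduce from {\sc Vertex Cover} as in Theorem~\ref{thm:pathwidthnphardness}, starting from the instance $(G_I, C, \chi, s, t, k)$ constructed there. For each $v_j \in V(G)$, the color $j$ appears precisely on the vertices $x_i$ with $e_i = v_q v_j$ and $y_i$ with $e_i = v_j v_q$; call these the \emph{$j$-occurrences}. Form $G_I'$ from $G_I$ by adding, for each $v_j$, a caterpillar $H_j$: a spine of new vertices, all colored $\{j\}$, together with one new leg vertex colored $\{j\}$ per $j$-occurrence, each leg vertex adjacent to its corresponding $j$-occurrence and to a spine vertex of $H_j$. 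Since consecutive $j$-occurrences are linked through the spine of $H_j$, color $j$ is connected in $G_I'$; as this holds for all $j$, $G_I'$ is color-connected. The caterpillars $H_j$ introduce no new colors and, being pendant subtrees attached off the outer face of $G_I$, lie on no induced $s$-$t$ path (recall $s=z_0$, $t=z_m$ are empty and incident to the outer face). Hence every $k$-valid $s$-$t$ path in $G_I'$ may be assumed to avoid all $H_j$ and is therefore a $k$-valid $s$-$t$ path in $G_I$; conversely every $s$-$t$ path in $G_I$ is one in $G_I'$ with the same color set. By the analysis in the proof of Theorem~\ref{thm:pathwidthnphardness}, it follows that $G$ has a vertex cover of size at most $k$ iff $(G_I', C, \chi', s, t, k)$ is a yes-instance of \cmor{}. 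Membership in \NP{} is immediate. Finally, $G_I$ is outerplanar of pathwidth at most $2$; embedding each caterpillar $H_j$ just inside the outer face makes $G_I'$ $2$-outerplane, and extending a width-$2$ path decomposition of $G_I$ by inserting, around each bag containing a $j$-occurrence, bags consisting of that occurrence together with at most one spine vertex and at most one leg vertex of $H_j$, yields a path decomposition of $G_I'$ of width at most $3$. Thus \cmor{} is \NP-complete on $2$-outerplanar graphs of pathwidth at most $3$.
\end{proof}
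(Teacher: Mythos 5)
Your construction does not work as stated: the per-color caterpillars are \emph{not} pendant, and they do break the reduction. Each $H_j$ attaches to \emph{every} $j$-occurrence, so it creates a monochromatic path between two occurrences of color $j$ that can be used as a shortcut which skips whole edge gadgets of $G_I$. Concretely, take $G$ with edges $e_1=v_1v_2$, $e_2=v_3v_4$, $e_3=v_1v_5$ (in this order). Then color $1$ appears on $y_1$ and $y_3$, and in your $G_I'$ the walk $s=z_0,\,y_1,\,\ell_1,\,(\text{spine of }H_1),\,\ell_3,\,y_3,\,z_3=t$ is an $s$-$t$ path using only color $1$, even though $G$ has no vertex cover of size $1$ (edge $e_2$ is untouched). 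So the forward direction of the equivalence fails, and the claim that ``every $k$-valid $s$-$t$ path in $G_I'$ may be assumed to avoid all $H_j$'' is false. A repair would have to make entering a caterpillar expensive (or structurally impossible), which is precisely what the paper's proof does differently: it adds a \emph{single} new vertex in the outer face, adjacent to all of $V(G_I)$ and carrying \emph{all} colors. Every color class then induces a star (color-connectivity is immediate), any $s$-$t$ path through the new vertex pays all $n$ colors (so it can never beat the intended paths unless the {\sc Vertex Cover} instance is trivially a yes-instance), and adding one universal vertex raises the pathwidth from $2$ to at most $3$ and the outerplanarity to $2$ with no further argument.

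Even setting the equivalence aside, your width and embedding claims do not follow from the sketch. The spine of $H_j$ must stay ``alive'' in the path decomposition across the whole interval between the first and last $j$-occurrence, and many colors can have overlapping intervals, so bags may need one extra spine vertex \emph{per spanning color}, not one extra vertex in total; the width is then governed by an interval-overlap (cutwidth-like) parameter of the chosen edge ordering, not by $3$. Similarly, routing all caterpillars ``just inside the outer face'' amounts to adding chords between occurrences of each color around an outerplanar drawing; interleaving chords of different colors force crossings, so planarity itself, let alone $2$-outerplanarity, is not guaranteed. Both problems disappear with the single all-colors apex vertex used in the paper.
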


\begin{proof}
This follows directly from the \NP-hardness reduction in the proof of Theorem~\ref{thm:pathwidthnphardness} by observing the following. The graph $G_I$ resulting from the reduction is outerplanar. We can add a new vertex to the outer face of $G_I$ (see Figure~\ref{fig:npreductionconnected}\ifshort~in Section~\ref{sec:figures}\fi) containing all colors that appear on $G_I$, and add edges between the new vertex and all vertices in $G_I$. The obtained graph is color-connected and has pathwidth at most 3.
\end{proof}

Assuming ETH, the following corollary rules out the existence of subexponential-time algorithms for \cmor{} (and hence for \mor{}), even for restrictions of the problem to graphs of small outerplanarity, pathwidth, and maximum number of occurrences of each color:

\begin{corollary}
\label{cor:pathwidthnphardness}
Unless ETH fails, \cmor{}, restricted to 2-outerplanar graphs of pathwidth at most 3 and in which each color appears at most 4 times, is not solvable in $\Oh(2^{o(n)})$ time, where $n$ is the number of vertices in the graph.
\end{corollary}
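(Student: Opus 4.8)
The plan is to observe that the reduction already constructed in Theorem~\ref{thm:pathwidthnphardness}, combined with the color-connector modification of Corollary~\ref{cor:connectedpathwidthnphardness}, is in fact a \emph{linear-size} reduction whose output, when the source instance has bounded degree, lands exactly in the restricted class of \cmor{} instances named in the statement; the ETH lower bound then follows by transporting the (folklore) ETH lower bound for {\sc Vertex Cover} on sparse graphs through this reduction. Concretely, I would start from the following well-known consequence of ETH, obtained via the sparsification lemma together with the standard linear chain of reductions {\sc 3-SAT} $\to$ {\sc 3-SAT} with a bounded number of occurrences per variable $\to$ {\sc Vertex Cover} on graphs of maximum degree $3$: unless ETH fails, {\sc Vertex Cover} on $N$-vertex subcubic graphs admits no algorithm running in time $\Oh(2^{o(N)})$.

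First I would take an arbitrary subcubic instance $(G,k)$ of {\sc Vertex Cover}, with $V(G)=\{v_1,\dots,v_N\}$ and $E(G)=\{e_1,\dots,e_M\}$ (so $M\le 3N/2$), apply to it the construction of Theorem~\ref{thm:pathwidthnphardness} to obtain $(G_I,C,\chi,s,t,k)$, and then add the single universal vertex carrying all of $C$ as in Corollary~\ref{cor:connectedpathwidthnphardness}, yielding $(G_I',C,\chi',s,t,k)$. By those two results, $G_I'$ is $2$-outerplanar and has pathwidth at most $3$, it is color-connected, and $(G_I',C,\chi',s,t,k)$ is a yes-instance of \cmor{} if and only if $(G,k)$ is a yes-instance of {\sc Vertex Cover}. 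The vertex set of $G_I'$ has size $(M+1)+2M+1 = 3M+2 = \Oh(N)$, so the whole transformation is linear in the size of $(G,k)$.

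Next I would verify the occurrence bound. Inspecting $\chi$ (and hence $\chi'$): the color $j\in C=[N]$ indexed by $v_j$ is placed precisely on those vertices among $\{x_i,y_i\mid i\in[M]\}$ whose edge $e_i$ is incident to $v_j$ — exactly $\deg_G(v_j)$ vertices — together with the one color-connector vertex, for a total of $\deg_G(v_j)+1\le 4$. Hence $(G_I',C,\chi',s,t,k)$ is an instance of \cmor{} of the restricted form in the statement. Composing a hypothetical $\Oh(2^{o(n)})$-time algorithm for \cmor{} on such instances ($n$ being the number of vertices of the input graph) with this reduction would then decide {\sc Vertex Cover} on subcubic graphs in time $\Oh(2^{o(\Oh(N))})=\Oh(2^{o(N)})$, contradicting ETH.

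The one step that requires genuine care — rather than mere bookkeeping on the gadget already built — is the ETH lower bound for the \emph{bounded-degree} restriction of {\sc Vertex Cover} with a \emph{linear} parameter dependence: one must check that each reduction in the chain from {\sc 3-SAT} down to subcubic {\sc Vertex Cover} increases the instance size only by a constant factor, so that $\Oh(2^{o(N)})$ for subcubic {\sc Vertex Cover} is genuinely ruled out (and not merely $\Oh(2^{o(\sqrt N)})$). This is standard; once it is in hand, the remainder is the direct verification above that the reduction of Theorem~\ref{thm:pathwidthnphardness} is linear-size and that the number of occurrences of each color is governed by the degree of the corresponding vertex in $G$.
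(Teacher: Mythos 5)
Your proposal is correct and follows essentially the same route as the paper's proof: start from the ETH lower bound for {\sc Vertex Cover} on maximum-degree-$3$ graphs, push it through the linear-size construction of Theorem~\ref{thm:pathwidthnphardness} augmented with the color-connector vertex of Corollary~\ref{cor:connectedpathwidthnphardness}, and observe that bounded degree bounds each color's occurrences by $\deg_G(v_j)+1\le 4$. The only cosmetic difference is that you re-derive the subcubic {\sc Vertex Cover} lower bound via the sparsification-lemma chain, where the paper simply cites it.
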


\begin{proof}
It is well known, and follows from~\cite{js} and the standard reduction from {\sc Independent Set} to {\sc Vertex Cover}, that unless ETH fails, {\sc Vertex Cover}, restricted to graphs of maximum degree at most 3, denoted VC-3, is not solvable in subexponential time. Starting from an instance of VC-3 with $n$ vertices, and observing that the reduction in the proof of Theorem~\ref{thm:pathwidthnphardness} results in an instance of \cmor{} whose number of vertices is $O(n)$, of pathwidth at most 3, and in which each color appears at most 4 times, proves the result.
\end{proof}

\begin{figure}[htbp]
\begin{center}
\includegraphics[width=8.5cm]{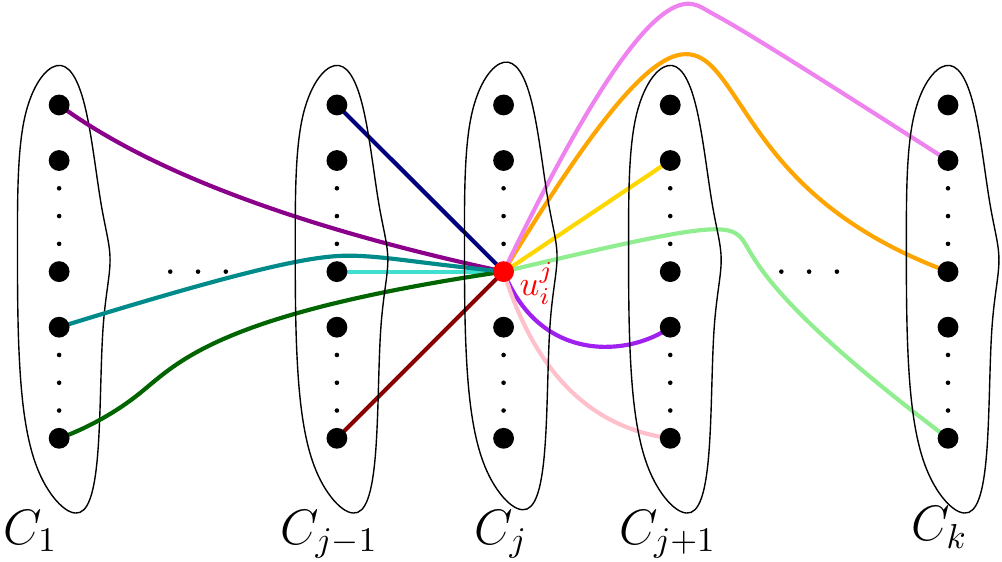}
\end{center}
\begin{center}
\includegraphics[width=8.5cm]{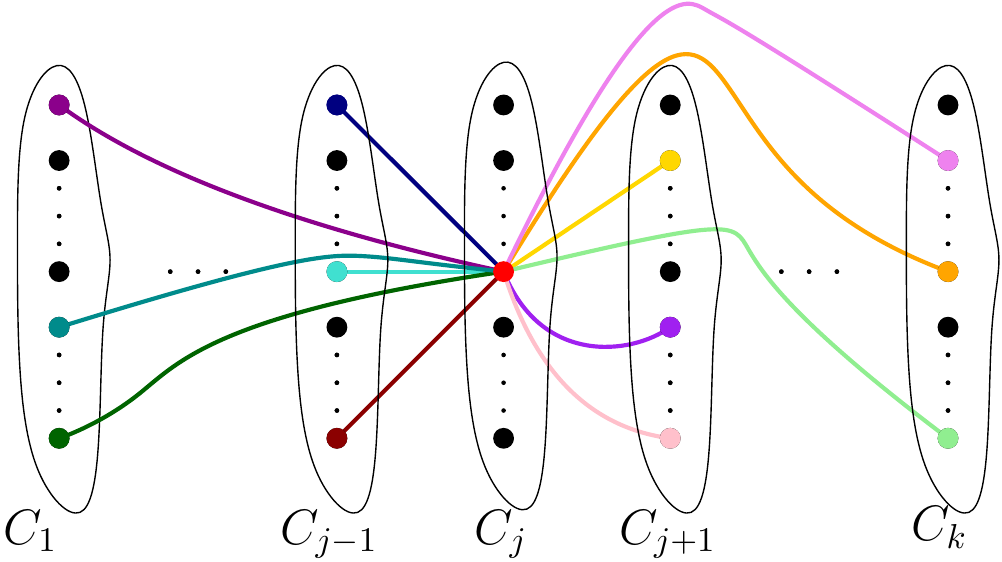}
\end{center}
\begin{center}
\includegraphics[width=8.5cm]{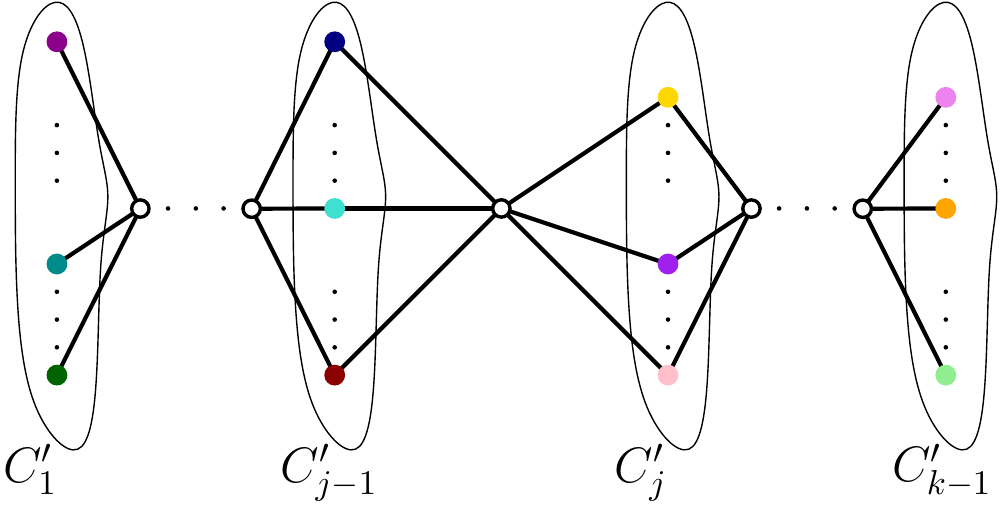}
\end{center}
\caption{Illustration of the construction of the gadget $G_{i, j}$ in the proof of Lemma~\ref{lem:w1hard}.}
\label{fig:w1hard_gadget}
\end{figure}

\begin{figure}[htbp]
\begin{center}
\includegraphics[width=13cm]{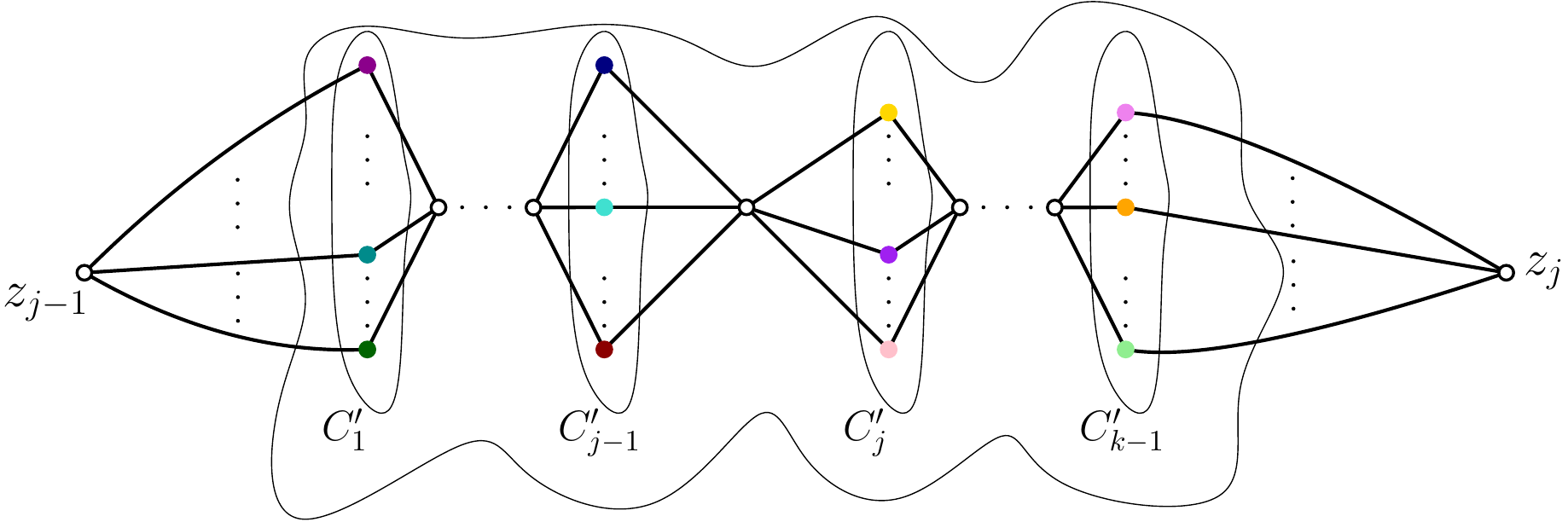}
\end{center}
\begin{center}
\includegraphics[width=13cm]{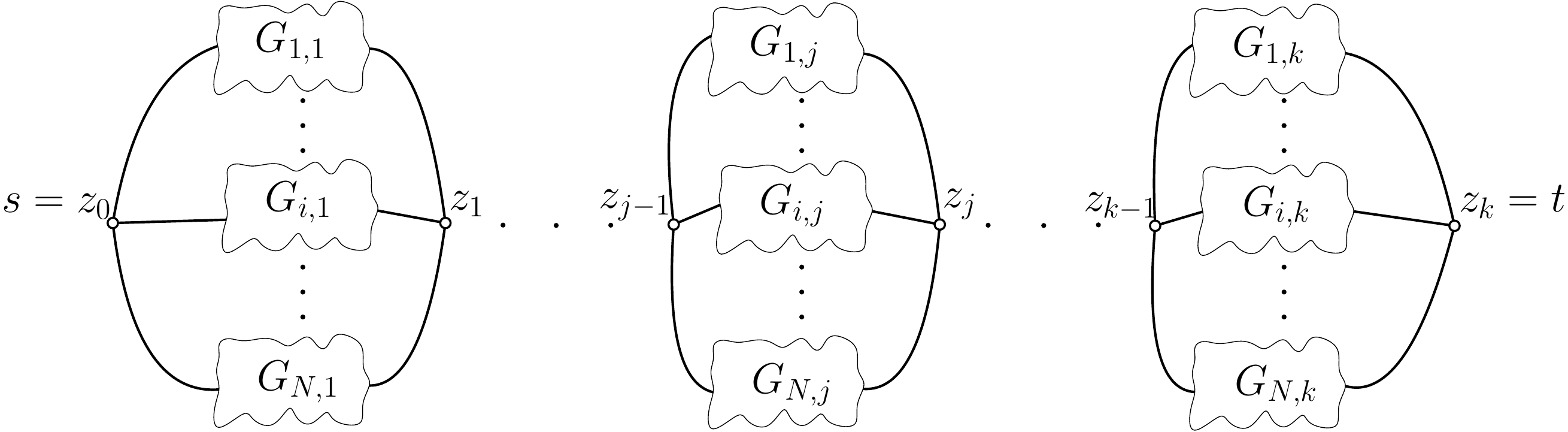}
\end{center}
\caption{Illustration of the construction of $G'$ in the proof of Lemma~\ref{lem:w1hard}.}
\label{fig:w1_hardness_whole}
\end{figure}

Next, we shift our attention to studying the parameterized complexity of \mor{} and \cmor{}. The reduction from {\sc Set Cover} showing the \NP-hardness of \mor{}, given in several works~\cite{carr,hauser,varma}, is in fact an \FPT-reduction implying the \W[2]-hardness of \mor{}. We will strengthen this result, and show in the remainder of this section that \mor{} is \W[SAT]-hard. We will also prove the membership of the problem in \W[\Pol], which adds a natural \W[SAT]-hard problem to this class. The \W[SAT]-hardness result shows that the problem is hopeless in terms of it having \FPT-algorithms. We start by showing that the problem remains \W[1]-hard, even when restricted to instances of small pathwidth (and hence small treewidth) and maximum number of occurrences of each color. We then show that the problem remains \W[1]-hard even when parameterized by both $k$ and the length of the sought path.

\begin{rem}
\label{rem:gadget}
Before we prove our parameterized hardness results for \mor{}, we remark that we can obtain equivalent hardness results for \gmor{} using the following generic realization of instances of \mor{} as instances of \gmor{}. Given an instance $(G,C,\chi,s,t,k)$ of \mor{}, we define an equivalent instance of \gmor{} as follows. We start by fixing a straight-line plane embedding $\Pi$ of $G$, which always exists by F\'ary's theorem~\cite{Fary48}. Moreover, we can compute such an embedding in linear time~\cite{deFraysseixHubertPachPollack88}. We define the starting and finishing positions of the path as the images of vertices $s$ and $t$ under $\Pi$, respectively. To force the path to go along the edges of $G$, we correspond with every edge a ``corridor'' by putting on both sides of the image of the edge $k+1$ trapezoids as shown in Figure~\ref{fig:graphToGeom}. The only possible way to move along the vertices of the graph $G$ without intersecting more than $k$ obstacles is to move within these corridors. Finally, for each color $c\in C$ and every vertex $v\in V(G)$ such that $c\in \chi(v)$, we create a rectangle around the image of the vertex $v$ under $\Pi$ that intersects all the trapezoids corresponding to the edges incident to $v$. We define the obstacle corresponding to the color $c$ in the geometric instance to be the union of these rectangles. This disallows the use of less than $k+1$ trapezoid obstacles to go through a vertex $v$ of $G$ without intersecting all the obstacles representing the color set $\chi(v)$.
Note that the only thing that is affected by this geometric realization is the number of obstacles that overlap at a region, which corresponds to the number of colors on the vertex  in the graph that corresponds to the region; this number might increase by at most 2.
\end{rem}

\begin{figure}[htbp]
	\begin{center}
		\includegraphics[width=13cm]{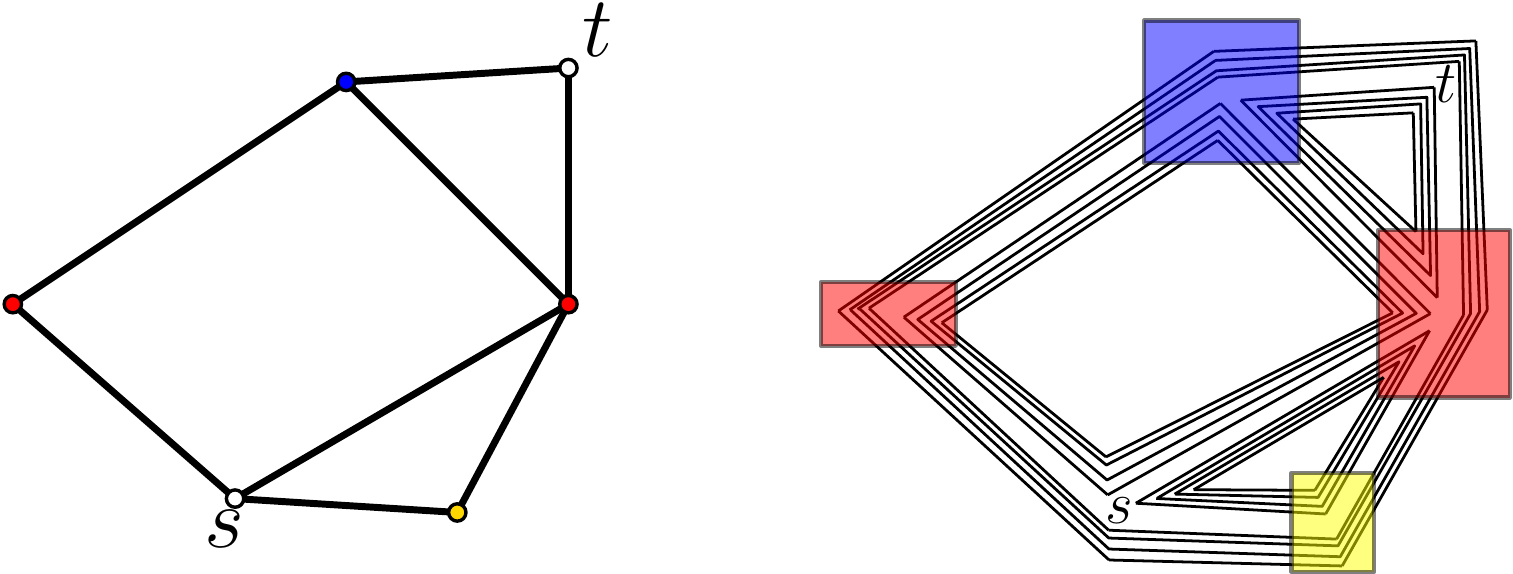}
	\end{center}
\caption{Illustration of the realization of an instance of \mor{} as an instance of \gmor{}.}
	\label{fig:graphToGeom}
\end{figure}

\begin{lemma}
\label{lem:w1hard}
\mor{}, restricted to instances of pathwidth at most 4 and in which each vertex contains at most one color and each color appears on at most 2 vertices, is \W\rm{[}1\rm{]}-hard parameterized by $k$.
\end{lemma}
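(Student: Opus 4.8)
The plan is to reduce from a W[1]-hard problem whose structure naturally forces ``one choice per group,'' such as {\sc Multicolored Clique} or, more conveniently here, {\sc Multicolored Independent Set} (equivalently {\sc Grid Tiling} / {\sc Partitioned Clique}) on a graph with vertex set partitioned into $k$ color classes $V_1,\dots,V_k$; the task is to pick one vertex from each class so that the chosen set is independent (or forms a clique, depending on the variant). These problems are W[1]-hard parameterized by $k$, and — crucially for the pathwidth bound — we will arrange the instance so that the ``selection'' part of the path is a long chain of small gadgets, one per color class, read left to right from $s$ to $t$; this linear layout is what keeps the pathwidth bounded by a constant. The parameter $k$ of {\sc Colored Path} will be a linear function of the parameter $k$ of the source problem (roughly $k$ plus $\binom{k}{2}$, i.e.\ $O(k^2)$, which is still just a function of the source parameter, so this is a legitimate \FPT-reduction).

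The key steps, in order: (1) Build a ``selection gadget'' $S_i$ for each color class $V_i$: a small subgraph connecting an entry vertex to an exit vertex, with one internal $s$-$t$-traversing branch per vertex of $V_i$; the branch corresponding to vertex $v\in V_i$ is decorated with a dedicated ``selection color'' $c_i$ common to the whole class, so that traversing any one branch of $S_i$ costs exactly one new color $c_i$, and the path is forced to use exactly $k$ selection colors total (one per gadget). (2) Encode the adjacency/independence constraints: for each unordered pair $\{i,j\}$ we want to be able to detect incompatible choices. The standard trick when each color may appear on at most $2$ vertices is to give each ``forbidden pair'' its own private color placed on exactly the two branch-vertices realizing that forbidden combination — one in $S_i$, one in $S_j$ — so that choosing both would cost an extra color; by setting the budget $k$ tightly (exactly the number of selection colors), any path that picks two branches sharing such a private color exceeds the budget. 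Consistency within a class (picking the ``same'' vertex in all the pairwise tests that involve class $i$) is enforced by routing: $S_i$ is actually a cascade of sub-gadgets, one for each $j\ne i$, all sharing the same per-vertex selection color $c_i$, so that once $c_i$ is ``paid,'' re-entering a branch for the same $v$ in a later sub-gadget is free, but switching to a different $v'$ would introduce $v'$'s private pair-colors. (3) Chain $S_1,\dots,S_k$ in series between $s$ and $t$, verify planarity (each gadget is a small planar piece, glued at cut vertices, so the whole is planar), verify that each vertex carries at most one color and each color appears on at most two vertices (selection colors appear many times — so instead we make selection colors also ``private'': split each class gadget so the color $c_i$ is replaced by a chain of degree-$2$ edge-subdivision vertices forming a forced sub-path that must be traversed, i.e.\ realize ``must pay one unit'' by a long mandatory detour rather than a reused color), and verify the pathwidth is at most $4$ by exhibiting an explicit path decomposition sliding a constant-size bag along the cascade. (4) Prove correctness: a tiling/independent set of the source instance yields a $k$-valid $s$–$t$ path (follow the chosen branches), and conversely a $k$-valid path, since it must cross every gadget and the budget is tight, selects one branch per class with no forbidden pair, giving a solution.

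The main obstacle I expect is simultaneously meeting all three structural restrictions — pathwidth $\le 4$, $\le 1$ color per vertex, $\le 2$ vertices per color — while still encoding a genuinely W[1]-hard (not merely NP-hard) problem. The ``$\le 2$ vertices per color'' constraint is the binding one: it rules out the usual {\sc Set Cover}-style gadget where a single color is reused across many vertices, and it forces every semantic constraint to be encoded by a color shared between exactly one pair of vertices, so consistency (``the same choice for class $i$ across all $k-1$ pairwise tests'') cannot be enforced by a shared color and must instead be forced topologically, through the cascade routing and through tight budgeting. Getting the budget arithmetic exactly right — so that the honest path spends exactly $k$ colors and any dishonest path (skipping a gadget, switching choices mid-cascade, or hitting a forbidden pair) provably spends strictly more — while keeping the whole construction $2$-outerplanar-ish and of constant pathwidth, is the delicate part; everything else (planarity, the path decomposition, polynomial-time computability, $k' = O(k^2)$) is routine once the gadget is pinned down.
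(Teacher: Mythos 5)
There is a genuine gap, and it sits exactly where you predicted it would: the consistency mechanism. Your encoding is avoidance-based (private colors on forbidden pairs, tight budget), which only works if the path is forced to select the \emph{same} vertex of class $V_i$ in all $k-1$ pairwise tests involving class $i$. The only devices you offer for this are (a) a selection color $c_i$ shared across all branches of the class (or, in the cascade description, a per-vertex color reused in all $k-1$ sub-gadgets), which flatly violates the ``each color appears on at most $2$ vertices'' restriction, and (b) the proposed repair of replacing that color by a forced colorless detour of subdivision vertices --- but such a detour costs zero colors, so it neither contributes to the budget arithmetic (``exactly $k$ selection colors'') nor penalizes switching from $v$ to $v'$ between sub-gadgets. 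With the selection colors gone, the budget counts only forbidden-pair colors, and a path can cheat by answering each pairwise test with a different representative of $V_i$; deciding whether every pair of classes admits \emph{some} locally compatible pair is not W[1]-hard, so the reduction as described does not establish hardness. You flag this as ``the delicate part,'' but the proof is precisely that missing part.

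The paper's proof closes this hole with a different encoding that you may find instructive. It reduces from \textsc{Multi-Colored Clique}, gives each vertex $u_i^j$ its \emph{own} gadget placed in parallel between consecutive empty vertices $z_{j-1},z_j$ (so ``selection'' is a routing choice and needs no selection colors at all), and inside the gadget of $u_i^j$ a traversal must pick, for every other class, a copy of a neighbor of $u_i^j$, colored by the corresponding \emph{edge} color $c_e$; each edge color thus appears on exactly two vertices of the whole graph, one in each endpoint's gadget. Adjacency is then enforced \emph{positively} by counting: any $s$-$t$ path accumulates exactly $k(k-1)$ color occurrences, and with budget $k'=\binom{k}{2}$ and every color occurring at most twice globally, every color on the path must occur exactly twice on it, which happens precisely when the $k$ selected gadgets correspond to pairwise adjacent vertices --- consistency comes for free, with no forbidden-pair penalties and no reused colors. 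If you want to salvage your own route, you would need an analogous trick that encodes ``same choice across the cascade'' using only colors of multiplicity $2$; the double-use counting argument is the natural candidate, at which point you have essentially reconstructed the paper's reduction.
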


\begin{proof}
We reduce from the \W[1]-hard problem \textsc{Multi-Colored Clique}~\cite{Hartung2013}.
Let $(G, k)$ be an instance of \textsc{Multi-Colored Clique}, where $V(G)$ is partitioned into the color classes $C_1, \ldots, C_k$. Let $C_j = \{u_{i}^{j} \mid i \in [|C_j|]\}$. We describe how to construct an instance $(G', C', \chi', s, t, k')$ of \mor{}. For an edge $e \in G$, associate a distinct color $c_e$, and define $C' = \{c_e \mid e \in E(G)\}$.
To simplify the description of the construction, we start by defining a gadget that will serve as a building block for this construction.

For a vertex $u_{i}^{j}$ in color class $C_j$, we define the gadget $G_{i,j}$ as follows. Create a copy of each color class $C_{j'}$, $j' \neq j$, and remove from each $C_{j'}$ all copies of vertices that are not neighbors of $u_{i}^{j}$ in $G$. Let the resulting copies of the color classes be $C'_1, \ldots, C'_{k-1}$. We define the color of a copy $v'$ of a neighbor $v$ of $u_{i}^{j}$ as $\chi'(v') = \{c_e\}$, where $e=u_{i}^{j}v$. Next, we introduce $k-2$ empty vertices $y_{r}$, $r \in [k-2]$. For $r \in [k-2]$, we connect all vertices in $C'_r$ to $y_r$, and connect $y_r$ to all vertices in $C'_{r+1}$. This completes the construction of gadget $G_{i,j}$; we refer to $C'_1$ and $C'_{k-1}$ as the first and last color classes in gadget $G_{i,j}$, respectively. See Figure~\ref{fig:w1hard_gadget} \ifshort in Section~\ref{sec:figures} \fi for illustration of $G_{i,j}$. Observe that every path from a vertex in $C'_1$ to a vertex in $C'_{k-1}$ contains exactly one vertex from each $C'_r$, $r \in [k-1]$, and contains all vertices $y_{r}$, $r \in [k-2]$. Therefore, any such path contains the colors of exactly $k-1$ distinct edges that are incident to $u_{i}^{j}$.

We finish the construction of $G'$ by introducing $k+1$ new empty vertices $z_0, \ldots, z_{k}$, and connecting them as follows. For each color class $C_j$, $j \in [k]$, and each vertex $u_{i}^{j} \in C_j$, we create the gadget $G_{i,j}$, connect $z_{j-1}$ to each vertex in the first color class of $G_{i,j}$, and connect each vertex in the last color class of $G_{i,j}$ to $z_j$. Let $G'$ be the resulting graph. Finally, we set $s=z_0$, $t=z_k$, and $k'=\binom{k}{2}$. See Figure~\ref{fig:w1_hardness_whole} \ifshort in Section~\ref{sec:figures} \fi for illustration. This completes the construction of the instance $(G', C', \chi', s, t, k')$ of \mor{}. Observe that each vertex in $G'$ contains at most one color, and that each color $c_e$ of an edge $e=u_{i}^{j}u_{i'}^{j'}$ in $G$, appears on exactly two vertices in $G'$: the copy of $u_{i'}^{j'}$ in the gadget $G_{i,j}$ of $u_{i}^{j}$, and the copy of $u_{i}^{j}$ in the gadget $G_{i',j'}$ of $u_{i'}^{j'}$.

Clearly, the reduction that takes an instance $(G, k)$ of \textsc{Multi-Colored Clique} and produces the instance $(G', C', \chi', s, t, k')$ of \mor{} is computable in \FPT-time. To show its correctness, suppose that
$(G, k)$ is a yes-instance of \textsc{Multi-Colored Clique}, and let $Q$ be a $k$-clique in $G$. Then $Q$ contains a vertex from each $C_j$, for $j \in [k]$. For a vertex $u_{i}^{j} \in Q$, let $G_{i,j}$ be its gadget, and define the path $P_j$ as follows. In each color class in $G_{i,j}$, pick the unique vertex that is a copy of a neighbor of $u_{i}^{j}$ in $Q$; define $P_j$ to be the path in $G_{i,j}$ induced by the picked vertices, plus the empty vertices $y_{r}$, $r \in [k-2]$, that appear in $G_{i,j}$. Finally, define $P$ to be the $s$-$t$ path in $G'$ whose edges are: the (unique) edge between $z_{r-1}$ and an endpoint of $P_r$, $P_r$, and the (unique) edge between an endpoint of $P_r$ and $z_r$, for $r \in [k]$. To show that $P$ is $k'$-valid, observe that all the nonempty vertices in $P$ are vertices whose color is the color of an edge between two vertices in $Q$. This shows that the number of colors that appear on $P$ is at most $k'=\binom{k}{2}$, and hence, $P$ is $k'$-valid. It follows that $(G', C', \chi', s, t, k')$ is a yes-instance of \mor{}.

Conversely, suppose that $P$ is a $k'$-valid $s$-$t$ path in $G'$. Then $P'$ must start at $s$, visit the gadgets of exactly $k$ vertices $u_{i_j}^{j} \in C_j$, for $j \in [k], i_j \in [|C_j|]$, and end at $t$. We claim that
$Q= \{u_{i_j}^{j} \mid j \in [k]\}$ is a clique in $G$. Recall that the subpath of $P$ that traverses a gadget $G_{i, j}$ of $u_{i_j}^{j}$ contains the colors of exactly $k-1$ edges that are incident to $u_{i_j}^{j}$.
Therefore, the total number of occurrences of colors (counting multiplicities) on $P$ is precisely $(k-1)k$. Since $P$ is $\binom{k}{2}$-valid, and each color $c_e$ of an edge $e$ in $G$ appears exactly twice in $G'$, it follows that each color that appears on $P$ appears exactly twice on $P$. This is only possible if the gadgets corresponding to the two endpoints of the edge are traversed by $P$, and hence, both endpoints of the edge are in $Q$. Therefore, $P$ contains the colors of $k'=\binom{k}{2}$ edges, whose both endpoints are in $Q$. Since $|Q|=k$, it follows that $Q$ is a $k$-clique in $G$, and that $(G, k)$ is a yes-instance of \textsc{Multi-Colored Clique}.
\end{proof}

\begin{lemma}
\label{lem:inw1}
\mor{}, parameterized by both $k$ and the length of the path $\ell$, is in \W\rm{[}1\rm{]}.
\end{lemma}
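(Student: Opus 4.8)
The plan is to exhibit an FPT-reduction from \mor{}, parameterized by both $k$ and the path length $\ell$, to the canonical \W[1]-complete problem of deciding whether a monotone Boolean circuit of weft $1$ (equivalently, a $t$-normalized formula with $t=1$, i.e., a single {\sc and} of {\sc or}s, or more conveniently a $2$-normalized formula which by the standard weft-$1$ characterization still lands in \W[1] when the bottom fan-in is bounded) has a satisfying assignment of weight exactly $k'$, where $k'$ is a function of $k$ and $\ell$ only. The cleanest target is probably {\sc Short Nondeterministic Turing Machine Acceptance} or, more directly, {\sc Weighted $q$-CNF Satisfiability} with clause size bounded by a function of the parameters, both of which are \W[1]-complete; I will use the weighted $q$-CNF formulation.

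First I would set up the variable space. Since we are looking for an $s$-$t$ path with at most $\ell$ edges that uses at most $k$ colors, guess (as weight-$k'$ choices) the following: the sequence of at most $\ell+1$ vertices $v_0=s, v_1, \dots, v_r=t$ on the path (with $r\le \ell$), encoded by introducing, for each position $i\in\{0,\dots,\ell\}$ and each vertex $u\in V(G)$, a Boolean variable $x_{i,u}$ meaning ``the $i$-th vertex of the path is $u$''; and the set of colors used, encoded by variables $y_c$ for each $c\in C$ meaning ``color $c$ is allowed''. A satisfying assignment of weight exactly $(\ell+1) + k$ (pad shorter paths by repeating $t$, or add a sentinel; this only changes the target weight by a constant) should correspond to a valid short path. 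The constraints to encode as clauses are: (i) exactly one $x_{i,\cdot}$ is true for each $i$ — this is the one subtlety, since ``exactly one'' needs both an at-least-one clause (fan-in $|V(G)|$, which is fine for \W[1] since bottom fan-in is unbounded there) and pairwise at-most-one clauses $\neg x_{i,u}\vee\neg x_{i,u'}$; (ii) consecutive chosen vertices are adjacent or equal: $\neg x_{i,u}\vee\bigvee_{u'\in N[u]} x_{i+1,u'}$; (iii) $x_{0,s}$ and $x_{\ell,t}$ are forced; (iv) color legality: for every position $i$, vertex $u$, and color $c\in\chi(u)$, the clause $\neg x_{i,u}\vee y_c$, which forces every color appearing on a chosen vertex to be ``allowed''; and (v) the weight budget on the $y_c$'s is handled by the weight parameter itself, since we ask for total weight $(\ell+1)+k$ and exactly $\ell+1$ of it is consumed by the $x$-variables (by the exactly-one constraints), leaving at most $k$ true $y_c$'s — actually we want at most, so ask for weight \emph{at most} $(\ell+1)+k$, or equivalently add dummy always-satisfiable weight-carrying variables; the standard trick of allowing ``weight $\le k'$'' is known to be equivalent to ``weight $= k'$'' for \W[1] via padding, so this is routine.

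The main obstacle — and the only place real care is needed — is the weft/depth bookkeeping: one must verify that the resulting formula, after all clauses are written, is genuinely a $2$-CNF-like object of bounded weft (weft $1$, since it is a conjunction of disjunctions), so that weighted satisfiability of it lies in \W[1] rather than higher in the hierarchy. All clauses above are plain disjunctions of literals, so the formula is a single big {\sc and} of {\sc or}s: weft $1$, as required, and the number of variables is $O(\ell\cdot|V(G)| + |C|)$ while the target weight $k' = \ell+k+O(1)$ depends only on the parameters, so this is a valid FPT-reduction. The correctness argument is then a direct two-way translation: a valid $\ell$-length $s$-$t$ path using $\le k$ colors yields the obvious weight-$k'$ assignment satisfying (i)–(v); conversely, from a satisfying assignment of the prescribed weight, the exactly-one constraints (i) pin down a vertex at each position, (ii)–(iii) force these to trace a walk from $s$ to $t$, and since a walk of length $\le\ell$ contains an $s$-$t$ path of length $\le \ell$ as a sub-walk, while (iv) guarantees all colors on chosen vertices lie among the $\le k$ true $y_c$'s, we recover an $\ell$-valid (in the length sense) $k$-valid (in the color sense) $s$-$t$ path. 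Hence \mor{} parameterized by $(k,\ell)$ reduces to weighted weft-$1$ satisfiability and is therefore in \W[1].
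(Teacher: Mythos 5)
There is a genuine gap, and it is exactly at the point you flag as ``the only place real care is needed'': the weft bookkeeping is wrong. A conjunction of disjunctions is \emph{not} automatically weft $1$; with unbounded clause width both the clause {\sc or}-gates and the output {\sc and}-gate are large gates, so an input--output path passes through two large gates and the formula has weft $2$. Indeed, weighted satisfiability of general CNF (unbounded clause size) is the canonical \W[2]-complete problem; what is \W[1]-complete is weighted $q$-CNF with \emph{constant} clause width (or, by the antimonotone collapse, unbounded width but only negative literals). Your construction uses wide clauses with positive literals in two places: the at-least-one clauses $\bigvee_{u} x_{i,u}$ of width $|V(G)|$, and the adjacency clauses $\neg x_{i,u}\vee\bigvee_{u'\in N[u]} x_{i+1,u'}$ of width up to the maximum degree plus one. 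The parenthetical claim that unbounded bottom fan-in ``is fine for \W[1]'' is precisely the \W[1]/\W[2] boundary being crossed, so as written your reduction only establishes membership in \W[2], which is strictly weaker than the statement (and too weak for the paper's purpose, since this lemma is what upgrades the \W[1]-hardness of Lemma~\ref{lem:w1hard} to the completeness claims in Theorems~\ref{thm:wcompletelength} and~\ref{thm:wcompleteoccurences}).

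The construction could in principle be repaired, but not for free: adjacency can be enforced antimonotonically by width-$2$ clauses $\neg x_{i,u}\vee\neg x_{i+1,u'}$ for every non-adjacent (and non-equal) pair $u,u'$, and the color-legality clauses $\neg x_{i,u}\vee y_c$ already have width $2$; the delicate part is eliminating the wide at-least-one clauses, which is usually done by prescribing the weight exactly and arguing by pigeonhole over the at-most-one constraints---and here one must additionally control how the prescribed weight splits between the $x$-variables and the $y$-variables, an issue your ``routine padding'' remark does not address. The paper avoids all of this by proving membership via the machine characterization of \W[1] of Chen \etal~\cite{yijia}: a tail-nondeterministic \FPT{} RAM algorithm guesses the at most $k$ colors and the at most $\ell-1$ internal vertices in its last $h(k,\ell)$ steps and verifies adjacency and color containment directly, which sidesteps any circuit-width considerations.
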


\begin{proof}
To prove membership in \W\rm{[}1{]}, we use the characterization of the class \W[1] given by Chen \etal~\cite{yijia}:

\begin{quote}
A parameterized problem $Q$ is in \W[1] if and only if there is a computable function $h$ and a nondeterministic \FPT{} algorithm $\mathbb{A}$ for a nondeterministic-RAM machine deciding $Q$, such that, for each instance $(x, k')$ of $Q$ ($k'$ is the parameter), all nondeterministic steps of $\mathbb{A}$ take place during the last $h(k')$ steps of the computation.
\end{quote}

Therefore, to show that \mor{}~is in \W[1], it suffices to exhibit such a nondeterministic \FPT{} algorithm $\mathbb{A}$. $\mathbb{A}$ works as follows: It guesses a set $C'$ of $k$ colors and guesses a sequence of $\ell-1$ internal vertices $v_1, \ldots, v_{\ell-1}$ of the path. Then it verifies that $(s=v_0, v_1, \ldots, v_{\ell-1}, v_{\ell}=t)$ is a path in $G$, and that $\chi(v_i) \subseteq C'$, for $i =0, \ldots, \ell$. It is not difficult to see that this verification can be implemented in $h(k, \ell)$ steps, where $h$ is a computable function.
\end{proof}

By Lemma~\ref{lem:contract}, we can assume that in an instance of \mor{}, no two adjacent vertices are empty. With this assumption in mind, if the instance satisfies that each vertex contains at most one color and that each color appears on at most 2 vertices, then any $k$-valid $s$-$t$ path has length at most $4k+1$. It follows from Lemma~\ref{lem:w1hard} and Lemma~\ref{lem:inw1} that:
\begin{theorem}
	\label{thm:wcompletelength}
	\mor{}, parameterized by both $k$ and the length of the path $\ell$, is \W\rm{[}1\rm{]}-complete.
\end{theorem}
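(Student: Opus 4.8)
\textbf{Proof proposal for Theorem~\ref{thm:wcompletelength}.}

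The plan is to establish the two directions of W[1]-completeness for \mor{} parameterized by both $k$ and the path length $\ell$, using as black boxes the results already proven in the excerpt. Membership in W[1] is immediate from Lemma~\ref{lem:inw1}, which shows that \mor{} parameterized by $(k, \ell)$ lies in W[1] via the Chen \etal\ characterization (all nondeterministic guesses confined to the last $h(k,\ell)$ steps). So the entire content of the theorem is the hardness direction, and here the idea is to \emph{reuse} the W[1]-hardness reduction of Lemma~\ref{lem:w1hard} verbatim, but additionally bound the path length in the produced instances by a function of $k$, so that adding $\ell$ as a second parameter does not weaken the reduction.

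The key observation is the one already flagged right before the theorem statement: by Lemma~\ref{lem:contract} (color contraction) we may preprocess any \mor{} instance so that no two adjacent vertices are empty. In an instance where, in addition, every vertex carries at most one color and every color appears on at most $2$ vertices, a $k$-valid $s$-$t$ path $P$ uses at most $k$ colors, hence contains at most $2k$ nonempty vertices; since $P$ is induced (we may always shortcut) and no two empty vertices are adjacent, the empty vertices on $P$ are ``separated'' by nonempty ones, so there are at most $2k+1$ of them. Thus $|V(P)| \le 4k+1$, i.e.\ $\ell \le 4k$. Therefore the instances output by the Lemma~\ref{lem:w1hard} reduction (possibly after applying the color-contraction preprocessing to remove adjacent empty vertices, which preserves yes/no answers by Lemma~\ref{lem:contract} and preserves the ``one color per vertex, each color twice'' property) are such that a solution path, if it exists, has length bounded by $4k$. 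Consequently we may take $\ell = 4k$ as part of the target instance; the reduction from \textsc{Multi-Colored Clique} now maps $(G,k)$ to a \mor{} instance with parameters $(k', \ell)$ both bounded by computable functions of $k$ ($k' = \binom{k}{2}$, $\ell = 4k$), which is exactly an FPT-reduction in the sense required for W[1]-hardness under the combined parameterization. Correctness (yes $\Leftrightarrow$ yes) is inherited unchanged from Lemma~\ref{lem:w1hard}, since imposing the length bound $\ell = 4k$ excludes no solution path.

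I expect the main (minor) obstacle to be purely bookkeeping: one must check that the color-contraction preprocessing step, when applied to the instance $(G',C',\chi',s,t,k')$ built in Lemma~\ref{lem:w1hard} to eliminate adjacent empty vertices (the $z_i$'s and the $y_r$'s inside gadgets are empty and some are adjacent), does not destroy the structural guarantees ``each vertex has $\le 1$ color'' and ``each color occurs on $\le 2$ vertices'' — but color contraction only ever merges two \emph{equally-colored} vertices, and merging two empty vertices keeps them empty and cannot increase any color's multiplicity, so these properties survive. Alternatively, and perhaps more cleanly, one can avoid preprocessing altogether by inspecting the structure of $P$ directly in $G'$: a valid $s$-$t$ path traverses $z_0, \ldots, z_k$ and exactly one gadget $G_{i,j}$ between consecutive $z$'s, and each gadget contributes exactly $(k-1)$ nonempty vertices plus $(k-2)$ empty $y_r$'s plus, together with the two $z$-endpoints, a bounded number of vertices; summing over the $k$ gadgets gives $|V(P)| = O(k^2)$, which again bounds $\ell$ by a function of $k$. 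Either route yields the desired FPT-reduction, completing the hardness direction and hence the theorem.
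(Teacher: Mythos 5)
Your proposal follows the paper's own route: membership comes from Lemma~\ref{lem:inw1}, and hardness from the reduction of Lemma~\ref{lem:w1hard} together with the observation that in its instances (at most one color per vertex, each color on at most two vertices, no adjacent empty vertices by Lemma~\ref{lem:contract}) any valid $s$-$t$ path has length bounded by a function of the parameter, so the same reduction works for the combined parameterization. The only slip is the instantiation ``$\ell=4k$'': the color budget of the produced instance is $k'=\binom{k}{2}$, so the bound is $4k'+1=\Theta(k^2)$ rather than $4k$ --- still a computable function of $k$, and your alternative direct count of $O(k^2)$ already supplies the correct value.
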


\begin{theorem}
	\label{thm:wcompleteoccurences}
	\mor{}, restricted to instances of pathwidth at most 4 and in which each vertex contains at most one color and each color appears on at most 2 vertices, is \W\rm{[}1\rm{]}-complete parameterized by $k$.
\end{theorem}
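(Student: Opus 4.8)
The plan is to derive Theorem~\ref{thm:wcompleteoccurences} by combining the two ingredients already developed in the excerpt: the \W[1]-hardness construction of Lemma~\ref{lem:w1hard}, which produces instances of \mor{} of pathwidth at most~$4$ in which every vertex carries at most one color and every color occurs on at most two vertices, and the membership result of Lemma~\ref{lem:inw1}, which places \mor{} in \W[1] when parameterized by $k$ together with the path length $\ell$. The only genuinely new observation needed is that, for the restricted instances at hand, $\ell$ is bounded by a function of $k$, so that the parameterization by $k$ alone is as powerful as the parameterization by $k$ and $\ell$ combined.

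First I would invoke Lemma~\ref{lem:contract} (via the color contraction operation) to assume, without loss of generality, that the instance is irreducible, and in particular that no two adjacent empty vertices exist; note the instances coming out of Lemma~\ref{lem:w1hard} are color-connected or can be made so without changing the relevant parameters, but even if one does not want to assume color-connectivity here, the same length argument works directly on an induced $k$-valid path. The key length bound: take any $k$-valid induced $s$-$t$ path $P$. Each vertex of $P$ is either empty or carries exactly one color. Since no two empty vertices are adjacent on $P$ (after the reduction), empty and nonempty vertices alternate along $P$ up to a bounded discrepancy, so the number of vertices of $P$ is at most roughly twice the number of nonempty vertices of $P$. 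The nonempty vertices of $P$ collectively use at most $k$ colors, and each color appears on at most $2$ vertices of the whole graph, hence on at most $2$ vertices of $P$; therefore $P$ has at most $2k$ nonempty vertices and, interleaving empties, at most $4k+1$ vertices in total, i.e.\ length $\ell \le 4k$.

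With this bound in hand, the theorem follows quickly. For hardness: Lemma~\ref{lem:w1hard} already gives \W[1]-hardness parameterized by $k$ for exactly this class of instances, so nothing more is needed on the hardness side. For membership: given such an instance $(G,C,\chi,s,t,k)$, we may assume it is irreducible by Lemma~\ref{lem:contract}; then any $k$-valid $s$-$t$ path (if one exists, we may take an induced, hence shortest-in-that-color-class, one) has length $\ell \le 4k+1$, so the instance, viewed with parameter $(k,\ell)$, has $\ell$ bounded by a computable function of $k$. Applying the nondeterministic \FPT{} algorithm $\mathbb{A}$ of Lemma~\ref{lem:inw1} — which guesses the $k$ colors and the at most $\ell-1 \le 4k$ internal vertices in its final $h(k,\ell)$ nondeterministic steps — now runs within the last $h(k,4k+1)=h'(k)$ steps, which is the Chen--Flum--Grohe characterization of \W[1] with $k$ alone as parameter. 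Hence \mor{} restricted to this class is in \W[1] parameterized by $k$, and together with Lemma~\ref{lem:w1hard} it is \W[1]-complete.

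The main obstacle — really the only point requiring care — is making the length bound airtight: one must be precise that after applying Lemma~\ref{lem:contract} the instance genuinely has no two adjacent empty vertices, that restricting attention to induced $k$-valid paths loses no generality, and that the ``each color on at most $2$ vertices'' hypothesis is preserved by color contraction (it is, since contraction only merges vertices with identical color sets and does not create new colors or new occurrences). Everything else is bookkeeping: counting nonempty vertices on the path, accounting for the alternation with empty vertices at the path's endpoints $s,t$ (which are empty by Assumption~\ref{ass:sat}), and then quoting Lemma~\ref{lem:inw1} with the substituted parameter.
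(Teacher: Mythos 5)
Your proposal is correct and follows essentially the same route as the paper: the paper likewise applies the color contraction operation (Lemma~\ref{lem:contract}) to forbid adjacent empty vertices, observes that with at most one color per vertex and at most two occurrences per color every $k$-valid $s$-$t$ path has length at most $4k+1$, and then combines the hardness of Lemma~\ref{lem:w1hard} with the membership result of Lemma~\ref{lem:inw1} for the parameterization by $k$ and $\ell$. Your additional remarks (preservation of the restrictions under contraction, handling of the empty endpoints $s,t$) are just the bookkeeping the paper leaves implicit.
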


Next, we show that \mor{} sits high up in the parameterized complexity hierarchy. We start by showing its membership in \W\rm{[}\Pol\rm{]}:

\begin{theorem}
\label{thm:inwp}
\mor{}, parameterized by $k$, is in \W\rm{[}\Pol\rm{]}.
\end{theorem}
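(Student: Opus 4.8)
The plan is to show membership in \W[\Pol] by constructing an \FPT-reduction from \mor{} (parameterized by $k$) to \textsc{Weighted Circuit Satisfiability} for circuits of polynomial size, where the target weight is a function of $k$ only. Recall that $\W[\Pol]$ is exactly the class of parameterized problems \FPT-reducible to weighted satisfiability of polynomial-size Boolean circuits; so it suffices to build, given an instance $(G, C, \chi, s, t, k)$ of \mor{}, a monotone circuit $D$ of size polynomial in $|G| + |C|$ together with a parameter $k' = f(k)$, such that $D$ has a satisfying assignment of weight $k'$ if and only if $G$ has a $k$-valid $s$-$t$ path.

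The key modelling idea is that a weight-$k'$ assignment should encode a choice of at most $k$ colors (the color budget) plus a certificate that there is an $s$-$t$ path all of whose vertices are colored only from those chosen colors. The first block of input variables will be $\{x_c \mid c \in C\}$, one per color, and the intended meaning of $x_c = 1$ is ``color $c$ is in the budget''; we will arrange the final weight so that exactly $k$ of these are set. The second block of input variables encodes the path itself. Here I would use a layered/reachability encoding so the certificate is verifiable by a polynomial-size \emph{monotone} circuit: introduce variables $r_{v,i}$ for $v \in V(G)$ and $0 \le i \le n$ (where $n = |V(G)|$), with intended meaning ``$v$ is reachable from $s$ in at most $i$ steps using only vertices whose color set is contained in the budget.'' We enforce $r_{s,0} = 1$ and the monotone recurrence $r_{v,i} \ge r_{v,i-1}$, and $r_{v,i} \ge r_{u,i-1} \wedge \big(\bigwedge_{c \in \chi(v)} x_c\big)$ for each edge $uv$; the output gate is $r_{t,n}$. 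Since $\chi(v)$ can be large but the conjunction $\bigwedge_{c\in\chi(v)} x_c$ is built once per vertex, the circuit has size $O(n^2 + \sum_v |\chi(v)|) = \mathrm{poly}(|G|+|C|)$, is monotone, and requires no \textsc{not} gates.

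The remaining issue — and the main obstacle — is weight control: a satisfying assignment of the circuit as described could set \emph{extra} $r_{v,i}$ variables to $1$ ``for free'' (the circuit is monotone in them), so the weight of a satisfying assignment is not pinned down, whereas \W[\Pol]-reductions ask for a specific weight $k'$. The standard fix is to make the $r$-variables ``paid for'' so that a minimum-weight satisfying assignment uses exactly one reachable vertex per distance layer along a single $s$-$t$ path. Concretely I would replace the free reachability encoding by a ``guess one vertex per layer'' encoding: for each layer $i \in \{1, \ldots, L\}$, where $L$ is the maximum possible path length that we may assume is $\le n$, introduce variables $p_{v,i}$ with intended meaning ``the $i$-th vertex on the certifying path is $v$,'' add selection gadgets forcing exactly one $p_{v,i}$ per layer to be $1$ in any weight-$k'$ satisfying assignment, add consistency gates checking $p_{u,i-1} \wedge p_{v,i} \Rightarrow uv \in E(G)$, and add for each $v$ and $c \in \chi(v)$ a gate forcing $p_{v,i} \Rightarrow x_c$. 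Then set $k' = k + L$ so that a weight-$k'$ assignment must spend exactly $k$ on colors and exactly $L$ on path-vertex selectors (one per layer), and allow a vertex to be ``repeated'' in trailing layers (or pad the path to $t$) so that a genuine short path can be extended to length exactly $L$. The correctness argument then has two directions: (forward) a $k$-valid $s$-$t$ path, padded to length $L$ by repeating $t$, yields a weight-$k'$ satisfying assignment; (backward) a weight-$k'$ satisfying assignment, by the selection gadgets, picks exactly one vertex per layer forming a walk from $s$ to $t$ whose vertices are all colored within a set of $\le k$ colors, and extracting a simple subpath gives a $k$-valid $s$-$t$ path. One must double-check that the selection gadgets (exactly-one-per-layer) are expressible by polynomial-size monotone circuits whose minimum satisfying weight is exactly one per layer; using the ``at least one'' monotone clauses together with the global weight constraint $k'$ handles the ``at most one'' direction automatically, which is the usual trick in $\W[t]$/$\W[\Pol]$ membership proofs. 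This reduction is clearly computable in polynomial (hence \FPT) time and produces a circuit of polynomial size with parameter $k' = k + L \le k + n$; since the weight bound for \W[\Pol] membership may itself depend on the input and only the circuit size must be polynomial, $k' \le n + k$ is acceptable — but if one insists on a parameter depending only on $k$, one instead argues (using Lemma~\ref{lem:contract} and Assumption~\ref{ass:sat} to bound path length, or by a more careful layered encoding indexed by the at most $k$ colors rather than by $n$ layers) that $L$ can be taken to be a function of $k$. I expect the bookkeeping to make the selection gadgets simultaneously monotone, polynomial-size, and weight-exact to be the one genuinely delicate point.
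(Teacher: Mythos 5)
There is a genuine gap in the construction you actually commit to. Your ``guess one vertex per layer'' encoding produces a target parameter $k' = k + L$ with $L$ up to $n$, and neither of your escape hatches works. First, membership in \W{\rm[}\Pol{\rm]} is defined via an \FPT-reduction to weighted circuit satisfiability, and an \FPT-reduction must map the parameter $k$ to a parameter $k' \leq g(k)$ for a computable $g$; the weight of the target instance is its parameter and may not grow with $|G|$ (if it could, weighted satisfiability with weight $\approx n$ is just circuit satisfiability and the claim trivializes). Second, $L$ cannot be bounded by a function of $k$: consider a graph that is a single induced $s$-$t$ path alternating empty vertices and vertices colored $\{c\}$; the unique $s$-$t$ path is $1$-valid but has length $n-1$, and the color-contraction operation (Lemma~\ref{lem:contract}) does not apply since adjacent vertices have different color sets. (The $4k+1$ length bound in the paper holds only under the extra restriction that each color appears on at most two vertices, used for Theorem~\ref{thm:wcompletelength}.) A smaller inconsistency: the gates you add, such as $p_{u,i-1}\wedge p_{v,i}\Rightarrow uv\in E(G)$ and $p_{v,i}\Rightarrow x_c$, are not monotone, contrary to what you state; this is not fatal for \W{\rm[}\Pol{\rm]} membership, but it signals that the vertex-selection route is fighting the model.

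The fix is the idea you started with but then abandoned: do not pay for the path at all. Keep only the color variables $x_c$ as inputs (so the weight is exactly the number of chosen colors, and the parameter stays $k$), and compute the layered reachability values $r_{v,i}$ with \emph{gates} rather than guessing them as input variables: layer $i$ contains, for each vertex $v$, an \OR-gate over the layer-$(i-1)$ gates of the neighbors of $v$, fed together with the variables $x_c$, $c\in\chi(v)$, into an \AND-gate for $v$; the output collects the gates of $t$. Since gate values carry no weight, the ``weight control'' obstacle you identified simply does not arise, and by monotonicity a weight-at-most-$k$ satisfying assignment can be padded to weight exactly $k$. This is precisely the paper's proof of Theorem~\ref{thm:inwp}; your proposal had the correct modelling of the color budget but the committed certificate encoding breaks the parameter bound required of an \FPT-reduction.
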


\begin{proof}
We give an \FPT-reduction from \mor{} to {\sc Weighted Boolean Circuit Satisfiability} (WBCS) on polynomial size (monotone) circuits. Given an instance $(G, C, \chi, s, t, k)$ of \mor{}, we construct an instance
$(B, k)$ of WBCS, where $B$ is a circuit whose output gate is an {\sc or}-gate, as follows. By Assumption~\ref{ass:sat}, we can assume that $s$ and $t$ are nonadjacent empty vertices. By Lemma~\ref{lem:contract}, we can also assume that no two adjacent vertices are empty. For each color $c \in C$, we create a variable $x_c$; those are the input variables to $B$. In addition to the output gate, $B$ contains $n=|V(G)|$ layers of gates, where each layer, except the first, consists of two rows of gates, $U_i, L_i$, for $i =2, \ldots, n$, and the first layer consists of one row $L_1$ of gates. The layers of $B$ are defined as follows.

Each gate in $L_1$ is an {\sc and}-gate $g_v$ that corresponds to a neighbor $v$ of $s$; the input to $g_v$ is the set of input variables corresponding to the colors in $\chi(v)$. Suppose that row $L_i$ in layer $i$, $i \geq 1$, has been defined, and we describe how $U_{i+1}$ and $L_{i+1}$ are defined. For every vertex $v \in V(G)$ with a neighbor $u$ such that $u$ has a corresponding \AND{}-gate $g_{u}^{2}$ in $L_i$, we create an \OR{}-gate $g_{v}^{1}$ in $U_{i+1}$ and an \AND{}-gate $g_{v}^{2}$ in $L_{i+1}$ corresponding to $v$; we connect the output of each \AND{}-gate $g_{u}^{2}$ in $L_i$ corresponding to neighbor $u$ of $v$ to the input of  \OR{}-gate $g_{v}^{1}$ in $U_{i+1}$, and connect the output of the \OR{}-gate $g_{v}^{1}$ and each input variable $x_c$ such that $c \in \chi(v)$ to the \AND-gate{} $g_{v}^{2}$ in $L_{i+1}$. If $v=t$, then we connect the output of the \AND-gate{} $g_{v}^{2}$ to the output gate of the circuit. This completes the description of $B$. Clearly, the reduction that takes $(G, C, \chi, s, t, k)$ to $(B, k)$ runs in polynomial time, and hence in \FPT-time. Next, we prove its correctness.

First observe that the only gates in $B$ that are connected to its output gate are the \AND-gates{} that correspond to $t$. Second, every gate in $B$ corresponds to a vertex that is reachable from $s$ in $G$. Moreover, for every \AND-gate{} $g$ corresponding to a vertex $v$, and every $s$-$v$ path in $G$, the truth assignment that assigns 1 to the variables corresponding to the colors of this path satisfies $g$.

Suppose now that $(G, C, \chi, s, t, k)$ is a yes-instance of \mor{}. Then there is an $s$-$t$ $k$-valid path $P$ in $G$. Based on the above observations, the assignment that assigns $x_c=1$ if and only if $c \in \chi(P)$ is a satisfying assignment to $B$ of weight at most $k$. Conversely, suppose that $B$ has a satisfying assignment $\tau$ of weight at most $k$. Then there is an \AND-gate{} $g$ corresponding to $t$ that is satisfied by $\tau$, and there is a path in $B$ from a gate corresponding to neighbor of $s$ in $L_1$ to $g$, all of whose gates are satisfied by $\tau$. It is easy to verify that this path in $B$ corresponds to an $s$-$t$ path all of whose colors correspond to the input variables assigned 1 by $\tau$, and hence this path is $k$-valid.
\end{proof}

\begin{figure}[htbp]
\begin{center}
\includegraphics[width=6cm]{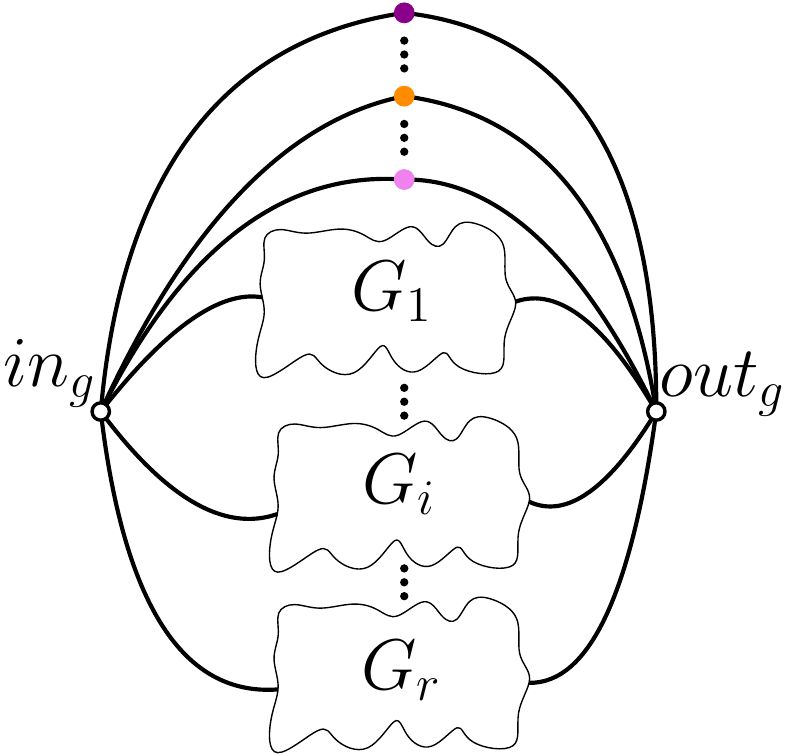}
\end{center}
\begin{center}
\includegraphics[width=16cm]{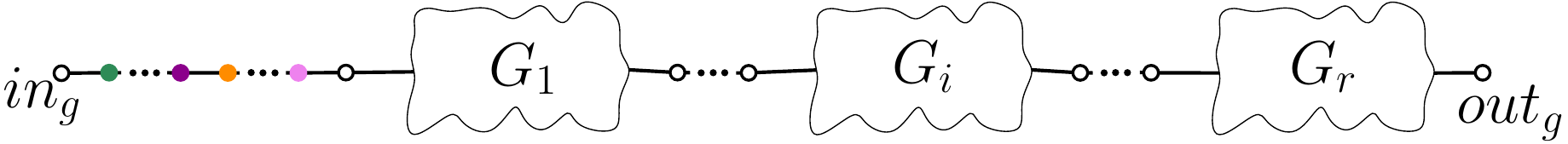}
\end{center}
\caption{Illustrations of the construction of the gadgets for an \OR-gate{} (top) and an \AND-gate{} (bottom) in the proof of Theorem~\ref{thm:wsathard}.}
\label{fig:wsat}
\end{figure}

\begin{theorem}
\label{thm:wsathard}
\mor{}, parameterized by $k$, is \W\rm{[SAT]}-hard.
\end{theorem}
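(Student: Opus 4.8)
The plan is to give an \FPT-reduction to \mor{} from the weighted satisfiability problem for circuits in normalized form (the \W[SAT]-complete problem that defines the class), which we may assume to be monotone. Recall that in a normalized circuit the gates, once the input variables are deleted, form a tree whose levels alternate between \OR-gates and \AND-gates; monotonicity (no negative literals) is precisely what will let us match ``weight $k$'' with ``$k$ colors''. Given such a circuit $D$ together with an integer $k$, we build an instance $(G,C,\chi,s,t,k)$ of \mor{} that has one color $c_x\in C$ for every input variable $x$ of $D$, and that replaces each gate of $D$ by a two-terminal gadget, built bottom-up along the gate-tree.

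The gadgets mirror the semantics of the gates. An occurrence of an input variable $x$ as a child of a gate is realised by a path on three vertices whose middle vertex $w_x$ carries the single color $\chi(w_x)=\{c_x\}$ and whose two endpoints are its terminals; distinct occurrences of $x$ use distinct vertices but the \emph{same} color. An \AND-gate with children $\gamma_1,\dots,\gamma_m$ (each an \OR-gate or a variable) becomes the \emph{series} composition of $\Gamma_{\gamma_1},\dots,\Gamma_{\gamma_m}$, identifying the right terminal of $\Gamma_{\gamma_i}$ with the left terminal of $\Gamma_{\gamma_{i+1}}$. An \OR-gate becomes the \emph{parallel} composition of its children's gadgets, identifying all left terminals into one vertex $a$ and all right terminals into one vertex $b$; since every child gadget has an internal vertex, no multi-edges arise. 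Finally $s$ and $t$ are the two terminals of the gadget of the output \OR-gate, every terminal vertex is empty, and the parameter stays $k$ (see Figure~\ref{fig:wsat}). Being a nesting of series and parallel compositions of edges, $G$ is a series-parallel graph, hence planar; since a color $c_x$ appears on every occurrence of $x$, these instances are in general not color-connected — in keeping with the theme that color-connectivity is what gives any hope of \FPT{} algorithms. The reduction runs in polynomial time and does not change $k$, so it is an \FPT-reduction.

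For correctness we would show that simple $s$-$t$ paths of $G$ correspond exactly to \emph{witness trees} of $D$, i.e.\ subtrees of the gate-tree that contain one child at every \OR-gate they contain, all children at every \AND-gate they contain, and that terminate at input variables. The key structural observation is that the series/parallel layout forces this shape onto a \emph{simple} path: to cross a parallel gadget a simple path must enter at $a$, run through exactly one child gadget, and leave at $b$ (a second child would force a revisit of $a$ or $b$); to cross a series gadget it must traverse every child gadget in order; and since the gadgets of distinct gates, and of distinct variable-occurrences, are internally vertex-disjoint, the resulting traversal is itself simple. Under this correspondence a path uses color $c_x$ exactly when it passes through some $w_x$, i.e.\ exactly when $x$ labels a leaf of the traced witness tree, so the number of colors on the path equals the number of \emph{distinct} leaf-variables of its witness tree. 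Chaining equivalences then gives: $G$ has an $s$-$t$ path using at most $k$ colors $\iff$ $D$ has a witness tree with at most $k$ distinct leaf-variables $\iff$ $D$ is satisfiable with weight at most $k$ $\iff$ (by monotonicity, padding a lighter satisfying assignment with extra $1$'s) $D$ is satisfiable with weight exactly $k$.

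The step I expect to be the main obstacle is the ``soundness'' direction of this correspondence: showing that \emph{every} simple $s$-$t$ path of $G$, read along the recursive series/parallel decomposition, is a legal witness tree — in particular that it cannot save a color by taking some unintended shortcut inside a gadget, and that the color-sharing mechanism (the only way in which different variable-occurrences interact) never lets a path be charged the color $c_x$ while avoiding every vertex $w_x$. Once each gadget is specified so that its only connections to the rest of $G$ are through its two terminals, this follows by a straightforward induction on the gate-tree, but it is the part that has to be written with care. The remaining ingredients — planarity via series-parallelness, the padding argument from ``at most $k$'' to ``exactly $k$'', and the \W[SAT]-completeness of the (monotone, normalized) source problem — are routine.
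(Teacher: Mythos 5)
Your reduction is essentially the paper's own proof: the paper likewise reduces from monotone weighted satisfiability of normalized formulas/circuits, builds the \mor{} instance by series-composing child gadgets at \textsc{and}-gates and parallel-composing them at \textsc{or}-gates, with one color per variable shared among distinct occurrence vertices, and establishes the path/satisfying-assignment correspondence by induction on the gate tree with the same parameter $k$. Your witness-tree phrasing, the series-parallel planarity observation, and the explicit ``at most $k$ versus exactly $k$'' padding remark are only cosmetic variations on that argument.
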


\begin{proof}
We give an \FPT-reduction from the \W[SAT]-complete problem {\sc Monotone Weighted Boolean Formulas Satisfiability} (M-WSAT)~\cite{dfbook}.

Recall that a Boolean formula corresponds to a circuit in the normalized form. Therefore, we can assume that the input instance of M-WSAT is $(B, k)$, where $B$ is a monotone Boolean circuit in which each (non-variable) gate has fan-out at most 1, and the gates of $B$ are structured into alternating levels of {\sc or}s-of-{\sc and}s-of-{\sc or}s.
We construct an instance $(G, C, \chi, s, t, k)$ of \mor{} as follows.

First, we let $C=[n]$, where color $i$ will represent input variable $x_i$ in $B$. We define $G$ from $B$ by defining a gadget for each gate in $B$ recursively, starting the recursive definition at the output gate of $B$.
For a gate $g$ in $B$, its gadget is defined by distinguishing the type of $g$ as follows.

If $g$ is an \AND-gate{}, let $g_1, \ldots, g_r$ be the \OR-gates{}, and $x_{i_1}, \ldots, x_{i_p}$ be the input variables that feed into $g$. The gadget of $g$ is defined as follows. First, create two empty vertices
$in_g$ and $out_g$, which will serve as the ``entry'' and ``exit'' vertices of the gadget for $g$, respectively. For each $x_{i_j}$, $j \in [p]$, create a vertex $v_j$ colored with color $i_j$ and an entry vertex $v_0$ and an exit vertex $v_{p+1}$; form a path $G_0$ consisting of the vertices $v_0, v_1, \ldots, v_p, v_{p+1}$. For each \OR-gate{} $g_i$, $i \in [r]$, recursively construct the gadget $G_i$ for $g_i$. Connect all these gadgets $G_0, \ldots, G_r$ serially in arbitrary order, starting by identifying $in_g$ with the entry vertex of the first gadget, the exit vertex of the first gadget with the entry of the second, ..., and the exit vertex of the last gadget with $out_g$. See Figure~\ref{fig:wsat} (bottom) \ifshort in Section~\ref{sec:figures} \fi for illustration.

If $g$ is an \OR-gate{}, let $g_1, \ldots, g_r$ be the \AND-gates{}, and $x_{i_1}, \ldots, x_{i_p}$ be the input variables that feed into $g$. The gadget of $g$ is defined as follows. First, create two empty vertices
$in_g$ and $out_g$, which will serve as the ``entry'' and ``exit'' vertices of the gadget for $g$, respectively. For each $x_{i_j}$, $j \in [p]$, create a vertex $v_j$ colored with color $i_j$, and connect each $v_j$ to $in_g$ and $out_g$. For each \AND-gate{} $g_i$, $i \in [r]$, recursively construct the gadget $G_i$ for $g_i$. Connect all these gadgets $G_1, \ldots, G_r$ in parallel by identifying all the entry vertices of $G_1, \ldots, G_r$ with $in_g$ and all their exit vertices with $out_g$. This complete the description of $G$. It is not difficult to see that since $B$ with its input variables removed is a tree, the above construction runs in polynomial time and results in a planar graph $G$. See Figure~\ref{fig:wsat} (top) \ifshort in Section~\ref{sec:figures} \fi for illustration.

Finally, set $s$ and $t$ to be the entry and exit vertices of the gadget corresponding to the output gate of $B$. Clearly, the reduction that takes $(B, k)$ and produces $(G, C, \chi, s, t, k)$ runs in \FPT-time. Next, we prove its correctness.

We will prove the following statement: For any gate $g$ in $B$, and any assignment $\tau$ to $B$ that assigns variables $x_{i_1}, \ldots, x_{i_p}$ the value 1, and all other variables the value 0, $\tau$ satisfies $g$ if and only if there is a path $P$ in $G$ from the entry vertex to the exit vertex of the gadget corresponding to $g$ such that $P$ uses a subset of the colors $\{i_1, \ldots, i_p\}$. Clearly, proving the aforementioned statement implies that there is a $k$-valid $s$-$t$ path in $G$ if and only if there is an assignment of weight at most $k$ that satisfies the output gate of $B$, and hence satisfies $B$.

We prove the above statement by induction on the depth of the gate $g$ in $B$. The base case is when $g$ has depth 1. In this case the input to $g$ consists only of input variables. Suppose first that $g$ is an \OR-gate{}, and let $\tau$ be an assignment that assigns exactly variables $x_{i_1}, \ldots, x_{i_p}$ the value 1. Then $\tau$ satisfies $g$ if and only if $x_{i_j}$ is an input variable to $g$, for some $j \in [p]$, which is true if and only if there is a path from the entry vertex of the gadget for $g$ to its exit vertex that uses color $i_j$. Suppose now that $g$ is an \AND-gate{}, and let $\tau$ be an assignment that assigns exactly variables $x_{i_1}, \ldots, x_{i_p}$ the value 1. Then $\tau$ satisfies $g$ if and only if the input variables to $g$ form a subset $S$ of $\{x_{i_1}, \ldots, x_{i_p}\}$; let $\eta(S)$ be the indices of the variables in $S$. Since the gadget for $g$ consists of a path $P$ between the entry and exit vertices of the gadget for $g$ such that $\chi(P) = \eta(S)$, the statement follows.

Suppose, by the inductive hypothesis, that the statement we are proving is true for any gate $g$ of depth $1 \leq a < \ell$, and let $g$ be a gate of depth $\ell$. Let $x_{j_1}, \ldots, x_{j_q}$ be the input variables to $g$, and $g_1, \ldots, g_r$ be the input gates to $g$. We again distinguish two cases based on the type of $g$. \\

\noindent {\bf Gate $g$ is an \OR-gate{}.} Let $\tau$ be an assignment that assigns exactly variables $x_{i_1}, \ldots, x_{i_p}$ the value 1. Suppose first that $\tau$ satisfies $g$. Then either $\tau$ satisfies an input variable $x_{j_z}$, $z \in [q]$, or $\tau$ satisfies an input \AND-gate{} $g_y$, $y \in [r]$. If $\tau$ satisfies $x_{j_z}$ then there is a path between the entry and exit vertices of the gadget for $g$ that uses color $j_z$. Otherwise, $\tau$ satisfies $g_y$, $y \in [r]$, and by the inductive hypothesis applied to $g_y$, there is a path $P_y$ between the entry and exit vertices of the gadget for $g_y$ such that $\chi(P_y) \subseteq \{i_1, \ldots, i_p\}$. From the way the gadget for $g$ was constructed, it follows that $P_y$ is also a path between the entry and exit vertices of the gadget for $g$. To prove the converse, suppose that there is a path $P_g$ between the entry and exit vertices of the gadget for $g$ that uses a subset of colors in $\{i_1, \ldots, i_p\}$. Either $P_g$ is a path whose only internal vertex corresponds to an input variable, and in such case the input variable is in $\{x_{i_1}, \ldots, x_{i_p}\}$, and $g$ is satisfied; or $P_g$ is a path between the entry and exit vertices of the gadget for an \AND-gate{} $g_y$ that feeds into $g$, and by the inductive hypothesis, $\tau$ satisfies $g_y$ and also $g$. \\

\noindent {\bf Gate $g$ is an \AND-gate{}.} Let $\tau$ be an assignment that assigns exactly variables $x_{i_1}, \ldots, x_{i_p}$ the value 1. Suppose first that $\tau$ satisfies $g$. Then $\tau$ assigns 1 to every input variable $x_{j_z}$ to $g$, $z \in [q]$. Hence, there is a path $P$ between the entry and exit vertices of the gadget corresponding to $x_{j_1}, \ldots, x_{j_q}$ such that $\chi(P) \subseteq  \{i_1, \ldots, i_p\}$. Assignment $\tau$ also satisfies each \OR-gate{} $g_y$, where $y \in [r]$. By the inductive hypothesis, there is a path $P_y$ between the entry and exit vertices of the gadget for $g_y$ such that $\chi(P_y) \subseteq  \{i_1, \ldots, i_p\}$.
From the construction of $g$, it follows that the path between the entry and exit vertices of the gadget for $g$, which is $P_g=P \circ P_1 \circ \cdots \circ P_r$,  satisfies $\chi(P_g) \subseteq \{i_1, \ldots, i_p\}$.
Conversely, suppose that there is a path $P_g$ between the entry and exit vertices of the gadget for $g$ such that $\chi(P_g) \subseteq \{i_1, \ldots, i_p\}$. Then $P_g$ can be decomposed into a subpath $P$ that traverses the vertices corresponding to $x_{j_1}, \ldots, x_{j_q}$, and subpaths $P_1, \ldots, P_r$, where $P_y$ is a subpath between the entry and exit vertices of the gadget for $g_y$. Since $P$ traverses the vertices corresponding to $x_{j_1}, \ldots, x_{j_q}$, it follows that $\{x_{j_1}, \ldots, x_{j_q}\} \subseteq \{x_{i_1}, \ldots, x_{i_p}\}$. Since $P_y$, $y \in [r]$, is a subpath between the entry and exit vertices of the gadget for $g_y$, by the inductive hypothesis, it follows that $\tau$ satisfies $g_y$. It follows that $\tau$ assigns 1 to all input variables to $g$ and satisfies all the input \OR-gates{} to $g$, and hence, $\tau$ satisfies $g$.
\end{proof}

As it turns out, we can even exclude \FPT{} cost approximation algorithms for \mor{}.  We first need the following theorem:

\begin{theorem}[Corollary 5 of~\cite{Marx13}]
	Unless \FPT $=$ {\rm \W[2]}, {\sc Monotone Weighted Boolean Circuit Satisfiability} for circuits with depth $4$ is not \FPT{} cost approximable.
\end{theorem}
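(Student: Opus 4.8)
The plan is to prove this (it is a result of Marx) by a gap-amplification self-reduction rather than a one-shot reduction. The base source of hardness is the monotone weighted satisfiability of depth-$2$ circuits (positive CNF, parameterized by the weight $k$), which is \W[2]-complete --- up to duality it is {\sc Set Cover}/{\sc Dominating Set} parameterized by solution size. This already gives a ``gap-$1$'' hard instance: distinguishing optimum $\le k$ from optimum $\ge k+1$ is \W[2]-hard. The entire work is to inflate this unit gap into one whose \emph{ratio} (not merely its additive size) exceeds any prescribed function, while keeping the resulting circuit monotone, of depth at most $4$, and of size $h(k)\cdot|C|^{O(1)}$ so that the reduction stays an \FPT{} reduction.

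Concretely, suppose toward a contradiction that depth-$4$ monotone weighted circuit satisfiability has an \FPT{} cost approximation algorithm $\mathbb{A}$ with some computable (w.l.o.g.\ monotone) ratio $\rho$. From $\rho$ and $k$ one fixes a rapidly growing computable function $\phi=\phi_{\rho,k}$ satisfying $\phi(k+1)\ge\rho(\phi(k))\cdot\phi(k)$ (such $\phi$ exists by iterating the recurrence; $\phi(k+1)$ may be Ackermann-scale in $k$, which is harmless). The heart of the proof is a construction that, given a {\sc Set Cover} instance encoded as a depth-$2$ monotone circuit $C_0$ with optimum $j$, attaches an explicit ``cost-amplifier'' gadget --- of size bounded by a function of $k$ and $\rho$ only, and nested so that the total depth does not exceed $4$ --- producing a depth-$4$ monotone circuit $C$ with $OPT(C)\le\phi(k)$ when $j\le k$ and $OPT(C)\ge\phi(k+1)$ when $j\ge k+1$. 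Crucially, the permission to let the gadget be $h(k)$-large (rather than $\mathrm{poly}(|C_0|)$-large) is what makes it possible to force the optimum all the way from $\phi(k)$ up to $\phi(k+1)$ between the two cases; a ``compact'' product that only multiplies the optimum by a constant (or even a polynomial) factor would not move the approximation ratio at all.

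Given such a $C$, one runs $\mathbb{A}(C,\phi(k+1))$: if $j\le k$ then $OPT(C)\le\phi(k)$ and, since $\phi(k+1)\ge\rho(\phi(k))\phi(k)\ge\rho(OPT(C))\cdot OPT(C)$, $\mathbb{A}$ is forced to accept; if $j\ge k+1$ then $\phi(k+1)\le OPT(C)$, so $\mathbb{A}$ is forced to reject. Hence {\sc Set Cover} parameterized by solution size would be in \FPT, i.e.\ \FPT$=$\W[2], which is the theorem. The main obstacle is clearly the middle step: realizing the cost-amplifier as a \emph{monotone} circuit of depth exactly $4$ whose optimum is a super-linear (indeed $\phi$-scale) function of how the $C_0$-part is solved --- the naive ratio-amplifying constructions (iterated tensor products, recursively nested gadgets) either blow the circuit size up to $|C_0|^{\Theta(\text{iterations})}$, breaking \FPT-ness, or require nesting depth growing with the amplification, breaking the depth-$4$ bound; threading between these is the delicate part, and is exactly what Marx's gadget accomplishes.
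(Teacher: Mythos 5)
This statement is not proved in the paper at all: it is imported verbatim as Corollary~5 of Marx's work \cite{Marx13}, and the paper only uses it (via the Normalization Theorem and the reduction of Theorem~\ref{thm:wsathard}) to derive Corollary~\ref{cor:apx}. So there is no in-paper argument to compare against; the question is whether your sketch would stand on its own as a proof of Marx's result. It does not.

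Your outer shell is fine and standard: start from weighted monotone depth-$2$ circuit satisfiability (equivalently {\sc Set Cover}), which is \W[2]-complete; fix a computable $\phi$ with $\phi(k+1)\ge\rho(\phi(k))\cdot\phi(k)$ (with $x\mapsto\rho(x)\cdot x$ monotone w.l.o.g.); and observe that a gap-creating \FPT{} reduction whose output optimum is at most $\phi(k)$ on yes-instances and at least $\phi(k+1)$ on no-instances, combined with the assumed \FPT{} cost approximation algorithm run at threshold $\phi(k+1)$, would decide the \W[2]-hard problem in \FPT{} time. This is exactly how one converts a gap-hardness statement into a ``no \FPT{} cost approximation'' corollary. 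But the entire mathematical content of the theorem lies in the middle step you leave unconstructed: a \emph{monotone}, depth-$4$, \FPT{}-size circuit whose minimum satisfying weight provably jumps from $\phi(k)$ to $\phi(k+1)$ according to whether the source optimum is $\le k$ or $\ge k+1$. You assert that such a ``cost-amplifier gadget'' exists and even acknowledge that building it ``is exactly what Marx's gadget accomplishes,'' which is to concede the theorem rather than prove it. The difficulty is not cosmetic: a monotone circuit cannot penalize unwanted choices except by forcing many variables to $1$, so one must engineer the constraint structure so that every assignment of weight below the huge threshold decodes to a genuine weight-$\le k$ solution of the source instance, while keeping the depth at $4$ and the size $f(k)\cdot|C_0|^{O(1)}$; your own discussion of why naive tensoring/nesting fails shows the obstruction but offers no construction that evades it. As written, the proposal is a plan for a proof with its central lemma missing, so it cannot be accepted as a proof of the cited theorem; for the purposes of this paper, the correct move is what the authors do, namely cite \cite{Marx13}.
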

%Combining the previous proof with the following result of Marx~\cite{Marx13}

%%%%
\begin{corollary}
\label{cor:apx}
	Unless \FPT $=$ {\rm \W[2]}, \mor{} parameterized by $k$ is not \FPT{} cost approximable.
\end{corollary}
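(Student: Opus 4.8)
The plan is to derive Corollary~\ref{cor:apx} by composing the theorem of Marx~\cite{Marx13} quoted above with the reduction behind Theorem~\ref{thm:wsathard}. The essential point is that that reduction is \emph{cost-preserving} in a strong sense: for every $k$, the \mor{} instance $(G,C,\chi,s,t,k)$ it produces from a monotone weighted satisfiability instance $(B,k)$ admits a $k$-valid $s$-$t$ path if and only if $B$ has a satisfying assignment of weight at most $k$. Hence the minimum number of colors used by an $s$-$t$ path in $G$ equals the minimum weight of a satisfying assignment of $B$, i.e.\ $OPT(G)=OPT(B)$, and the parameter $k$ is carried over unchanged. By the standard fact that cost-preserving, parameter-preserving, polynomial-time reductions transfer \FPT{} cost inapproximability, any \FPT{} cost approximation algorithm for \mor{} with ratio $\rho$ would yield one with the same ratio $\rho$ for whichever weighted satisfiability problem we reduce from.

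The one gap to close is that Theorem~\ref{thm:wsathard} reduces from monotone weighted \emph{formula} satisfiability (equivalently, monotone circuits in normalized form, whose underlying graph with the variables removed is a tree), whereas Marx's result concerns monotone circuits of depth $4$, which may have gates of unbounded fan-out. I would bridge this by \emph{unfolding}: replace a depth-$4$ circuit $B$ by its computation tree $B'$, obtained by duplicating every gate once per directed path from it to the output gate. Because the depth is the fixed constant $4$, the number of gate-occurrences in $B'$ is at most $|B|^{O(1)}$, so $B'$ is computable in polynomial time; unfolding changes neither the Boolean function computed nor the set of input variables, so it preserves the minimum weight of a satisfying assignment exactly and leaves the parameter untouched. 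After collapsing consecutive gates of the same type (which cannot increase the depth, and can be padded back up with trivial single-input gates if one insists on exact depth $4$), $B'$ is a monotone Boolean formula to which the construction in the proof of Theorem~\ref{thm:wsathard} applies verbatim.

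Putting the pieces together: suppose $\mathbb{A}$ is an \FPT{} cost approximation algorithm for \mor{} with ratio $\rho$. Given an instance $(B,k)$ of monotone weighted circuit satisfiability with $B$ of depth $4$, compute in polynomial time the tree formula $B'$ and then the \mor{} instance $(G,C,\chi,s,t,k)$ of Theorem~\ref{thm:wsathard}, which satisfies $OPT(G)=OPT(B')=OPT(B)$; run $\mathbb{A}$ on $(G,k)$ and return its answer. If $k\le OPT(B)=OPT(G)$ then $\mathbb{A}$ rejects $(G,k)$, so we correctly reject $(B,k)$; if $k\ge \rho(OPT(B))\cdot OPT(B)=\rho(OPT(G))\cdot OPT(G)$ then $\mathbb{A}$ accepts $(G,k)$, so we correctly accept $(B,k)$. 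The overall procedure is \FPT{} in $k$, so it is an \FPT{} cost approximation algorithm with ratio $\rho$ for depth-$4$ monotone weighted circuit satisfiability, which by Marx's theorem forces \FPT{} $=$ \W[2].

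The only part that genuinely needs care — and the main obstacle — is confirming that the reduction of Theorem~\ref{thm:wsathard}, exactly as written, is optimum-preserving (this is immediate from the ``for every $k$'' equivalence proved there, but should be spelled out), and that it can legitimately be fed depth-$4$ circuits rather than arbitrary normalized circuits; the latter is precisely what the polynomial-size unfolding of a constant-depth circuit provides. Everything else is routine composition of polynomial-time, parameter-preserving reductions.
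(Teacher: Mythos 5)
Your proposal is correct and follows essentially the same route as the paper: reduce from depth-$4$ monotone weighted circuit satisfiability (Marx's theorem), convert the constant-depth circuit to a tree-like/normalized monotone formula with only polynomial blow-up while preserving the weights of satisfying assignments, and then observe that the reduction of Theorem~\ref{thm:wsathard} preserves solutions and their sizes, so an \FPT{} cost approximation for \mor{} would transfer back. The only cosmetic difference is that the paper invokes the Normalization Theorem of Downey--Fellows for the circuit-to-formula step, whereas you carry out the equivalent unfolding argument by hand.
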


\begin{proof}
	We reduce from {\sc Monotone Weighted Boolean Circuit Satisfiability} for circuits with depth $4$. Let $C$ be a monotone circuit of depth at most $4$. We can transform $C$ into a monotone circuit in the normalized form  (see the proof of the Normalization Theorem in Section 23.2.2 of~\cite{dfbook} for more details). Since $C$ has depth at most $4$, this procedure terminates in polynomial time, results in a polynomial blow-up of the instance size, and preserves the solutions and their sizes. Finally, it is not hard to see that the reduction in the proof of Theorem~\ref{thm:wsathard} preserves all the solutions and their sizes.
\end{proof}

\begin{rem}
\label{rem:nonplanar}
A noteworthy remark that we close this section with, is to comment on the role that planarity plays in the parameterized complexity of \cmor. If one drops the planarity requirement on the instances of \cmor{} (\ie, considers \cmor{} on general graphs), then it follows from the proof of Theorem~\ref{thm:wsathard} that the resulting problem is \W{\rm [SAT]}-hard. This can be seen by adding a single vertex containing all colors, that serves as a ``color-connector,'' to the instance of \mor{} produced by the \FPT-reduction; this modification results in an instance of the connected obstacle removal problem on apex graphs, establishing the \W{\rm [SAT]}-hardness of this problem on apex graphs.
\end{rem}
\fi

\section{Structural Results}
\label{sec:structural}
\begin{comment}
Let $G$ be a color-connected plane graph, $C$ a set of colors, and $\chi: V \longrightarrow 2^{C}$. In this section, we present structural results that are the cornerstone of the \FPT-algorithm for \cmor{} presented in the next section. The ultimate goal of this section is to show that, for any vertex $w \in V(G)$, and for any pair of vertices $u, v \in V(G)$, the set of $k$-valid $u$-$v$ paths in $G-w$ that use colors external to $w$ can be ``represented'' by a minimal set whose size is a function of $k$. This result is the key ingredient of the dynamic programming \FPT-algorithm in the next section, that is based on tree decomposition of the input graph; it allows us to extend the notion of a minimal set of $k$-valid $u$-$v$ paths w.r.t.~a single vertex $w$, to all the vertices of a bag in the tree decomposition, yielding a representative set for the whole bag. Throughout this section, we shall assume that $G$ is color-connected. We start with the following simple observation\ifshort~that holds because $G$ is color-connected\fi:
\end{comment}

Let $G$ be a color-connected plane graph, $C$ a set of colors, and $\chi: V \longrightarrow 2^{C}$. In this section, we present structural results that are the cornerstone of the \FPT-algorithm for \cmor{} presented in the next section. We start by giving an intuitive description of the plan for this section.

As mentioned in Section~\ref{sec:intro}, the main issue facing a dynamic programming algorithm based on tree decomposition, is how to upper bound, by a function of $k$ and the treewidth, the number of $k$-valid paths between (any) two vertices $u$ and $v$ that use color sets contained in a certain bag. As it turns out, this number cannot be upper bounded as desired. Instead, we ``represent'' those paths using a minimal set $\PPP$ of $k$-valid $u$-$v$ paths, in the sense that any $k$-valid $u$-$v$ path can be replaced by a path from $\PPP$ that is not ``worse'' than it. To do so, it suffices to represent the $k$-valid $u$-$v$ paths that use color sets contained in a third vertex $w$, by a set whose cardinality is a function of $k$. This will enable us to extend the notion of a minimal set of $k$-valid $u$-$v$ paths w.r.t.~a single vertex to a representative set for the whole bag, which is the key ingredient of the dynamic programming \FPT-algorithm---based on tree decomposition---in the next section.

As it turns out, the paths that matter are those that use ``external'' colors w.r.t.~$w$ (defined below), since those colors have the potential of appearing on both sides of a bag containing $w$. Therefore, the ultimate goal of this section is to define a notion of a minimal set $\PPP$ of $k$-valid $u$-$v$ paths with respect to $w$ (Definition~\ref{def:minimalpaths}), and to upper bound $|\PPP|$ by a function of $k$. Upper bounding $|\PPP|$ by a function of $k$ turns out to be quite challenging, and requires ideas and topological results that will be discussed later in this section.

Throughout this section, we shall assume that $G$ is color-connected. We start with the following simple observation\ifshort~that holds because of this assumption\fi:

\begin{observation}
\label{obs:colorconnectivity}
Let $x, y \in V(G)$ be such that there exists a color $c \in {\cal C}$ that appears on both $x$ and $y$. Then any $x$-$y$ vertex-separator in $G$ contains a vertex on which $c$ appears.
\end{observation}
\iflong
\begin{proof}
This follows because color $c$ is connected.
\end{proof}
\fi

Let $G'$ be a plane graph, let $w \in V(G')$, and let $f$ be the face in $G'-w$ such that $w$ is interior to $f$; we call $f$ the \emph{external face} w.r.t.~$w$ in $G'$, and the vertices incident to $f$ \emph{external vertices} w.r.t.~$w$ in $G'$. A color $c \in C$ is an \emph{external color} w.r.t.~$w$ in $G'$, or simply \emph{external} to $w$ in $G'$, if $c$ appears on an external vertex w.r.t.~$w$ in $G'$; otherwise, $c$ is \emph{internal} to $w$ in $G'$. The following observation is easy to see:

\begin{observation}
\label{obs:externalsubgraphs}
Let $G$ be a color-connected graph, and let $w \in V(G)$. Let $H$ be any subgraph of $G-w$. If $c$ is an external color to $w$ in $G-w$ and $c$ appears on some vertex in $H$, then $c$ is an external color to $w$ in $H$. This also implies that the set of internal colors to $w$ in $H$ is a subset of the set of internal colors to $w$ in $G-w$.
\end{observation}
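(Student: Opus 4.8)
The plan is to fix the plane embedding of $G$, which $G-w$ and then $H$ inherit, and to recast the claim as a topological statement about the face of $H$ that contains the point $w$. Write $f_0$ for the face of $G-w$ whose interior contains $w$, and $f_H$ for the face of $H$ whose interior contains $w$. Since $H\subseteq G-w$, deleting the vertices and edges of $(G-w)$ that are not in $H$ only merges faces, so $f_0\subseteq f_H$ as open regions, and hence $\overline{f_0}\subseteq\overline{f_H}$. By the hypothesis that $c$ is external to $w$ in $G-w$, there is a vertex $x$ incident to $f_0$ with $c\in\chi(x)$, and then $x\in\partial f_0\subseteq\overline{f_H}$. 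By the other hypothesis there is a vertex $v\in V(H)$ with $c\in\chi(v)$. What I want to exhibit is a vertex of $H$ that is incident to $f_H$ and carries $c$; that is exactly the statement ``$c$ is external to $w$ in $H$''.

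The single substantive ingredient is color-connectivity of $G$: because $c$ is connected in $G$, there is a path $\pi$ from $x$ to $v$ in $G$ all of whose vertices carry color $c$. I would traverse $\pi$ starting from $x$ and let $u$ be the first vertex of $\pi$ that lies in $V(H)$ (it exists, since the endpoint $v$ does). I claim $u$ is incident to $f_H$; as $c\in\chi(u)$, this proves the observation. If $x\in V(H)$ the claim is immediate: $x\in\overline{f_H}$ but $x$ lies on $H$ and so $x\notin f_H$, whence $x\in\partial f_H$ and $u=x$ works. Otherwise, let $\pi_{xu}$ be the initial segment of $\pi$ from $x$ to $u$; all of its internal vertices lie outside $V(H)$, and since also $x\notin V(H)$, its only vertex in $V(H)$ is $u$. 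Using planarity of $G$ (no two distinct edges of $G$ cross, and no vertex of $G$ lies in the interior of an edge of $G$), $\pi_{xu}$ contains no edge of $H$ and crosses no edge of $H$, so $\pi_{xu}$ meets the point set $H$ only at $u$. Therefore $\pi_{xu}\setminus\{u\}$ is a connected subset of the plane that is disjoint from $H$ and contains $x$; and because $x\in\overline{f_H}$ and $x$ lies on no vertex or edge of $H$, we must have $x\in f_H$. Hence $\pi_{xu}\setminus\{u\}$ lies entirely in the face $f_H$, and $u$, being a limit point of it that lies on $H$, lies on $\partial f_H$ --- that is, $u$ is incident to $f_H$, as claimed.

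The one place that needs care is precisely this last point: checking that $\pi$ cannot have wandered out of the face $f_H$ before it first touches $V(H)$. This rests only on the planarity of $G$ and on $f_0\subseteq f_H$; in particular it is insensitive to whether $\pi$ happens to pass through the vertex $w$, since $w$ lies on no vertex or edge of $H$ and so behaves like any other internal vertex of $\pi_{xu}$. Finally, the second assertion of the observation is immediate from the first by contraposition: a color that appears on $H$ and is internal to $w$ in $H$ cannot be external to $w$ in $G-w$ (otherwise the first part would force it to be external to $w$ in $H$), hence it is internal to $w$ in $G-w$; so the set of internal colors to $w$ in $H$ is contained in the set of internal colors to $w$ in $G-w$.
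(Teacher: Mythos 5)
Your proof is correct. The paper states this observation without proof (it is dismissed as ``easy to see''), and your elaboration --- the face of $H$ containing $w$ contains the face of $G-w$ containing $w$, color-connectivity yields a monochromatic $x$--$v$ path, and planarity forces that path to first meet $H$ at a vertex on the boundary of the face of $H$ containing $w$ --- is exactly the face-boundary-separation-plus-color-connectivity reasoning the paper itself employs for the neighboring statements such as Observation~\ref{obs:internalpath}, so it fills the gap in the intended way.
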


\ifshort

\begin{definition}\rm
\label{def:colorcontraction}
Let $s, t$ be two designated vertices in $G$, and let $x, y$ be two adjacent vertices in $G$ such that $\chi(x) = \chi(y)$. Define the following operation to $x$ and $y$, referred to as a \emph{color contraction} operation, that results in a  graph $G'$, a color function $\chi'$, and two designated vertices $s', t'$ in $G'$, obtained as follows:
 \begin{itemize}
 \item $G'$ is obtained from $G$ by contracting the edge $xy$, which results in a new vertex $z$;
 \item $s'=s$ (resp.~$t'=t$) if $s \notin \{x, y\}$ (resp.~$t \notin \{x, y\}$), and $s'=z$ (resp.~$t'=z$) otherwise; and
 \item $\chi': V(G') \longrightarrow 2^{C}$ is defined as $\chi'(w) = \chi(w)$ if $w \neq z$, and $\chi'(z)=\chi(x)=\chi(y)$.
 \end{itemize}
$G$ is \emph{irreducible} if there does not exist two vertices in $G$ to which the color contraction operation is applicable.
\end{definition}

\begin{lemma} [\appno{} Lemma~2.4]
\label{lem:contract}
Let $G$ be a color-connected plane graph, $C$ a color set, $\chi: V \longrightarrow 2^{C}$, $s, t \in V(G)$, and $k \in \mathbb{N}$. Suppose that the color contraction operation is applied to two vertices in $G$ to obtain $G'$, $\chi'$, $s', t'$, as described in Definition~\ref{def:colorcontraction}. Then $G'$ is a color-connected plane graph, and there is a $k$-valid $s$-$t$ path in $G$ if and only if there is a $k$-valid $s'$-$t'$ path in $G'$.
\end{lemma}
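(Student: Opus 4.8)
The plan is to prove both directions of the equivalence by a case analysis on how the contracted vertices $x, y$ (producing the new vertex $z$) interact with a given $k$-valid path, and to handle color-connectivity and planarity as essentially trivial byproducts of contraction. The claim that $G'$ is plane is immediate: contracting an edge of a plane graph yields a plane graph. For color-connectivity, observe that for any color $c$, the vertex set inducing $c$ in $G'$ is obtained from the corresponding set in $G$ by (possibly) replacing $x$ and/or $y$ by $z$; since $\chi(x) = \chi(y)$, either both or neither of $x, y$ carried $c$, and contracting an edge inside a connected subgraph (or disjoint from it) preserves connectivity, so each color remains connected in $G'$.

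For the forward direction, suppose $P = (s = v_0, \ldots, v_r = t)$ is a $k$-valid $s$-$t$ path in $G$; I would first invoke the standard fact that we may take $P$ to be an induced path (a shortest-by-length, or chord-free, $s$-$t$ path among $k$-valid ones — removing a chord only drops colors). Then I split into three cases: (a) neither $x$ nor $y$ lies on $P$, in which case $P$ itself is a path in $G'$ and its color set is unchanged, so it is a $k$-valid $s'$-$t'$ path; (b) exactly one of $x, y$, say $x$, lies on $P$, in which case replacing that occurrence by $z$ (or, if $x = s$, simply noting $s' = z$ already) yields a valid path in $G'$, and since $\chi'(z) = \chi(x)$ the color count is unchanged; (c) both $x$ and $y$ lie on $P$, and because $P$ is induced and $xy \in E(G)$, they must be consecutive on $P$, say $v_i = x$, $v_{i+1} = y$ — then $(v_0, \ldots, v_{i-1}, z, v_{i+2}, \ldots, v_r)$ is a walk in $G'$ (edges $v_{i-1}z$ and $zv_{i+2}$ exist by the definition of contraction), in fact a path, and its color set is that of $P$ with $\chi(x) \cup \chi(y) = \chi'(z)$ folded in, so again $k$-valid.

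For the converse, suppose $P' = (s' = v_0', \ldots, v_p' = t')$ is a $k$-valid $s'$-$t'$ path in $G'$. If $z$ does not appear on $P'$, then $P'$ is already a path in $G$ and we are done. Otherwise $z = v_i'$ for some $i$; the task is to ``un-contract'' $z$ by routing through one or both of $x, y$. The key point is that each neighbor of $z$ in $G'$ is a neighbor of $x$ or of $y$ (or both) in $G$. So: if $i \notin \{0, p\}$, pick for each of the two edges $v_{i-1}'z$, $zv_{i+1}'$ a preimage endpoint in $\{x,y\}$; if the same vertex, say $x$, works for both, replace $z$ by $x$; otherwise one of $v_{i-1}', v_{i+1}'$ is adjacent to $x$ and the other to $y$, and we replace $z$ by the two-vertex subpath $(x, y)$ (using the edge $xy$). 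Since $\chi'(z) = \chi(x) = \chi(y)$, inserting an extra copy of $y$ (or $x$) costs no new colors, so the resulting path is still $k$-valid. The boundary cases $i = 0$ (so $s' = z$, meaning $s \in \{x,y\}$) and $i = p$ are handled symmetrically, with the sub-sub-case where $P'$ is the single vertex $z$ (then $s = t$ if $s = t$ in $G$, or $P = (x,y)$ works) noted explicitly. The main thing to be careful about — the only place the argument is not purely mechanical — is case (c)/the un-contraction step: one must verify that the object produced is a genuine simple path (no repeated vertices) and that the relevant edges exist, which is exactly where the ``$P$ induced'' reduction and the contraction edge-set definition are used. Everything else is bookkeeping on color sets using $\chi'(z) = \chi(x) = \chi(y)$.
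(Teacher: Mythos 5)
Your proposal is correct and follows essentially the same route as the paper's proof: reduce to an induced $k$-valid path for the forward direction, do the three-way case analysis on how $\{x,y\}$ meets the path with the substitution by $z$ (using $\chi'(z)=\chi(x)=\chi(y)$ to keep the color count), and for the converse un-contract $z$ by checking which of $x,y$ is adjacent to the neighbors of $z$ on the path, routing through the edge $xy$ when needed, with the same boundary cases at the endpoints. Nothing further is needed.
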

\fi

\iflong
\begin{definition} \rm
\label{def:operation}
Let $P=(w_1, \ldots, w_r)$ be a path in a graph $G$, and let $x, y \in V(G)$. Suppose that we apply the color contraction operation to $x$ and $y$, and let $z$ be the new vertex resulting from this contraction. We define an operation, denoted $\Lambda_{xy}$, that when applied to path $P$ results in another path $\Lambda_{xy}(P)$ defined as follows:
\begin{itemize}
\item[1.] If $\{x, y\} \cap \{w_1, \ldots, w_r\} = \emptyset$ then $\Lambda_{xy}(P)=P$.
\item[2.] If $\{x, y\} \cap \{w_1, \ldots, w_r\} = \{w_i\}$, where $i \in [r]$, then $\Lambda_{xy}(P) = (w_1, \ldots, w_{i-1}, z, w_{i+1}, \ldots, w_r)$.
\item[3.] If $\{x, y\} \cap \{w_1, \ldots, w_r\} = \{w_i, w_j\}$, where $i < j$, then
$\Lambda_{xy}(P) = (w_1, \ldots, w_{i-1}, z, w_{j+1}, \ldots, w_r)$.
\end{itemize}
For a set of paths ${\cal P}$, we define $\Lambda_{xy}({\cal P}) = \{\Lambda_{xy}(P) \mid P \in {\cal P}\}$.
\end{definition}
\fi

\begin{definition}\rm
\label{def:minimalpaths}
Let $u, v, w \in V(G)$. A set ${\cal P}$ of $k$-valid $u$-$v$ paths in $G-w$ is said to be {\em minimal} w.r.t.~$w$ if:

\begin{itemize}
\item[(i)] There does not exist two paths $P_1, P_2 \in {\cal P}$ such that $\chi(P_1) \cap \chi(w) = \chi(P_2) \cap \chi(w)$;
\item[(ii)] there does not exist two paths $P_1, P_2 \in {\cal P}$ such that $\chi(P_1) \subseteq \chi(P_2)$; and
\item[(iii)] for any $P \in {\cal P}$, there does not exist a $u$-$v$ path $P'$ in $G-w$ such that $\chi(P') \subsetneq \chi(P)$.
\end{itemize}

Clearly, for any $u, v, w \in V(G)$, a minimal set of $k$-valid $u$-$v$ paths in $G-w$ exists.
\end{definition}

\ifshort
\begin{observation}[\appno{} Observation~4.5]
\label{obs:internalpath}
Let $u, v, w \in V(G)$. Any set of $u$-$v$ paths that is minimal w.r.t.~$w$ contains at most one path whose vertices contain \emph{only} internal colors w.r.t.~$w$ in $G-w$.
\end{observation}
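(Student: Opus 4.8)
The plan is to argue by contradiction. Suppose a set $\mathcal{P}$ of $u$-$v$ paths is minimal w.r.t.~$w$ but contains two distinct paths $P_1, P_2$ in $G-w$, each of whose vertices carry \emph{only} internal colors w.r.t.~$w$ in $G-w$. I will show this violates condition~(i) of Definition~\ref{def:minimalpaths}.

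The heart of the argument is the following claim: if $P$ is a $u$-$v$ path in $G-w$ all of whose colors are internal to $w$ in $G-w$, then $\chi(P) \cap \chi(w) = \emptyset$. To prove it, suppose some color $c \in \chi(P) \cap \chi(w)$. Then $c$ appears on $w$ and also on some vertex of $P$, and that vertex lies in $G-w$, hence is distinct from $w$. Since $G$ is color-connected, the subgraph of $G$ induced by the vertices carrying $c$ is connected and contains both $w$ and a vertex of $G-w$; therefore it contains an edge incident to $w$, i.e.\ $w$ has a neighbor $y$ in $G$ with $c \in \chi(y)$. Now invoke the basic planarity fact that every neighbor of $w$ in the plane graph $G$ is incident to the face $f$ of $G-w$ in whose interior $w$ lies; so $y$ is an external vertex w.r.t.~$w$ in $G-w$, and since $c \in \chi(y)$, the color $c$ is external to $w$ in $G-w$ --- contradicting that $P$ uses only internal colors. (Alternatively, one can avoid quoting the planarity fact directly and instead use Observation~\ref{obs:colorconnectivity}: $N_G(w)$ is a $w$-$x$ vertex-separator in $G$ for any $x$ with $x\notin N_G(w)\cup\{w\}$, so that observation forces some vertex of $N_G(w)$ to carry $c$, and such a vertex is again external w.r.t.~$w$.)

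With the claim in hand, the proof finishes immediately: applying it to $P_1$ and to $P_2$ yields $\chi(P_1) \cap \chi(w) = \emptyset = \chi(P_2) \cap \chi(w)$, hence $\chi(P_1) \cap \chi(w) = \chi(P_2) \cap \chi(w)$, which is exactly the situation excluded by condition~(i) of Definition~\ref{def:minimalpaths} for two distinct paths of a minimal set. Therefore $P_1 = P_2$, so $\mathcal{P}$ contains at most one path whose vertices carry only internal colors w.r.t.~$w$ in $G-w$.

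I expect the only subtle point to be the step asserting that a neighbor of $w$ is an external vertex w.r.t.~$w$ in $G-w$; this is the single place where the plane embedding (and, together with color-connectivity, the connectivity of $c$) is genuinely used. Everything else is bookkeeping with the definitions of external/internal colors and of a minimal set.
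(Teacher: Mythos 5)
Your proposal is correct and takes essentially the same route as the paper: establish that any $u$-$v$ path carrying only internal colors w.r.t.~$w$ satisfies $\chi(P)\cap\chi(w)=\emptyset$, and then conclude via condition~(i) of Definition~\ref{def:minimalpaths}. The paper justifies that key claim by observing that the boundary of the external face separates $w$ from any vertex of $G-w$ not incident to it and applying Observation~\ref{obs:colorconnectivity}, whereas you route it through a neighbor of $w$ necessarily lying on that face; both are valid forms of the same planarity-plus-color-connectivity argument.
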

\fi

\iflong
\begin{observation}
\label{obs:internalpath}
Let $u, v, w \in V(G)$. Any set of $u$-$v$ paths that is minimal w.r.t.~$w$ contains at most one path whose vertices contain only internal colors w.r.t.~$w$ in $G-w$.
\end{observation}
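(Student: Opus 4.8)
The statement to prove is Observation~\ref{obs:internalpath}: any set $\mathcal{P}$ of $u$-$v$ paths that is minimal w.r.t.~$w$ contains at most one path all of whose vertices carry only internal colors w.r.t.~$w$ in $G-w$.

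\medskip

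\textbf{Plan.} The proof is a short argument by contradiction that leans on condition (i) in Definition~\ref{def:minimalpaths}. Suppose, for contradiction, that $\mathcal{P}$ contains two distinct paths $P_1$ and $P_2$ such that every vertex of $P_1$ and every vertex of $P_2$ carries only colors that are internal to $w$ in $G-w$. The key observation is that if a color $c$ appears on $w$ (i.e.\ $c \in \chi(w)$), then $c$ cannot be internal to $w$ in $G-w$: indeed, by color-connectivity $c$ induces a connected subgraph of $G$, so in $G-w$ the vertices carrying $c$ together with $w$ are connected, which forces at least one $c$-colored vertex of $G-w$ to be incident to the external face $f$ w.r.t.~$w$ (the face of $G-w$ containing $w$), making $c$ external to $w$ in $G-w$. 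Hence $\chi(w)$ is disjoint from the set of internal colors.

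\medskip

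Since all colors appearing on $P_1$ are internal to $w$, and $\chi(w)$ contains no internal color, we get $\chi(P_1) \cap \chi(w) = \emptyset$, and likewise $\chi(P_2) \cap \chi(w) = \emptyset$. Therefore $\chi(P_1) \cap \chi(w) = \emptyset = \chi(P_2) \cap \chi(w)$, which directly contradicts condition~(i) in Definition~\ref{def:minimalpaths}, namely that no two paths in $\mathcal{P}$ agree on their intersection with $\chi(w)$. This contradiction shows that at most one path of $\mathcal{P}$ can have this property, completing the proof.

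\medskip

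\textbf{Main obstacle.} There is no serious obstacle here; the only point that needs a careful (but brief) argument is the claim that a color in $\chi(w)$ is necessarily external to $w$ in $G-w$. This is exactly where planarity and color-connectivity are used: one must observe that $w$ lies inside the face $f$ of $G-w$, and that a connected $c$-subgraph of $G$ touching $w$ must have a representative among the vertices incident to $f$, which are precisely the external vertices w.r.t.~$w$. If one wishes to avoid re-deriving this, one can instead note that $w$ itself, if colored by $c$, is in particular incident to $f$ in $G$, and color-connectivity guarantees a $c$-colored neighbor-path of $w$ that must exit through $f$; in either phrasing the conclusion is the same. Everything else is a direct invocation of clause~(i) of the definition of a minimal set.
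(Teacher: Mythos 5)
Your proof is correct and follows essentially the same route as the paper's: you show that any color appearing both on $w$ and on a vertex of $G-w$ must be external (via color-connectivity and the fact that the face of $G-w$ containing $w$ separates $w$ from non-incident vertices), conclude that a path with only internal colors satisfies $\chi(P)\cap\chi(w)=\emptyset$, and then invoke condition (i) of Definition~\ref{def:minimalpaths}. The only nitpick is that your intermediate claim ``$\chi(w)$ is disjoint from the set of internal colors'' should, as in the paper, be restricted to colors that also appear on some vertex of $G-w$ (a color appearing solely on $w$ is internal by the definition), but this does not affect your argument since every color on $P_1$ or $P_2$ does appear in $G-w$.
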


\begin{proof}
Since the external face $f$ of $w$ in $G-w$ is a Jordan curve that separates $w$ from any vertex in $G-w$ that is not incident to $f$, by Observation~\ref{obs:colorconnectivity}, any color that appears both on $w$ and on a vertex in $G-w$ must appear on a vertex incident to $f$, and hence, must be external to $w$ by definition. Therefore, any path $P$ containing only internal colors to $w$ satisfies $\chi(P) \cap \chi(w) = \emptyset$. The observation now follows from property (i) in Definition~\ref{def:minimalpaths}.
\end{proof}
\fi

\iflong
\begin{lemma}
\label{lem:contractionminimal}
Let $u, v, w \in V(G)$, and let ${\cal P}$ be a minimal set of $k$-valid $u$-$v$ paths in $G-w$.
Suppose that we apply the color contraction operation to an edge $xy \in G-w$, and let $G', \chi'$ be the graph and color function obtained from the contraction operation, respectively. Let ${\cal P'} = \Lambda_{xy}({\cal P})$. Then ${\cal P'}$ is a minimal set of $k$-valid paths w.r.t~$w$ in $G'$.
\end{lemma}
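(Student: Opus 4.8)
The proof is a verification that the three minimality conditions in Definition~\ref{def:minimalpaths} are preserved under the operation $\Lambda_{xy}$, using as a black box the correspondence between $k$-valid paths in $G-w$ and in $G'-w$ that underlies Lemma~\ref{lem:contract}. First I would record the key fact that for any path $P$ in $G-w$ we have $\chi'(\Lambda_{xy}(P)) = \chi(P)$: this is immediate from Definition~\ref{def:operation} since $\chi'(z)=\chi(x)=\chi(y)$, so replacing $x$ and/or $y$ by $z$ (and, in case~3, deleting the segment between them, which is legitimate because on an induced path $x$ and $y$ are consecutive when both occur) does not change the set of colors visited. In particular $\Lambda_{xy}$ maps $k$-valid $u$-$v$ paths to $k$-valid $u$-$v$ paths, and it maps $u$-$v$ paths in $G-w$ onto (all of) the $u$-$v$ paths in $G'-w$ — the latter because $w \notin \{x,y\}$, so the contraction happens entirely inside $G-w$, and Lemma~\ref{lem:contract} (applied to the graph $G-w$, or rather its underlying path-lifting argument) gives a $u$-$v$ path $Q$ in $G-w$ with $\chi(Q)$ equal to the colors of any given $u$-$v$ path $Q'$ in $G'-w$, and with $\Lambda_{xy}(Q)$ color-equivalent to $Q'$.

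Granting the color-invariance of $\Lambda_{xy}$, the three conditions fall out quickly. For (i): if $P_1',P_2'\in\PPP'$ had $\chi'(P_1')\cap\chi'(w) = \chi'(P_2')\cap\chi'(w)$, then, writing $P_1'=\Lambda_{xy}(P_1)$, $P_2'=\Lambda_{xy}(P_2)$ for their preimages in $\PPP$ and using $\chi'(w)=\chi(w)$ together with color-invariance, we would get $\chi(P_1)\cap\chi(w)=\chi(P_2)\cap\chi(w)$, contradicting minimality of $\PPP$ — unless $P_1$ and $P_2$ are the same path, which would force $P_1'=P_2'$. One subtlety to address here is that $\Lambda_{xy}$ might not be injective on $\PPP$; but if $\Lambda_{xy}(P_1)=\Lambda_{xy}(P_2)$ with $P_1\neq P_2$, then $\chi(P_1)=\chi'(P_1')=\chi'(P_2')=\chi(P_2)$, which already contradicts condition~(ii) for $\PPP$ (as $\chi(P_1)\subseteq\chi(P_2)$). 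So $\Lambda_{xy}$ is in fact injective on $\PPP$, and the argument for (i) goes through; the same observation handles (ii) directly: $\chi'(P_1')\subseteq\chi'(P_2')$ translates to $\chi(P_1)\subseteq\chi(P_2)$.

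For (iii): suppose some $P'\in\PPP'$, say $P'=\Lambda_{xy}(P)$ with $P\in\PPP$, admits a $u$-$v$ path $R'$ in $G'-w$ with $\chi'(R')\subsetneq\chi'(P')=\chi(P)$. By the surjectivity remark above, there is a $u$-$v$ path $R$ in $G-w$ with $\chi(R)=\chi'(R')\subsetneq\chi(P)$, contradicting condition~(iii) for $\PPP$. Finally I would note that the elements of $\PPP'$ are $k$-valid $u$-$v$ paths in $G'-w = G'$ minus the image of $w$ (and here we should make sure $w$ still makes sense as a vertex of $G'$, which it does since $w\notin\{x,y\}$, and that $u,v$ are unchanged for the same reason), completing the verification.

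The main obstacle is not any single step but getting the bookkeeping right around $\Lambda_{xy}$: namely (a) justifying carefully that in case~3 the shortcut $(\dots,w_{i-1},z,w_{j+1},\dots)$ really is a path — this needs that we may take the paths in $\PPP$ to be induced, or at least that $x,y$ appear consecutively, which in turn leans on property~(iii) (a $k$-valid path with a repeated-color shortcut available is not minimal); and (b) pinning down the surjectivity of $\Lambda_{xy}$ onto $u$-$v$ paths of $G'-w$ with the prescribed color set, which is exactly the "conversely" direction of the proof of Lemma~\ref{lem:contract} transplanted to the graph $G-w$ in place of $G$. Once those two points are cleanly stated, conditions (i)–(iii) are each a one-line translation.
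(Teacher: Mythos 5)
Your overall plan is the same as the paper's: prove that $\Lambda_{xy}$ preserves color sets on the paths of $\PPP$, then translate properties (i)--(iii) of Definition~\ref{def:minimalpaths}, using the lifting (``conversely'') direction of Lemma~\ref{lem:contract}'s proof for property (iii). The gap is in how you propose to justify color-invariance in case 3 of Definition~\ref{def:operation}. You first call it immediate ``because on an induced path $x$ and $y$ are consecutive when both occur,'' and later propose to repair this by showing ``that we may take the paths in $\PPP$ to be induced, or at least that $x,y$ appear consecutively,'' leaning on property (iii). Neither claim is available: $\PPP$ is a given minimal set, not one you may replace, its paths need not be induced, and (iii) does not force $x$ and $y$ to be consecutive on a path of $\PPP$ --- for instance a path $(u,x,a,y,v)$ with $\chi(a)\subseteq\chi(x)=\chi(y)$ and the chord $xy$ present satisfies (iii) just as well as its shortcut and so may occur in $\PPP$. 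Moreover, consecutiveness is not what is needed for $\Lambda_{xy}(P)$ to be a path: $(w_1,\ldots,w_{i-1},z,w_{j+1},\ldots,w_r)$ is a path in $G'$ in any case, since $z$ inherits the adjacency of $x$ to $w_{i-1}$ and of $y$ to $w_{j+1}$; the real danger is that the deleted segment carries colors that then disappear, i.e.\ that $\chi'(\Lambda_{xy}(P))\subsetneq\chi(P)$.

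The correct use of (iii) --- and what the paper's proof does in its Claim~1 --- is to compare color sets directly rather than positions: since $xy$ is an edge of $G-w$, the shortcut $P'=(w_1,\ldots,w_{i-1},x,y,w_{j+1},\ldots,w_r)$ is a $u$-$v$ path in $G-w$ with $\chi(P')\subseteq\chi(P)$, so (iii) forces $\chi(P')=\chi(P)$; and since $\chi'(z)=\chi(x)=\chi(y)$ we get $\chi'(\Lambda_{xy}(P))=\chi(P')=\chi(P)$. With this substituted for your step (a), the rest of your argument (the one-line translations of (i) and (ii), the injectivity remark, and the lifting of a hypothetical better path $R'$ in $G'-w$ back to $G-w$ for (iii)) coincides with the paper's proof. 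One further small point: you assert that $u,v$ are unchanged ``for the same reason'' as $w$, but the hypotheses only guarantee $w\notin\{x,y\}$; if $u$ or $v$ happens to be $x$ or $y$, the endpoints of the paths in $\PPP'$ become $z$, which is harmless but should be stated rather than assumed away.
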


\begin{proof}
	Let $H'$ be the subgraph of $G'-w$ induced by the edges of the paths in ${\cal P'}$, and denote by $z$ the new vertex obtained from the contraction of the edge $xy$. We start by showing the following claim:
\begin{claim}\label{claim:minimality}
	For every $P\in \PPP$, it holds that $\chi(\Lambda_{xy}(P)) = \chi(P)$.
\end{claim}

Let $P=(u=w_1, \ldots, w_r=v)$. Since $\chi'(z)=\chi(x)=\chi(y)$, it follows from Definition~\ref{def:operation} that if  $|\{x, y\} \cap \{w_1, \ldots, w_r\}|\le 1$, then $\chi(\Lambda_{xy}(P)) = \chi(P)$. Now assume that $\{x, y\} \cap \{w_1, \ldots, w_r\} = \{w_i, w_j\}$, where $i < j$, and suppose to get a contradiction that $\chi(\Lambda_{xy}(P)) \neq \chi(P)$. Since $\Lambda_{xy}(P) = (w_1, \ldots, w_{i-1}, z, w_{j+1}, \ldots, w_r)$, it follows that $\chi(\Lambda_{xy}(P)) \subsetneq \chi(P)$. However, $G-w$ contains the $u$-$v$ path $P' = (w_1, \ldots, w_{i-1}, w_i, w_j, w_{j+1}, \ldots, w_r)$, which satisfies $\chi(P')=\chi(\Lambda_{xy}(P))\subsetneq \chi(P)$; this, together with $P\in \PPP$, contradicts the minimality of $\PPP$.

We now proceed to verify that $\PPP'$ is indeed minimal w.r.t.~$w$. Properties $(i)$ and $(ii)$ in Definition~\ref{def:minimalpaths} follow directly from Claim~\ref{claim:minimality} and the minimality of $\PPP$. To prove that property $(iii)$ holds, assume that there is a path $P'\in \PPP'$, and a path $Q'$ in $G'-w$ between the endpoints of $P'$ such that $\chi(Q') \subsetneq \chi(P')$. Let $P$ be the path in $\PPP$ such that $\Lambda_{xy}(P)=P'$.  It is straightforward to verify that $G-w$ contains a $u$-$v$ path $Q$ that is either identical to $Q'$, or obtained from $Q'$ by replacing $z$ by either a single vertex $x$ or $y$, or by the pair $x,y$. Clearly, $\chi(Q) = \chi(Q')$. Since $\chi(Q') \subsetneq \chi(P') = \chi(P)$ by Claim~\ref{claim:minimality}, it follows that $\chi(Q) \subsetneq \chi(P)$, contradicting the minimality of $\PPP$. It follows that Property $(iii)$ holds, and the proof is complete.
\end{proof}
\fi

To derive an upper bound on the cardinality of a minimal set $\PPP$ of $k$-valid $u$-$v$ paths w.r.t.~a vertex $w$, we select a maximal set $\MMM$ of color-disjoint paths in $\PPP$. We first upper bound $|\MMM|$ by a function of $k$, which requires developing several results of topological nature. The key ingredient for upper bounding $|\MMM|$ is showing that the subgraph induced by the paths in $\MMM$ has a $u$-$v$ vertex-separator of cardinality $O(k)$ (Lemma~\ref{lem:separator}). We then upper bound $|\MMM|$ (Lemma~\ref{lem:aux}) by upper bounding the number of different traces of the paths of $\MMM$ on this small separator, and inducting on both sides of the separator. Finally, we show (Theorem~\ref{thm:main}) that $|\PPP|$ is upper bounded by a function of $|\MMM|$, which proves the desired upper bound on $|\PPP|$. We proceed to the details.

%To upper bound the cardinality of a minimal set of $k$-valid $u$-$v$ path w.r.t.~a vertex $w$ by a function of $k$, we first select a maximal subset of color-disjoint paths in this set, and upper bound the cardinality of %this subset; we do so by showing that this subset induces a graph that has a small vertex-separator, and then applying an inductive counting argument based on this separator. We then show, again using an inductive proof, %that the upper bound on the cardinality of this color-disjoint subset of paths implies an upper bound on the cardinality of the whole minimal set of $k$-valid $u$-$v$ paths w.r.t.~$w$.

For the rest of this section, let $u, v, w \in V(G)$, and let ${\cal P}$ be a set of minimal $k$-valid $u$-$v$ paths in $G-w$.  Let ${\cal M}$ be a set of minimal $k$-valid color-disjoint $u$-$v$ paths in $G-w$, and let $M$ be the subgraph of $G-w$ induced by the edges of the paths in ${\cal M}$.

\ifshort
\begin{observation}[\appno{} Observation~4.7]
\label{obs:externalcolor}
If $P \in \MMM$ contains a color $c$ that is external to $w$ in $M$, then $c$ appears on a vertex in $P$ that is incident to the external face to $w$ in $M$.
\end{observation}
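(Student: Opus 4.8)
The plan is to derive the statement almost immediately from the fact that the paths in $\MMM$ are pairwise color-disjoint. First I would recall the relevant definitions in the present context: since $w \notin V(M)$, the "external face to $w$ in $M$" is the face $f$ of $M$ (drawn according to the fixed plane embedding of $G$, equivalently of $M$ together with $w$ as an isolated vertex) whose interior contains $w$, and an "external vertex" of $M$ is a vertex incident to $f$; this is exactly the situation covered by Observation~\ref{obs:externalsubgraphs}. By definition, a color $c$ being external to $w$ in $M$ means there exists a vertex $x$ of $M$ incident to $f$ with $c \in \chi(x)$.

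The key observation is the following: every vertex of $M$ carrying the color $c$ lies on $P$. Indeed, $M$ is the union of the paths in $\MMM$, so any vertex $x$ of $M$ belongs to some path $P' \in \MMM$; if moreover $c \in \chi(x)$, then $c$ appears on $P'$. Since the paths in $\MMM$ are color-disjoint and $c$ appears on $P$ by hypothesis, we must have $P' = P$, i.e.\ $x \in V(P)$.

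Combining these two points: let $x$ be an external vertex of $M$ (incident to $f$) with $c \in \chi(x)$, which exists because $c$ is external to $w$ in $M$. By the key observation, $x \in V(P)$. Hence $c$ appears on the vertex $x$ of $P$, which is incident to the external face to $w$ in $M$, as required. I do not anticipate any real obstacle here; the only point that needs a word of care is making the phrase "external face to $w$ in $M$" precise when $w$ is not a vertex of $M$, which is handled exactly as in Observation~\ref{obs:externalsubgraphs}.
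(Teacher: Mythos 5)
Your argument is correct and is essentially identical to the paper's: an external color $c$ appears by definition on some vertex $x$ incident to the external face to $w$ in $M$, and since the paths in $\MMM$ are pairwise color-disjoint and $c$ appears on $P$, the vertex $x$ must lie on $P$. The extra care you take in spelling out that every vertex of $M$ lies on some path of $\MMM$ is exactly the implicit step in the paper's one-line proof, so nothing differs in substance.
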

\fi

\iflong
\begin{observation}
\label{obs:externalcolor}
If $P \in \MMM$ contains a color $c$ that is external to $w$ in $M$, then $c$ appears on a vertex in $P$ that is incident to the external face to $w$ in $M$.
\end{observation}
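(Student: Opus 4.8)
The plan is to exploit the fact that the paths in $\MMM$ are color-disjoint, so that each color that occurs in $M$ occurs on exactly one path of $\MMM$. Recall that $M$ is the subgraph of $G-w$ induced by the edges of the paths in $\MMM$, so $V(M) = \bigcup_{Q \in \MMM} V(Q)$, and $M$ inherits a plane embedding from the fixed embedding of $G$; hence $w$ lies in some face of $M$, namely the external face $f$ w.r.t.~$w$ in $M$. Suppose $P \in \MMM$ contains a color $c$ (i.e.\ $c \in \chi(P)$) that is external to $w$ in $M$.

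By the definition of an external color, there is a vertex $x$ of $M$ that is incident to $f$ and satisfies $c \in \chi(x)$. Since $x \in V(M)$, there is some path $Q \in \MMM$ with $x \in V(Q)$, and therefore $c \in \chi(Q)$. Because the paths in $\MMM$ are color-disjoint and $c$ occurs on both $P$ and $Q$, we must have $Q = P$. Hence $x \in V(P)$, and $x$ is a vertex of $P$ that is incident to the external face w.r.t.~$w$ in $M$, which is exactly what we wanted to show.

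I expect this argument to be essentially immediate; the only points that require a moment's care are (a) unwinding the phrase ``$c$ is external to $w$ in $M$'' to the statement ``$c$ appears on a vertex of $M$ incident to the external face $f$,'' which is the definition given just before Observation~\ref{obs:externalsubgraphs}, and (b) recording that $V(M)$ is precisely the union of the vertex sets of the paths in $\MMM$, so that the witnessing vertex $x$ actually lies on one of these paths. Neither of these is a genuine obstacle; the substance of the observation is entirely the color-disjointness of $\MMM$, which pins down the unique path carrying $c$.
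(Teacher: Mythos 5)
Your proposal is correct and follows exactly the paper's own argument: the witnessing vertex $x$ on the external face carrying $c$ must lie on some path of $\MMM$, and color-disjointness forces that path to be $P$. You merely spell out the intermediate step (that $V(M)$ is the union of the paths' vertex sets) which the paper leaves implicit.
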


\begin{proof}
By definition, $c$ appears on a vertex $x$ incident to the external face w.r.t.~$w$ in $M$. Since the paths in $\MMM$ are pairwise color-disjoint and $c$ appears on $P$, it follows that $x$ is a vertex of $P$.
\end{proof}
\fi

\iflong
\begin{lemma}
\label{lem:generalpaths}
Let $G'$ be a plane graph, and let $x, y,z \in V(G')$. Let $x_1, \ldots, x_r$, $r \geq 3$, be the neighbors of $x$ in counterclockwise order. Suppose that, for each $i \in [r]$, there exists an $x$-$y$ path $P_i$ containing $x_i$ such that $P_i$ does not contain $z$ and does not contain any $x_j$, $j \in [r]$ and $j \neq i$. Then there exist two paths $P_i, P_j$, $i, j \in [r]$ and $i \neq j$, such that the two paths $P_i, P_j$ induce a Jordan curve separating $\{x_1, \ldots, x_r\} \setminus \{x_i, x_j\}$ from $z$.
\end{lemma}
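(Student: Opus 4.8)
The plan is to work with the plane subgraph $H := \bigcup_{i=1}^r P_i$, which is connected (all the $P_i$ share $x$) and, since every edge of $H$ lies on an $x$–$y$ path $P_i$ while $x,y$ stay joined by the remaining paths, has no bridge. Since $z$ lies on no $P_i$, it lies in the interior of some face $\Phi$ of $H$. The heart of the argument is to produce a Jordan curve contained in the union of just \emph{two} of the paths that separates $z$ from all of $\{x_1,\dots,x_r\}$.

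First I would record the ``easy half''. Whenever $C$ is a cycle contained in $P_a\cup P_b$ that uses the edges $xx_a$ and $xx_b$ at $x$, planarity (edges do not cross) together with the fact that $P_a\cup P_b$ avoids every $x_\ell$, $\ell\neq a,b$, forces each such $x_\ell$ to lie in a single one of the two open regions bounded by $C$: the one not containing the wedge at $x$ between $xx_a$ and $xx_b$. Hence such a $C$ automatically puts $x_1,\dots,x_r$ (except possibly $x_a,x_b$) on one side, and if in addition $z$ lies in the complementary region then $(P_a,P_b)$ is the required pair. So everything reduces to producing two paths $P_a,P_b$ and a cycle $C\subseteq P_a\cup P_b$ with $z$ ``on the inside''.

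To produce it I would locate $\Phi$. If $\Phi$ is incident to $x$, it is one of the $r$ corner faces between consecutive edges $xx_i,xx_{i+1}$; taking $C$ to be the cycle of $P_i\cup P_{i+1}$ obtained by walking along $P_i$ from $x$ to the first vertex it shares with $P_{i+1}$ and back along $P_{i+1}$, that corner face lies in the wedge side of $C$, which contains $z$ and no other $x_\ell$. If $\Phi$ is not incident to $x$, I would instead fix any pair, pass to the smaller graph $P_a\cup P_b$, let $\Phi'\supseteq\Phi$ be the face of $P_a\cup P_b$ containing $z$, and take $C$ to be the Jordan curve forming the outer boundary of $\overline{\Phi'}$ (it exists since $\Phi'$ is an open disk and $P_a\cup P_b$ is bridgeless). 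Either all $x_\ell$ with $\ell\neq a,b$ lie outside $C$ and we are done, or some $x_\ell$ is enclosed together with $z$; in that case the path $P_\ell$, running from $x$ to $y$, must meet $C$ (only at shared vertices), and one argues that replacing $b$ (or $a$) by $\ell$ strictly decreases the number of enclosed neighbours, so a pair with none enclosed is reached after finitely many steps — equivalently, the whole thing can be run as an induction on $r$, peeling off or merging two cyclically consecutive paths.

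The main obstacle is exactly this last ``no enclosed neighbour'' step: controlling, via the cyclic order of the $x_i$ around $x$ together with planarity and the bridgelessness of $H$ — and the possibility that $H$ has cut vertices, so that face-boundary walks need not be simple cycles — that the replacement genuinely shrinks the enclosed set, and that when a boundary walk is not a simple cycle the extracted Jordan curve still cleanly separates $z$ from every remaining $x_\ell$. The planarity bookkeeping of the ``easy half'' and the corner-face case are routine by comparison.
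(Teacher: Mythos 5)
Your reduction of the lemma to ``find a pair $P_a,P_b$ and a cycle $C\subseteq P_a\cup P_b$ enclosing $z$ but no other $x_\ell$'' is a reasonable target, but the step that would actually achieve it --- the replacement argument in which an enclosed neighbour $x_\ell$ lets you swap $P_\ell$ for $P_a$ or $P_b$ and ``strictly decreases the number of enclosed neighbours'' --- is exactly the content of the lemma, and you do not prove it; you yourself flag it as the main obstacle. Nothing in the proposal explains why the swap cannot enclose neighbours that were previously outside, nor why the new pair again contains a cycle with $z$ inside, nor why the process terminates with a \emph{separating} pair rather than cycling. The paper settles precisely this point by induction on $r$, with a carefully argued base case $r=3$ in which the curves are deformed isotopically rel.\ $\{x_a,x_b,z\}$ so that they contain the edge $xx_a$ and meet the middle path only at $x$ and $y$, followed by a short case analysis of how $P_2$ splits the relevant face; the inductive step reduces the bad configuration (only $x_r$ left on $z$'s side) to that three-path case. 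Without some substitute for this analysis, your write-up restates the goal rather than proving it.

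Several of the supporting claims are also false or unjustified as stated. The hypotheses only say that $P_i$ \emph{contains the vertex} $x_i$, not the edge $xx_i$; hence $H=\bigcup_i P_i$ need not use any of the edges $xx_1,\dots,xx_r$, the faces of $H$ incident to $x$ need not be the ``corner faces'' between consecutive edges $xx_i,xx_{i+1}$, and a cycle in $P_i\cup P_{i+1}$ need not pass through $x$ along those edges --- which undercuts both your corner-face case and the ``easy half'' wedge argument. (Even when those edges are present, the wedge argument only places \emph{all} remaining $x_\ell$ on one side of $C$ when $x_a$ and $x_b$ are cyclically consecutive among the designated neighbours; for an arbitrary pair the $x_\ell$ split between the two wedges at $x$.) Likewise, $H$ and $P_a\cup P_b$ can have bridges --- for instance all the paths may share a common final segment into $y$ --- so bridgelessness cannot be invoked to extract a Jordan curve bounding the face $\Phi'$. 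These auxiliary issues are repairable (the paper repairs the first by isotoping each $P_i$ to a curve through $xx_i$ before reasoning about wedges), but together with the missing descent argument they leave a genuine gap at the decisive step.
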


\begin{proof}
The proof is by induction on $r \geq 3$. The base case is when $r=3$. Consider the faces induced by the two paths $P_1$ and $P_2$ in the embedding. If $z$ and $x_3$ are in two separate faces, then clearly $P_1$ and $P_2$
induce a Jordan curve separating $x_3$ from $z$, and we are done. Therefore, we can assume that $z$ and $x_3$ are in the same face induced by $P_1$ and $P_2$. Since $P_1$ does not contain $x_2$, we can continuously deform $P_1$ into an isotopic non self-intersecting curve $P'_1$ w.r.t.~$x_3, x_2, z$, that includes $xx_1$, intersects edges $xx_2$ and $xx_3$ only at $x$, and intersects $P_2$ only at $x$ and $y$. Similarly,
if $P_2$ and $P_3$ do not separate $z$ from $x_1$, then $z$ and $x_1$ are in the same face induced by $P_2$ and $P_3$ and we can define a curve $P'_3$ that is isotopic to $P_3$ w.r.t.~$x_2, x_1, z$, and such that
$P'_3$ contains $xx_3$, intersects $xx_2$ and $xx_1$ only at $x$, and intersects $P_2$ only at $x$ and $y$. Now if $z$ and $x_2$ are in different faces induced by $P'_1$ and $P'_3$, then $P'_1$ and $P'_3$ separate $z$ from $x_2$, and since $P_1$ is isotopic to $P'_1$ w.r.t.~$z$ and $x_2$, and $P'_3$ is isotopic to $P_3$ w.r.t.~$z$ and $x_2$, it follows that $P_1$ and $P_3$ induce a Jordan curve that separates $x_2$ from $z$. Assume now that $z$ and $x_2$ are in the same face $f$ induced by $P'_1$ and $P'_3$. Since $P_2$ intersects with each of $P'_1$ and $P'_3$ precisely at $x$ and $y$, it follows that $P_2$ splits $f$ into two faces $f_1, f_2$, where $xx_2, xx_1$ are two consecutive edges on the boundary of $f_1$ and $xx_2, xx_3$ are two consecutive edges on the boundary of $f_2$. Then, $z$ must be interior to exactly one of the two faces $f_1, f_2$. If $z$ is interior to $f_1$,
let $f'_1$ be the face induced by $P'_1$ and $P_2$ and containing $z$. Then $f'_1$ contains $f_1$, and does not contain $x_3$ (because $P'_1$ intersects $xx_3$ only at $x$). Therefore, $f'_1$, and hence, $P'_1$ and $P_2$ induce a Jordan curve that separates $z$ from $x_3$. It follows that $P_1$, which is isotopic to $P'_1$ w.r.t.~$x_2, x_3, z$, and $P_2$ induce a Jordan curve that separates $z$ from $x_3$. Similarly, if $z$ is interior to $f_2$, then $P'_3, P_2$ induce a Jordan curve that separates $z$ from $x_1$, and hence, $P_3$ and $P_2$ induce a Jordan curve that separates $z$ from $x_1$.

Assume inductively that the statement of the lemma is true for any $3 \leq \ell < r$. By the inductive hypothesis applied to $x_1, \ldots, x_{r-1}$, there exist two paths $P_i, P_j$, $i, j \in [r-1]$ and $i \neq j$, such that the two paths $P_i, P_j$ induce a Jordan curve separating $\{x_1, \ldots, x_{r-1}\} \setminus \{x_i, x_j\}$ from $z$. If $x_r$ and $z$ are not in the same face induced by $P_i, P_j$, then $P_i, P_j$ separate $x_r$ from $z$ as well, and we are done. Assume now that $z$ and $x_r$ are in the same face $f$ induced by $P_i, P_j$. Since $P_i, P_j$ separate $z$ from $\{x_1, \ldots, x_{r-1}\} \setminus \{x_i, x_j\}$, none of $\{x_1, \ldots, x_{r-1}\} \setminus \{x_i, x_j\}$ is interior to $f$, and hence, $x_r$ is the only neighbor of $x$ between $x_i$ and $x_j$ w.r.t.~the rotation system of $G'$, which implies w.l.o.g. that $x_1=x_i$ and $x_{r-1}=x_j$. By the inductive hypothesis applied to $x_1, x_{r-1}, x_r$ there are two paths in $P_1, P_{r-1}, P_r$ that induce a Jordan curve that separates $z$ from one of $x_1, x_{r-1}, x_r$. Since $P_1$ and $P_{r-1}$ do not separate $x_r$ from $z$, one of these two path must be $P_r$; assume, w.l.o.g., that the two paths are $P_1$ and $P_r$. Since $x_1$ and $x_r$ are consecutive neighbors in the rotation system, and since $P_1, P_r$ do not contain any of $x_2, \ldots, x_{r-1}$, it follows that $x_2, \ldots, x_{r-1}$ are in the same face induced by $P_1, P_r$, and this face does not contain $z$ because $P_1, P_r$ separate $z$ from $x_{r-1}$. It follows that $P_1, P_r$ induce a Jordan curve that separates $z$
from $x_2, \ldots, x_{r-1}$. This completes the inductive proof.
\end{proof}

\begin{lemma}
\label{lem:paths}
Let $G'$ be a plane graph with a face $f$, and let $u, v \in V(G')$. Let $u_1, \ldots, u_r$, $r \geq 3$, be the neighbors of $u$. Suppose that, for each $i \in [r]$, there exists a $u$-$v$ path $P_i$ in $G'$ containing $u_i$ and a vertex incident to $f$ different from $v$, and such that $P_i$ does not contain any $u_j$, $j \in [r]$, $j \neq i$. Then there exist two paths $P_i, P_j$, $i, j \in [r]$, $i \neq j$, such that $V(P_i) \cup V(P_j) - \{v\}$ is a vertex-separator separating $\{u_1, \ldots, u_r\} \setminus \{u_i, u_j\}$ from $v$.
\end{lemma}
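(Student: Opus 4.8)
The plan is to reduce Lemma~\ref{lem:paths} to Lemma~\ref{lem:generalpaths} by inserting an auxiliary vertex inside the face $f$. First I would build a plane graph $G''$ from $G'$ by placing a new vertex $z$ in the interior of $f$ and joining $z$, by edges drawn inside $f$, to every vertex of $G'$ incident to $f$. For each $i$, walk along $P_i$ from $u$ and let $b_i\neq v$ be the first vertex of $P_i$ incident to $f$; such a $b_i$ exists because, by hypothesis, $P_i$ meets $f$ at some vertex different from $v$, while $v$ is the last vertex of $P_i$. Let $Q_i$ be the $u$--$z$ path obtained from the prefix of $P_i$ up to $b_i$ followed by the edge $b_iz$. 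Then each $Q_i$ contains $u_i$, avoids $v$, and avoids every $u_j$ with $j\neq i$, since $P_i$ has these properties and $z$ is new. (A minor wrinkle occurs when $u$ itself is incident to $f$: there $b_i$ should instead be taken as the first $f$-incident vertex strictly after $u_i$ on $P_i$, $u$ acquires the extra neighbor $z$ in $G''$ for which the trivial path $(u,z)$ serves, and the residual degenerate case where $P_i$ meets $f$ only at $u$ and possibly $v$ needs a short local reroute at $u$.)

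Next I would apply Lemma~\ref{lem:generalpaths} inside $G''$, taking its vertices $x,y$ to be $u,z$ and its ``third vertex'' to be $v$, and using the paths $Q_1,\dots,Q_r$ (together with $(u,z)$ if $u$ is incident to $f$). This produces two of these paths, say $Q_i$ and $Q_j$, whose union is a Jordan curve $J'$ that separates $\{u_1,\dots,u_r\}\setminus\{u_i,u_j\}$ from $v$ in the plane.

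It then remains to convert this topological separation into the desired vertex separator. Observe that $v\notin V(J')$ and $V(J')\setminus\{z\}\subseteq V(P_i)\cup V(P_j)$. Suppose for contradiction that for some $\ell\neq i,j$ there is a $u_\ell$--$v$ path $R$ in $G'-(V(P_i)\cup V(P_j)-\{v\})$. Since $z\notin V(G')$ and $R$ can meet $V(P_i)\cup V(P_j)$ only in $v\notin V(J')$, the path $R$ avoids every vertex of $J'$; moreover the edges of $R$ lie in $G'$, while the only edges of $J'$ not in $G'$ are $b_iz$ and $b_jz$, whose relative interiors lie in the open region $f$ and hence meet no edge of $G'$. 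Thus $R$, viewed as a curve, is disjoint from $J'$, yet it joins $u_\ell$ to $v$, which $J'$ separates --- impossible. Hence $V(P_i)\cup V(P_j)-\{v\}$ separates $\{u_1,\dots,u_r\}\setminus\{u_i,u_j\}$ from $v$ (these $u_\ell$ indeed lie outside this set, as $P_i$ and $P_j$ avoid them), which is exactly the assertion.

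The step I expect to be the main obstacle is the degenerate case noted above, where $u$ is incident to $f$ and some $P_i$ meets $f$ only at $u$ and $v$, so that the routing of $Q_i$ through $u_i$ into $z$ breaks down and must be replaced by a separate local argument; the remainder is a fairly mechanical combination of Lemma~\ref{lem:generalpaths} with the Jordan curve theorem, the crux being that pushing the paths into $z$ --- rather than keeping them as $u$--$v$ paths --- is precisely what keeps $v$ off the separating curve (and this is where the hypothesis ``$P_i$ meets $f$ away from $v$'' is used).
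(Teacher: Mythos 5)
Your construction is essentially the paper's own proof of Lemma~\ref{lem:paths}: the paper likewise places a new vertex $y$ inside $f$, replaces each $P_i$ by its prefix up to an $f$-incident vertex $y_i\neq v$ followed by a new edge $y_iy$, applies Lemma~\ref{lem:generalpaths} with the vertex to be separated taken to be $v$, and then discards $y$ to obtain the separator $V(P_i)\cup V(P_j)-\{v\}$. The only cosmetic difference is that the paper joins $y$ only to the chosen attachment vertices $y_1,\dots,y_r$ rather than to every vertex incident to $f$, which leaves the neighborhood of $u$ unchanged and so removes the need for your auxiliary path $(u,z)$.

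About the ``degenerate case'' you single out (some $P_i$ meeting $f$ only in $u$ and possibly $v$): your instinct that the truncation breaks down there is right, but no ``short local reroute'' can repair it, because the lemma as literally stated fails in that regime. Take $G'=K_{2,3}$ with parts $\{u,v\}$ and $\{u_1,u_2,u_3\}$, let $f$ be the face bounded by the cycle $(u,u_1,v,u_2)$, and let $P_i=(u,u_i,v)$; every stated hypothesis holds ($P_3$ meets $f$ in $u\neq v$), yet for each pair $i\neq j$ the remaining vertex $u_\ell$ is still adjacent to $v$ after deleting $V(P_i)\cup V(P_j)-\{v\}$, so no pair separates. The paper's proof makes the same silent slip: its $y_i$ is only required to be $f$-incident and distinct from $v$, so it could be $u$ itself, in which case the prefix no longer contains $u_i$ and Lemma~\ref{lem:generalpaths} cannot be applied. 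The intended (and needed) hypothesis is that each $P_i$ contains an $f$-incident vertex different from both $u$ and $v$; this is exactly what is available where the lemma is invoked, in the proof of Lemma~\ref{lem:separator}. Under that reading your argument, like the paper's, is complete: since $u_i$ is necessarily the second vertex of $P_i$ (it is the only neighbor of $u$ that $P_i$ may use), any attachment vertex other than $u$ yields a truncated path that still contains $u_i$, and the remaining Jordan-curve bookkeeping is as you describe.
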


\begin{proof}
Create a new vertex $y$ interior to $f$. Each path $P_i$, $i\in [r]$, contains a vertex $y_i$ incident to $f$ and different from $v$; we define a new path $P'_i$ from $u$ to $y$, consisting of the prefix of $P_i$ up to $y_i$, and extending this prefix by adding a new edge between $y_i$ and the new vertex $y$. Note that we can extend the rotation system of $G'$ in a straightforward manner to obtain a rotation system for the plane graph resulting from adding $y$ and the edges $y_iy$ to $G'$, $i \in [r]$. Since $v$ is the endpoint of $P_i$ and $v \neq y_i$, it follows that $v$ is not contained in $P'_i$, for $i \in [r]$. By Lemma~\ref{lem:generalpaths}, there exist two paths $P'_i, P'_j$, $i, j \in [r]$, and $i\neq j$, such that the two paths $P'_i, P'_j$ induce a Jordan curve separating $\{u_1, \ldots, u_r\} \setminus \{u_i, u_j\}$ from $v$ in $G'+y$. It follows that $V(P'_i) \cup V(P'_j)$ is a vertex-separator separating $\{u_1, \ldots, u_r\} \setminus \{u_i, u_j\}$ from $v$ in $G'+y$, and hence, $V(P'_i) \cup V(P'_j) -\{y\} \subseteq V(P_i) \cup V(P_j) - \{v\}$ is a vertex-separator separating $\{u_1, \ldots, u_r\} \setminus \{u_i, u_j\}$ from $v$ in $G'$.
\end{proof}
\fi

\ifshort
\begin{lemma}[\appno{} Lemma~4.9]
\label{lem:paths}
Let $G'$ be a plane graph with a face $f$, let $u, v \in V(G')$, and let $u_1, \ldots, u_r$, $r \geq 3$, be the neighbors of $u$. Suppose that, for each $i \in [r]$, there exists a $u$-$v$ path $P_i$ containing $u_i$ and a vertex incident to $f$ different from $v$, and such that $P_i$ does not contain any $u_j$, $j \in [r]$, $j \neq i$. Then there exist two paths $P_i, P_j$, $i, j \in [r]$, $i \neq j$, such that $V(P_i) \cup V(P_j) - \{v\}$ is a vertex-separator separating $\{u_1, \ldots, u_r\} \setminus \{u_i, u_j\}$ from $v$.
\end{lemma}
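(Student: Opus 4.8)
The plan is to reduce this statement to the already-proven Lemma~\ref{lem:generalpaths} by a standard ``puncture-the-face'' trick: introduce a new auxiliary vertex $y$ placed inside the face $f$, and connect $y$ to the vertices on $f$ that the paths $P_i$ pass through, thereby converting the ``separating from $v$ through the face'' statement into the purely topological ``separating from $y$'' statement that Lemma~\ref{lem:generalpaths} handles.

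First I would set up the auxiliary graph. For each $i\in[r]$, by hypothesis $P_i$ contains some vertex $y_i$ incident to $f$ with $y_i\neq v$ (note $y_i$ need not be distinct across different $i$, and that is fine). Create a new vertex $y$ in the interior of $f$ and, for each $i$, add an edge $y_i y$; since all $y_i$ lie on the single face $f$, these edges can be drawn inside $f$ without crossings, and the rotation system of $G'$ extends in the obvious way to the new plane graph $G'+y$. Then define $P_i'$ to be the $u$-$y$ path consisting of the prefix of $P_i$ from $u$ up to its first occurrence of a vertex on $f$ other than $v$ — call that vertex $y_i$ — followed by the edge $y_i y$. Taking the \emph{first} such vertex guarantees $P_i'$ is genuinely a simple path and avoids $v$ (the only endpoint of $P_i$ that might be excluded), and crucially $P_i'$ still contains $u_i$ (the neighbor $u_i$ of $u$ on $P_i$ comes before any face vertex, as long as $u_i\neq y_i$; if $u_i$ itself lies on $f$ we just take $P_i'=(u,u_i,y)$). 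Also $P_i'$ contains no $u_j$ for $j\neq i$ because $P_i$ didn't, and $P_i'$ does not contain $v$.

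Next I would invoke Lemma~\ref{lem:generalpaths} with the roles $x:=u$, $y:=y$ (the target), $z:=v$: the $u_1,\dots,u_r$ are neighbors of $u$ in, say, counterclockwise order, each $P_i'$ is a $u$-$y$ path through $u_i$ avoiding $v$ and avoiding all other $u_j$. The lemma yields two indices $i\neq j$ such that $P_i'$ and $P_j'$ together induce a Jordan curve $\gamma$ separating $\{u_1,\dots,u_r\}\setminus\{u_i,u_j\}$ from $v$ in $G'+y$. Hence $V(P_i')\cup V(P_j')$ is a vertex-separator separating $\{u_1,\dots,u_r\}\setminus\{u_i,u_j\}$ from $v$ in $G'+y$. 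Finally I would push this back down to $G'$: since $y\notin V(G')$ and its only neighbors in $G'+y$ are among the $y_\ell$'s which lie on $P_i'$ or $P_j'$ anyway, removing $y$ from the separator still separates the same sets in $G'$ — any $v$-to-$u_\ell$ path in $G'$ is also a path in $G'+y$ avoiding $y$, so it must hit $V(P_i')\cup V(P_j')\setminus\{y\}$. Noting $V(P_i')\setminus\{y\}\subseteq V(P_i)$ (and it does not contain $v$ since $P_i'$ avoids $v$), we get $\big(V(P_i')\cup V(P_j')\big)\setminus\{y\}\subseteq \big(V(P_i)\cup V(P_j)\big)\setminus\{v\}$, which therefore is also a separator. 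This gives exactly the claimed conclusion.

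I do not expect a serious obstacle here since the heavy topological lifting is already done in Lemma~\ref{lem:generalpaths}; the one point requiring mild care is the definition of $P_i'$ — choosing the right truncation point on $f$ so that the resulting $u$-$y$ curve is a simple path, still contains $u_i$, still avoids all other $u_j$, and still avoids $v$ — and the handling of degenerate cases ($u_i$ itself on $f$, or $y_i$ coinciding across different $i$, or $u_i=v$ which cannot happen since $u_i$ is a neighbor of $u$ and these are internal to the path while $v$ is an endpoint). A second minor point is checking that Lemma~\ref{lem:generalpaths}'s hypothesis $r\ge 3$ is met, which is given, and that the ``counterclockwise order'' phrasing is harmless since any cyclic order of the neighbors of $u$ works after relabeling.
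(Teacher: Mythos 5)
Your proposal is correct and follows essentially the same route as the paper's own proof: add an auxiliary vertex $y$ inside $f$, replace each $P_i$ by its prefix up to a face vertex $y_i\neq v$ extended by the edge $y_iy$, apply Lemma~\ref{lem:generalpaths} with $z=v$ to get two paths whose union separates the remaining $u_j$'s from $v$ in $G'+y$, and then discard $y$ to land inside $V(P_i)\cup V(P_j)-\{v\}$. Your extra care about where to truncate (so the new $u$-$y$ path is simple, avoids $v$, and still contains $u_i$) is exactly the detail the paper handles implicitly by its choice of $y_i$, so there is no substantive difference.
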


\begin{lemma}[\appno{} Lemma~4.10]
\label{lem:externalpaths}
Let $x, y$ be two vertices in an irreducible subgraph $G'$ of $G$, and let $f$ be a face in $G'$. Then there are at most two color-disjoint $x$-$y$ paths in $G'$ that contain only colors that appear on $f$.
\end{lemma}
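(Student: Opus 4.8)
\textbf{Plan for Lemma~\ref{lem:externalpaths}.}
The statement is exactly the kind of claim that Lemma~\ref{lem:paths} is designed to feed into, so the plan is to argue by contradiction and derive, from the existence of three color-disjoint $x$-$y$ paths using only colors appearing on $f$, a configuration that Lemma~\ref{lem:paths} forbids in an irreducible graph. Suppose for contradiction that $P_1, P_2, P_3$ are pairwise color-disjoint $x$-$y$ paths in $G'$ each using only colors that appear on the face $f$. The first step is to reduce to the situation where the three paths emanate from $x$ along three \emph{distinct} neighbors of $x$. If two of the paths, say $P_1$ and $P_2$, start with the same first edge $xx_1$, then since $P_1$ and $P_2$ are color-disjoint, the vertex $x_1$ is empty (its colors would have to lie in $\chi(P_1)\cap\chi(P_2)=\emptyset$); here I would peel off the maximal common prefix and re-root, eventually either finding a vertex along the shared prefix where the paths diverge at distinct neighbors, or concluding that one of the paths is a subpath of (or runs parallel to) another — but color-disjointness together with $x\ne y$ rules the degenerate cases out. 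Similarly, I must take care of $x$ having degree less than $3$: if $\deg_{G'}(x)\le 2$, then an $x$-$y$ vertex-cut near $x$ of size $\le 2$ is immediate and the three color-disjoint paths cannot exist (any two of them would have to pass through a common internal vertex, forcing a shared color). So after this clean-up I may assume $x$ has neighbors $u_1,\dots,u_r$ with $r\ge 3$ and that $P_1,P_2,P_3$ pass through three distinct ones, $u_{i_1},u_{i_2},u_{i_3}$, and (shortcutting if necessary, which only shrinks color sets and preserves color-disjointness) that each $P_m$ avoids the other two chosen neighbors.

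The second step is to apply Lemma~\ref{lem:paths} with this $f$, $u=x$, $v=y$: each of the three paths contains a neighbor $u_{i_m}$ of $x$ and a vertex incident to $f$ (indeed every nonempty vertex of $P_m$ is incident to $f$, since $P_m$ uses only colors appearing on $f$, and the color-connectivity of $G$ is what makes this bite — though here I only need \emph{one} vertex of $P_m$ on $f$ other than $y$, which holds as long as $P_m$ has a nonempty internal vertex; if some $P_m$ has no nonempty internal vertex at all, then $\chi(P_m)=\emptyset$, and then $P_1,P_2,P_3$ being a triple of internally-empty $x$-$y$ paths with $x,y$ nonadjacent contradicts irreducibility directly, since one can color-contract along such a path — or more simply, two internally-empty disjoint $x$-$y$ paths bound a face whose boundary can be reduced). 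Lemma~\ref{lem:paths} then yields two of the three paths, say $P_1$ and $P_2$, such that $S:=V(P_1)\cup V(P_2)\setminus\{y\}$ separates $u_{i_3}$ from $y$ in $G'$.

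The third and final step extracts the contradiction with irreducibility. The separator $S$ consists only of vertices of $P_1\cup P_2$; since $P_1$ and $P_2$ are color-disjoint, and $P_3$ is color-disjoint from both, the path $P_3$ must cross $S$ at an \emph{empty} vertex — any vertex of $P_3\cap S$ carries a color in $\chi(P_3)$, which is disjoint from $\chi(P_1)\cup\chi(P_2)$, so it has no color. But $P_3$ reaches $u_{i_3}$ which is on one side of $S$ and $y$ which is on the other (after stripping $y$), so $P_3$ genuinely passes through a vertex $p\in S$, and $p$ is empty. Now $p$ lies on $P_1$ (say), and its neighbors on $P_1$, call them $a$ and $b$: if one of them, say $a$, is also empty then $\chi(p)=\chi(a)=\emptyset$ and the edge $pa$ is color-contractible, contradicting irreducibility. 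The remaining case — where $p$'s $P_1$-neighbors are both nonempty — requires a bit more care: here I would use the planarity/rotation structure around $p$ together with the fact that $P_3$ enters and leaves $p$ through two further edges, and trace out that the local picture at $p$ forces, via Lemma~\ref{lem:generalpaths} applied at $p$ to the (at least three) paths $P_1$-left, $P_1$-right, $P_3$-branch, a nested Jordan curve that in turn isolates $y$ from the $f$-side, contradicting that $P_3$ connects to a vertex of $f$; alternatively, and more cleanly, choose the separator $S$ of \emph{minimum} size among all such from Lemma~\ref{lem:paths}, so that $P_3$ meets $S$ in exactly one vertex, and then a minimal separator argument shows this single crossing vertex is "useless" and can be bypassed, again collapsing to an empty–empty adjacency or reducing $r$.

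\textbf{Main obstacle.} The genuinely delicate part is the bookkeeping in the reductions of Step~1 and the extraction in Step~3: ensuring that shortcutting paths, peeling common prefixes, and re-rooting at a new vertex all preserve (a) color-disjointness of the triple, (b) the hypothesis "uses only colors on $f$", and (c) irreducibility of the ambient graph — and that the degenerate low-degree / internally-empty cases are each handled by an explicit color-contraction or a direct Jordan-curve argument rather than by hand-waving. I expect the cleanest write-up routes everything through Lemma~\ref{lem:generalpaths} and Lemma~\ref{lem:paths} so that the only "new" content here is the observation that color-disjointness forces every separator-crossing of the third path to be empty, which is incompatible with irreducibility.
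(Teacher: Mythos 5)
There is a genuine gap, in two places. First, your application of Lemma~\ref{lem:paths} is not justified: its hypothesis requires each path to contain a vertex \emph{incident to} $f$ (other than the endpoint), and your claim that ``every nonempty vertex of $P_m$ is incident to $f$'' is false --- a path that uses only colors \emph{appearing on} $f$ need not touch $f$ at all, since color-connectivity only says each color class is connected, not that these particular vertices lie on $f$'s boundary. The paper sidesteps this by placing a new apex vertex $z$ inside $f$, joining it to all vertices of $f$, and invoking Lemma~\ref{lem:generalpaths} directly (its only requirement being that the paths avoid $z$, which is automatic); the resulting Jordan curve then separates one of the first neighbors of $x$ from $z$, and hence from every vertex of $f$.

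Second, and more importantly, your Step~3 does not produce a contradiction. That $P_3$ meets the separator $S=V(P_1)\cup V(P_2)\setminus\{y\}$ only at an empty vertex $p$ is perfectly consistent with irreducibility: color-disjoint paths may share empty vertices whose neighbors along each path are nonempty (this happens throughout the paper, e.g.\ in the proof of Lemma~\ref{lem:separator}), so the case you flag as ``requiring more care'' is exactly the generic case, and neither of your two sketched patches is an argument. Moreover, your separator separates $u_{i_3}$ from $y$, not from the face $f$, so the contradiction that actually closes the proof is unavailable in your configuration. The missing idea is the color-connectivity step: since $x$ and $y$ are empty (pairwise color-disjointness) and $G'$ is irreducible, each first neighbor $v_i$ of $x$ on $P_i$ is \emph{nonempty}, so it carries a color $c_i$ that appears on some vertex of $f$; once two of the paths form a Jordan curve separating $v_i$ from $f$, Observation~\ref{obs:colorconnectivity} forces $c_i$ onto those two paths, contradicting color-disjointness. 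Your write-up never uses the nonemptiness of the $v_i$'s or their colors, and an irreducibility-based contradiction cannot replace this.
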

\fi

\iflong
\begin{lemma}
\label{lem:externalpaths}
Let $x, y$ be two vertices in an irreducible subgraph $G'$ of $G$, and let $f$ be a face in $G'$. Then there are at most two color-disjoint $x$-$y$ paths in $G'$ that contain only colors that appear on $f$.
\end{lemma}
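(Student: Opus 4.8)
The plan is to argue by contradiction: suppose there were three pairwise color-disjoint $x$-$y$ paths $P_1, P_2, P_3$ in $G'$, each using only colors that appear on $f$. Since $G'$ is irreducible, no two adjacent vertices share the same color set; in particular, I first want to reduce to the case where each $P_i$ is an induced path and, more importantly, where every non-empty vertex on $P_i$ carries a color that appears on $f$ (by definition of the hypothesis). The key topological idea is to route the three paths toward the face $f$ and apply Lemma~\ref{lem:paths}. To set this up, note that because all colors on $P_1 \cup P_2 \cup P_3$ appear on $f$, and the three paths are color-disjoint, each $P_i$ must actually pass through a vertex incident to $f$ (indeed, for every color $c$ used on $P_i$, color-connectivity — Observation~\ref{obs:colorconnectivity} — forces a vertex on $f$ carrying $c$ to be reachable, and color-disjointness then forces that vertex to lie on $P_i$ itself, in the spirit of Observation~\ref{obs:externalcolor}). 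So each $P_i$ contains a vertex incident to $f$, distinct from the two designated endpoints.

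Next I would contract $x$ against one of its neighbors if needed so that $x$ has at least three distinct neighbors $x_1, x_2, x_3$, one on each of $P_1, P_2, P_3$; here I need the three paths to leave $x$ along three genuinely different edges. If two of the color-disjoint paths left $x$ along the same first edge, I could truncate and reroute — or more simply observe that since the paths are color-disjoint and $G'$ is irreducible, I can choose the maximal color-disjoint family so that they emanate from $x$ along distinct edges; alternatively, replace $x$ by a subdivision argument. With three paths $P_1, P_2, P_3$ through distinct neighbors $x_1, x_2, x_3$ of $x$, each containing a vertex incident to $f$ other than $y$, and each avoiding the other two $x_j$'s (again achievable by shortcutting while preserving the "colors on $f$" property and color-disjointness), Lemma~\ref{lem:paths} applies with $r = 3$: it yields two of the paths, say $P_i$ and $P_j$, such that $V(P_i) \cup V(P_j) - \{y\}$ separates the third neighbor $x_\ell$ from $y$.

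Now I derive the contradiction. The third path $P_\ell$ goes from $x$ (hence from $x_\ell$) to $y$, so it must hit the separator $V(P_i) \cup V(P_j) - \{y\}$ at some vertex $z \neq y$. That vertex $z$ lies on $P_\ell$ and on $P_i$ (or $P_j$). If $z$ is a non-empty vertex, then $z$ carries a color $c$ that appears on $f$, and this color $c$ is then shared between $P_\ell$ and $P_i$ (or $P_j$), contradicting color-disjointness. The remaining case is that $z$ is an empty vertex — this is the main obstacle, and I expect it to be handled exactly the way irreducibility is used elsewhere: an empty vertex sitting on two of our paths can be eliminated. Concretely, since $G'$ is irreducible there are no two adjacent empty vertices (color contraction would apply), and one can arrange the minimal/color-disjoint family of paths so that internal empty vertices do not create spurious crossings — or, cleaner, extend the separating-curve argument of Lemma~\ref{lem:paths} to conclude that $P_\ell$ shares a *colored* vertex with $P_i \cup P_j$ rather than merely any vertex, by pushing the Jordan curve slightly so it passes through colored vertices only (using that every color on the paths appears on $f$ and hence these vertices are "external"). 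Either route gives a shared color, contradicting color-disjointness, and therefore at most two such paths exist.

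The main difficulty is thus not the topology (Lemma~\ref{lem:paths} does the heavy lifting) but the bookkeeping around empty vertices and the requirement that the three paths leave $x$ along distinct edges and avoid each other's first neighbors; I would expect to spend most of the write-up carefully invoking irreducibility (Definition~\ref{def:colorcontraction}/Lemma~\ref{lem:contract}) and the operation $\Lambda_{xy}$ to normalize the three paths before handing them to Lemma~\ref{lem:paths}.
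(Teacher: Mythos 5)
Your proposal has two genuine gaps, and the second one is precisely the crux of the lemma. First, to invoke Lemma~\ref{lem:paths} you need each of the three paths to contain a vertex incident to $f$ other than the endpoint; your justification (``in the spirit of Observation~\ref{obs:externalcolor}'') does not apply here. That observation works because there the ambient graph $M$ is exactly the union of the color-disjoint paths, so the face vertex carrying a color of $P_i$ is forced to lie on $P_i$. In the present lemma $G'$ is an arbitrary irreducible subgraph: a color of $P_i$ must appear on \emph{some} vertex incident to $f$, but that vertex need not lie on $P_i$ or on any of the three paths, so the hypothesis of Lemma~\ref{lem:paths} is simply not available (and if you shrink to the subgraph spanned by the paths, $f$ is no longer a face of it). The paper sidesteps this entirely by placing a new vertex $z$ in the interior of $f$, joining it to every vertex incident to $f$, and applying Lemma~\ref{lem:generalpaths} (not Lemma~\ref{lem:paths}) to the three neighbors $v_1,v_2,v_3$ of $x$: since $x$ and $y$ must be empty (any color on them would lie on all three paths) and $G'$ is irreducible, each $v_i$ is nonempty, hence distinct and avoided by the other two paths, which is all that Lemma~\ref{lem:generalpaths} needs.

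Second, your contradiction step fails at exactly the point you flag. The third path certainly meets the separator $V(P_i)\cup V(P_j)-\{y\}$, but color-disjoint paths in this setting routinely intersect at empty vertices (this is exploited elsewhere in the paper, e.g.\ in Lemma~\ref{lem:separator}), so the meeting vertex may carry no color and no contradiction results; ``pushing the Jordan curve slightly so it passes through colored vertices only'' is not a legitimate operation on a separator made of graph edges and vertices, and no normalization via $\Lambda_{xy}$ removes these empty intersections. The paper's contradiction does not use the fact that $P_\ell$ crosses the separator at all. Instead: Lemma~\ref{lem:generalpaths} gives an index $i$ such that the other two paths form a Jordan curve separating $v_i$ from $z$, hence from every vertex incident to $f$; the color $c_i$ on $v_i$ appears on some vertex of $f$; so by Observation~\ref{obs:colorconnectivity} (color-connectivity of $G$) the color $c_i$ must appear on a vertex of that Jordan curve, i.e.\ on one of the other two paths, contradicting color-disjointness since $c_i$ already appears on $P_i$. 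In short, the contradiction is delivered by color-connectivity across the separating curve, not by locating a colored crossing vertex, and that idea is missing from your argument.
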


\begin{proof}
Suppose, to get a contradiction, that there are three color-disjoint $x$-$y$ paths $P_1, P_2, P_3$ in $G'$ that contain only colors that appear on $f$.
We create a new vertex $z$ interior to $f$ and add edges between $z$ and each vertex incident to $f$. Note that we can extend the rotation system of $G'$ in a straightforward manner to obtain a rotation system for the plane graph resulting from adding $z$ and the edges incident to it to $G'$. Clearly, none of $P_1, P_2, P_3$ contains $z$. Because the paths $P_1, P_2, P_3$ are color-disjoint, both $x$ and $y$ must be empty vertices. Let $v_1, v_2, v_3$ be the neighbors of $x$ on $P_1, P_2, P_3$, respectively. Since $x$ is an empty vertex and $G'$ is irreducible, none of $v_1, v_2, v_3$ is an empty vertex, and hence each $v_i$, $i \in [3]$, must contain a color $c_i$ that appears on $f$. Since $P_1, P_2, P_3$ are pairwise color-disjoint, it follows that no vertex in $\{v_1, v_2, v_3\}\setminus\{v_i\}$ is contained in $P_i$, for $i\in [3]$.
By Lemma~\ref{lem:generalpaths}, there is a $v_i$, $i \in [3]$, such that the two paths in $\{P_1, P_2, P_3\} -P_i$ induce a Jordan curve in $G'+z$ separating $v_i$ and $z$, and hence separating $v_i$ from each vertex incident to $f$. Since $c_i$ appears on both $v_i$ and a vertex incident to $f$, by Observation~\ref{obs:colorconnectivity}, it follows that $c_i$ must appear on a vertex in $V(P_1) \cup V(P_2) \cup V(P_3) -V(P_i)$. This is a contradiction since $c_i$ appears on $P_i$ and the paths $P_1, P_2, P_3$ are pairwise color-disjoint.
\end{proof}
\fi
\iflong
\begin{figure}[htbp]
\begin{center}
\includegraphics[width=12cm]{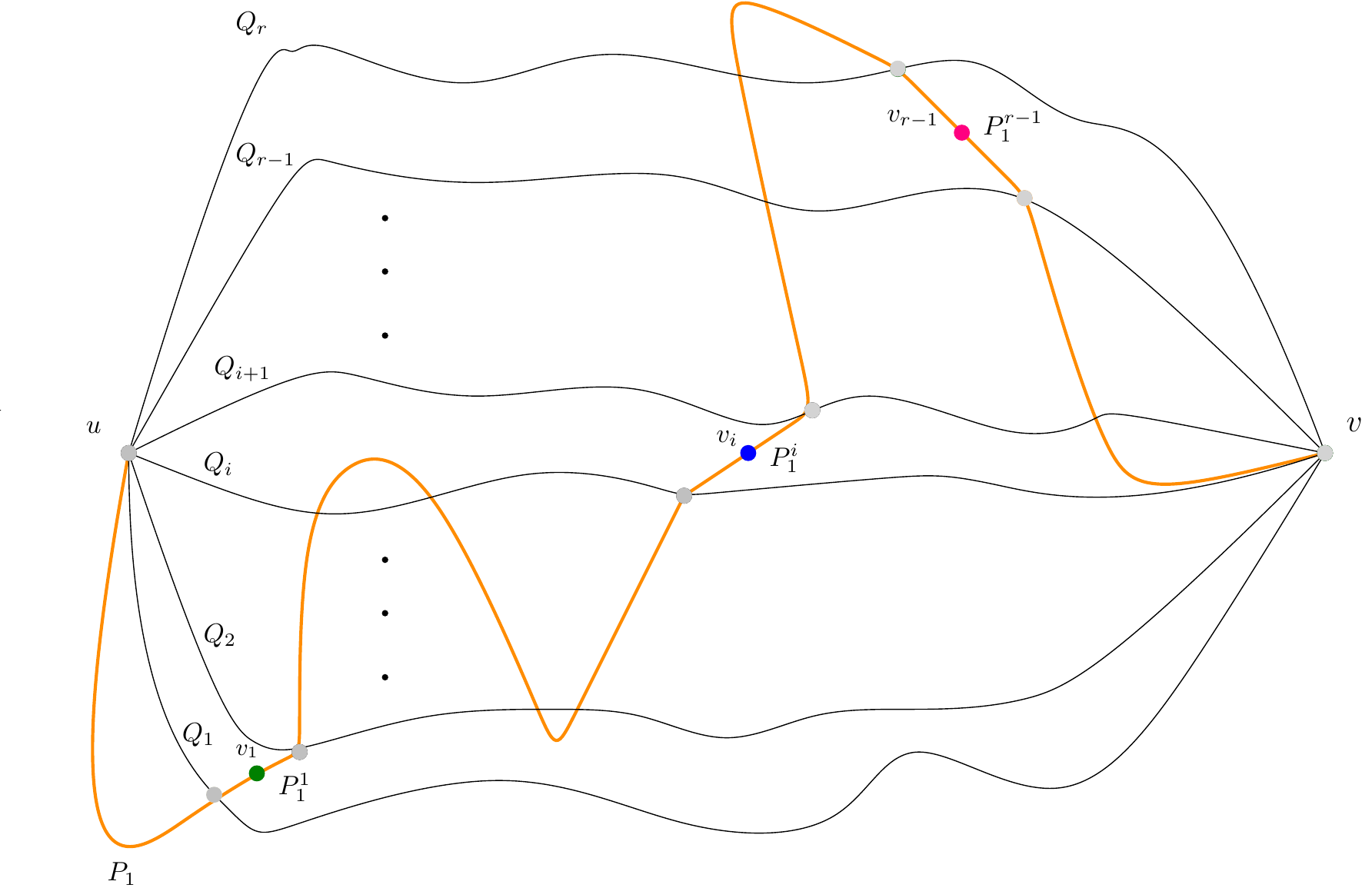}
\end{center}
\caption{Illustration for the proof of Lemma~\ref{lem:separator}.}
\label{fig:separator}
\end{figure}
\fi

\begin{lemma}
\label{lem:separator}
Suppose that $M$ is irreducible, then there exist paths $P_1,P_2,P_3\in \MMM$ such that $M-P_1-P_2-P_3$ has a $u$-$v$ vertex-separator of cardinality at most $2k+3$.
\end{lemma}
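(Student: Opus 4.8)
The plan is to produce a small vertex‑separator between $u$ and $v$ in $M$ by combining the topological machinery of Lemma~\ref{lem:paths} with the color‑connectivity bound of Lemma~\ref{lem:externalpaths}. The rough strategy mirrors the statements of the two preceding lemmas: first I will peel off at most three paths using the combinatorial separation lemma for the star around $u$ (Lemma~\ref{lem:paths}), reducing to a situation where only few neighbors of $u$ lie on the remaining $\MMM$‑paths, and then I will account for the remaining $\MMM$‑paths through a given vertex by bounding them using colors incident to the external face of $w$ in $M$. The number $2k+3$ should decompose as: $2$ paths coming from the external‑face/color‑connectivity argument, $1$ path that is ``internal'' (by Observation~\ref{obs:internalpath}), plus $2k$ vertices that host the $k$ colors external to $w$ (two occurrences of each being possible on the separator), plus a small additive constant for $u$, $v$ and the face vertex. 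I will now describe the steps.

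First I would consider the neighbors $u_1,\dots,u_r$ of $u$ that actually lie on paths of $\MMM$; since the paths in $\MMM$ are color‑disjoint and $u$ may be assumed empty (after color contraction, which by Lemma~\ref{lem:contractionminimal} preserves minimality), each such $u_i$ lies on exactly one path $P_i\in\MMM$, and by color‑disjointness $P_i$ avoids every other $u_j$. If $r\le 2$ the situation is already degenerate and $\{u\}$ together with the neighborhood of $v$ does the job; so assume $r\ge 3$. Taking $f$ to be the external face w.r.t.~$w$ in $M$ (every $\MMM$‑path that uses an external color hits $f$ by Observation~\ref{obs:externalcolor}; the at most one path using only internal colors is handled separately via Observation~\ref{obs:internalpath}), Lemma~\ref{lem:paths} supplies two paths $P_i,P_j$ such that $V(P_i)\cup V(P_j)-\{v\}$ separates all the other $u_\ell$ from $v$. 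I would then remove $P_i$, $P_j$, and the (at most one) internal‑color path $P_3$, obtaining $M':=M-P_1-P_2-P_3$ in which every remaining $\MMM$‑path starts at a neighbor of $u$ that is separated from $v$ by the removed vertices — which forces all remaining $\MMM$‑paths, after their first step, to pass through the small ``interface'' created by $P_i\cup P_j$. Concretely, in $M'$ the remaining $\MMM$‑paths from $u$ to $v$ must route through the vertices of $M$ where the external colors live, because $M'$ (being a subgraph of the irreducible $M$) is still irreducible, so no empty vertex of $M'$ can have three private neighbors carrying distinct face colors (this is exactly the content of the argument behind Lemma~\ref{lem:externalpaths}).

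The crux — and the step I expect to be the main obstacle — is turning this qualitative ``everything funnels through the colored vertices'' statement into the explicit bound $2k+3$. Here I would argue: any $u$–$v$ path of $M'$ that uses a color external to $w$ must pass through a vertex of $M'$ incident to the external face (Observation~\ref{obs:externalcolor}), and each external color is one of at most $k$ colors (since the paths of $\MMM$ are $k$‑valid and color‑disjoint, the whole of $M$ carries few external colors — this is where $k$ enters); picking for each external color one witnessing vertex on $f$, and noting a color may occur on at most two ``boundary'' vertices relevant to the separation, gives a set $S$ of at most $2k$ vertices. I claim $S\cup\{u,v,z\}$ (where $z$ is the face vertex introduced as in the proof of Lemma~\ref{lem:paths}, if needed) is a $u$–$v$ separator in $M-P_1-P_2-P_3$: any surviving path avoiding $S$ would use only internal colors, but there is at most one such path and it was already removed as $P_3$. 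Bookkeeping the constants — two extra paths from Lemma~\ref{lem:paths}, one internal path, and a constant‑size set $\{u,v\}$ plus possibly the auxiliary face vertex — yields the bound $2k+3$. The delicate point to get right is the exact correspondence between ``external colors of $M$'' and ``at most $k$'', and ensuring the two‑per‑color counting is tight rather than off by a constant; I would pin this down by invoking Lemma~\ref{lem:externalpaths} applied to $(u,v)$ and the face $f$ inside each relevant irreducible piece obtained by cutting along $P_1,P_2$, which caps at two the number of color‑disjoint $u$–$v$ paths using only face colors and hence controls how the remaining $\MMM$‑paths can be nested around $S$.
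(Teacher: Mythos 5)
Your opening move matches the paper: use Observation~\ref{obs:internalpath} to set aside the (at most one) path of $\MMM$ with only internal colors, and apply Lemma~\ref{lem:paths} to the neighbors of $u$ to extract two paths $P_1,P_2\in\MMM$ with $V(P_1)\cup V(P_2)-\{v\}$ separating the remaining neighbors of $u$ from $v$. But the second half of your argument rests on a false premise: you assert that ``each external color is one of at most $k$ colors'' because the paths of $\MMM$ are $k$-valid and color-disjoint, and you build your separator $S$ out of $2k$ vertices hosting these colors. Color-disjointness gives exactly the opposite: each path carries \emph{its own} up to $k$ colors, so $M$ may carry on the order of $k\cdot|\MMM|$ external colors w.r.t.~$w$ --- and bounding $|\MMM|$ is precisely what the whole section is trying to do, so no bound on the total number of colors in $M$ may be assumed here. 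A second gap: a $u$-$v$ vertex-separator of $M-P_1-P_2-P_3$ must meet \emph{every} $u$-$v$ path of that graph, including paths that are not members of $\MMM$ (e.g.\ paths recombining pieces of several $\MMM$-paths, or arbitrary paths using only internal colors); Observation~\ref{obs:internalpath} bounds internal-color paths only within a minimal set, so your claim that ``any surviving path avoiding $S$ would use only internal colors, but there is at most one such path'' does not apply to the paths a separator has to block. Finally, your appeal to Lemma~\ref{lem:externalpaths} to ``cap how the remaining $\MMM$-paths can be nested around $S$'' is not a substitute for an actual separator argument.

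The paper closes the gap by a Menger-type contradiction rather than by exhibiting colored vertices explicitly: assume $M-P_1-P_2$ has no $u$-$v$ separator of size $2k+3$; then there is a family $\DDD$ of at least $2k+3$ vertex-disjoint $u$-$v$ paths, each of which (by the separation property of $P_1,P_2$) meets $P_1$ or $P_2$ at a vertex other than $v$, so one of them, say $P_1$, is met by at least $k+2$ paths $Q_1,\dots,Q_r$ of $\DDD$, ordered around $u$. The intersections occur at empty vertices, and because $M$ is irreducible no two empty vertices are adjacent, so each subpath of $P_1$ between consecutive $Q_i,Q_{i+1}$ contains a colored internal vertex; consecutive $Q_i$'s form Jordan curves, and Observation~\ref{obs:colorconnectivity} then forces these $r-1\geq k+1$ colors to be pairwise distinct, contradicting the $k$-validity of the single path $P_1$. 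Note how $k$ enters only through the color budget of one path of $\MMM$, not through any global count of external colors --- that is the ingredient your construction of $S$ is missing, and without it the bound $2k+3$ cannot be extracted along the lines you propose.
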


\begin{proof}
By Observation~\ref{obs:internalpath} and Observation~\ref{obs:externalsubgraphs}, $\MMM$ contains at most one path that contains only internal colors w.r.t.~$w$ in $M$. Therefore, it suffices to show that $\MMM$ contains two paths $P_1, P_2$ such that  $M-P_1-P_2$ has a $u$-$v$ vertex-separator of cardinality at most $2k+3$, assuming that every path in $\MMM$ contains an external color w.r.t.~$w$ in $M$.

By Observation~\ref{obs:externalcolor}, every path in $\MMM$ passes through an external vertex  w.r.t.~$w$ in $M$ that contains an external color to $w$ in $M$. Because the paths in $\MMM$ are pairwise color-disjoint and $u$ and $v$ are empty vertices, every path in $\MMM$ passes through a vertex on the external face of $w$ in $M$ that is different from $u$ and $v$. Let $u_1, \ldots, u_q$ be the neighbors of $u$ in $M$, and note that since $u$ is empty and $M$ is irreducible, each $u_i$, $i \in [q]$, contains a color. Let $P_1, \ldots, P_q$ be the paths in $\MMM$ containing $u_1, \ldots, u_q$, respectively, and note that since the paths in $\MMM$ are color-disjoint, no $P_i$ passes through $u_j$, for $j \neq i$. By Lemma~\ref{lem:paths}, there are two paths in $P_1, \ldots, P_q$, say $P_1, P_2$ without loss of generality, such that $V_{12}=V(P_1) \cup V(P_2) - \{v\}$ is a vertex-separator that separates $\{u_3, \ldots, u_q\}$ from $v$.

%To prove that $M$ has a $u$-$v$ vertex-separator of cardinality at most $2k+5$, we proceed by contradiction. Assume that $M$ does not have a $u$-$v$ vertex-separator of cardinality $2k+5$.

We proceed by contradiction and assume that $M^-=M-P_1-P_2$ does not have a $u$-$v$ vertex-separator of cardinality $2k+3$. By Menger's theorem~\cite{diestel}, there exists a set $\DDD$ of $r' \geq 2k+3$ vertex-disjoint $u$-$v$ paths in $M^-$. Since $V_{12}$ separates $\{u_3, \ldots, u_q\}$ from $v$ in $M$, every $u$-$v$ path in $M^-$
%each path in $P_3, \ldots, P_q$
intersects at least one of $P_1$, $P_2$ at a vertex other than $v$.
It follows that there exists a path in $\{P_1, P_2\}$, say $P_1$, that intersects at least $k+2$ paths in $\DDD$ at vertices other than $v$. Since the paths in $\DDD$ are vertex-disjoint and incident to $u$, we can order the paths in $\DDD$ that intersect $P_1$ around $u$ (in counterclockwise order) as $\langle Q_1, \ldots, Q_r\rangle$, where $r \geq k+2$, and $Q_{i+1}$ is counterclockwise from $Q_{i}$, for $i \in [r-1]$. $P_1$ intersects each path $Q_i$, $i\in [r]$, possibly multiple times.  Moreover, since the paths in $\MMM$ are pairwise color-disjoint, each intersection between $P_1$ and a path $Q_i$, $i\in [r]$, must occur at an empty vertex. We choose $r-1$ subpaths, $P_1^{1}, \ldots, P_{1}^{r-1}$, of $P_1$
satisfying the property that the endpoints of $P_1^{i}$ are on $Q_i$ and $Q_{i+1}$, for $i =1, \ldots, r-1$, and the endpoints of $P_1^{i}$ are the only vertices on $P_1^{i}$ that appear on a path $Q_j$, for $j \in [r]$.
It is easy to verify that the subpaths $P_1^{1}, \ldots, P_{1}^{r-1}$ of $P_1$ can be formed by following the intersection of $P_1$ with the sequence of (ordered) paths $Q_1, \ldots, Q_r$. \iflong See Figure~\ref{fig:separator} for illustration. \fi \ifshort (\appno{} Refer to Figure~8 for illustration.)\fi

 Recall that the endpoints of $P_1^{1}, \ldots, P_{1}^{r-1}$ are empty vertices. Since $M$ is irreducible, no two empty vertices are adjacent, and hence, each subpath $P_1^{i}$ must contain an internal vertex $v_i$ that contains at least one color. We claim that no two vertices $v_i, v_j$, $1 \leq i < j \leq r-1$, contain the same color. Suppose not, and let $v_i, v_j$, $i < j$, be two vertices containing a color $c$. Since $v_i, v_j$ are internal to $P_1^{i}$ and $P_1^{j}$, respectively, $Q_1, \ldots, Q_r$ are vertex-disjoint $u$-$v$ paths, and by the choice of the subpaths $P_1^{1}, \ldots, P_{1}^{r-1}$, the paths $Q_{i}$ and $Q_{i+1}$ form a Jordan curve, and hence a vertex-separator in $G$, separating $v_i$ from $v_j$.
 %each $Q_p$, for $p=i+1, \ldots, j$, is a $v_i$-$v_j$ vertex-separator in $G$.
 By Observation~\ref{obs:colorconnectivity}, color $c$ must appear on a vertex in $Q_p$, $p\in \{i, i+1\}$, and this vertex is clearly not in $P_1$ since $P_1$ intersects $Q_p$ at empty vertices. Since every vertex in $M$ appears on a path in $\MMM$, and $c$ appears on $P_1 \in \MMM$ and on a vertex not in $P_1$, this contradicts that the paths in $\MMM$ are pairwise color-disjoint, and proves the claim.

 Since no two vertices $v_i, v_j$, $1 \leq i < j \leq r$, contain the same color, the number $r-1$ of subpaths $P_1^{1}, \ldots, P_{1}^{r-1}$ is upper bounded by the number of distinct colors that appear on $P_1$, which is at most $k$. It follows that $r$ is at most $k+1$, contradicting our assumption above and proving the lemma.
\end{proof}

\ifshort
\begin{lemma}[\appno{} Lemma~4.12]
\label{lem:contraction}
Let $S$ be a minimal $u$-$v$ vertex-separator in $M$. Let $M_u, M_v$ be a partition of $M -S$ containing $u$ and $v$, respectively, and such that there is no edge between $M_u$ and $M_v$. For any vertex $x \in S$, $M_u$ is contained in a single face of $M_v +x$.
\end{lemma}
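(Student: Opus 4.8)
The plan is to argue topologically using the plane embedding inherited from $G$ (restricted to $M$, and to the subgraph $M_v + x$). First I would fix the embedding of $M$ inherited from the fixed plane embedding of $G$, and note that since $S$ is a \emph{minimal} $u$-$v$ separator in $M$, every vertex $s \in S$ has a neighbor in $M_u$ and a neighbor in $M_v$; in particular $x$ has neighbors in $M_v$, so $x$ lies on the boundary of at least one face of $M_v$, and the faces of $M_v + x$ are obtained by subdividing the face(s) of $M_v$ incident to $x$ using the edges from $x$ to its $M_v$-neighbors. The goal is then to show that there is a single face of $M_v + x$ into which all of $M_u$ (drawn as in $G$) can be placed — equivalently, that $M_u$, which is connected to the rest of the picture only through vertices of $S$ and in particular can reach $M_v + x$ only through $x$ (among vertices of $M_v + x$), does not ``straddle'' two different faces of $M_v + x$.

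The key step is the following: consider the face $f_0$ of $M_v$ in which the connected region occupied by $M_u \cup (S \setminus\{x\})$ sits. Since there is no edge between $M_u$ and $M_v$, the drawing of $M_u$ lies entirely in the complement of the drawing of $M_v$, hence inside a union of faces of $M_v$; and since $M_u$ is connected (it is the side of a separator in the connected graph $M$ containing $u$ — here I would invoke, or if necessary first establish, that $M$ is connected and that $M_u$ can be taken connected, or simply argue component-by-component and take the component of $u$, handling other components identically), all of $M_u$ lies in one face $f_0$ of $M_v$. Now $x$ may or may not be incident to $f_0$. If $x$ is not incident to $f_0$, then adding $x$ and its edges to $M_v$ does not subdivide $f_0$ at all, so $f_0$ remains a single face of $M_v + x$ containing $M_u$, and we are done. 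If $x$ \emph{is} incident to $f_0$, then the edges from $x$ to its $M_v$-neighbors on $\partial f_0$ split $f_0$ into several subfaces, and I must show $M_u$ lies in only one of them.

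For that last case, here is where color-connectivity and the structure of $S$ as a separator do the work. Suppose for contradiction that $M_u$ meets two distinct subfaces $f_1, f_2$ of $f_0$ created by $x$'s edges. Since $M_u$ is connected and drawn without crossing $M_v$, a curve in $M_u$ connecting a point in $f_1$ to a point in $f_2$ would have to cross the boundary between $f_1$ and $f_2$, which consists of (parts of) $x$ and its incident edges to $M_v$; since $M_u$ has no edge to $M_v$ and no edge to $x$ would be relevant here other than through $x$ itself, the only way $M_u$ can pass from $f_1$ to $f_2$ is by \emph{using the vertex $x$}. But $x \in S$, and $x \notin M_u$, so $M_u$ does not contain $x$ — contradiction. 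Hence $M_u$ is confined to a single subface, which is a single face of $M_v + x$. I expect the main obstacle to be the bookkeeping around connectedness: strictly speaking $M - S$ need not be connected on the $u$-side, so I would either define $M_u$ to be the component of $u$ in $M - S - M_v$ (choosing $M_v$ as given) and argue that each remaining component lies in the same face by the same crossing argument, or prove a short preliminary claim that $M$ is $2$-connected enough for a cleaner statement; in the write-up I would choose whichever matches the intended use of the lemma in Lemma~\ref{lem:aux}.
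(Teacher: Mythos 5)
Your core topological reduction---a connected plane subgraph that is vertex-disjoint from $M_v+x$ must lie inside a single face of $M_v+x$---is sound, and it is the same fact the paper's proof rests on. The genuine gap is precisely the point you deferred to ``bookkeeping'': nothing in the hypotheses makes $M_u$ connected. The pair $M_u,M_v$ is an arbitrary partition of $M-S$ with no edges across it, and $M-S$ can have components other than those of $u$ and $v$ (already a union of three $u$-$v$ paths such as $(u,a,z,b,v)$, $(u,a,v)$, $(u,b,v)$ with $S=\{a,b\}$ leaves $\{z\}$ as an extra component, which may be assigned to $M_u$). Neither of your proposed repairs closes this. You cannot ``take $M_u$ connected'': the lemma is stated for an arbitrary such partition, and in its application inside Lemma~\ref{lem:aux} it is used symmetrically for both sides of the separator, at least one of which may be disconnected. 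And arguing component by component only shows that \emph{each} component of $M_u$ lies in \emph{some} face of $M_v+x$; it does not show that they all lie in the \emph{same} face, which is what the statement asserts and what the application needs (the boundary of that one face must separate all of $V(M_u)$ from $P_v$).

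What is missing is an argument that ties the components of $M_u$ together away from $M_v+x$. The paper does this by proving the stronger claim that the subgraph $F$ of $M$ induced by $V(M_u)\cup(S\setminus\{x\})$ is connected, and then applying your topological fact to $F$, which is vertex-disjoint from $M_v+x$ and contains $M_u$. The connectivity of $F$ uses two ingredients absent from your sketch: (1) every vertex of $M$ lies on some $u$-$v$ path, because $M$ is by definition the union of the edges of the paths in $\MMM$; and (2) minimality of $S$ yields, for every $s\in S$, a $u$-$s$ path whose internal vertices lie in $M_u$. Given $z\in V(M_u)$, take a $u$-$v$ path $P$ through $z$; if $P$ meets $S$ before $z$, it must meet $S$ again after $z$ at a \emph{different} vertex, so one of the two, say $y'\neq x$, lies in $S\setminus\{x\}$, and the $z$-$y'$ subpath of $P$ (internal vertices in $M_u$) concatenated with the $u$-$y'$ path from (2) connects $z$ to $u$ inside $F$. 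Without an argument of this kind---which genuinely uses that $M$ is a union of $u$-$v$ paths and that $S$ is minimal---your proof does not exclude a component of $M_u$ sitting in a different face of $M_v+x$ than the one containing $u$.
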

\fi

\iflong
\begin{lemma}
\label{lem:contraction}
Let $S$ be a minimal $u$-$v$ vertex-separator in $M$. Let $M_u, M_v$ be a partition of $M -S$ containing $u$ and $v$, respectively, and such that there is no edge between $M_u$ and $M_v$. For any vertex $x \in S$, $M_u$ is contained in a single face of $M_v +x$.
\end{lemma}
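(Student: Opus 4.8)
The plan is to exhibit a single connected subgraph of $M$ that contains all of $M_u$ and is vertex‑disjoint from $M_v+x$ (which I read as the subgraph of $M$ induced by $V(M_v)\cup\{x\}$, with the embedding inherited from $M$). Since a connected subset of the plane that is disjoint from the drawing of a plane graph lies inside a single face of that graph, this immediately proves the lemma. The natural candidate is $H$, the subgraph of $M$ induced by $V(M_u)\cup(S\setminus\{x\})$ --- i.e.\ all of $M$ except $M_v$ and the vertex $x$. Because $V(M_u)$, $S$, and $V(M_v)$ partition $V(M)$, the subgraphs $H$ and $M_v+x$ share no vertex, hence no edge, so their drawings are disjoint point sets; thus the entire argument reduces to showing that $H$ is connected.

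To prove $H$ connected, let $D$ be the connected component of $M-S$ containing $u$, so that $D\subseteq M_u$ and $M_u$ is the disjoint union of $D$ with some other components $D'$ of $M-S$. First I would use the minimality of $S$: for each $s'\in S\setminus\{x\}$ the set $S\setminus\{s'\}$ is not a $u$-$v$ separator, so $M$ has a $u$-$v$ path meeting $S$ only in $s'$; the portion of this path from $u$ to the vertex preceding $s'$ avoids $S$ and starts at $u$, hence stays inside $D$, so $s'$ has a neighbour in $D$. Consequently every vertex of $S\setminus\{x\}$ is joined to $D$ inside $H$.

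The remaining --- and I expect hardest --- step is to show that every component $D'\neq D$ of $M-S$ with $D'\subseteq M_u$ has a neighbour in $S\setminus\{x\}$. Since $M$ is connected and $v\notin D'$, some vertex outside $D'$ is adjacent to $D'$; such a vertex cannot lie in $M_u\setminus D'$ (it would then lie in $D'$) nor in $M_v$ (there are no $M_u$--$M_v$ edges), so it lies in $S$. Suppose for contradiction that $x$ is the only vertex of $S$ adjacent to $D'$. Pick $d^{*}\in D'$; since $M$ is the union of the $u$-$v$ paths in $\MMM$, some $Q=(u=q_0,q_1,\dots,q_\ell=v)\in\MMM$ passes through $d^{*}$. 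Let $q_i,\dots,q_j$ be a maximal subpath of $Q$ lying in $D'$ and containing $d^{*}$; since $u,v\notin D'$ we have $1\le i\le j\le\ell-1$, and by maximality $q_{i-1},q_{j+1}\notin D'$, so by the previous sentence $q_{i-1}=q_{j+1}=x$ --- impossible, since $i-1<j+1$ and $Q$ is simple. This step is exactly where the hypothesis that $M$ is built from $u$-$v$ paths is used; without it the lemma fails (e.g.\ if $D'$ is a pendant vertex attached to $x$). Hence each such $D'$ is adjacent to a vertex of $S\setminus\{x\}$, which in turn is adjacent to $D$; together with the first step this shows $H$ is connected.

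Finally I would assemble the pieces: $H$ is connected and drawn disjointly from $M_v+x$, so $H$ lies in a single face $f$ of $M_v+x$, and since $M_u\subseteq H$ all of $M_u$ lies in $f$. No separate treatment of degenerate cases is needed: if $S=\{x\}$ there is no component $D'$ as above (the requirement ``neighbour in $S\setminus\{x\}$'' is unsatisfiable), so $M_u=D$ is connected on its own and equals $H$; and if $M_u=D$ is already connected the third step is vacuous.
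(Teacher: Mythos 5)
Your proof is correct and follows essentially the same route as the paper: both reduce the lemma to showing that the induced subgraph on $V(M_u)\cup(S\setminus\{x\})$ is connected and vertex-disjoint from $M_v+x$, using exactly the two key facts that $S$ is a minimal separator and that every vertex of $M$ lies on a $u$-$v$ path of $\MMM$. The only difference is organizational: you argue connectivity via the components of $M-S$ and a contradiction with the simplicity of a path in $\MMM$, while the paper splices the minimality path with a subpath of a $u$-$v$ path through each vertex of $M_u$ directly.
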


\begin{proof}
Let $x \in S$. It suffices to show that the subgraph $F$ of $M$ induced by $V(M_u) \cup (S\setminus \{x\})$ is connected. This suffices because $V(F)$ and $V(M_v+x)$ are disjoint, and hence every face in $M_v+x$ separates the vertices in $V(F)$ inside the face from those outside of it. We will show that $F$ is connected by showing that there is a path in $F$ from each vertex in $F$ to $u \in V(F)$. Let $z \in V(F)$. If $z \in S$, then by minimality of $S$, there is a path from $u$ to $z$ whose internal vertices are all in $M_u$, and hence this path is in $F$. If $z \notin S$, let $P$ be a $u$-$v$ path containing $z$. If $P$ passes through $z$ before passing through any vertex in $S$, then clearly there is a path from $u$ to $z$ in $F$. Otherwise, $P$ passes through a vertex $y \in S$ before passing through $z$. In this case, there exists a vertex $y' \in S$, such that $y' \neq y$ and $P$ passes through $y'$ after passing through $z$. Either $y$ or $y'$, say $y'$, is different from $x$. From the above discussion, there is a path $P'$ from $u$ to $y'$ in $F$, which when combined with the subpath of $P$ between $y'$ and $z$ yields a path from $u$ to $z$ in $F$.
\end{proof}
\fi

\ifshort
\begin{lemma}
\label{lem:aux}
$|{\cal M}|\leq g(k)$, where $g(k)=\Oh(c^k k^{2k})$, for some constant $c > 1$.
\end{lemma}

\begin{proof}(sketch) [\appno{} see Lemma~4.13.]
 By Observation~\ref{obs:internalpath}, there is at most one path in $\MMM$ that contains only internal colors w.r.t.~$w$ in $G-w$. Therefore, it suffices to upper bound the number of paths in $\MMM$ that contain at least one external color to $w$ in $G-w$. By Observation~\ref{obs:externalsubgraphs}, every such path in $\MMM$ contains a color that is external to $w$ in $M$.

The proof is by induction on $k$, over every color-connected plane graph $G$, every triplet of vertices $u, v, w$ in $G$, and every minimal set $\MMM$ w.r.t.~$w$ of $k$-valid pairwise color-disjoint $u$-$v$ paths in $G-w$.  If $k=1$, any path in $\MMM$ contains exactly one external color w.r.t.~$w$ in $M$. By Lemma~\ref{lem:externalpaths}, at most two paths in $\MMM$ contain only external colors.
Assume by the inductive hypothesis that, for each $1 \leq i < k$, we have $|\MMM| \leq g(i)$. We can assume that $M$ is irreducible; otherwise, we apply the color contraction operation and replace $\MMM$ with a set of paths satisfying the same properties as $\MMM$ (\appno{} Definition~4.3 and Lemma~4.6).

By Lemma~\ref{lem:separator}, there are (at most) $3$ paths in $\MMM$, such that the subgraph of $M$ induced by the remaining paths of $\MMM$ has a $u$-$v$ vertex-separator $S$ satisfying $|S| \leq 2k+3$. Remove these $3$ paths from $\MMM$, and now we can assume that $M$ has a $u$-$v$ vertex-separator $S$ satisfying $|S| \leq 2k+3$; we will add 3 to the upper bound on $|\MMM|$ at the end.  We can assume that $S$ is minimal. $S$ separates $M$ into two subgraphs $M_u$ and $M_v$ such that $u \in V(M_u)$, $v \in V(M_v)$, and there is no edge between $M_u$ and $M_v$. We partition $\MMM$ into the following groups, where each group excludes the paths satisfying the properties of the groups defined before it: (1) The set of paths in $\MMM$ that contain a nonempty vertex in $S$; (2) the set of paths $\MMM_{u}^{k}$ consisting of each path $P$ in $\MMM$ such that all colors on $P$ appear on vertices in $M_u$ (these colors could still appear on vertices in $M_v$ as well); (3) the set of paths $\MMM_{v}^{k}$ consisting of each path $P$ in $\MMM$ such that all colors on $P$ appear on vertices in $M_v$; and (4) the set $\MMM^{< k}$ of remaining paths in $\MMM$, satisfying that each path contains a nonempty external vertex to $w$ in $M$ and contains less than $k$ colors from each of $M_u$ and $M_v$.

Since the paths in $\MMM$ are pairwise color-disjoint, no nonempty vertex in $S$ appears on two distinct paths from group (1). Therefore, the number of paths in group (1) is at most $|S| \leq 2k+3$.
Observe that the vertices in $S$ contained in any path from groups (2)-(4) are empty vertices. To upper bound the number of paths in group (2), for each path $P$, there is a last vertex $x_P$ (\ie, farthest from $u$) in $P$ that is in $S$. Fix a vertex $x \in S$, and let us upper bound the number of paths $P$ in group (2) for which $x=x_P$. Let $P_v$ be the subpath of $P$ from $x$ to $v$. Note that since $v$ is empty and all the vertices in $S$ that are contained in paths in group (2) are empty, and since $M$ is irreducible, $P_v$ must contain at least one color. Since all colors appearing on $P$ appear on vertices in $M_u$, all colors appearing on $P_v$ appear in $M_u$. By Lemma~\ref{lem:contraction}, $M_u$ is contained in a single face $f$ of $M_v+x$.
Since $f$ is a vertex-separator that separates $V(M_u)$ from $V(P_v)$ in $G$, by Observation~\ref{obs:colorconnectivity}, every color that appears on $P_v$ appears on $f$. By Lemma~\ref{lem:externalpaths}, there are at most two $x$-$v$ paths that contain only colors that appear on $f$. This shows that there are at most two paths in group (2) for which $x$ is the last vertex in $S$. Since $|S| \leq 2k+3$, this upper bounds the number of paths in group (2) by $2(2k+3)=4k+6$. By symmetry, the number of paths in group (3) is upper bounded by $4k+6$.

To upper bound the number of paths in group (4), let $S=\{s_2, \ldots, s_{r-1}\}$, and extend $S$ by adding the two vertices $s_1=u$ and $s_r=v$ to form the set
$A=\{s_1, s_2, \ldots, s_{r}\}$. For every two (distinct) vertices $s_j, s_{j'} \in A$, we define a set of paths $\PPP_{jj'}$ in $G-w$ whose endpoints are $s_j$ and $s_{j'}$ as follows.
For each path $P$ in group (4), partition (the edges in) $P$ into subpaths $P_1, \ldots, P_q$ satisfying the property that the endpoints of each $P_i$, $i \in [q]$, are in $A$, and no internal vertex to $P_i$ is in $A$.
For each $P_i$, $i \in [q]$, such that $P_i$ contains a vertex that contains an external color to $w$ in $G-w$, let $P'_i$ (possibly $P_i$) be a subpath in $G-w$ between the endpoints of $P_i$ satisfying that $\chi(P'_i) \subseteq \chi(P_i)$ and $\chi(P'_i)$ is minimal w.r.t.~containment. Since $P$ contains a vertex that contains an external color to $w$ in $G-w$, it is easy to see that there exists an $i \in [q]$ such that $P'_i$ contains a vertex containing an external color to $w$ in $G-w$~\app. Pick any $i \in [q]$ satisfying that $P'_i$ contains a vertex containing an external color to $w$ in $G-w$, associate $P$ with $P'_i$, and assign $P'_i$ to the set of paths $\PPP_{jj'}$ such that $s_j$ and $s_{j'}$ are the endpoints of $P'_i$. The map that takes each $P$ to its $P'_i$ is clearly a bijection~\app.

Therefore, it suffices to upper bound the number of paths assigned to the sets $\PPP_{jj'}$. Fix a set $\PPP_{jj'}$. The paths in $\PPP_{jj'}$ have $s_j, s_{j'}$ as endpoints, and are pairwise color-disjoint. It is not difficult to show that $\PPP_{jj'}$ is a minimal set of $(k-1)$-valid $s_j$-$s_{j'}$ paths in $G-w$ w.r.t.~$w$~\app. By the inductive hypothesis, we have $|\PPP_{jj'}| \leq g(k-1)$. Since the number of sets $\PPP_{jj'}$ is at most $\binom{2k+5}{2}$, the number of paths in group (4) is $\Oh(k^2) \cdot g(k-1)$.

It follows that $|\MMM| \leq g(k)$, where $g(k)$ satisfies:
\[g(k) \leq 3 + (2k+3) + 2(4k+6) + \Oh(k^2) \cdot g(k-1) = \Oh(k^2) \cdot g(k-1),\] where $3$ accounts for the $3$ paths removed from $\MMM$. Solving the aforementioned recurrence relation gives $g(k) =\Oh(c^k k^{2k})$, where $c > 1$ is a constant.
\end{proof}

Applying Lemma~\ref{lem:aux} to a maximal set $\MMM$ of color-disjoint paths in $\PPP$, and using an inductive proof, we can show the following theorem:

\begin{theorem}[\appno{} Theorem~4.14 and Corollary~4.15]
\label{thm:main}
Let $G$ be a plane color-connected graph, and let $w \in V(G)$. Let $G'$ be a subgraph of $G-w$, and let $u, v \in V(G')$. Every set ${\cal P}$ of minimal $k$-valid $u$-$v$ paths in $G'$ w.r.t.~$w$ satisfies $|\PPP|\leq h(k)$, where $h(k)=\Oh(c^{k^2} k^{2k^2+k})$, for some constant $c > 1$.
\end{theorem}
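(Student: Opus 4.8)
The plan is to prove the bound by induction on $k$, with Lemma~\ref{lem:aux} supplying the base engine at each level and a colour-class contraction supplying the step from $k$ to $k-1$. Fix a minimal set $\PPP$ of $k$-valid $u$-$v$ paths in $G'$ w.r.t.~$w$. Greedily extract a maximal subfamily $\MMM \subseteq \PPP$ of pairwise colour-disjoint paths. By Lemma~\ref{lem:aux} (applied inside the color-connected graph $G$, after, if necessary, shrinking each path of $\MMM$ to one of minimal colour set so that the family is genuinely minimal), $|\MMM| \le g(k)$. Put $C_\MMM = \bigcup_{Q \in \MMM}\chi(Q)$, so $|C_\MMM| \le k\,|\MMM| \le k\,g(k)$; by maximality of $\MMM$, every path of $\PPP$ shares at least one colour with $C_\MMM$. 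Hence, writing $\PPP_c = \{P \in \PPP : c \in \chi(P)\}$, we have $\PPP = \bigcup_{c \in C_\MMM}\PPP_c$ and therefore $|\PPP| \le |C_\MMM|\cdot \max_{c \in C_\MMM}|\PPP_c|$. It remains to bound $|\PPP_c|$ for a fixed colour $c$.

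To bound $|\PPP_c|$ I would reduce the parameter by one. Since $G$ is color-connected, the vertices carrying $c$ induce a connected subgraph $H_c$; contract $H_c$ to a single new vertex $z_c$ and remove $c$ from every vertex (so in the new instance $c$ no longer appears). Contracting a connected subgraph preserves planarity, and one verifies that the resulting graph remains color-connected. Every $P \in \PPP_c$ meets $H_c$, so it collapses to a $u$-$v$ path $\widehat P$ through $z_c$ that uses at most $k-1$ colours, and splitting $\widehat P$ at $z_c$ writes it as the concatenation of a $u$-$z_c$ path and a $z_c$-$v$ path. After discarding duplicates and colour-dominated paths, the images along each side form minimal families of $(k-1)$-valid paths (w.r.t.~$w$) between fixed endpoints, to which the induction hypothesis applies; re-running the maximal colour-disjoint / small-separator bookkeeping of Lemma~\ref{lem:aux} on each side to control how many distinct pairs of images a single colour class can produce, one obtains $|\PPP_c| \le q(k)\cdot h(k-1)$, where $q(k) = \mathrm{poly}(k)\cdot g(k)$.

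Combining the two estimates gives a recurrence of the form $h(k) \le |C_\MMM|\cdot q(k)\cdot h(k-1) \le \mathrm{poly}(k)\cdot g(k)^2\cdot h(k-1)$, with base case $h(0) = h(1) = \Oh(g(1))$ (for $k \le 1$, condition~(i) of Definition~\ref{def:minimalpaths} makes the map $P \mapsto \chi(P)\cap\chi(w)$ injective on $\PPP$, and together with the $C_\MMM$-covering this bounds $|\PPP|$ directly). Since $g(k) = \Oh(c^{k}k^{2k})$ by Lemma~\ref{lem:aux}, unrolling the recurrence yields $h(k) = \prod_{i\le k}\big(\mathrm{poly}(i)\,g(i)^2\big) = \Oh(c^{k^2}k^{2k^2+k})$ for a suitable constant $c > 1$, which is exactly the claimed bound. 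The main obstacle is the contraction step of the inductive argument: one has to check carefully that collapsing the colour class of $c$ destroys neither color-connectivity nor any of the three minimality conditions of Definition~\ref{def:minimalpaths}, and that splitting at $z_c$ does not blow up the count — this is where the endpoint/pairing analysis, analogous to (but more delicate than) the treatment of the groups and of the families $\PPP_{jj'}$ in the proof of Lemma~\ref{lem:aux}, does the real work.
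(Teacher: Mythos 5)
Your outer skeleton is the paper's: extract a maximal colour-disjoint subfamily $\MMM\subseteq\PPP$, bound $|\MMM|\le g(k)$ via Lemma~\ref{lem:aux}, cover $\PPP$ by the at most $k\cdot g(k)$ colours appearing on $\MMM$, and recurse on $k$ inside each colour class. The gap is in how you eliminate the shared colour $c$. The paper does \emph{not} contract the colour class: it simply deletes $c$ from the colour function, leaves the graph and the paths untouched, and checks that $\PPP_c$ itself (with $c$ erased) is still a minimal set of $(k-1)$-valid $u$-$v$ paths w.r.t.~$w$ in the recoloured graph, so the induction hypothesis gives $|\PPP_c|\le h(k-1)$ with no loss and the recurrence $h(k)\le k\,g(k)\,h(k-1)$ follows. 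Your contraction of $H_c$ to $z_c$ creates problems the proposal does not resolve: a path of $\PPP_c$ meeting $H_c$ in several runs collapses to a walk, not a path, and must be short-cut; $w$ itself may carry $c$, in which case $w$ gets merged into $z_c$ and the whole ``w.r.t.~$w$'' structure of Definition~\ref{def:minimalpaths} and Lemma~\ref{lem:aux} is destroyed; and the map $P\mapsto(\mbox{$u$--$z_c$ piece},\mbox{$z_c$--$v$ piece})$ is not injective, while your ``discard duplicates and colour-dominated paths'' step bounds only the surviving images, not $|\PPP_c|$ --- the inequality $|\PPP_c|\le q(k)\,h(k-1)$ is precisely what needs proof and is not supplied. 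Even if it were, a factor $g(k)^2$ per level gives $k^{4k^2+\Oh(k)}$, which is not $\Oh(c^{k^2}k^{2k^2+k})$ for any constant $c$.

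A second gap concerns the fact that the statement is about a subgraph $G'$ of $G-w$: your $\PPP$, hence $\MMM$, is minimal only relative to $G'$, whereas Lemma~\ref{lem:aux} requires $\MMM$ to be minimal w.r.t.~$w$ in $G-w$ (its proof repeatedly plays property (iii) against arbitrary paths of $G-w$, and uses property (i) to limit internal-colour paths). Your parenthetical patch --- replace each path of $\MMM$ by a colour-minimal $u$-$v$ path of $G-w$ --- does not repair this: the replacements can become colour-comparable or agree on $\chi(\cdot)\cap\chi(w)$ (for instance several may shrink to colourless paths), so the shrunk family must be pruned before the lemma applies, and the resulting bound no longer bounds $|\MMM|$. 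The paper instead modifies the ambient graph first: every connected component of $(G-w)-G'$ is contracted to a single vertex carrying the component's colours plus $k+1$ brand-new colours; the result is colour-connected, no $k$-valid path can enter the new vertices, so $k$-valid $u$-$v$ paths in $G'$ and in the modified graph minus $w$ coincide, minimality transfers, and the $G-w$ version of the theorem applies verbatim. Some reduction of this kind is needed before Lemma~\ref{lem:aux} can legitimately be invoked in your argument.
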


\fi

\iflong
\begin{lemma}
\label{lem:aux}
$|{\cal M}|\leq g(k)$, where $g(k)=\Oh(c^k k^{2k})$, for some constant $c > 1$.
\end{lemma}

\begin{proof}
 By Observation~\ref{obs:internalpath}, there can be at most one path in $\MMM$ that contains only internal colors w.r.t.~$w$ in $G-w$. Therefore, it suffices to upper bound the number of paths in $\MMM$ that contain at least one external color to $w$ in $G-w$. Without loss of generality, in the rest of the proof, we shall assume that $\MMM$ does not include a path that contains only internal colors w.r.t.~$w$ in $G-w$, and upper bound $|\MMM|$ by $g(k)$; adding 1 to $g(k)$ we obtain an upper bound on $|\MMM|$ with this assumption lifted. Note that by Observation~\ref{obs:externalsubgraphs}, the previous assumption implies that every path in $\MMM$ contains a color that is external to $w$ in $M$.

The proof is by induction on $k$, over every color-connected plane graph $G$, every triplet of vertices $u, v, w$ in $G$, and every minimal set $\MMM$ w.r.t.~$w$ of $k$-valid pairwise color-disjoint $u$-$v$ paths in $G-w$.  If $k=1$, then any path in $\MMM$ contains exactly one external color w.r.t.~$w$ in $M$. By Lemma~\ref{lem:externalpaths}, at most two paths in $\MMM$ contain only external colors. It follows that for $k=1$, $|\MMM| \leq 2 \leq g(1)$, if we choose the hidden constant in the $\Oh$ asymptotic notation to be at least $2$.

Suppose by the inductive hypothesis that for any $1 \leq i < k$, we have $|\MMM| \leq g(i)$. We can assume that $M$ is irreducible; otherwise, we apply the color contraction operation to any edge $xy$ in $\MMM$ to which the operation is applicable, and replace $\MMM$ with the set of paths $\Lambda_{xy}(\MMM)$, which is pairwise color-disjoint, contains the same number of paths as $\MMM$, and is minimal w.r.t.~$w$ by Lemma~\ref{lem:contractionminimal}.

By Lemma~\ref{lem:separator}, there are at most $3$ paths in $\MMM$, such that the subgraph of $M$ induced by the remaining paths of $\MMM$ has a $u$-$v$ vertex-separator $S$ satisfying $|S| \leq 2k+3$. To simplify the argument, in what follows, we assume that we already removed these $3$ paths from $\MMM$ and that $M$ already has a $u$-$v$ vertex-separator $S$ satisfying $|S| \leq 2k+3$. We will add 3 to the upper bound of $|\MMM|$ at the end to account for these removed paths.  We can assume, without loss of generality, that $S$ is minimal (w.r.t.~containment). $S$ separates $M$ into two subgraphs $M_u$ and $M_v$ such that $u \in V(M_u)$, $v \in V(M_v)$, and there is no edge between $M_u$ and $M_v$. We partition $\MMM$ into the following groups, where each group excludes the paths satisfying the properties of the groups defined before it: (1) The set of paths in $\MMM$ that contain a nonempty vertex in $S$; (2) the set of paths $\MMM_{u}^{k}$ consisting of each path $P$ in $\MMM$ such that all colors on $P$ appear on vertices in $M_u$ (these colors could still appear on vertices in $M_v$ as well); (3) the set of paths $\MMM_{v}^{k}$ consisting of each path $P$ in $\MMM$ such that all colors on $P$ appear on vertices in $M_v$; and (4) the set $\MMM^{< k}$ of remaining paths in $\MMM$, satisfying that each path contains a nonempty external vertex to $w$ in $M$ and contains less than $k$ colors from each of $M_u$ and $M_v$. Note that by Observation~\ref{obs:externalcolor}, each path in $\MMM$ belongs to one of the 4 groups above.

Since the paths in $\MMM$ are pairwise color-disjoint, no nonempty vertex in $S$ can appear on two distinct paths from group (1). Therefore, the number of paths in group (1) is at most $|S| \leq 2k+3$.
Observe, that the vertices in $S$ contained in any path in groups (2)-(4) are empty vertices.

To upper bound the number of paths in group (2), for each path $P$, there is a last vertex $x_P$ (\ie, farthest from $u$) in $P$ that is in $S$. Fix a vertex $x \in S$, and let us upper bound the number of paths $P$ in group (2) for which $x=x_P$. Let $P_v$ be the subpath of $P$ from $x$ to $v$. Note that since $v$ is empty and all the vertices in $S$ that are contained in paths in group (2) are empty, and since $M$ is irreducible, $P_v$ must contain at least one color. Since all colors appearing on $P$ appear on vertices in $M_u$, all colors appearing on $P_v$ appear in $M_u$. By Lemma~\ref{lem:contraction}, $M_u$ is contained in a single face $f$ of $M_v+x$.
Since $f$ is a vertex-separator that separates $V(M_u)$ from $V(P_v)$ in $G$, by Observation~\ref{obs:colorconnectivity}, every color that appears on $P_v$ appears on $f$. By Lemma~\ref{lem:externalpaths}, there are at most two $x$-$v$ paths that contain only colors that appear on $f$. This shows that there are at most two paths in group (2) for which $x$ is the last vertex in $S$. Since $|S| \leq 2k+3$, this upper bounds the number of paths in group (2) by $2(2k+3)=4k+6$. By symmetry, the number of paths in group (3) is upper bounded by $4k+6$.

Finally, we upper bound the number of paths in group (4). Let $S=\{s_2, \ldots, s_{r-1}\}$, where $r \leq 2k+5$, and extend $S$ by adding the two vertices $s_1=u$ and $s_r=v$ to form the set
$A=\{s_1, s_2, \ldots, s_{r}\}$. For every two (distinct) vertices $s_j, s_{j'} \in A$, $j, j' \in [r], j < j'$, we define a set of paths $\PPP_{jj'}$ in $G-w$ whose endpoints are $s_j$ and $s_{j'}$ as follows.
For each path $P$ in group (4), partition (the edges in) $P$ into subpaths $P_1, \ldots, P_q$ satisfying the property that the endpoints of each $P_i$, $i \in [q]$, are in $A$, and no internal vertex to $P_i$ is in $A$.
Since each $P$ is a $u$-$v$ path, clearly, $P$ can be partitioned as such. For each $P_i$, $i \in [q]$, such that $P_i$ contains a vertex that contains an external color to $w$ in $G-w$, let $P'_i$ (possibly equal to $P_i$) be a subpath in $G-w$ between the endpoints of $P_i$ satisfying that $\chi(P'_i) \subseteq \chi(P_i)$ and $\chi(P'_i)$ is minimal w.r.t.~containment (\ie, there does not exist a path $P''_i$ in $G-w$ between the endpoints of $P_i$ satisfying $\chi(P''_i) \subsetneq \chi(P'_i)$). Since $P$ contains a vertex that contains an external color to $w$ in $G-w$, there exists an $i \in [q]$ such that $P'_i$ contains a vertex that contains an external color to $w$ in $G-w$; otherwise, by concatenating (in the right sequence)
the $P_i$'s that do not contain an external color to $w$ (in $G-w$), with
the $P'_i$'s (instead of the $P_i$'s) of the $P_i$'s that contain an external color to $w$ (in $G-w$),
we would obtain a $u$-$v$ path $P'$ in $G-w$ satisfying $\chi(P') \subsetneq \chi(P)$ (since $\chi(P') \subseteq \chi(P)$ and $P$ contains an external color to $w$ and $P'$ does not), thus contradicting the minimality of $\MMM$. Pick any $i \in [q]$ satisfying that $P'_i$ contains a vertex that contains an external color to $w$ in $G-w$, associate $P$ with $P'_i$, and assign $P'_i$ to the set of paths $\PPP_{jj'}$ such that $s_j$ and $s_{j'}$ are the endpoints of $P'_i$. Since each $P'_i$ contains an external color that appears on $P$ and the paths in $\MMM$ are pairwise-color disjoint, it follows that the map that maps each $P$ to its $P'_i$ is a bijection.

Therefore, to upper bound the number of paths in group (4), it suffices to upper bound the number of paths assigned to the sets $\PPP_{jj'}$, where $j, j' \in [r], j < j'$. Fix a set $\PPP_{jj'}$. The paths in $\PPP_{jj'}$ have $s_j, s_{j'}$ as endpoints, and are pairwise color-disjoint. Moreover, each path in $\PPP_{jj'}$ contains a vertex that contains an external color to $w$ in $G-w$. It follows from the previous statements that $\PPP_{jj'}$ satisfies properties (i) and (ii) of Definition~\ref{def:minimalpaths} w.r.t.~$G$ and $w$. Moreover, from the definition of each path in $\PPP_{jj'}$, $\PPP_{jj'}$ satisfies property (iii) of Definition~\ref{def:minimalpaths} as well. Finally, observe that each path $P'_i \in \PPP_{jj'}$ was constructed based on a subpath $P_i$ of a path $P$ in group 4, and satisfying that $P_i$ has endpoints $s_j, s _{j'}$ and no internal vertex on $P_i$ is in $A$. Since $P$ is a $u$-$v$ path in $\MMM$ and $S$ is a vertex-separator of $M$, $V(P_i)$ is either contained in $V(M_u) \cup S$ or in $V(M_v) \cup S$.
Since $P$ is in group (4), $P$ contains at most $k-1$ colors from each of $M_u$ and $M_v$. Since the vertices in $S$ are empty, we deduce that $P_i$ contains at most $k-1$ colors. Since $\chi(P'_i) \subseteq \chi(P_i)$, $P'_i$ contains at most $k-1$ colors as well, and hence, every path in $\PPP_{jj'}$ contains at most $k-1$ colors. It follows that $\PPP_{jj'}$ is a minimal set of $(k-1)$-valid $s_j$-$s_{j'}$ paths in $G-w$ w.r.t.~$w$. By the inductive hypothesis, we have $|\PPP_{jj'}| \leq g(k-1)$. Since the number of sets $\PPP_{jj'}$ is at most $\binom{2k+5}{2}$, the number of paths in group (4) is $\Oh(k^2) \cdot g(k-1)$.

It follows from the above that $|\MMM| \leq g(k)$, where $g(k)$ satisfies the recurrence relation $g(k) \leq 3 + (2k+3) + 2(4k+6) + \Oh(k^2) \cdot g(k-1) = \Oh(k^2) \cdot g(k-1)$, where $3$ accounts for the $3$ paths we removed from $\MMM$ at the beginning of the proof to get a small $u$-$v$ vertex-separator. Solving the aforementioned recurrence relation gives $g(k) =\Oh(c^k k^{2k})$, where $c > 1$ is a constant. Adding 1 to $g(k)$ to account for the single path in $\MMM$ containing only internal colors w.r.t.~$w$ in $M$ yields the same asymptotic upper bound.
\end{proof}

\begin{theorem}
\label{thm:main}
Let $G$ be a plane color-connected graph, let $u, v, w \in V(G)$, and let ${\cal P}$ be a set of minimal $k$-valid $u$-$v$ paths w.r.t.~$w$ in $G-w$. Then $|\PPP|\leq h(k)$, where $h(k)=\Oh(c^{k^2} k^{2k^2+k})$, for some constant $c > 1$.
\end{theorem}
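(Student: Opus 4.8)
The plan is to bound $|\PPP|$ by induction on $k$, using Lemma~\ref{lem:aux} as the engine of the inductive step. The base case $k=0$ is immediate: every $0$-valid path has empty color set, so by condition~(ii) of Definition~\ref{def:minimalpaths} a minimal such set contains at most one path, and $h(0)=1$ works.

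For the inductive step, fix $k\ge 1$, assume the statement for all smaller parameters, and let $\PPP$ be as in the theorem. First I would choose a maximal pairwise color-disjoint subfamily $\MMM\subseteq\PPP$. Since conditions (i)--(iii) of Definition~\ref{def:minimalpaths} pass to subfamilies ((i),(ii) trivially, (iii) because it constrains each path individually), $\MMM$ is a minimal $k$-valid color-disjoint family of $u$-$v$ paths in $G-w$, so Lemma~\ref{lem:aux} gives $|\MMM|\le g(k)$. Put $C_0=\bigcup_{M\in\MMM}\chi(M)$; as each $M\in\MMM$ is $k$-valid, $|C_0|\le k\,|\MMM|\le k\,g(k)$. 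By maximality of $\MMM$, every $P\in\PPP$ with $\chi(P)\ne\emptyset$ must share a color with some member of $\MMM$ (else $\MMM\cup\{P\}$ would be color-disjoint), so $\chi(P)\cap C_0\ne\emptyset$; and if some $P\in\PPP$ has $\chi(P)=\emptyset$ then $\PPP=\{P\}$ by condition~(ii) and we are done. Hence $\PPP=\bigcup_{c\in C_0}\PPP_c$ with $\PPP_c:=\{P\in\PPP: c\in\chi(P)\}$, and it remains to bound each $|\PPP_c|$.

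The crucial claim is that $|\PPP_c|\le h(k-1)$ for every $c\in C_0$, obtained by a reduction to the inductive hypothesis at parameter $k-1$. Fix $c$ and let $\chi'$ be the coloring that deletes $c$ from every vertex; $(G,\chi')$ is again a plane color-connected graph, since removing one color leaves the connectivity of the others untouched. For $P\in\PPP_c$ we have $c\in\chi(P)$, hence $|\chi'(P)|=|\chi(P)|-1\le k-1$, so every path of $\PPP_c$ is $(k-1)$-valid under $\chi'$. I would then verify that $\PPP_c$ is a \emph{minimal} set of $(k-1)$-valid $u$-$v$ paths in $G-w$ w.r.t.\ $w$ for the coloring $\chi'$: condition~(i) survives because for $P_1,P_2\in\PPP_c$ deleting $c$ either does not change $\chi(P_i)\cap\chi(w)$ (when $c\notin\chi(w)$) or removes $c$ from both these sets (when $c\in\chi(w)$, in which case $c$ belongs to both), so ``remove $c$'' is injective on the traces; condition~(ii) survives because, with $c$ in both $\chi(P_1)$ and $\chi(P_2)$, $\chi(P_1)\setminus\{c\}\subseteq\chi(P_2)\setminus\{c\}$ is equivalent to $\chi(P_1)\subseteq\chi(P_2)$; and condition~(iii) --- the only delicate point --- survives because a $u$-$v$ path $Q$ in $G-w$ with $\chi'(Q)\subsetneq\chi'(P)$ for some $P\in\PPP_c$ would force $\chi(Q)\subsetneq\chi(P)$ (a short case check on whether $c\in\chi(Q)$), contradicting condition~(iii) for $\PPP$. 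Applying the inductive hypothesis to $(G,\chi')$, the vertices $u,v,w$, parameter $k-1$, and the family $\PPP_c$ yields $|\PPP_c|\le h(k-1)$.

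Combining the estimates, $|\PPP|\le\sum_{c\in C_0}|\PPP_c|\le|C_0|\cdot h(k-1)\le k\,g(k)\,h(k-1)$, so $h$ obeys $h(k)\le k\,g(k)\,h(k-1)$ with $h(0)=1$. Unrolling gives $h(k)\le\prod_{i=1}^{k}i\,g(i)=k!\cdot\prod_{i=1}^{k}g(i)$, and substituting the bound $g(i)=\Oh(a^{i}i^{2i})$ of Lemma~\ref{lem:aux} together with $k!\le k^k$ and $\prod_{i=1}^{k}i^{2i}\le k^{k^2+k}$ collapses this to $h(k)=\Oh(c^{k^2}k^{2k^2+k})$ for a suitable constant $c>1$, as asserted. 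I expect essentially all of the real difficulty to lie upstream, in the topological machinery behind Lemma~\ref{lem:separator} and Lemma~\ref{lem:aux}; the work remaining for this theorem is the bookkeeping above, whose only genuine subtlety is checking that deleting a single color preserves all three minimality conditions --- condition~(iii) in particular --- and then confirming that the recurrence indeed produces the stated closed form for $h$.
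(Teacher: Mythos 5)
Your proposal is correct and follows essentially the same route as the paper's proof: pick a maximal color-disjoint subfamily $\MMM\subseteq\PPP$, bound it by Lemma~\ref{lem:aux}, group the remaining paths by a color shared with $\MMM$, delete that color and invoke the inductive hypothesis at $k-1$, yielding the recurrence $h(k)\le k\,g(k)\,h(k-1)$. The only differences are cosmetic (base case $k=0$ versus $k=1$, a covering rather than a partition into color groups, and your explicit verification of conditions (i)--(iii), which the paper leaves as "easy to verify"), and your unrolled bound is in fact slightly tighter than, hence consistent with, the stated $h(k)=\Oh(c^{k^2}k^{2k^2+k})$.
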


\begin{proof}
The proof is by induction on $k$. If $k=1$, then by minimality of $\PPP$, we have $\PPP=\MMM$. Lemma~\ref{lem:aux} gives an upper bound of $\Oh(c^k k^{2k})=\Oh(c^{k^2} k^{2k^2+k})$ on $|\PPP|$.

Assume by the inductive hypothesis that the statement of the lemma is true for $1 \leq i < k$. Let $\MMM$ be a maximal set of pairwise color-disjoint paths in $\PPP$. By Lemma~\ref{lem:aux}, $|\MMM| \leq g(k)=\Oh(c^k k^{2k})$. The number of colors contained in vertices of $\MMM$ is at most $r \leq k\cdot g(k)$. We group the paths in $\PPP$ into $r$ groups $\PPP_1, \ldots, \PPP_r$, such that all the paths in $\PPP_i$, $i \in [r]$, share the same color $c_i$, where $i \in [r]$, that is distinct from each color $c_j$ shared by the paths in $\PPP_j$, for $j \neq i$. We upper bound the number of paths in each $\PPP_i$, $i \in [r]$, to obtain an upper bound on $|\PPP|$.

%Let $H_i$, $i \in [r]$, be the subgraph of $H$ induced by the edges in $\PPP_i$.
%First note that if $c$ is an external color to $w$ in $H$, then there exists $i \in [r]$ such that $c$ is external to $w$ in $H_i$. This is because $c$ appears on a vertex $x$ that is external to $w$ in $H$, and hence, $x$ %must appear in an $H_i$, and must be external to $w$ in $H_i$.
%Fix an $H_i$, where $i \in [r]$.
%Let $H'_i$
Let $G_i$ be the graph obtained by removing color $c_i$ from each vertex in %$H_i$
$G$ that $c_i$ appears on, and let $\PPP'_i$ be the set of paths obtained from $\PPP_i$ by removing color $c_i$ from each vertex in $\PPP_i$ that $c_i$ appears on. Clearly, every path in $\PPP'_i$ is a $(k-1)$-valid $u$-$v$ path. Moreover, it is easy to verify that $\PPP'_i$ satisfies properties (i)-(iii) in Definition~\ref{def:minimalpaths}, and hence, $\PPP'_i$ is minimal w.r.t.~$w$ in %$H'_i$
$G_i-w$. By the inductive hypothesis, we have $|\PPP'_i| \leq h(k-1)$.
%Therefore, the number of external colors in $H_i$ is at most $h(k-1) +1$, where the additional 1 is to account for color $c_i$, which could be external to $w$ in $H$.
It follows that the total number of paths in $\PPP$ is at most $h(k)$, where $h(k)$ satisfies the recurrence relation $h(k) \leq r \cdot h(k-1) \leq k \cdot g(k) \cdot h(k-1)$. Solving the aforementioned recurrence relations yields $h(k)=\Oh((k\cdot g(k))^k)=\Oh(c^{k^2} k^{2k^2+k})$.
\end{proof}

The result of Theorem~\ref{thm:main} will be employed in the next section in the form presented in the following corollary:

\begin{corollary}
\label{cor:main}
Let $G$ be a plane color-connected graph, and let $w \in V(G)$. Let $G'$ be a subgraph of $G-w$, and let $u, v \in V(G')$. Every set ${\cal P}$ of minimal $k$-valid $u$-$v$ paths in $G'$ w.r.t.~$w$ satisfies $|\PPP|\leq h(k)$, where $h(k)=\Oh(c^{k^2} k^{2k^2+k})$, for some constant $c > 1$.
\end{corollary}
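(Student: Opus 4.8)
The plan is to obtain the corollary as a routine strengthening of Theorem~\ref{thm:main}, by re-running its proof over a slightly larger class of instances. Call a tuple $(G,w,G',u,v)$ \emph{admissible} if $G$ is a plane color-connected graph, $w\in V(G)$, $G'$ is a subgraph of $G-w$, and $u,v\in V(G')$; for such a tuple, say that a set $\PPP$ of $k$-valid $u$-$v$ paths in $G'$ is \emph{minimal w.r.t.~$w$} if it satisfies (i)--(iii) of Definition~\ref{def:minimalpaths} with $G-w$ replaced by $G'$. I would prove by induction on $k$ that for every admissible tuple, every such $\PPP$ satisfies $|\PPP|\le h(k)$. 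Theorem~\ref{thm:main} is the special case $G'=G-w$, and the corollary is the general case, so it suffices to check that the proofs of Theorem~\ref{thm:main}, Lemma~\ref{lem:aux}, and the auxiliary results (Observations~\ref{obs:internalpath} and~\ref{obs:externalcolor}, Lemmas~\ref{lem:externalpaths}, \ref{lem:separator}, \ref{lem:contraction}) survive this replacement.

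The only non-topological ingredient used in those proofs is Observation~\ref{obs:colorconnectivity}, and in each of its applications the separator in question is actually a vertex-separator of the whole graph $G$, not merely of $G'$: it is either the boundary of a face of a subgraph of $G$ (such as the external face w.r.t.~$w$, or a face of the subgraph $M$ induced by a color-disjoint family $\MMM\subseteq\PPP$), or a Jordan curve formed by two paths drawn in the plane, or the union of two $u$-$v$ paths of $\MMM$; any closed curve built from vertices and edges of $G$, in the fixed embedding, separates the vertices of $G$ on its two sides. Since it is $G$ (and not $G'$) that is color-connected, Observation~\ref{obs:colorconnectivity} applies verbatim. The external/internal-color bookkeeping is already set up for arbitrary subgraphs of $G-w$ via Observation~\ref{obs:externalsubgraphs}, so it transfers to $G'$, to $M$, and to the sub-subgraphs that appear.

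It then remains to see that the two recursions do not leave the admissible class. In Lemma~\ref{lem:aux}, when a path of group~(4) is cut at the separator $S$ and each piece $P_i$ is replaced by a color-minimal subpath $P'_i$ joining its endpoints, one takes $P'_i$ \emph{inside $G'$}; then each family $\PPP_{jj'}$ consists of $(k-1)$-valid paths in $G'$ and is minimal w.r.t.~$w$ in $G'$, so the inductive hypothesis applies to $(G,w,G',s_j,s_{j'})$ with parameter $k-1$. In Theorem~\ref{thm:main}, deleting a color $c_i$ turns $G$ into a plane color-connected graph $G_i$ and turns $G'$ into a subgraph $G'_i$ of $G_i-w$, with $\PPP'_i$ minimal w.r.t.~$w$ in $G'_i$, so the inductive hypothesis applies to $(G_i,w,G'_i,u,v)$ with parameter $k-1$. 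Finally, the reduction to an irreducible $M$ contracts only edges of the subgraph of $G'$ induced by $\MMM$ (never $w$, never an edge outside $G'$), and by Lemma~\ref{lem:contractionminimal}, with the operation $\Lambda_{xy}$ of Definition~\ref{def:operation}, the image $\Lambda_{xy}(\MMM)$ is again minimal w.r.t.~$w$ in the contracted subgraph. Hence the recurrences one obtains are exactly those derived in the proofs of Lemma~\ref{lem:aux} and Theorem~\ref{thm:main}, yielding $|\PPP|\le h(k)=\Oh(c^{k^2}k^{2k^2+k})$.

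I expect the only delicate part to be precisely this bookkeeping: confirming, for \emph{every} invocation of color-connectivity, that the separator involved is a separator of $G$, and confirming that the two recursions (restriction to subpaths between consecutive separator vertices; deletion of a color) always produce admissible sub-instances. I note in passing that if condition~(iii) of the corollary is instead read relative to $G-w$ rather than to $G'$, then the corollary is immediate, since $\PPP$ is then literally a set that is minimal w.r.t.~$w$ in $G-w$ in the sense of Definition~\ref{def:minimalpaths}, to which Theorem~\ref{thm:main} applies directly.
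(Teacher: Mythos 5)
Your plan is sound, but it is a genuinely different route from the paper's. The paper proves Corollary~\ref{cor:main} as a two-step black-box reduction to Theorem~\ref{thm:main}: contract each connected component of $(G-w)-G'$ into a single vertex carrying the union of the component's colors \emph{plus $k+1$ fresh colors}, obtaining a plane color-connected graph $G''$. The padding makes every contracted vertex unusable by a $k$-valid path, so the $k$-valid $u$-$v$ paths of $G''-w$ are exactly those of $G'$; in particular any path witnessing a violation of condition~(iii) in $G''-w$ has at most $k$ colors, hence avoids the padded vertices and already lies in $G'$, so a set that is minimal w.r.t.~$w$ in $G'$ (with (iii) read relative to $G'$ — the reading actually needed later, cf.\ Definition~\ref{def:minimalpathbag} and Lemma~\ref{lem:one_path_bound}; your closing remark about the $G-w$ reading is correct but concerns only the trivial variant) is minimal w.r.t.~$w$ in $G''-w$, and Theorem~\ref{thm:main} applies verbatim. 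Your approach instead relativizes the whole induction to admissible tuples $(G,w,G',u,v)$, and your two key observations are the right ones: every invocation of Observation~\ref{obs:colorconnectivity} in Lemmas~\ref{lem:externalpaths}, \ref{lem:separator} and \ref{lem:aux} goes through a face boundary or a Jordan curve built from vertices and edges of the plane graph $G$, hence a vertex-separator of $G$ itself, which is where color-connectivity lives; and both recursions (cutting at the separator $S$ with color-minimal replacement subpaths taken inside $G'$, and deleting a color) stay inside the admissible class, with the relativized analogue of Lemma~\ref{lem:contractionminimal} handling the reduction to an irreducible $M$. What each approach buys: the paper's padding trick is short and needs no re-examination of the topological machinery, whereas yours avoids modifying the graph and yields the relativized statement directly, at the cost of auditing every lemma — an audit you have sketched convincingly but not fully written out, and which is precisely the overhead the contraction-plus-fresh-colors construction was designed to avoid.
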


\begin{proof}
Contract every connected component of $(G-w)-G'$ into a single vertex containing the union of the color-sets of the vertices in the component, and add $k+1$ new distinct colors to the resulting vertex. Denote the resulting graph by $G''$. Observe that the resulting graph is color-connected, and that every $k$-valid $u$-$v$ path in $G'$ w.r.t.~$w$ is a $k$-valid $u$-$v$ path in $G''$ w.r.t.~$w$, and vice versa. Therefore, every set ${\cal P}$ of minimal $k$-valid $u$-$v$ paths in $G'$ w.r.t.~$w$ is also a set of minimal $k$-valid $u$-$v$ paths in $G''$ w.r.t.~$w$. For any set ${\cal P}$ of minimal $k$-valid $u$-$v$ paths w.r.t.~$w$ in $G'$, by applying Theorem~\ref{thm:main} to ${\cal P}$ in $G''-w$, the corollary follows.
\end{proof}
\fi
\section{The Algorithm}
\label{sec:algo}
In this section, we \iflong present an~\fi \ifshort highlight how the \fi \FPT{} algorithm for \cmor{}, parameterized by both $k$ and the treewidth of the input graph\ifshort~works\fi. As pointed out in Section~\ref{sec:structural}, there can be too many (\ie, more than \FPT-many) subsets of colors that appear in a bag, and hence, that the algorithm may need to store/remember. To overcome this issue, we extend the notion of a minimal set of $k$-valid $u$-$v$ paths w.r.t.~a vertex---from the previous section---to a ``representative set'' of paths w.r.t.~a specific bag and a specific enumerated configuration for the bag. This allows us to upper bound the size of the table, in the dynamic programming algorithm, stored at a bag by a function of both $k$ and the treewidth of the graph.

Let $(G, C, \chi, s, t, k)$ be an instance of \cmor. \iflong The algorithm is a dynamic programming algorithm based on a tree decomposition of $G$. \fi Let $({\cal V}, {\cal T})$ be a nice tree decomposition of $G$. By Assumption~\ref{ass:sat}, we can assume that $s$ and $t$ are nonadjacent empty vertices. We add $s$ and $t$ to every bag in $\TTT$, and now we have $\{s,t\}\subseteq X_i$, for every bag $X_i \in \TTT$.
For a bag $X_i$, we say that $v\in X_i$ is \emph{useful} if $|\Col(v)|\le k$. Let $U_i$ be the set of all useful vertices in $X_i$ and let $\overline{U_i}=X_i\setminus U_i$.
We denote by $V_i$ the set of vertices in the bags of the subtree of $\mathcal{T}$ rooted at $X_i$. For any two vertices $u, v \in X_i$, let $G_{uv}^i=G[(V_i\setminus X_i) \cup \{u, v\}]$. We extend the notion of a minimal set of $k$-valid $u$-$v$ paths w.r.t.~a vertex, developed in the previous section, to the set of vertices in a bag of $\TTT$.

\begin{definition}
\label{def:minimalpathbag}
A set of $k$-valid $u$-$v$ paths $\PPP_{uv}$ in $G_{uv}^i$ is {\em minimal} w.r.t.~$X_i$ if it satisfies the following properties:

\begin{itemize}
\item[(i)] There does not exist two paths $P_1, P_2 \in \PPP_{uv}$ such that $\chi(P_1) \cap \chi(X_i) = \chi(P_2) \cap \chi(X_i)$;	
\item[(ii)] there does not exist two paths $P_1, P_2 \in \PPP_{uv}$ such that $\chi(P_1) \subseteq \chi(P_2)$; and
\item[(iii)] for any $P \in \PPP_{uv}$ there does not exist a $u$-$v$ path $P'$ in $G_{uv}^i$ such that $\chi(P') \subsetneq \chi(P)$.
\end{itemize}
\end{definition}

The following lemma uses the upper bound on the cardinality of a minimal set of $k$-valid $u$-$v$ paths w.r.t.~a vertex, derived in \iflong Corollary~\ref{cor:main}\fi \ifshort Theorem~\ref{thm:main}\fi~in the previous section, to obtain an upper bound on the cardinality of a minimal set of $k$-valid $u$-$v$ paths w.r.t.~a bag of $\TTT$:
\ifshort
\begin{lemma} [\appno{} Lemma~5.2]
	\label{lem:one_path_bound}
	Let $X_i$ be bag, $u, v \in X_i$, and $\PPP_{uv}$ a set of $k$-valid $u$-$v$ paths in $G_{uv}^i$ that is minimal w.r.t.~$X_i$. Then the number of paths in $\PPP_{uv}$ is at most $h(k)^{|X_i|}$, where $h(k) =\Oh(c^{k^2} k^{2k^2+k})$, for some constant $c >1$.
\end{lemma}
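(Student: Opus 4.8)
The idea is to reduce the ``minimal w.r.t.\ a bag'' bound to the ``minimal w.r.t.\ a single vertex'' bound of Theorem~\ref{thm:main} (equivalently Corollary~\ref{cor:main}), by peeling off the vertices of $X_i$ one at a time. First I would set up the reduction to a planar, color-connected graph: for the pair $u,v\in X_i$, consider the graph $G^i_{uv}=G[(V_i\setminus X_i)\cup\{u,v\}]$ together with the induced coloring. Since $G$ is color-connected and planar, and the pieces hanging below $X_i$ interact with the rest only through $X_i$, we can treat $G^i_{uv}$ as a planar (color-connected after minor surgery) graph in which the structural results of Section~\ref{sec:structural} apply; more precisely, each vertex $x\in X_i\setminus\{u,v\}$ plays the role of the ``removed'' vertex $w$ in Definition~\ref{def:minimalpaths}, and the colors in $\chi(X_i)$ are exactly the ``external'' colors w.r.t.\ those vertices. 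I would list $X_i\setminus\{u,v\}=\{w_1,\dots,w_m\}$ where $m=|X_i|-2$, and argue inductively on $m$.

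The core induction: let $\PPP_{uv}$ be a set of $k$-valid $u$-$v$ paths in $G^i_{uv}$ that is minimal w.r.t.\ $X_i$ in the sense of Definition~\ref{def:minimalpathbag}. Partition $\PPP_{uv}$ according to the \emph{trace} $\chi(P)\cap\chi(w_1)$ of each path on the first bag vertex $w_1$; since $|\chi(w_1)|\le k$ for useful vertices (and useless vertices can be argued away, or absorbed, since a $k$-valid path cannot pass through a vertex with more than $k$ colors — indeed a $k$-valid $u$-$v$ path never contains a vertex of $\overline{U_i}$ other than possibly $u$ or $v$, which are empty), there are at most $2^{|\chi(w_1)|}\le h(k)$ possible traces — but I would be more careful here, because the bound claimed is $h(k)^{|X_i|}$, so I want a factor of at most $h(k)$ per bag vertex. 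The cleanest route is: within each trace class, the paths have identical intersection with $\chi(w_1)$, so we can contract/remove $w_1$'s role and apply the induction hypothesis with $X_i$ replaced by $X_i\setminus\{w_1\}$ (the graph $G^i_{uv}$ is unchanged; only the reference set of ``bag colors'' shrinks, which only relaxes property (i) of Definition~\ref{def:minimalpathbag}, so the restricted set is still minimal w.r.t.\ the smaller bag). This gives at most $h(k)^{m-1}$ paths per trace class. Then I need the number of trace classes to be at most $h(k)$: this is where Theorem~\ref{thm:main}/Corollary~\ref{cor:main} enters — within the family of paths realizing \emph{distinct} traces on $w_1$, minimality properties (ii) and (iii) make this family essentially a minimal set of $k$-valid $u$-$v$ paths w.r.t.\ the single vertex $w_1$ in $G^i_{uv}$ (after the standard surgery making things color-connected), so its cardinality is at most $h(k)$. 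Multiplying, $|\PPP_{uv}|\le h(k)\cdot h(k)^{m-1}=h(k)^m\le h(k)^{|X_i|}$, with the base case $m=0$ being $|\PPP_{uv}|\le 1$ (by properties (i)–(iii) with empty bag-color restriction, forcing a single color-minimal path).

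The main obstacle I anticipate is making the inductive step honest: when I restrict $\PPP_{uv}$ to a fixed trace on $w_1$ and claim the restriction is minimal w.r.t.\ $X_i\setminus\{w_1\}$, I must check that properties (ii) and (iii) of Definition~\ref{def:minimalpathbag} are inherited (they are, since they do not mention $X_i$), and that property (i) for the smaller bag holds (it follows because distinct paths in the full $\PPP_{uv}$ had distinct traces on \emph{all} of $\chi(X_i)$, and we have now fixed the $w_1$-part, so they must differ on $\chi(X_i\setminus\{w_1\})$). The other delicate point is the color-connectivity surgery needed to invoke Theorem~\ref{thm:main}: $G^i_{uv}$ need not be color-connected, but as in the proof of Corollary~\ref{cor:main} one contracts the components of $(G-w_1)-G^i_{uv}$, adds $k+1$ fresh colors, and observes that $k$-valid $u$-$v$ paths are preserved; I would invoke Corollary~\ref{cor:main} in exactly that packaged form, so this is routine rather than a genuine difficulty. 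Putting the two ingredients together — one $h(k)$ factor from the single-vertex bound per bag vertex, times the inductive $h(k)^{|X_i|-1}$ — yields the claimed $h(k)^{|X_i|}$.
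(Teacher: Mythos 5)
Your proof is correct and takes essentially the same route as the paper's: both arguments reduce the bag bound to the single-vertex bound of Corollary~4.15/Theorem~\ref{thm:main}, use property (i) of bag-minimality to show that the traces $\chi(P)\cap\chi(w_j)$ on the bag vertices determine the path, and multiply an at-most-$h(k)$ bound on the number of possible traces per bag vertex. The only difference is presentational: the paper counts signature tuples in one shot against auxiliary maximal minimal sets $\PPP_j$, whereas you peel off bag vertices inductively and bound the number of trace classes by noting that one representative per class is itself a minimal set w.r.t.\ that vertex (properties (ii)–(iii) being inherited from Definition~\ref{def:minimalpathbag}), which is an equally valid way to invoke the same corollary.
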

\fi

\iflong

\begin{lemma}
	\label{lem:one_path_bound}
	Let $X_i$ be bag, $u, v \in X_i$, and $\PPP_{uv}$ a set of $k$-valid $u$-$v$ paths in $G_{uv}^i$ that is minimal w.r.t.~$X_i$. Then the number of paths in $\PPP_{uv}$ is at most $h(k)^{|X_i|}$, where $h(k) =\Oh(c^{k^2} k^{2k^2+k})$, for some constant $c >1$.
\end{lemma}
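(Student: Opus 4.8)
The plan is to establish the bound by induction, peeling the vertices of $X_i$ off one at a time and invoking the single-vertex cardinality bound of \iflong Corollary~\ref{cor:main}\fi\ifshort Theorem~\ref{thm:main}\fi. To do this cleanly I would first recast Definition~\ref{def:minimalpathbag} for an arbitrary vertex set: given $Y\subseteq V(G)$ and a plane subgraph $H$ of $G$ with $u,v\in V(H)$ and $V(H)\cap Y\subseteq\{u,v\}$, call a set $\PPP$ of $k$-valid $u$-$v$ paths in $H$ \emph{$(Y,H)$-minimal} if it satisfies conditions (i)--(iii) of Definition~\ref{def:minimalpathbag} with $X_i$ replaced by $Y$ and $G_{uv}^i$ replaced by $H$. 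Two special cases matter. With $Y=X_i$ and $H=G_{uv}^i$ (note $V(G_{uv}^i)\cap X_i=\{u,v\}$) this is exactly Definition~\ref{def:minimalpathbag}. With a singleton $Y=\{w\}$ for some $w\notin V(H)$ this is exactly minimality w.r.t.~$w$ in the sense used by \iflong Corollary~\ref{cor:main}\fi\ifshort Theorem~\ref{thm:main}\fi, which then gives $|\PPP|\le h(k)$ --- this is applicable because $G$ is color-connected (the input is an instance of \cmor) and $H$ is a subgraph of $G-w$. The statement I would actually prove is: every $(Y,H)$-minimal set has size at most $h(k)^{|Y|}$, and Lemma~\ref{lem:one_path_bound} is the instance $Y=X_i$, $H=G_{uv}^i$.

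The induction is on $q=|Y\setminus\{u,v\}|$, the invariant $V(H)\cap Y\subseteq\{u,v\}$ being preserved throughout. If $q=0$, then $Y\subseteq\{u,v\}\subseteq V(P)$ for every $u$-$v$ path $P$ in $H$, so $\chi(P)\cap\chi(Y)=\chi(Y)$ is the same for all $P\in\PPP$, and condition (i) forces $|\PPP|\le 1\le h(k)^{|Y|}$. If $q\ge 1$, fix $w\in Y\setminus\{u,v\}$; then $w\notin V(H)$, so $H$ is a plane subgraph of $G-w$, and I would partition $\PPP$ according to the value of $\chi(P)\cap\chi(w)$. To bound the number of parts, pick one representative from each part to form a set $\PPP'$: its paths have pairwise distinct intersections with $\chi(w)$, and $\PPP'$ inherits conditions (ii) and (iii) from $\PPP$ verbatim (these two conditions refer only to $H$, never to $Y$), so $\PPP'$ is minimal w.r.t.~$w$ in $H$ and \iflong Corollary~\ref{cor:main}\fi\ifshort Theorem~\ref{thm:main}\fi yields $|\PPP'|\le h(k)$; hence there are at most $h(k)$ parts. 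Each part $\PPP_S=\{P\in\PPP:\chi(P)\cap\chi(w)=S\}$ is $(Y\setminus\{w\},H)$-minimal: conditions (ii) and (iii) are again inherited verbatim, and for (i) note that two paths of $\PPP_S$ that agreed on $\chi(\cdot)\cap\chi(Y\setminus\{w\})$ would, since they also agree on $\chi(\cdot)\cap\chi(w)=S$ and $\chi(Y)=\chi(Y\setminus\{w\})\cup\chi(w)$, agree on $\chi(\cdot)\cap\chi(Y)$, contradicting condition (i) for $\PPP$. By the induction hypothesis $|\PPP_S|\le h(k)^{|Y|-1}$, and summing over the at most $h(k)$ parts gives $|\PPP|\le h(k)\cdot h(k)^{|Y|-1}=h(k)^{|Y|}$.

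The point that needs care is the ``number of parts'' step: one must verify that a family obtained by keeping a single path from each part genuinely satisfies the hypotheses of \iflong Corollary~\ref{cor:main}\fi\ifshort Theorem~\ref{thm:main}\fi. This goes through precisely because conditions (ii) and (iii) of Definition~\ref{def:minimalpathbag} are literally conditions (ii) and (iii) of Definition~\ref{def:minimalpaths}: they are hereditary under passing to a subfamily of paths and make no reference to the distinguished set, so minimality of $\PPP$ transfers to any subfamily, and the single-vertex bound then caps the number of distinct values of $\chi(\cdot)\cap\chi(w)$. I would also record at the outset that $G$ may be taken together with a fixed plane embedding (the input is planar, and Assumption~\ref{ass:sat} only adjusts colors and $k$), an embedding inherited by $G_{uv}^i$ and by every subgraph considered, so that all appeals to the plane structural results of Section~\ref{sec:structural} are legitimate.
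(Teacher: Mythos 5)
Your proof is correct and arrives at the same bound from the same key ingredient, the single-vertex bound of Corollary~\ref{cor:main}, but the bookkeeping is genuinely different from the paper's. The paper argues in one shot: for each $w_j\in X_i\setminus\{u,v\}$ it fixes a minimal set $\PPP_j$ of $k$-valid $u$-$v$ paths w.r.t.~$w_j$ in $G_{uv}^i$, chosen so that no further path can be adjoined, assigns to each $P\in\PPP_{uv}$ the signature $\bigl(\chi(P)\cap\chi(w_1),\ldots,\chi(P)\cap\chi(w_r)\bigr)$, notes that property (i) of Definition~\ref{def:minimalpathbag} makes the signature map injective, and shows that every coordinate value is realized by some member of $\PPP_j$ (otherwise $P$ could be added to $\PPP_j$, using properties (ii)--(iii) of $\PPP_{uv}$), so the number of signatures is at most $\prod_{j}|\PPP_j|\le h(k)^{|X_i|}$. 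You instead peel off one bag vertex at a time: you cap the number of distinct traces on $\chi(w)$ at $h(k)$ by applying Corollary~\ref{cor:main} to one representative per trace class --- which works precisely because conditions (ii) and (iii) are hereditary under passing to subfamilies --- and then recurse on each class, which you correctly check is minimal w.r.t.~$X_i\setminus\{w\}$. What your route buys is that it avoids both the extremal choice of the auxiliary families $\PPP_j$ and the slightly delicate verification that an unrealized trace could be adjoined to $\PPP_j$ without destroying minimality; what it costs is the generalized $(Y,H)$-minimality notion and an induction where the paper has a direct product argument. Both proofs in fact yield the marginally sharper bound $h(k)^{|X_i|-2}$, which the lemma rounds up to $h(k)^{|X_i|}$.
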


\begin{proof}
	Let $X_i\setminus \{u, v\}=\{w_1, \ldots, w_r\}$, where $r=|X_i|-2$. For each $w_j \in X_i$, $j \in [r]$, let $\PPP_j$ be a minimal set of $k$-valid $u$-$v$ paths w.r.t.~$w_j$ in $G_{uv}^i$. Without loss of generality, we can pick $\PPP_j$ such that there is no $k$-valid $u$-$v$ path $P$ in $G_{uv}^i$ such that $\PPP_j \cup \{P\}$ is minimal. From \iflong Corollary~\ref{cor:main}\fi \ifshort Theorem~\ref{thm:main}\fi, we have $|\PPP_j| \leq h(k) =\Oh(c^{k^2} k^{2k^2+k})$, for some constant $c >1$. For each $P \in \PPP_{uv}$, and each $j \in [r]$, define $C_j = \chi(P) \cap \chi(w_j)$. Define the \emph{signature} of $P$ (w.r.t.~the colors of $w_1, \ldots, w_r$) to be the tuple $(C_1, \ldots, C_r)$.
	Observe that no two (distinct) paths $P_1, P_2 \in \PPP_{uv}$ have the same signature; otherwise, since $u$ and $v$ appear on both $P_1, P_2$, $\chi(P_1) \cap \chi(X_i)= \chi(P_2) \cap \chi(X_i)$, which contradicts condition (i) of the minimality of $\PPP_{uv}$. For each $P \in \PPP_{uv}$, and each $j \in [r]$, there is a path $P' \in \PPP_{j}$ such that $\chi(P') \cap \chi(w_j) =C_j$. If this were not true, then $P$ would have been added to $\PPP_{j}$ for the following reasons. Clearly, $P$ does not contradict conditions (i) and (iii) of the minimality of $\PPP_{j}$. It cannot contradict (ii) either, because otherwise, and since $P$ does not contradict (i), there would be a path $P'' \in \PPP_{j}$ such that $\chi(P'') \subsetneq \chi(P)$, contradicting the minimality of $\PPP_{uv}$. It follows that the number of signatures of paths in $\PPP_{uv}$ is at most
	$\prod_{j=1}^{r}|\PPP_j| \leq h(k)^{|X_i|}$. Since no two distinct paths in $\PPP_{uv}$ have the same signature, it follows that $|\PPP_{uv}| \leq h(k)^{|X_i|}$.
\end{proof}
\fi

\begin{definition}\rm
\label{def:pattern}
Let $X_i$ be a bag in $\TTT$. A {\em pattern} $\pi$ for $X_i$ is a sequence \\ $(v_1=s, \sigma_1, v_2, \sigma_2, \dots, \sigma_{r-1}, v_r=t)$, where $\sigma_i\in \{0,1\}$ and $v_i\in U_i$.
For a bag $X_i$, and a pattern $(v_1=s, \sigma_1, v_2, \sigma_2, \dots, \sigma_{r-1}, v_r=t)$ for $X_i$, we say that a sequence of paths $\mathcal{S} = (P_1,\dots, P_{r-1})$
\emph{conforms} to $(X_i, \pi)$ if:
\begin{itemize}
 	\item  For each $j \in [r-1]$, $\sigma_j=1$ implies that $P_j$ is an induced path from $v_j$ to $v_{j+1}$ whose internal vertices are contained in $V_i\setminus X_i$ and $P_j$ is empty otherwise; and
 	\item $|\Col(\mathcal{S})|=|\bigcup_{j \in [r-1]}\chi(P_j)|\le k$.
 \end{itemize}
\end{definition}

\begin{definition} \rm
\label{def:preceq}
Let $X_i$ be a bag, $\pi$ a pattern for $X_i$, and $\SSS_1, \SSS_2$ two sequences of paths that conform to $(X_i, \pi)$. We write $\SSS_1 \preceq_i \SSS_2$ if $|\chi(\SSS_1)\cup(\chi(\SSS_2)\cap \chi(X_i))|\le |\chi(\SSS_2)|$.
\end{definition}

\iflong
We note that at a certain point during the dynamic programming algorithm, we will have to deal for a short while with sequences of walks instead of sequences of paths (until we refine them), but the definition of
a sequence of paths conforming to a bag and a pattern, and the relation $\preceq$, extend seamlessly to sequences of walks.
\fi

\iflong
\begin{lemma}
\label{lem:transitive}
Let $X_i$ be a bag and $\pi$ a pattern for $X_i$. The relation $\preceq_i$ is a transitive relation on the set of all sequences of paths that conform to $(X_i, \pi)$.
\end{lemma}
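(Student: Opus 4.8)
The plan is to show that transitivity of $\preceq_i$ follows from an elementary inequality about finite sets, so that no further property of the tree decomposition or of the pattern $\pi$ is needed beyond the definitions themselves. Fix three sequences $\SSS_1,\SSS_2,\SSS_3$ that conform to $(X_i,\pi)$, and abbreviate $A=\chi(\SSS_1)$, $B=\chi(\SSS_2)$, $D=\chi(\SSS_3)$ and $X=\chi(X_i)$; all of these are finite color sets. The goal is: if $\SSS_1\preceq_i\SSS_2$ and $\SSS_2\preceq_i\SSS_3$, then $\SSS_1\preceq_i\SSS_3$.

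The first step is to rewrite the relation into a more additive form. From $A\cup(B\cap X)=\bigl(A\setminus(B\cap X)\bigr)\sqcup(B\cap X)$ and $B=(B\setminus X)\sqcup(B\cap X)$, the inequality $|A\cup(B\cap X)|\le|B|$ is equivalent to $|A\setminus(B\cap X)|\le|B\setminus X|$. Next, $A\setminus(B\cap X)$ decomposes as the disjoint union of $A\setminus X$ (a color of $A$ outside $X$ is automatically outside $B\cap X$) and $(A\cap X)\setminus B$ (a color of $A\cap X$ avoids $B\cap X$ exactly when it avoids $B$). Hence
\[
\SSS_1\preceq_i\SSS_2 \iff |A\setminus X|+|(A\cap X)\setminus B|\le|B\setminus X|,
\]
and the analogous equivalence holds for every pair of conforming sequences.

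Now I would simply add the two reformulated hypotheses. From $|A\setminus X|+|(A\cap X)\setminus B|\le|B\setminus X|$ and $|B\setminus X|+|(B\cap X)\setminus D|\le|D\setminus X|$, cancelling $|B\setminus X|$ yields
\[
|A\setminus X|+|(A\cap X)\setminus B|+|(B\cap X)\setminus D|\le|D\setminus X|.
\]
The main (and essentially only nontrivial) step is the set inclusion
\[
(A\cap X)\setminus D \ \subseteq\ \bigl((A\cap X)\setminus B\bigr)\cup\bigl((B\cap X)\setminus D\bigr),
\]
which holds because a color $c\in(A\cap X)\setminus D$ either lies outside $B$ (so $c\in(A\cap X)\setminus B$) or lies in $B$, in which case $c\in B\cap X$ and $c\notin D$ (so $c\in(B\cap X)\setminus D$). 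Subadditivity of cardinality then gives $|(A\cap X)\setminus D|\le|(A\cap X)\setminus B|+|(B\cap X)\setminus D|$, and substituting into the previous display produces $|A\setminus X|+|(A\cap X)\setminus D|\le|D\setminus X|$, which by the reformulation is precisely $\SSS_1\preceq_i\SSS_3$.

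The only place that requires a little care is getting the reformulation right — in particular the disjoint-union decomposition of $A\setminus(B\cap X)$ into $(A\setminus X)$ and $(A\cap X)\setminus B$; after that, transitivity reduces to a one-line inclusion followed by subadditivity and an addition of the two hypotheses, so I do not anticipate any genuine obstacle.
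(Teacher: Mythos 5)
Your proof is correct, and it takes essentially the same route as the paper: both arguments add the two hypothesis inequalities and finish by an elementary cardinality estimate on the color sets $A$, $B$, $D$ and $X=\chi(X_i)$. The only difference is bookkeeping — the paper chains inclusion–exclusion identities, whereas your reformulation of $\SSS_1\preceq_i\SSS_2$ as $|A\setminus X|+|(A\cap X)\setminus B|\le|B\setminus X|$, followed by the inclusion $(A\cap X)\setminus D\subseteq((A\cap X)\setminus B)\cup((B\cap X)\setminus D)$, packages the same cancellation more transparently.
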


\begin{proof}
Let $\SSS_1, \SSS_2, \SSS_3$ be three sequences that conform to $(X_i, \pi)$. Suppose that $\SSS_1 \preceq_i \SSS_2$ and $\SSS_2 \preceq_i \SSS_3$. We need to show that $\SSS_1 \preceq_i \SSS_3$.
To simplify the notation in the proof, let $A=\chi(S_1), B=\chi(S_2), C=\chi(S_3), X=X_i$. Since $\SSS_1 \preceq_i \SSS_2$, we have

 \begin{eqnarray}
 |A \cup B \cap X| & \leq & |B|\nonumber \\
  |A| + |B\cap X| -  |A \cap B \cap X| & \leq & |B|, \label{eq1}
 \end{eqnarray}

 and since $\SSS_2 \preceq_i \SSS_3$ we have:
 \begin{eqnarray}
  |B \cup C \cap X| & \leq & |C| \nonumber \\
  |B| + |C\cap X| -  |B \cap C \cap X| & \leq & |C|. \label{eq2}
 \end{eqnarray}

 From Inequalities (\ref{eq1}) and   (\ref{eq2}) we get:
{\small
 \begin{eqnarray}
 |A| + |C\cap X| + |B\cap X| -  |A \cap B \cap X| -  |B \cap C \cap X| & \leq & |C| \nonumber \\
 |A| + |C\cap X| + |B\cap X| -  (|A \cap B \cap X| +  |B \cap C \cap X|  - |A \cap B \cap C \cap X| + |A \cap B \cap C \cap X|) & \leq & |C| \nonumber \\
 |A| + |C\cap X| + |B\cap X| -  (|A \cap B \cap X \cup  B \cap C \cap X| + |A \cap B \cap C \cap X|) & \leq & |C| \nonumber \\
  |A| + |C\cap X| + |B\cap X| -  (|(A \cup C)\cap (B \cap X)| + |A \cap B \cap C \cap X|) & \leq &  |C| \nonumber \\
 |A| + |C\cap X| + |B\cap X| -  (|B \cap X| + |A \cap B \cap C \cap X|) & \leq & |C| \nonumber \\
  |A| + |C\cap X| -  |A \cap B \cap C \cap X|) & \leq & |C| \nonumber \\
 |A| + |C\cap X| -  |A  \cap C \cap X|) & \leq & |C|. \nonumber
  \end{eqnarray}
}

The last inequality proves that $\SSS_1 \preceq_i \SSS_3$.

\end{proof}
\fi

Using the relation $\preceq_i$ on the set of sequences that conform to $(X_i, \pi)$, we \iflong are now ready to \fi \ifshort can \fi define the key notion of representative sets that makes the dynamic programming approach work:

\begin{definition}\rm
\label{def:representativeset}
Let $X_i$ be a bag and $\pi=(v_1, \sigma_1, v_2 \ldots, \sigma_{r-1}, v_r)$ a pattern for $X_i$. A set $\RRR_\pi$ of sequences that conform to $(X_i, \pi)$ is a {\em representative} set for $(X_i, \pi)$ if:
\begin{itemize}
	\item[(i)]  For every sequence $\SSS_1\in \RRR_\pi$, and for every sequence $\SSS_2\neq \SSS_1$ that conforms to $(X_i, \pi)$, if $\SSS_1 \preceq_i \SSS_2$ then $\SSS_2 \notin \RRR_{\pi}$;
	\item[(ii)] for every sequence $\SSS\in\RRR_{\pi}$, and for every path $P \in \SSS$ between $v_j$ and $v_{j+1}$, $j\in [r-1]$, there does not exist a $v_j$-$v_{j+1}$ path $P'$ in $G_{v_jv_{j+1}}^i$ such that $\chi(P') \subsetneq \chi(P)$; and
	\item[(iii)] for every sequence $\SSS\notin\RRR_{\pi}$ that conforms to $(X_i, \pi)$ and satisfies that no two paths in $\SSS$ share a vertex that is not in $X_i$, there is a sequence $\WWW \in \RRR_{\pi}$ such that $\WWW \preceq_i \SSS$.
\end{itemize}
\end{definition}

\iflong
\begin{observation}
\label{obs:jtoi}
Let $X_i$ and $X_j$ be two bags such that $X_i \subseteq X_j$, let $\pi$ be a pattern for both $X_i$ and $X_j$, and let $\SSS, \SSS'$ be two sequences that conform to both $(X_i, \pi)$ and $(X_j, \pi)$. If $\SSS \preceq_j \SSS'$ then $\SSS \preceq_i \SSS'$.
\end{observation}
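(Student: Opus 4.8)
The plan is to simply unfold Definition~\ref{def:preceq} and use the monotonicity of cardinality under set inclusion, together with the hypothesis $X_i\subseteq X_j$. First I would observe that $X_i\subseteq X_j$ forces $\chi(X_i)=\bigcup_{v\in X_i}\chi(v)\subseteq\bigcup_{v\in X_j}\chi(v)=\chi(X_j)$. Since $\pi$ is a pattern for both bags and $\SSS,\SSS'$ conform to both $(X_i,\pi)$ and $(X_j,\pi)$, the color sets $\chi(\SSS),\chi(\SSS'),\chi(X_i),\chi(X_j)$ all live in the common color set $C$, so the set operations below are meaningful. Intersecting with $\chi(\SSS')$ preserves the inclusion, so $\chi(\SSS')\cap\chi(X_i)\subseteq\chi(\SSS')\cap\chi(X_j)$, and taking the union with $\chi(\SSS)$ on both sides yields $\chi(\SSS)\cup(\chi(\SSS')\cap\chi(X_i))\subseteq\chi(\SSS)\cup(\chi(\SSS')\cap\chi(X_j))$.

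Next I would finish by invoking the hypothesis. By Definition~\ref{def:preceq}, $\SSS\preceq_j\SSS'$ is exactly the inequality $|\chi(\SSS)\cup(\chi(\SSS')\cap\chi(X_j))|\le|\chi(\SSS')|$. Combining this with the inclusion derived above and the monotonicity of $|\cdot|$, we obtain $|\chi(\SSS)\cup(\chi(\SSS')\cap\chi(X_i))|\le|\chi(\SSS)\cup(\chi(\SSS')\cap\chi(X_j))|\le|\chi(\SSS')|$, which is precisely $\SSS\preceq_i\SSS'$ by Definition~\ref{def:preceq}.

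There is essentially no obstacle here: the statement is a one-line monotonicity fact, and the only subtlety worth spelling out is that $X_i\subseteq X_j$ is exactly what yields $\chi(X_i)\subseteq\chi(X_j)$, after which the containment $\chi(\SSS)\cup(\chi(\SSS')\cap\chi(X_i))\subseteq\chi(\SSS)\cup(\chi(\SSS')\cap\chi(X_j))$ and cardinality monotonicity do all the work. (Note that the analogous implication with the roles of $i$ and $j$ interchanged fails in general, which is precisely why the direction $X_i\subseteq X_j$ is essential.)
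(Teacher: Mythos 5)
Your proof is correct and follows exactly the paper's argument: from $X_i\subseteq X_j$ one gets $\chi(\SSS)\cup(\chi(\SSS')\cap\chi(X_i))\subseteq\chi(\SSS)\cup(\chi(\SSS')\cap\chi(X_j))$, and the cardinality bound from $\SSS\preceq_j\SSS'$ transfers to $\preceq_i$. The paper's proof is the same one-line monotonicity observation, which you merely spell out in more detail.
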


  \begin{proof}
  Since $X_i \subseteq X_j$, we have $|\chi(\SSS) \cup \chi(\SSS') \cap \chi(X_i)| \leq |\chi(\SSS) \cup \chi(\SSS') \cap \chi(X_j)|$.
  \end{proof}
\fi

\iflong
\begin{lemma}
\label{lem:join}
Let $X_i$ be a bag, $\pi$ a pattern for $X_i$, and $\SSS_1, \SSS'_1, \SSS_2, \SSS'_2, \SSS, \SSS'$ sequences that conform to $(X_i, \pi)$ and that satisfy the following: $\SSS'_1 \preceq_i \SSS_1$, $\SSS'_2 \preceq_i \SSS_2$,
$\chi(\SSS_1) \cup \chi(\SSS_2) = \chi(\SSS)$, $\chi(\SSS'_1) \cup \chi(\SSS'_2) = \chi(\SSS')$, and $\chi(\SSS_1) \cap \chi(\SSS_2) \subseteq \chi(X_i)$. Then $\SSS' \preceq_i \SSS$.
\end{lemma}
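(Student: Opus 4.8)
The plan is to strip the statement down to a purely combinatorial inequality about color sets and then verify it by decomposing everything with respect to $\chi(X_i)$. Write $A_1=\chi(\SSS_1)$, $A_2=\chi(\SSS_2)$, $B_1=\chi(\SSS'_1)$, $B_2=\chi(\SSS'_2)$, and $X=\chi(X_i)$. The hypothesis $\chi(\SSS_1)\cup\chi(\SSS_2)=\chi(\SSS)$ gives $\chi(\SSS)=A_1\cup A_2$ and similarly $\chi(\SSS')=B_1\cup B_2$, so by Definition~\ref{def:preceq} the assumptions $\SSS'_1\preceq_i\SSS_1$ and $\SSS'_2\preceq_i\SSS_2$ read $|B_1\cup(A_1\cap X)|\le|A_1|$ and $|B_2\cup(A_2\cap X)|\le|A_2|$, the assumption $\chi(\SSS_1)\cap\chi(\SSS_2)\subseteq\chi(X_i)$ reads $A_1\cap A_2\subseteq X$, and the desired conclusion $\SSS'\preceq_i\SSS$ reads $|(B_1\cup B_2)\cup((A_1\cup A_2)\cap X)|\le|A_1\cup A_2|$. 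Note that for this inequality neither the conformity of the sequences nor the bound $|\chi(\cdot)|\le k$ is used; only the two displayed color-set equalities matter.

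The first step is to record the elementary identity that, for any sets $S,T$, one has the disjoint decompositions $S\cup(T\cap X)=(S\setminus X)\sqcup\bigl((S\cup T)\cap X\bigr)$ and $(S\cup T)\cap X=(T\cap X)\sqcup\bigl((S\setminus T)\cap X\bigr)$, hence $|S\cup(T\cap X)|=|S\setminus X|+|T\cap X|+|(S\setminus T)\cap X|$. Applying this with $(S,T)=(B_j,A_j)$ and cancelling the common term $|A_j\cap X|=|A_j|-|A_j\setminus X|$ turns each hypothesis into the clean form
\[
|B_j\setminus X|+|(B_j\setminus A_j)\cap X|\ \le\ |A_j\setminus X|,\qquad j\in\{1,2\}.
\]
Applying the same identity with $(S,T)=(B_1\cup B_2,\,A_1\cup A_2)$ and cancelling $|(A_1\cup A_2)\cap X|$ turns the conclusion into the equivalent statement
\[
|(B_1\cup B_2)\setminus X|+\bigl|\bigl((B_1\cup B_2)\setminus(A_1\cup A_2)\bigr)\cap X\bigr|\ \le\ |(A_1\cup A_2)\setminus X|.
\]

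The last step, which carries the only genuine idea, is to bound the left side of this target from above and the right side from below, and then appeal to the two clean hypotheses. On the right, the assumption $A_1\cap A_2\subseteq X$ forces $A_1\setminus X$ and $A_2\setminus X$ to be disjoint, so $|(A_1\cup A_2)\setminus X|=|A_1\setminus X|+|A_2\setminus X|$. On the left, $|(B_1\cup B_2)\setminus X|\le|B_1\setminus X|+|B_2\setminus X|$ by subadditivity, and every color in $\bigl((B_1\cup B_2)\setminus(A_1\cup A_2)\bigr)\cap X$ lies in $B_1$ or in $B_2$ while avoiding both $A_1$ and $A_2$, hence lies in $(B_1\setminus A_1)\cap X$ or in $(B_2\setminus A_2)\cap X$, giving $\bigl|\bigl((B_1\cup B_2)\setminus(A_1\cup A_2)\bigr)\cap X\bigr|\le|(B_1\setminus A_1)\cap X|+|(B_2\setminus A_2)\cap X|$. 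Adding these two bounds and substituting the clean hypotheses for $j=1$ and $j=2$ yields the upper bound $|A_1\setminus X|+|A_2\setminus X|=|(A_1\cup A_2)\setminus X|$, which is exactly the target. I do not anticipate a real obstacle here: once the sets are split into their parts inside and outside $X$ the argument is cardinality bookkeeping, the single point needing care being to invoke $\chi(\SSS_1)\cap\chi(\SSS_2)\subseteq\chi(X_i)$ precisely at the disjointness of $A_1\setminus X$ and $A_2\setminus X$ (it is used nowhere else).
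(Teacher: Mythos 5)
Your proof is correct, and it proves exactly the purely set-theoretic inequality that the paper's proof targets --- with $A_j=\chi(\SSS_j)$, $B_j=\chi(\SSS'_j)$, $X=\chi(X_i)$, derive $|(B_1\cup B_2)\cup((A_1\cup A_2)\cap X)|\le|A_1\cup A_2|$ from the two relations $\SSS'_j\preceq_i\SSS_j$ and from $A_1\cap A_2\subseteq X$ --- but the bookkeeping is organized differently. The paper writes each hypothesis in the inclusion--exclusion form $|B_j|+|A_j\cap X|-|B_j\cap A_j\cap X|\le|A_j|$, adds them, subtracts $|A_1\cap A_2|$, and then runs a chain of substitutions ($|B_1|+|B_2|=|B_1\cup B_2|+|B_1\cap B_2|$, $|A_1\cap X|+|A_2\cap X|=|(A_1\cup A_2)\cap X|+|A_1\cap A_2\cap X|$, and so on), invoking $A_1\cap A_2\subseteq X$ twice inside the algebra before discarding nonnegative slack terms. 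You instead normalize everything along the partition of colors into those inside and outside $X$: each hypothesis collapses, after cancelling $|A_j\cap X|$, to $|B_j\setminus X|+|(B_j\setminus A_j)\cap X|\le|A_j\setminus X|$, the conclusion collapses to the analogous statement for the unions, and the hypothesis $\chi(\SSS_1)\cap\chi(\SSS_2)\subseteq\chi(X_i)$ enters exactly once, as the disjointness of $A_1\setminus X$ and $A_2\setminus X$; the rest is subadditivity of cardinality. The two routes deliver the same statement (and, as you note, neither uses conformity nor the bound $k$), but your inside/outside-$X$ decomposition localizes the use of the key hypothesis and replaces the paper's longer inclusion--exclusion chain with two clean one-line estimates, which makes the mechanism of the lemma more transparent.
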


\begin{proof}
Let $A= \chi(\SSS_1), B= \chi(\SSS_2), C= \chi(\SSS), A'= \chi(\SSS'_1), B'= \chi(\SSS'_2), C'= \chi(\SSS')$, and $X=\chi(X_i)$.
Since $\SSS'_1 \preceq_i \SSS_1$ we have:

 \begin{eqnarray}
  |A' \cup A \cap X| & \leq & |A| \nonumber \\
  |A'| + |A\cap X| -  |A' \cap A \cap X| & \leq & |A|. \label{eq10}
 \end{eqnarray}

 Since $\SSS'_2 \preceq_i \SSS_2$ we have:

 \begin{eqnarray}
  |B' \cup B \cap X| & \leq & |B| \nonumber \\
  |B'| + |B\cap X| -  |B' \cap B \cap X| & \leq & |B|. \label{eq11}
 \end{eqnarray}

 Adding Inequality (\ref{eq10}) to (\ref{eq11}) and subtracting $|A\cap B|$ from each side of the resulting inequality, we obtain:

 \begin{eqnarray}
  |A'| + |B'| + |A \cap X| + |B \cap X| - |A' \cap A \cap X| - |B' \cap B \cap X| - |A \cap B| \leq |A \cup B|. \label{eq12}
 \end{eqnarray}

 Replacing in the last Inequality (\ref{eq12}) $|A'| + |B'|$ by $|A' \cup B'| + |A' \cap B'|$, and $|A \cap X| + |B \cap X|$ by $|(A \cup B) \cap X| + |A \cap B \cap X|$, observing that $A \cap B \cap X = A \cap B$ (because $A \cap B \subseteq X$), and simplifying, we get:
{
\footnotesize
  \begin{eqnarray*}
   |A' \cup B'| + |(A \cup B) \cap X| + |A' \cap B'| -|A' \cap A \cap X| - |B' \cap B \cap X|  \leq  |A \cup B| ~ \\
   |(A' \cup B') \cup (A \cup B) \cap X| + |(A' \cup B') \cap (A \cup B) \cap X| + |A' \cap B'| -|A' \cap A \cap X| - |B' \cap B \cap X|  \leq  |A \cup B|. \nonumber
 \end{eqnarray*}
}

 Replacing $-|A' \cap A \cap X| - |B' \cap B \cap X|$ in the last inequality with $-(|A' \cap A \cup B' \cap B) \cap X| + |A' \cap A \cap B' \cap B \cap X|)=-(|A' \cap A \cup B' \cap B) \cap X| + |A' \cap A \cap B' \cap B|)$ (because $A \cap B \subseteq X$), and observing that $|(A' \cap A \cup B' \cap B) \cap X| \leq |(A' \cup B') \cap (A \cup B) \cap X|$, and $|A' \cap A \cap B' \cap B| \leq |A' \cap B'|$, we conclude that:

\begin{eqnarray}
   |(A' \cup B') \cup (A \cup B) \cap X|  & \leq & |A \cup B|. \label{eq14}
\end{eqnarray}

 Inequality (\ref{eq14}) establishes that $\SSS' \preceq_i \SSS$.
\end{proof}
\fi

\ifshort
The following lemma uses the upper bound on the cardinality of a minimal set of $k$-valid $u$-$v$ paths w.r.t.~a bag $X_i$, derived in Lemma~\ref{lem:one_path_bound}, to obtain an upper bound on the cardinality of a representative set w.r.t.~a bag and a fixed pattern $(X_i, \pi)$:

\begin{lemma}[\appno{} Lemma~5.9]
	\label{lem:representativesbound}
	Let $X_i$ be bag, $\pi$ a pattern for $X_i$, and $\RRR_{\pi}$ be a representative set for $(X_i, \pi)$. Then the number of sequences in $\RRR_{\pi}$ is at most $h(k)^{|X_i|^2}$, where $h(k) =\Oh(c^{k^2} k^{2k^2+k})$, for some constant $c >1$.
\end{lemma}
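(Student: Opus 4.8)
The plan is to encode every sequence in $\RRR_\pi$ by a tuple with one coordinate per index $j\in[r-1]$, where the $j$-th coordinate ranges over a set of size at most $h(k)^{|X_i|}$, and to argue that this encoding is injective on $\RRR_\pi$; since a pattern has $r\le |X_i|$ distinct vertices, this gives $|\RRR_\pi|\le (h(k)^{|X_i|})^{|X_i|}=h(k)^{|X_i|^2}$. The first ingredient is that the map $\SSS\mapsto \chi(\SSS)\cap\chi(X_i)$ is injective on $\RRR_\pi$: if $\SSS_1,\SSS_2\in\RRR_\pi$ are distinct with $\chi(\SSS_1)\cap\chi(X_i)=\chi(\SSS_2)\cap\chi(X_i)$ and, say, $|\chi(\SSS_1)|\le|\chi(\SSS_2)|$, then $|\chi(\SSS_1)\cup(\chi(\SSS_2)\cap\chi(X_i))|=|\chi(\SSS_1)|\le|\chi(\SSS_2)|$, so $\SSS_1\preceq_i\SSS_2$ by Definition~\ref{def:preceq}, contradicting property (i) of Definition~\ref{def:representativeset}.

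Fix $\pi=(v_1,\sigma_1,\dots,\sigma_{r-1},v_r)$; recall that the $v_j$ are pairwise distinct vertices of $X_i$ (a pattern records the trace of an induced $s$--$t$ path on the bag), so $r\le |X_i|$. For each $j\in[r-1]$ with $\sigma_j=1$, fix a set $\PPP_j$ of $k$-valid $v_j$-$v_{j+1}$ paths in $G_{v_jv_{j+1}}^i$ that is minimal w.r.t.~$X_i$ and, moreover, maximal, i.e.\ no $k$-valid $v_j$-$v_{j+1}$ path of $G_{v_jv_{j+1}}^i$ can be added to it while keeping it minimal w.r.t.~$X_i$. By Lemma~\ref{lem:one_path_bound}, $|\PPP_j|\le h(k)^{|X_i|}$. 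The key claim is: for every $\SSS=(P_1,\dots,P_{r-1})\in\RRR_\pi$ and every $j$ with $\sigma_j=1$, there is $P\in\PPP_j$ with $\chi(P)\cap\chi(X_i)=\chi(P_j)\cap\chi(X_i)$. To see this, note that $P_j$ is a $k$-valid $v_j$-$v_{j+1}$ path in $G_{v_jv_{j+1}}^i$, and by property (ii) of Definition~\ref{def:representativeset} it is color-minimal there (no $v_j$-$v_{j+1}$ path $P'$ of $G_{v_jv_{j+1}}^i$ has $\chi(P')\subsetneq\chi(P_j)$). If the claim failed, then $\PPP_j\cup\{P_j\}$ would still be minimal w.r.t.~$X_i$: condition (i) of Definition~\ref{def:minimalpathbag} holds since no path of $\PPP_j$ has the same intersection with $\chi(X_i)$ as $P_j$; condition (iii) holds since $P_j$ is color-minimal; and condition (ii) holds because every $P\in\PPP_j$ is also color-minimal in $G_{v_jv_{j+1}}^i$ (condition (iii) for $\PPP_j$), so any inclusion $\chi(P)\subseteq\chi(P_j)$ or $\chi(P_j)\subseteq\chi(P)$ would force $\chi(P)=\chi(P_j)$ and hence equal intersections with $\chi(X_i)$, a contradiction. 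This contradicts the maximality of $\PPP_j$, and proves the claim.

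Now define $\Phi\colon\RRR_\pi\to\prod_{j:\sigma_j=1}\PPP_j$ by choosing, for each $\SSS=(P_1,\dots,P_{r-1})$ and each $j$ with $\sigma_j=1$, a path $\phi_j(\SSS)\in\PPP_j$ with $\chi(\phi_j(\SSS))\cap\chi(X_i)=\chi(P_j)\cap\chi(X_i)$ (which exists by the key claim), and put $\Phi(\SSS)=(\phi_j(\SSS))_{j:\sigma_j=1}$. If $\Phi(\SSS_1)=\Phi(\SSS_2)$, then $\chi(P^{(1)}_j)\cap\chi(X_i)=\chi(P^{(2)}_j)\cap\chi(X_i)$ for every $j$ with $\sigma_j=1$, while for $j$ with $\sigma_j=0$ both $P^{(1)}_j,P^{(2)}_j$ are empty and contribute nothing; taking the union over $j\in[r-1]$ yields $\chi(\SSS_1)\cap\chi(X_i)=\chi(\SSS_2)\cap\chi(X_i)$, so $\SSS_1=\SSS_2$ by the injectivity established in the first paragraph. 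Hence $\Phi$ is injective, and $|\RRR_\pi|\le\prod_{j:\sigma_j=1}|\PPP_j|\le(h(k)^{|X_i|})^{r-1}\le h(k)^{|X_i|^2}$.

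The only delicate point is the key claim of the second paragraph --- in particular ruling out an inclusion $\chi(P_j)\subsetneq\chi(P)$ with $P\in\PPP_j$, which would otherwise obstruct adding $P_j$ to $\PPP_j$. This is precisely where it matters that $P_j$ and every path of $\PPP_j$ lie in the same graph $G_{v_jv_{j+1}}^i$ and are color-minimal in it, so that color-inclusion between any two of them is automatically equality; this parallels the analogous step in the proof of Lemma~\ref{lem:one_path_bound}, where minimal sets w.r.t.~single vertices play the role of the $\PPP_j$.
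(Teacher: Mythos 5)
Your proof is correct and follows essentially the same route as the paper's: you bound the number of per-coordinate traces on $\chi(X_i)$ by fixing, for each $j$ with $\sigma_j=1$, a maximal minimal set $\PPP_j$ of $k$-valid $v_j$-$v_{j+1}$ paths w.r.t.~$X_i$ (bounded via Lemma~\ref{lem:one_path_bound}), prove via properties (ii)/(iii) of minimality and property (ii) of representative sets that every $P_j$'s intersection with $\chi(X_i)$ is realized in $\PPP_j$, and then use property (i) of representative sets to make the resulting signature map injective. The only cosmetic difference is that the paper counts distinct signatures directly while you package the same counting as an explicit injection into $\prod_j\PPP_j$; the substance is identical.
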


For each bag $X_i$, we maintain a table $\Gamma_i$ that contains, for each pattern for $X_i$, a representative set of sequences $\RRR_{\pi}$ for $(X_i, \pi)$. The rest is a technical dynamic programming algorithm over $(\VVV, \TTT)$ that computes the table $\Gamma_i$ at a bag $X_i$ for each bag type (introduce, forget, join) in the nice tree decomposition. The algorithm and its analysis are given in the full paper \app. We conclude with the following theorem:

\begin{theorem}[\appno{} Theorem~5.12]
\label{thm:treewidth}
There is an algorithm that on input $(G, C, \chi, s, t, k)$ of \cmor{}, either outputs a $k$-valid $s$-$t$ path in $G$ or decides that no such path exists, in time $\Oh^{\star}(f(k)^{6\omega^2})$, where $\omega$ is the treewidth of $G$ and $f(k) =\Oh(c^{k^2} k^{2k^2+k})$, for some constant $c >1$. Therefore, \cmor{} parameterized by both $k$ and the treewidth $\omega$ of the input graph is \FPT.	
\end{theorem}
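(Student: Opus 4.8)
The plan is to run a bottom-up dynamic program over a nice tree decomposition $(\VVV,\TTT)$ of $G$ of width $\omega$ (obtainable in time FPT in $\omega$), where, following the convention fixed after Assumption~\ref{ass:sat}, $s$ and $t$ have been inserted into every bag. At every bag $X_i$ and every pattern $\pi$ for $X_i$ (Definition~\ref{def:pattern}) the algorithm maintains in its table $\Gamma_i$ a representative set $\RRR_\pi$ for $(X_i,\pi)$ in the sense of Definition~\ref{def:representativeset}; the invariant to be proved by induction from the leaves to the root is exactly that $\Gamma_i$ contains a representative set for every pattern for $X_i$. Granting the invariant, the answer is read off at the root bag $X_r=\{s,t\}$: there $G_{st}^r=G$, so after shortcutting any $k$-valid $s$-$t$ path to an induced one, a $k$-valid $s$-$t$ path exists in $G$ if and only if some sequence conforms to $(X_r,(s,1,t))$, which by property (iii) of a representative set is equivalent to $\RRR_{(s,1,t)}\neq\emptyset$ in $\Gamma_r$; when nonempty, the stored (refined) path is output.

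The substance is the four transitions and the verification that each restores the invariant. Leaf bags equal $\{s,t\}$ with empty interior, so only the pattern $(s,0,t)$ with the empty sequence survives, and initialization is immediate. At an \emph{introduce} node $X_i=X_j\cup\{x\}$, the vertex $x$ has no neighbour in $V_i\setminus X_i$, so no genuine segment can use $x$ as an interior vertex; hence every sequence for a pattern of $X_j$ lifts to sequences for patterns of $X_i$ by optionally inserting $x$ as a new boundary vertex (with $\sigma=0$ on both incident slots, or joined through an edge of $G[X_i]$), after which I would restore properties (i)--(iii) by discarding, relative to the transitive relation $\preceq_i$ (Lemma~\ref{lem:transitive}), every sequence that is $\preceq_i$-dominated by another and every non-colour-minimal segment. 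At a \emph{forget} node $X_i=X_j\setminus\{x\}$, a sequence for a pattern $\pi$ of $X_i$ arises either unchanged from $\Gamma_j[\pi]$, or from $\Gamma_j[\pi']$ where $\pi'$ is obtained from $\pi$ by inserting $x$ as a boundary vertex flanked by two genuine segments ($\sigma=1$ on both incident slots), by gluing (via $\circ$) those two segments at $x$; the glued object is a walk, which I would refine to an induced $v$-$v'$ path in $G_{vv'}^i$ without enlarging its colour set, dropping it if the colour budget $k$ is exceeded, and then prune as above (patterns of $X_j$ in which $x$ carries a $\sigma=0$ slot describe infeasible partial solutions and are discarded). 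At a \emph{join} node $X_i=X_j=X_k$, for each pattern $\pi$ I would take every pair $(\SSS_1,\SSS_2)\in\RRR^{(j)}_\pi\times\RRR^{(k)}_\pi$ that is slot-wise complementary (per slot, a genuine segment on at most one side) and satisfies $\chi(\SSS_1)\cap\chi(\SSS_2)\subseteq\chi(X_i)$, merge them slot-by-slot (refining glued walks to induced paths), keep the result when $|\chi(\SSS_1)\cup\chi(\SSS_2)|\le k$, and prune. That merging respects domination --- if $\SSS'_1\preceq_i\SSS_1$ and $\SSS'_2\preceq_i\SSS_2$ then the merge of $\SSS'_1,\SSS'_2$ is $\preceq_i$ the merge of $\SSS_1,\SSS_2$ --- is exactly Lemma~\ref{lem:join}; together with Lemma~\ref{lem:transitive} and the fact that restricting to a sub-bag only weakens $\preceq$ (Observation~\ref{obs:jtoi}), this gives that the pruned table is again a representative set. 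The side condition $\chi(\SSS_1)\cap\chi(\SSS_2)\subseteq\chi(X_i)$ is without loss of generality for genuine $s$-$t$ paths because $X_i$ separates $V_j\setminus X_i$ from $V_k\setminus X_i$, so by color-connectivity any colour appearing on both halves of the path already appears on a vertex of $X_i$ (Observation~\ref{obs:colorconnectivity}).

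For the running time, Lemma~\ref{lem:one_path_bound} bounds a minimal set of $k$-valid $u$-$v$ paths w.r.t.\ $X_i$ by $h(k)^{|X_i|}$ with $h(k)=\Oh(c^{k^2}k^{2k^2+k})$, and (via the signature argument across the $|X_i|$ vertices of the bag) any representative set for $(X_i,\pi)$ has at most $h(k)^{|X_i|^2}$ sequences; since the number of patterns for $X_i$ is at most $|X_i|^{|X_i|}2^{|X_i|}=2^{\Oh(\omega\log\omega)}$, each table $\Gamma_i$ has at most $2^{\Oh(\omega\log\omega)}\cdot h(k)^{\Oh(\omega^2)}$ sequences. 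Each transition processes, per pattern, the product of at most two incoming tables, i.e.\ $h(k)^{\Oh(\omega^2)}$ candidates, each refined to an induced path and tested for $\preceq_i$-domination in time polynomial in $n$ and in the table size; with $\Oh(\omega n)$ bags in $\TTT$, a routine accounting yields total time $\Oh^{\star}(f(k)^{6\omega^2})$ with $f=h$, hence FPT in $(k,\omega)$, proving the theorem. The main obstacle I anticipate is the forget and join steps: showing that after gluing/merging, refining walks to induced paths while keeping colour sets non-increasing, and pruning, the three defining properties of a representative set --- above all property (iii) --- are restored. This rests on the transitivity of $\preceq_i$, its monotonicity under restriction, the join lemma, and a careful case analysis of how a hypothetical optimal $s$-$t$ path crosses the bag so that its restriction to each subtree decomposes into segments each dominated by one already recorded, whose dominators reassemble into a sequence in $\Gamma_i$ that $\preceq_i$-dominates the restriction.
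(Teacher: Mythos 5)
Your plan is correct and follows essentially the same route as the paper's proof: a dynamic program over a nice tree decomposition (with $s,t$ added to every bag) maintaining, for each bag and pattern, a representative set in the sense of Definition~\ref{def:representativeset}, with the leaf/introduce/forget/join transitions handled by gluing, refining walks to induced paths, and pruning under $\preceq_i$, correctness via Lemma~\ref{lem:transitive}, Observation~\ref{obs:jtoi} and Lemma~\ref{lem:join}, and the size/time bounds via Lemma~\ref{lem:one_path_bound} and the signature argument. The only deviations are cosmetic: your extra join filter $\chi(\SSS_1)\cap\chi(\SSS_2)\subseteq\chi(X_i)$ is automatically satisfied by color-connectivity (so it is redundant but harmless), and your explicit discarding of over-budget merges is implicit in the paper's conformance requirement.
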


\section{Extensions and Applications}
\label{sec:extension}
In this section, we explain how to extend the \FPT{} result for \cmor{} w.r.t.~the parameterization by both $k$ and the treewidth of the graph, to the parameterization by both $k$ and the length $\ell$ of the sought path, and discuss important applications of this extended result. We formally define the problem w.r.t.~the parameterization by $k$ and $\ell$:

\paramproblem{\bcmor} {A planar graph $G$; a set of colors $C$; $\chi: V \longrightarrow 2^{C}$; two designated vertices $s, t \in V(G)$; and $k, \ell \in \nat$}{Does there exist a $k$-valid $s$-$t$ path of length at most $\ell$ in $G$?} \\

To extend Theorem~\ref{thm:treewidth}, we repeatedly contract every edge $uv$ incident to a vertex $v$ whose distance to $s$ is more than $\ell +1$; we assign the resulting vertex the color set $\chi(u) \cup \chi(v)$. (We do not delete such vertices in order to preserve the color-connectivity property.) Afterwards, the radius of $G$ is at most $\ell+1$, and hence $G$ has treewidth at most $3\cdot(\ell+1)+1=3\ell+4$~\cite{rs} (\appno{} Lemma~6.3).
Although the treewidth of $G$ is bounded by a function of $\ell$, we cannot use the \FPT{} algorithm for \cmor{}, parameterized by $k$ and the treewidth of $G$, to solve \bcmor{} because a $k$-valid path returned by the algorithm for \cmor{} may have length more than $\ell$. In fact, extending the \FPT{} results for \cmor{} to \bcmor{} turns out to be a nontrivial task. We state the result below, and refer to \app{} Section~6 for details.

\begin{theorem}[\appno{} Theorem~6.15]
\label{thm:ext_treewidth}
\bcmor{} parameterized by both $k$ and the length of the path is \FPT.	
\end{theorem}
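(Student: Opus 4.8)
The plan is to re-run the dynamic program behind Theorem~\ref{thm:treewidth} with the lengths of the partial paths threaded through the state, after first replacing $G$ by a graph whose treewidth is bounded in terms of $\ell$. As already noted, a $k$-valid $s$-$t$ path of length at most $\ell$ lives inside the ball of radius $\ell$ around $s$, so contracting every edge incident to a vertex at distance more than $\ell+1$ from $s$ (assigning the merged vertex the union of the two colour sets, and keeping rather than deleting vertices so that colour-connectivity is preserved) does not change the answer; afterwards $G$ has radius at most $\ell+1$, hence treewidth $\omega\le 3\ell+4$~\cite{rs}. Thus it suffices to solve \bcmor{} on instances whose treewidth is bounded by a function of $\ell$. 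Simply invoking Theorem~\ref{thm:treewidth} on such an instance does not work: its representative sets are optimised with respect to colours only, so the $k$-valid path it returns may be far longer than $\ell$ even when a short one exists.

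To repair this, I would enrich every notion of Section~\ref{sec:algo} with length information. To a bag $X_i$ together with a pattern $\pi=(v_1,\sigma_1,\dots,\sigma_{r-1},v_r)$ (Definition~\ref{def:pattern}) attach a \emph{length profile} $\vec\lambda=(\lambda_1,\dots,\lambda_{r-1})$ with $\lambda_j=0$ whenever $\sigma_j=0$, $1\le\lambda_j\le\ell$ whenever $\sigma_j=1$, and $\sum_j\lambda_j\le\ell$. A sequence $\SSS=(P_1,\dots,P_{r-1})$ now \emph{conforms} to $(X_i,\pi,\vec\lambda)$ if it conforms to $(X_i,\pi)$ and, in addition, each $P_j$ has length exactly $\lambda_j$. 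The relation $\preceq_i$ of Definition~\ref{def:preceq} is kept verbatim, but is only ever applied between sequences carrying the same triple $(X_i,\pi,\vec\lambda)$, and a representative set $\RRR_{\pi,\vec\lambda}$ is defined exactly as in Definition~\ref{def:representativeset} but relative to sequences conforming to $(X_i,\pi,\vec\lambda)$. Fixing $\vec\lambda$ merely restricts the family of conforming sequences, so the counting argument behind Lemma~\ref{lem:one_path_bound} still gives $|\RRR_{\pi,\vec\lambda}|\le h(k)^{|X_i|^2}$; and since $r-1\le|X_i|-1\le\omega\le 3\ell+4$, each pattern carries at most $(\ell+1)^{\omega}$ profiles, so the table stored at a bag has size bounded by a function of $k$ and $\ell$ only.

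The recurrences are the obvious refinements of those in Section~\ref{sec:algo}. Leaf and introduce nodes update the profiles in a straightforward way (a just-introduced bag vertex, if placed on the path, is joined to its bag-neighbours by length-one subpaths). At a join node, two conforming sequences coming from the two children, vertex-disjoint off the bag (which is automatic from the tree-decomposition property), are combined coordinatewise, and the profile of the result is read off coordinatewise by copying the length of the unique real subpath present in each position. At a forget node the two subpaths incident to the forgotten vertex are concatenated and their lengths add; any resulting profile whose coordinates sum to more than $\ell$ is discarded, which is sound because such a partial solution can never be completed to an $s$-$t$ path of length at most $\ell$. Correctness is the correctness proof of Theorem~\ref{thm:treewidth} carried out inside each fixed length class: the restriction of a $k$-valid $s$-$t$ path of length at most $\ell$ to the subtree rooted at $X_i$ conforms to $(X_i,\pi,\vec\lambda)$ for a well-defined $(\pi,\vec\lambda)$, and because $\preceq_i$-exchanges now stay inside a single class, they never alter the length of any subpath, hence never alter the total length of the $s$-$t$ path being reconstructed; so completing a representative of $\RRR_{\pi,\vec\lambda}$ with the part of the original solution lying above $X_i$ yields a $k$-valid $s$-$t$ path of the same length, still at most $\ell$. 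The answer is read at the root bag $\{s,t\}$: accept iff $\RRR_{(s,1,t),(\lambda_1)}\neq\emptyset$ for some $\lambda_1\le\ell$.

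I expect the main obstacle to be exactly this bookkeeping: one has to verify that indexing by the \emph{per-subpath} lengths (rather than by nothing, or by only the total length) is simultaneously \emph{necessary} — so that every $\preceq_i$-exchange performed by the dynamic program preserves the eventual path length and not merely the colour count — and \emph{affordable}, which is where the bound $\omega\le 3\ell+4$ coming from the contraction step is crucial, since it caps the number of subpaths per pattern and hence the number of length profiles by a function of $\ell$. Everything else is a routine, if tedious, re-verification of the three bag types of Section~\ref{sec:algo} with the extra length coordinate threaded through, yielding an algorithm that runs in time $\Oh^{\star}(f(k,\ell))$ for some computable $f$, and hence showing that \bcmor{} is \FPT{} parameterized by $k$ and $\ell$.
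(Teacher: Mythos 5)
Your overall skeleton---contract edges incident to vertices at distance more than $\ell+1$ from $s$ to force radius at most $\ell+1$ and treewidth at most $3\ell+4$, then run the tree-decomposition dynamic program with per-subpath lengths attached to the patterns and with the exchange relation restricted so that it never changes lengths---is the same plan the paper follows. The genuine gap is the sentence claiming that ``fixing $\vec\lambda$ merely restricts the family of conforming sequences, so the counting argument behind Lemma~\ref{lem:one_path_bound} still gives $|\RRR_{\pi,\vec\lambda}|\le h(k)^{|X_i|^2}$.'' That size bound is exactly the nontrivial part, and it does not survive the length restriction. The bounds of Lemma~\ref{lem:one_path_bound} and Lemma~\ref{lem:representativesbound} are proved for families satisfying conditions (ii) and (iii) of Definition~\ref{def:minimalpathbag} and property (ii) of Definition~\ref{def:representativeset}, i.e.\ every stored path has an inclusion-minimal colour set among \emph{all} paths of $G^i_{uv}$, and the underlying single-vertex bound (Theorem~\ref{thm:main}) is itself proved by making the path system irreducible via the colour-contraction operation (Definition~\ref{def:colorcontraction}) before extracting the small separator of Lemma~\ref{lem:separator}. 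Under a length budget neither ingredient is available: contracting an edge can shorten $s$-$t$ distances, so it is no longer answer-preserving, and a path of length exactly $\lambda_j$ whose colour set is not inclusion-minimal may be indispensable because every path with a strictly smaller colour set is too long. If you keep Definition~\ref{def:representativeset}(ii) verbatim inside a length class, a representative set may fail to exist at all (no sequence of that class may be stored, yet the covering property demands one), so the dynamic program loses valid solutions; if instead you relax (ii) to be length-aware, the signature injection into products of (length-unrestricted) minimal path sets---the entire proof of Lemma~\ref{lem:representativesbound}---no longer applies, and you are left with no bound on the table size.

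What is needed, and what the paper supplies, is a structural bound re-proved for fixed lengths: a $\lambda$-minimal set is a family of $k$-valid $u$-$v$ paths, all of length exactly $\lambda$, with pairwise distinct traces $\chi(P)\cap\chi(w)$ and no inclusion conditions; the union of a colour-disjoint such family is shown to have a $u$-$v$ separator of size $2(\lambda+1)$ by an argument that uses the length bound itself in place of irreducibility; one then inducts on $\lambda$ (rather than on $k$) to bound the colour-disjoint family by $g(\lambda)$, and finally obtains $|\PPP|\le h(k,\lambda)=\Oh(c^{\lambda k}\cdot k^k\cdot\lambda^{3\lambda k})$. Note that this bound necessarily depends on $\ell$ as well as $k$---which is fine for the theorem, but is weaker than the $h(k)^{|X_i|^2}$ you assert. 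The refinement step must also change: the $2^k$-subset search of Lemma~\ref{lem:refinetime} may return an arbitrarily long path, so it is replaced by taking a shortest $v_j$-$v_{j+1}$ path in the subgraph of vertices whose colours lie in $\chi(W_j)$, which simultaneously shrinks the colour set and never increases the length. With these repairs, your bookkeeping of length profiles and the introduce/forget/join updates and root test go through essentially as you describe.
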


We now describe applications of Theorem~\ref{thm:ext_treewidth}. The first application is a direct consequence of this theorem.

\begin{corollary}
\label{cor:boundedlengthaapp}
For any computable function $h$, the restriction of \cmor{} to instances in which the length of the path is at most $h(k)$ is \FPT{} parameterized by $k$.
\end{corollary}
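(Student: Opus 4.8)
The plan is to reduce directly to Theorem~\ref{thm:ext_treewidth}. Given an instance $(G, C, \chi, s, t, k)$ of \cmor{} coming from the restricted class, where we only care about $k$-valid $s$-$t$ paths of length at most $h(k)$, I would form the instance $(G, C, \chi, s, t, k, \ell)$ of \bcmor{} by setting $\ell := h(k)$. Since $h$ is computable, $\ell$ is computable from $k$ alone, so this transformation runs in time depending only on $k$ (plus the time to copy the instance).

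First I would observe that the transformation preserves the answer: by definition of \bcmor{}, the instance $(G, C, \chi, s, t, k, \ell)$ with $\ell = h(k)$ is a yes-instance if and only if $G$ contains a $k$-valid $s$-$t$ path of length at most $h(k)$, which is precisely the question asked in the restricted version of \cmor{}. (For the restricted class we are guaranteed that if any $k$-valid $s$-$t$ path exists, then one of length at most $h(k)$ exists, so deciding \bcmor{} with this $\ell$ decides \cmor{} on these instances.)

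Then I would invoke Theorem~\ref{thm:ext_treewidth}, which provides an algorithm deciding \bcmor{} in time $g(k, \ell)\cdot n^{O(1)}$ for some computable function $g$, where $n = |V(G)|$. Substituting $\ell = h(k)$ yields running time $g(k, h(k))\cdot n^{O(1)}$. Setting $g'(k) := g(k, h(k))$, which is computable because $g$ and $h$ are, the overall running time is $g'(k)\cdot n^{O(1)}$, exactly the form required for fixed-parameter tractability parameterized by $k$ alone. This establishes the corollary.

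There is essentially no obstacle here: the entire technical content resides in Theorem~\ref{thm:ext_treewidth}, and the corollary is just the standard observation that an \FPT{} algorithm in the combined parameter $(k,\ell)$ collapses to an \FPT{} algorithm in $k$ once $\ell$ is bounded by a computable function of $k$. The only point meriting a line of care is the semantics of the phrase ``instances in which the length of the path is at most $h(k)$'': it should be read as a promise that a short valid path exists whenever any valid path exists, so that solving \bcmor{} with $\ell = h(k)$ faithfully solves \cmor{} on this sub-class.
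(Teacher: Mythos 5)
Your proof is correct and follows exactly the paper's route: the paper states this corollary as a direct consequence of Theorem~\ref{thm:ext_treewidth}, obtained precisely by setting $\ell = h(k)$ and absorbing the $\ell$-dependence of the running time into a computable function of $k$. Your added remark about reading the length restriction as a promise is a reasonable clarification but introduces nothing beyond the paper's intended argument.
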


We note that the above restriction of \cmor{} is \NP-hard, as a consequence of (the proof of) \app{} Corollary~3.3.

Corollary~\ref{cor:boundedlengthaapp} directly implies Kroman \etal's results~\cite{korman}, showing that \gmor{} is \FPT{} for unit-disk obstacles and for similar-size fat-region obstacles with constant overlapping number. Using Bereg and Kirkpatrick's result~\cite{kirkpatrick3}, the length of a shortest $k$-valid path for unit-disk obstacles is at most $3k$ (see also Lemma 3 in Korman \etal~\cite{korman}). By Corollary~2 in~\cite{korman},
the length of a shortest $k$-valid path for similar-size fat-region obstacles with constant overlapping number is linear in $k$. Corollary~\ref{cor:boundedlengthaapp} generalizes these \FPT{} results, which required quite some effort, and provides an explanation to why the problem is \FPT{} for such restrictions, namely because the path length is upper bounded by a function of $k$.

The second application is related to an open question posed in~\cite{lavalle,hauser}. For an instance $I=(G, C, \chi, s, t, k)$ of \cmor{}, and a color $c \in C$, define the \emph{intersection number} of $c$, denoted $\iota(c)$, to be the number of vertices in $G$ on which $c$ appears. Define the \emph{intersection number} of $G$, $\iota(G)$, as $\max\{\iota(c) \mid c \in C\}$.

\begin{corollary}[\appno{} Corollary~6.18]
\label{cor:boundedintersection}
For any computable function $h$, \cmor{} restricted to instances $(G, C, \chi, s, t, k)$ satisfying $\iota(G) \leq h(k)$ is \FPT{} parameterized by $k$.
\end{corollary}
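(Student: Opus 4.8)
The plan is to show that, under the promise $\iota(G) \le h(k)$, every $k$-valid $s$-$t$ path can be assumed to have length bounded by a computable function of $k$, and then to invoke Theorem~\ref{thm:ext_treewidth} (equivalently, Corollary~\ref{cor:boundedlengthaapp}). The only real work is to establish the length bound, and the key point is that it holds only \emph{after} we preprocess the instance to be irreducible in the sense of Definition~\ref{def:colorcontraction}.

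First I would preprocess. Starting from an instance $(G, C, \chi, s, t, k)$ of \cmor{} with $\iota(G) \le h(k)$, repeatedly apply the color contraction operation until the resulting instance $(G', C, \chi', s', t', k)$ is irreducible. By Lemma~\ref{lem:contract}, $G'$ is again a color-connected plane graph, and $G$ has a $k$-valid $s$-$t$ path iff $G'$ has a $k$-valid $s'$-$t'$ path. Each contraction merges two vertices with identical color sets, so it cannot increase, for any color $c$, the number of vertices on which $c$ appears; hence $\iota(G') \le \iota(G) \le h(k)$. Since each contraction reduces $|V(G)|$, the preprocessing terminates in polynomial time.

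Next I would bound the length of a $k$-valid path in $G'$. Because $G'$ is irreducible, no edge joins two empty vertices (both would have color set $\emptyset$, so the color contraction operation would still apply); equivalently, the empty vertices of $G'$ form an independent set. Let $P=(v_0,\dots,v_r)$ be a $k$-valid $s'$-$t'$ path in $G'$. Every nonempty vertex of $P$ carries some color of $\chi(P)$, and $|\chi(P)|\le k$, while each color appears on at most $\iota(G')\le h(k)$ vertices of $G'$; hence $P$ has at most $k\cdot h(k)$ nonempty vertices. Since no two consecutive vertices of $P$ are both empty, at least $\lfloor (r+1)/2\rfloor$ of its vertices are nonempty, so $r \le 2k\cdot h(k)$. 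Setting $g(k)=2k\cdot h(k)$, we conclude that $G'$ has a $k$-valid $s'$-$t'$ path iff it has one of length at most $g(k)$.

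Therefore the original instance is a \YES-instance iff the \bcmor{} instance $(G', C, \chi', s', t', k, g(k))$ is a \YES-instance, and the latter can be decided in time $f(k+g(k))\cdot n^{\Oh(1)} = f'(k)\cdot n^{\Oh(1)}$ by Theorem~\ref{thm:ext_treewidth}, since $g$ is computable; equivalently, the reduced instances lie in the restricted class handled by Corollary~\ref{cor:boundedlengthaapp}. I do not anticipate a genuine obstacle here; the one subtlety worth stressing is that the length bound fails on a general instance — chains of empty vertices can make $k$-valid paths arbitrarily long — so the reduction to an irreducible instance via color contraction is precisely what makes the argument go through.
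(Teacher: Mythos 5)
Your proposal is correct and takes essentially the same route as the paper: the paper also bounds the length of any $k$-valid $s$-$t$ path by $\Oh(k\cdot \iota(G))\le \Oh(k\cdot h(k))$ and then applies Theorem~\ref{thm:ext_treewidth} (the parameterization by both $k$ and $\ell$). The only difference is that you spell out the color-contraction preprocessing (Lemma~\ref{lem:contract}) needed to exclude long runs of empty vertices before the length bound holds, a step the paper leaves implicit.
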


Corollary~\ref{cor:boundedintersection} has applications pertaining to instances of \gcmor{} whose auxiliary graph has intersection number bounded by a function of $k$. An interesting case that was studied is when the obstacles are convex polygons, each intersecting at most a constant number of other polygons. The complexity of this problem was posed as an open question in~\cite{lavalle,hauser}, and remains unresolved. Corollary~\ref{cor:boundedintersection} implies that the problem is \FPT{}, even for the more general setting in which the obstacles are arbitrary convex regions satisfying that the number of regions intersected by any region is a constant (\appno{} Theorem~6.19). (Note that convexity is important here.)

%\begin{corollary}
%\label{thm:boundedintersectiongeometric}
%Let $h$ be a computable function. The restriction of \gcmor{} to any set of connected convex obstacles in the plane satisfying that each obstacle intersects at most $h(k)$ other obstacles, is \FPT{} parameterized by $k$.
%\end{corollary}
%
%Whereas the complexity of the problem in the above theorem is open, the theorem settles its parameterized complexity by showing it to be in \FPT{}.
\fi

\iflong
\begin{lemma}
	\label{lem:representativesbound}
	Let $X_i$ be bag, $\pi$ a pattern for $X_i$, and $\RRR_{\pi}$ be a representative set for $(X_i, \pi)$. Then the number of sequences in $\RRR_{\pi}$ is at most $h(k)^{|X_i|^2}$, where $h(k) =\Oh(c^{k^2} k^{2k^2+k})$, for some constant $c >1$.
\end{lemma}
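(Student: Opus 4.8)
The plan is to bound $|\RRR_\pi|$ by a product, over the ``gaps'' of the pattern $\pi=(v_1,\sigma_1,\dots,\sigma_{r-1},v_r)$, of the number of distinct \emph{signatures} that the $j$-th component of a conforming sequence can realize, where the signature of a path $P_j$ is the set $\chi(P_j)\cap\chi(X_i)$. Since $\pi$ records the trace on $X_i$ of a simple $s$-$t$ path, its vertices $v_1,\dots,v_r$ are pairwise distinct, so $r\le|X_i|$ and there are at most $|X_i|-1$ gaps; a gap with $\sigma_j=0$ has an empty component whose signature is always $\emptyset$, so only the gaps with $\sigma_j=1$ will contribute.

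First I would prove that the map sending $\SSS=(P_1,\dots,P_{r-1})\in\RRR_\pi$ to its signature tuple $(\chi(P_1)\cap\chi(X_i),\dots,\chi(P_{r-1})\cap\chi(X_i))$ is injective on $\RRR_\pi$. Since $\chi(\SSS)=\bigcup_j\chi(P_j)$, two sequences $\SSS_1,\SSS_2\in\RRR_\pi$ with equal signature tuples satisfy $\chi(\SSS_1)\cap\chi(X_i)=\chi(\SSS_2)\cap\chi(X_i)$. Plugging this equality into Definition~\ref{def:preceq} collapses $\SSS_1\preceq_i\SSS_2$ to $|\chi(\SSS_1)|\le|\chi(\SSS_2)|$ and $\SSS_2\preceq_i\SSS_1$ to the reverse inequality, so one of the two relations must hold; by property~(i) of Definition~\ref{def:representativeset} this forces $\SSS_1=\SSS_2$. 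Hence $|\RRR_\pi|$ is at most the product, over the gaps $j$ with $\sigma_j=1$, of the number of distinct signatures realized in coordinate $j$ by the sequences of $\RRR_\pi$.

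The main step is to bound, for a fixed gap $j$ with $\sigma_j=1$, this number of distinct signatures by $h(k)^{|X_i|}$. Let $\mathcal{Q}_j$ be the set of all paths that occur as the $j$-th component of some sequence in $\RRR_\pi$; by Definition~\ref{def:pattern} these are $k$-valid $v_j$-$v_{j+1}$ paths in $G_{v_jv_{j+1}}^i$, and by property~(ii) of Definition~\ref{def:representativeset} each of them is minimal with respect to color-set containment among all $v_j$-$v_{j+1}$ paths in $G_{v_jv_{j+1}}^i$. I would then select $\mathcal{B}_j\subseteq\mathcal{Q}_j$ consisting of exactly one path per distinct signature, and verify that $\mathcal{B}_j$ is a set of $k$-valid $v_j$-$v_{j+1}$ paths in $G_{v_jv_{j+1}}^i$ that is minimal with respect to $X_i$ in the sense of Definition~\ref{def:minimalpathbag}: condition~(i) holds by construction; condition~(iii) is exactly property~(ii) of the representative set; and for condition~(ii), if $\chi(P_1)\subseteq\chi(P_2)$ for $P_1,P_2\in\mathcal{B}_j$, then the containment-minimality of $\chi(P_2)$ forces $\chi(P_1)=\chi(P_2)$, hence $P_1$ and $P_2$ have equal signatures, hence $P_1=P_2$. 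Applying Lemma~\ref{lem:one_path_bound} to $\mathcal{B}_j$ (with $u=v_j$ and $v=v_{j+1}$) gives $|\mathcal{B}_j|\le h(k)^{|X_i|}$, which is precisely the number of distinct signatures in gap $j$. Combining with the first step, $|\RRR_\pi|\le\bigl(h(k)^{|X_i|}\bigr)^{|X_i|-1}\le h(k)^{|X_i|^2}$.

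I expect the main obstacle to be the careful verification that $\mathcal{B}_j$ meets all three conditions of Definition~\ref{def:minimalpathbag}, and in particular recognizing that property~(ii) of the representative set is exactly what supplies condition~(iii) there, which is what licenses invoking the single-bag bound of Lemma~\ref{lem:one_path_bound} gap by gap; the injectivity argument in the first step is routine set arithmetic of the same flavor as the proof of Lemma~\ref{lem:transitive}, and the passage from ``distinct signature tuples'' to the product bound is immediate.
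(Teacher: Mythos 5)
Your proof is correct and follows essentially the same route as the paper's: injectivity of the signature-tuple map on $\RRR_\pi$ via property (i) of Definition~\ref{def:representativeset}, then a per-gap bound of $h(k)^{|X_i|}$ on the number of realizable signatures obtained from Lemma~\ref{lem:one_path_bound}. The only (minor, harmless) difference is local: you verify directly that one realized path per signature forms a minimal set w.r.t.~$X_i$ in the sense of Definition~\ref{def:minimalpathbag}, whereas the paper fixes an auxiliary maximal minimal set $\PPP_j$ and shows by an exchange argument that every realized signature already occurs on a path of $\PPP_j$.
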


\begin{proof}
	Let $\pi =(v_1=s, \sigma_1, v_2, \sigma_2, \dots, \sigma_{r-1}, v_r=t)$ and let $v_j$ and $v_{j+1}$ be two consecutive vertices in $\pi$ such that $\sigma_j=1$. For each $j\in [r-1]$ such that $\sigma_j=1$, let $\PPP_j$ be a minimal set of $k$-valid $v_j$-$v_{j+1}$ paths w.r.t.~$X_i$. Without loss of generality, we can pick $\PPP_j$ such that there is no $k$-valid $u$-$v$ path $P$ in $G_{v_jv_{j+1}}^i$ such that $\PPP_j \cup \{P\}$ is minimal w.r.t.~$X_i$. From Lemma~\ref{lem:one_path_bound}, it follows that $|\PPP_j|\le h(k)^{|X_i|}$, where $h(k) =\Oh(c^{k^2} k^{2k^2+k})$, for some constant $c >1$.
	For a sequence $\SSS= (P_1,\ldots, P_{r-1})$ in $\RRR_{\pi}$, we define the \emph{signature} of $\SSS$ (w.r.t.~$X_i$) to be the tuple $(\chi(P_1)\cap\chi(X_i), \ldots, \chi(P_{r-1})\cap\chi(X_i))$. Observe that if $\SSS_1$ and $\SSS_2$ have the same signature w.r.t.~$X_i$, then $\chi(\SSS_1)\cup (\chi(\SSS_2)\cap \chi(X_i))=\chi(\SSS_1)$ and $\chi(\SSS_2)\cup (\chi(\SSS_1)\cap \chi(X_i))=\chi(\SSS_2)$; hence, either $\SSS_1\preceq_i \SSS_2$ or $\SSS_2\preceq_i \SSS_1$. It follows from property (i) of representative sets that no two sequences in $\RRR_{\pi}$ have the same signature w.r.t.~$X_i$. Now let $\SSS = (P_1,\ldots, P_{r-1})$ be a sequence in $\RRR_{\pi}$ with a signature $(C_1,\ldots, C_{r-1})$. Note that if $C_j\neq\emptyset$, then $P_j$ is not the empty path, and hence $\sigma_j=1$. We show that for each $j\in [r-1]$ such that $C_j\neq\emptyset$, there is a path $P\in \PPP_j$ such that $\chi(P)\cap\chi(X_i) = C_j$. Suppose, for a contradiction, that this is not the case. Then for some $j\in [r-1]$ such that $C_j\neq \emptyset$, there is no path  $P\in \PPP_j$ such that $\chi(P)\cap\chi(X_i) = C_j$.  Clearly, $P_j\notin\PPP_j$, and therefore, by our choice of $\PPP_j$, the set $\PPP_j\cup \{P_j\}$ is not a minimal set w.r.t.~$X_i$. By assumption, $\PPP_j\cup \{P_j\}$ does not contradict property (i) in the definition of minimal set of paths w.r.t.~$X_i$. Moreover, since $\SSS\in \RRR_\pi$, it follows from property (ii) of representative sets that $P_j$, and hence $\PPP_j\cup \{P_j\}$, satisfies property (iii) of minimal set of paths w.r.t.~$X_i$. Therefore, $\PPP_j\cup \{P_j\}$ has to contradict property (ii) in the definition of minimal set of paths w.r.t.~$X_i$, and there are two paths $Q_1,Q_2\in\PPP_j\cup \{P_j\}$ such that $\chi(Q_1)\subseteq \chi(Q_2)$. However, if $\chi(Q_1) = \chi(Q_2)$, then $Q_1$ and $Q_2$ contradict property (i) of a minimal set of paths w.r.t.~$X_i$, and if $\chi(Q_1)\subsetneq \chi(Q_2)$, then $Q_2$ contradicts property (iii), and we already established that $\PPP_j\cup \{P_j\}$ satisfies properties (i) and (iii). Therefore, $\PPP_j\cup \{P_j\}$ is a set of minimal paths w.r.t.~$X_i$, which is a contradiction. We conclude that, for each $j\in [r-1]$ such that $C_j\neq\emptyset$, there is a path $P\in \PPP_j$ such that $\chi(P)\cap\chi(X_i) = C_j$. It follows that the number of signatures of paths in $\PPP_{uv}$ is at most
	$\prod_{j=1}^{r-1}|\PPP_j| \leq h(k)^{|X_i|^2}$. Since no two distinct sequences in $\RRR_{\pi}$ have the same signature, it follows that $|\RRR_{\pi}| \leq h(k)^{|X_i|^2}$.
\end{proof}

For each bag $X_i$, we maintain a table $\Gamma_i$ that contains, for each
pattern for $X_i$, a representative set of sequences $\RRR_{\pi}$ for $(X_i, \pi)$.
For two vertices $u, v \in X_i$ and two $u$-$v$ paths $P, P'$ in $G_{uv}^{i}$, we say that $P'$ \emph{refines} $P$ if $\chi(P') \subseteq \chi(P)$. For two sequences $\SSS=(P_1, \ldots, P_{r-1})$ and $\SSS'=(P'_1, \ldots, P'_{r-1})$ that conform to $(X_i, \pi)$, we say that $\SSS'$ \emph{refines} $\SSS$ if each path $P'_j$ refines $P_j$, for $j \in [r-1]$.

\begin{lemma}
\label{lem:refinetime}
Let $X_i$ be a bag, $\pi =(v_1=s, \sigma_1, v_2, \sigma_2, \dots, \sigma_{r-1}, v_r=t)$ a pattern for $X_i$, and $\WWW=(W_1, \ldots, W_{r-1})$ a sequence of walks, where each $W_j$ is a walk between vertices $v_j$ and $v_{j+1}$ in $G_{v_jv_{j+1}}^{i}$ satisfying $\chi(W_j) \leq k$. Then in time $\Oh^{*}(2^k)$ we can compute a sequence $\SSS=(P_1, \ldots, P_{r-1})$ of induced paths, where each $P_j$ is an induced path between vertices $v_j$ and $v_{j+1}$ in $G_{v_jv_{j+1}}^{i}$ such that $\chi(P_j) \subseteq \chi(W_j)$, for $j \in [r-1]$, and such that $\SSS$ satisfies property (ii) of representative sets.
\end{lemma}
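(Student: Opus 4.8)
The plan is to treat each index $j\in[r-1]$ separately. For the indices $j$ with $\sigma_j=0$ there is nothing to do: we set $P_j$ to the empty path (in accordance with Definition~\ref{def:pattern}), so that $\chi(P_j)=\emptyset\subseteq\chi(W_j)$ and property (ii) holds trivially. So fix an index $j$ with $\sigma_j=1$, write $H_j=G_{v_jv_{j+1}}^{i}=G[(V_i\setminus X_i)\cup\{v_j,v_{j+1}\}]$, and set $C_0=\chi(W_j)$, so that $|C_0|\le k$. Since the walk $W_j$ contains a $v_j$-$v_{j+1}$ path as a subwalk and every vertex of that subpath lies on $W_j$ and hence has color set contained in $C_0$, the graph $H_j$ does contain at least one $v_j$-$v_{j+1}$ path whose colors lie in $C_0$. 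The key observation is a ``localization'' of property (ii) to the colors of $W_j$: if $P$ is a $v_j$-$v_{j+1}$ path of $H_j$ with $\chi(P)\subseteq C_0$ whose color set is inclusion-minimal \emph{among the color sets of $v_j$-$v_{j+1}$ paths of $H_j$ contained in $C_0$}, then $\chi(P)$ is inclusion-minimal among the color sets of \emph{all} $v_j$-$v_{j+1}$ paths of $H_j$; indeed, any path $P'$ with $\chi(P')\subsetneq\chi(P)$ would also satisfy $\chi(P')\subseteq C_0$, contradicting the assumed minimality. Hence it suffices, for this $j$, to produce a $v_j$-$v_{j+1}$ path of $H_j$ whose color set is a minimum-cardinality subset of $C_0$ realized by some $v_j$-$v_{j+1}$ path of $H_j$.

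To do this I would enumerate all at most $2^{|C_0|}\le 2^k$ subsets $C'$ of $C_0$ in nondecreasing order of cardinality. For each $C'$, let $H_j[C']$ be the subgraph of $H_j$ induced by the vertices $v$ with $\chi(v)\subseteq C'$, and test by breadth-first search whether $H_j[C']$ contains a $v_j$-$v_{j+1}$ path; take the first $C'$ for which this holds, let $P_j$ be a shortest $v_j$-$v_{j+1}$ path in $H_j[C']$, and move on to the next index. A shortest path between two vertices in a simple graph is chordless, so $P_j$ is an induced path of $H_j[C']$ and therefore of $H_j$ (an edge of $H_j$ between two vertices of $P_j$ would also be an edge of the induced subgraph $H_j[C']$), and all internal vertices of $P_j$ lie in $V_i\setminus X_i$ because $v_j$ and $v_{j+1}$ are the only vertices of $X_i$ contained in $H_j$. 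By the choice of $C'$ as a minimum-cardinality realizable subset of $C_0$ and the fact that $\chi(P_j)\subseteq C'$ is itself realizable, we get $\chi(P_j)=C'$, so $\chi(P_j)$ is inclusion-minimal among the color sets of $v_j$-$v_{j+1}$ paths of $H_j$ contained in $C_0$; by the localization above, the resulting sequence $\SSS=(P_1,\dots,P_{r-1})$ satisfies property (ii), while $\chi(P_j)\subseteq C_0=\chi(W_j)$ by construction.

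For the running time, the loop runs at most $2^k$ breadth-first searches, each of cost polynomial in $|V(G)|+|E(G)|$, and this is repeated over the at most $r-1\le|X_i|$ indices $j$; since $|X_i|$ is polynomially bounded in the input size, the overall running time is $\Oh^{*}(2^k)$. There is no serious obstacle here: the two points needing care are the localization argument of the first paragraph—which relies on every color set we ever compare being contained in the fixed set $C_0=\chi(W_j)$, so that ``minimality within $C_0$'' already implies global inclusion-minimality—and the verification that a shortest path extracted inside the induced subgraph $H_j[C']$ is still induced in $H_j$; both are immediate.
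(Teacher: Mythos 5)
Your proposal is correct and follows essentially the same approach as the paper: for each $j$, enumerate the subsets of $\chi(W_j)$ in nondecreasing order of cardinality, restrict $G_{v_jv_{j+1}}^{i}$ to vertices whose color sets lie in the current subset, and take an induced (shortest) $v_j$-$v_{j+1}$ path for the first subset that admits one. Your added details (the localization of inclusion-minimality to subsets of $\chi(W_j)$ and the observation that a shortest path in the induced subgraph is induced in $G_{v_jv_{j+1}}^{i}$) are exactly the points the paper leaves implicit with ``it is clear that.''
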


\begin{proof}
For each walk $W_j$, $j \in [r-1]$, we do the following. For each subset $C' \subseteq \chi(W)$ considered in a nondecreasing order of cardinality, we form the subgraph $G'$ from $G_{v_jv_{j+1}}^{i}$ by removing every vertex $x$ in $G_{v_jv_{j+1}}^{i}$ that does not satisfy $\chi(x) \subseteq C'$. We then check if there is a $v_j$-$v_{j+1}$ induced path in $G'$, and set $P_j$ to this path if it exists. It is clear that the path $P_j$ satisfies $\chi(P_j) \subseteq \chi(W_j)$ and that the sequence  $\SSS'=(P_1, \ldots, P_{r-1})$ conforms to $\pi$ w.r.t.~$X_i$ and satisfies property (ii) of representative sets. Since each $W_j$ satisfies $\chi(W_j) \leq k$, we can enumerate all subsets of $\chi(W_j)$ in time $\Oh^{*}(2^k)$. Since checking if there is an induced $v_j$-$v_{j+1}$ path in $G'$ takes polynomial time, it follows that computing $P_j$ from $W_j$ takes $\Oh^{*}(2^k)$,  and so does the computation of $\SSS$.
\end{proof}

For a bag $X_i$, pattern $\pi$ for $X_i$, and a set of sequences $\RRR$ that conform to $(X_i, \pi)$, we define the procedure {\bf Refine()} that takes the set $\RRR$ and outputs a set $\RRR'$ of sequences that conform to $(X_i, \pi)$, and does not violate properties (i) and (ii). For each sequence $\SSS$ in $\RRR$, we compute a sequence $\SSS'$ that refines $\SSS$ and satisfies property (ii), and replace $\SSS$ with $\SSS'$ in $\RRR$. Afterwards, we initialize $\RRR' = \emptyset$, and order the sequences in $\RRR$ arbitrarily. We iterate through the sequences in $\RRR$ in order, and add a sequence $\SSS_p$ to $\RRR'$ if there is no sequence $\SSS$ already in $\RRR'$ such that $\SSS \preceq_i  \SSS_p$, and there is no sequence $\SSS_q \in \RRR$, $q > p$ (\ie, $\SSS_q$ comes after $\SSS_p$ in the order), such that $\SSS_q \preceq \SSS_p$.

\begin{lemma}
\label{lem:refine}
Let $X_i$ be a bag, $\pi =(v_1=s, \sigma_1, v_2, \sigma_2, \dots, \sigma_{r-1}, v_r=t)$ a pattern for $X_i$, and $\WWW=(W_1, \ldots, W_{r-1})$ a sequence of walks, where each $W_j$ is a walk between vertices $v_j$ and $v_{j+1}$ in $G_{v_jv_{j+1}}^{i}$ satisfying $\chi(W_j) \leq k$. The procedure {\bf Refine()} on input $\WWW$ produces a set of sequences $\RRR'$ that conforms to $(X_i, \pi)$ satisfying properties (i) and (ii), and such that for each sequence $\SSS \in \WWW$, there is a sequence $\SSS' \in \RRR'$ satisfying $\SSS' \preceq_i \SSS$. Moreover, the procedure runs in time $\Oh^{*}(2^k|\WWW|+|\WWW|^2)$.
\end{lemma}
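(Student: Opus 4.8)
The plan is to analyze the two phases of \textbf{Refine()} in turn. Write $m=|\WWW|$ for the number of sequences in the input collection $\WWW$, and for $t\in[m]$ let $\WWW_t=(W_1^t,\dots,W_{r-1}^t)$ be its $t$-th sequence of walks (conforming to $(X_i,\pi)$, so $|\chi(\WWW_t)|\le k$). In the first phase, for each $t$ I would invoke Lemma~\ref{lem:refinetime} on $\WWW_t$ to obtain, in time $\Oh^{*}(2^k)$, a sequence $\SSS_t=(P_1^t,\dots,P_{r-1}^t)$ of induced paths with $\chi(P_j^t)\subseteq\chi(W_j^t)$ for all $j\in[r-1]$ and such that $\SSS_t$ satisfies property (ii) of representative sets. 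First I would check $\SSS_t$ conforms to $(X_i,\pi)$: when $\sigma_j=1$, $P_j^t$ is an induced $v_j$-$v_{j+1}$ path in $G^i_{v_jv_{j+1}}$, whose only $X_i$-vertices are its two endpoints, so its internal vertices lie in $V_i\setminus X_i$; when $\sigma_j=0$, $W_j^t$ is empty and so is $P_j^t$; and $|\chi(\SSS_t)|=|\bigcup_j\chi(P_j^t)|\le|\bigcup_j\chi(W_j^t)|=|\chi(\WWW_t)|\le k$. Moreover, since $\chi(\SSS_t)\subseteq\chi(\WWW_t)$ we have $\chi(\SSS_t)\cup(\chi(\WWW_t)\cap\chi(X_i))\subseteq\chi(\WWW_t)$, hence $\SSS_t\preceq_i\WWW_t$ (the relation $\preceq_i$, like transitivity, depends only on color sets, so it applies to sequences of walks). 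So phase one produces $\{\SSS_1,\dots,\SSS_m\}$, each conforming, each satisfying (ii), with $\SSS_t\preceq_i\WWW_t$.

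Next I would treat the pruning phase, which scans $\SSS_1,\dots,\SSS_m$ in the fixed order and adds $\SSS_p$ to $\RRR'$ exactly when no already-added $\SSS\in\RRR'$ satisfies $\SSS\preceq_i\SSS_p$ and no $\SSS_q$ with $q>p$ satisfies $\SSS_q\preceq_i\SSS_p$. Since $\RRR'\subseteq\{\SSS_1,\dots,\SSS_m\}$ and pruning never modifies a sequence, conformance and property (ii) are inherited. For property (i): if $\SSS_a,\SSS_b\in\RRR'$ with $a\neq b$ and $\SSS_a\preceq_i\SSS_b$, then either $a<b$, in which case $\SSS_a$ was in $\RRR'$ when $\SSS_b$ was processed and the first condition would have blocked $\SSS_b$, or $a>b$, in which case $\SSS_a$ was a strictly later sequence with $\SSS_a\preceq_i\SSS_b$ and the second condition would have blocked $\SSS_b$; both contradict $\SSS_b\in\RRR'$, so (i) holds.

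For the domination property, fix an input sequence $\WWW_t$ and set $E=\{\,\SSS_j : \SSS_j\preceq_i\WWW_t\,\}$, which is nonempty since $\SSS_t\in E$. Let $\SSS_a$ be the element of $E$ with the largest index $a$. When $\SSS_a$ is processed, the second blocking condition cannot apply: a later $\SSS_c$ ($c>a$) with $\SSS_c\preceq_i\SSS_a$ would satisfy $\SSS_c\preceq_i\WWW_t$ by transitivity (Lemma~\ref{lem:transitive}), hence $\SSS_c\in E$, contradicting maximality of $a$. Therefore either $\SSS_a$ is added to $\RRR'$ — and $\SSS_a\preceq_i\WWW_t$ — or it is blocked by the first condition, meaning some already-added $\SSS_b\in\RRR'$ with $b<a$ has $\SSS_b\preceq_i\SSS_a$, whence $\SSS_b\preceq_i\WWW_t$ by transitivity and $\SSS_b\in\RRR'$ (as $\RRR'$ only grows). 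In either case $\RRR'$ contains a sequence $\preceq_i\WWW_t$. Finally, for the running time: phase one makes $m$ calls to Lemma~\ref{lem:refinetime}, costing $\Oh^{*}(2^k)$ each, so $\Oh^{*}(2^k m)$; phase two performs $\Oh(m^2)$ comparisons of the form $\SSS\preceq_i\SSS'$, each of which only computes cardinalities of unions of color sets of total size $\Oh(k)$ and is polynomial in the input, giving $\Oh^{*}(m^2)$. Hence \textbf{Refine()} runs in $\Oh^{*}(2^k|\WWW|+|\WWW|^2)$ time. The main obstacle I anticipate is getting the pruning invariants right simultaneously—property (i) and the domination property—which hinges on the observation that the latest-processed dominator of any input sequence is either kept or blocked by something already kept, together with transitivity of $\preceq_i$; the remaining verifications (conformance, property (ii), timing) are routine.
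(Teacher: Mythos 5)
Your proposal is correct and follows essentially the same route as the paper: refine each sequence via Lemma~\ref{lem:refinetime} (cost $\Oh^{*}(2^k)$ per sequence), then prune with the two blocking conditions, and total the time as $\Oh^{*}(2^k|\WWW|+|\WWW|^2)$. The paper's own proof is only a runtime sketch that leaves conformance, properties (i)--(ii), and the domination property implicit; your explicit verification — in particular the largest-index-dominator argument combined with transitivity of $\preceq_i$ — fills in exactly the details the paper takes for granted, with no gap.
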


\begin{proof}
By Lemma~\ref{lem:refinetime}, refining a sequence in $\WWW$ takes $\Oh^{*}(2^k)$ time, and hence, refining all sequences in $\WWW$ takes $\Oh^{*}(2^k|\WWW|)$ time. After refining $\WWW$, we initialize $\RRR'$ to the empty set, and iterate through the sequences in $\WWW$, adding a sequence $\SSS_p \in \WWW$ to $\RRR'$ if there is no sequence $\SSS$ already in $\RRR'$ such that $\SSS \preceq_i  \SSS_p$; clearly this takes $\Oh^{*}(|\WWW|^2)$ time, and the lemma follows.
\end{proof}

If a bag $X_i$ is a leaf in $\mathcal{T}$, then $X_i=V_i=\{s, t\}$, and there are only two patterns $(s,0,t)$ and $(s,1,t)$ for $X_i$. Clearly, the only sequence that conforms to $(s,0,t)$ is the sequence $(())$ containing exactly one empty path. Moreover, there is  no edge $st \in E(G)$. Therefore, there is no sequence that conforms to $(s,1,t)$, and the following claim holds:
% is only one representative set $\RRR_{(s,0,t)}$, which is empty (because there is no edge $st \in G$). Therefore, we set $\Gamma_i=\{((s,0,t), \emptyset)\}$.
\begin{claim}\label{claim:leaf}
	If a bag $X_i$ is a leaf in $\mathcal{T}$, then $\Gamma_i=\{((s,0,t), \{(())\} ), ((s,1,t), \emptyset)\}$ contains, for each
	pattern for $X_i$, a representative set for $(X_i, \pi)$.
\end{claim}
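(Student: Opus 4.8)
The plan is to unwind the definitions in the degenerate case of a leaf bag. First I would record the structure of such a bag: since $(\VVV,\TTT)$ is nice and we have added $s$ and $t$ to every bag, a leaf bag satisfies $X_i=\{s,t\}$, and the subtree of $\TTT$ rooted at $X_i$ consists of $X_i$ alone, so $V_i=\{s,t\}$ and $V_i\setminus X_i=\emptyset$. By Assumption~\ref{ass:sat}, $s$ and $t$ are nonadjacent empty vertices, hence $|\chi(s)|=|\chi(t)|=0\le k$, so $U_i=\{s,t\}$ and $\overline{U_i}=\emptyset$. A pattern for $X_i$ starts at $s$, ends at $t$, and its internal vertices come from $U_i=\{s,t\}$ and are pairwise distinct (a pattern records the order in which an $s$-$t$ path meets the bag); since $s\ne t$, the only patterns are $\pi_0=(s,0,t)$ and $\pi_1=(s,1,t)$, so it suffices to exhibit a representative set for each of $(X_i,\pi_0)$ and $(X_i,\pi_1)$.

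For $\pi_0=(s,0,t)$: a sequence conforming to $(X_i,\pi_0)$ has the form $\SSS=(P_1)$ with $P_1$ empty (because $\sigma_1=0$), and trivially $|\chi(\SSS)|=0\le k$; thus $(())$ is the unique sequence conforming to $(X_i,\pi_0)$. I would then verify the three conditions of Definition~\ref{def:representativeset} for $\RRR_{\pi_0}=\{(())\}$: condition~(i) holds since $\RRR_{\pi_0}$ contains only one sequence; condition~(ii) holds vacuously since $\SSS$ contains no genuine path (its single component corresponds to $\sigma_1=0$, and in any case $\chi(())=\emptyset$ admits no proper subset); and condition~(iii) holds vacuously since no conforming sequence lies outside $\RRR_{\pi_0}$. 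For $\pi_1=(s,1,t)$: a sequence conforming to $(X_i,\pi_1)$ would contain an induced $s$-$t$ path whose internal vertices lie in $V_i\setminus X_i=\emptyset$, i.e.\ the edge $st$; but $st\notin E(G)$ by Assumption~\ref{ass:sat}, so no sequence conforms to $(X_i,\pi_1)$, and $\RRR_{\pi_1}=\emptyset$ satisfies conditions~(i)--(iii) vacuously. Hence $\Gamma_i=\{(\pi_0,\{(())\}),(\pi_1,\emptyset)\}$ provides a representative set for every pattern of $X_i$.

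This is the base case of the dynamic program, so there is no genuine obstacle. The only points requiring a little care are (a) making explicit that the single component of $(())$ is not a genuine path, so that condition~(ii) of Definition~\ref{def:representativeset} is vacuous rather than a real constraint, and (b) pinning down the pattern set to exactly $\{(s,0,t),(s,1,t)\}$, which is where the emptiness and non-adjacency of $s,t$ (Assumption~\ref{ass:sat}) together with the distinctness of the vertices in a pattern are used.
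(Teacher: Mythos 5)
Your proposal is correct and follows essentially the same route as the paper, which simply notes that a leaf bag (after adding $s,t$) satisfies $X_i=V_i=\{s,t\}$, that the only two patterns are $(s,0,t)$ and $(s,1,t)$, that $(())$ is the unique sequence conforming to the former, and that nonadjacency of $s$ and $t$ (Assumption~\ref{ass:sat}) rules out any sequence conforming to the latter. Your additional explicit verification of conditions (i)--(iii) of the representative-set definition is just a more detailed write-up of the same argument.
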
	

 We describe next how to update the table stored at a bag $X_i$, based on the tables stored at its children in $\mathcal{T}$.
We distinguish the following cases based on the type of bag $X_i$.

\begin{enumerate}
	\item[Case 1.] $X_i$ is an introduce node with child $X_j$. Let $X_i= X_j \cup\{v\}$.
\end{enumerate}
Clearly, for every pattern $\pi$ for $X_i$ that does not contain $v$, we can set $\Gamma_i[\pi] = \Gamma_j[\pi]$. $\Gamma_j[\pi]$ is a representative set for $(X_i, \pi)$ for the following reasons: (i) follows because every color in $\chi(X_i)\setminus \chi(X_j)$ does not appear in $V_j$, since $X_i$ is a vertex-separator in $G$ separating $v$ and $V_j$ and colors are connected. Hence, if two sequences in $\Gamma_j[\pi]$ that conform to $(X_i, \pi)$ contradict (i), they contradict (i) w.r.t.~$(X_j, \pi)$ as well, but $\Gamma_j[\pi]$ is a representative set for $(X_j, \pi)$. For properties (ii) and (iii), it is easy to observe that $v$ does not appear on any path between two vertices in $\pi$ having internal vertices in $V_i\setminus X_i$, and hence, these properties are inherited from the child node $X_j$.

Now let $\pi=(v_1=s, \sigma_1, v_2, \sigma_2, \dots, \sigma_{r-1}, v_r=t)$ be a pattern such that $v_q=v$, $q \in \{2, \ldots, r-1\}$, and let $\pi'=(v_1, \sigma_1,\ldots v_{q-1}, 0, v_{q+1},\sigma_{q+1}, \ldots, \sigma_{r-1}, v_r)$.
Note that since $X_j$ is a separator between $v$ and $V_j$, the only possibility for a path from $v$ to a different vertex in $X_i$ to have all internal vertices in $V_i\setminus X_i$ is if it is a direct edge. Therefore, if $\sigma_{q-1}=1$ (resp. $\sigma_{q}=1$) then $v_{q-1}v$ (resp. $v_qv$) is an edge in $G$. Otherwise, there is no sequence that conforms to $(X_i, \pi)$.

We obtain $\Gamma_i[\pi]$ from $\Gamma_j[\pi']$ as follows. For every $\SSS' = (P'_1, P'_2, \ldots, P'_{r-2}) \in \Gamma_j[\pi']$, we replace the empty path corresponding to $0$ between $v_{q-1}$ and $v_{q+1}$ in $\pi'$ by two paths  $P_{q-1}, P_{q}$ such that $P_{q-1} = ()$ (resp. $P_{q} = ()$)  if $\sigma_{q-1}=0$ (resp. $\sigma_{q}=0$) and $P_{q-1} = (v_{q-1},v)$ (resp. $P_{q-1} = (v,v_{q})$) otherwise and we obtain $\SSS=(P'_1, \dots, P'_{q-2}, P_{q-1}, P_{q}, P'_q, \ldots, P'_{r-2})$. Denote by $\RRR_\pi$ the set of all formed sequences $\SSS$.  Finally, we set $\Gamma_i[\pi]= \mbox{\bf Refine}(\RRR_\pi)$. We claim that $\Gamma_i[\pi]$ is a representative set for $(X_i, \pi)$.

\begin{claim}\label{claim:introduce}
	If $X_i$ is an introduce node with child $X_j$, and $\Gamma_{j}$ contains for each
	pattern $\pi'$ for $X_j$ a representative set for $(X_j, \pi')$, then $\Gamma_i[\pi]$ defined above is a representative set for $(X_i, \pi)$.
\end{claim}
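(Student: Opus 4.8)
The plan is to verify the three conditions of Definition~\ref{def:representativeset} for $\Gamma_i[\pi]=\textbf{Refine}(\RRR_\pi)$. Conditions (i) and (ii) are immediate from the specification of $\textbf{Refine}$ (Lemma~\ref{lem:refine}): its output is a set of sequences conforming to $(X_i,\pi)$ that satisfies (i) and (ii), and moreover for every $\SSS\in\RRR_\pi$ it contains some sequence $\WWW$ with $\WWW\preceq_i\SSS$. (Implicitly we also drop from $\RRR_\pi$ any inserted sequence that uses more than $k$ colors, so that the elements of $\RRR_\pi$ do conform to $(X_i,\pi)$; we will see below that this never removes a sequence that is needed.) So the whole content of the claim is condition (iii).

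For (iii), fix a sequence $\SSS=(P_1,\dots,P_{r-1})$ that conforms to $(X_i,\pi)$ and in which no two paths share a vertex outside $X_i$; I must produce $\WWW\in\Gamma_i[\pi]$ with $\WWW\preceq_i\SSS$. First I would invoke the standard tree-decomposition fact that $X_j$ separates $v$ from $V_j$, so $v$ has no neighbour in $V_i\setminus X_i=V_j\setminus X_j$; consequently no path of $\SSS$ other than $P_{q-1},P_q$ can contain $v$ (internal vertices of the paths lie in $V_i\setminus X_i$), and $P_{q-1},P_q$ are exactly the trivial paths forced by $\sigma_{q-1},\sigma_q$ (empty, or a single edge incident to $v$). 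Deleting $v$ — i.e.\ replacing $P_{q-1},P_q$ by one empty path between $v_{q-1}$ and $v_{q+1}$ — yields a sequence $\SSS'$. I would check that $\SSS'$ conforms to $(X_j,\pi')$: its nontrivial paths are those of $\SSS$, none touching $v$, hence induced paths with internal vertices in $V_j\setminus X_j$, and $\chi(\SSS')\subseteq\chi(\SSS)$ uses at most $k$ colors; and no two of its paths share a vertex of $V_i$ outside $X_j$, since such a vertex is either outside $X_i$ — where the hypothesis on $\SSS$ applies — or equal to $v$, which lies on none of $\SSS'$'s paths. Applying condition (iii) for $\Gamma_j[\pi']$ (or reflexivity of $\preceq_j$ in case $\SSS'\in\Gamma_j[\pi']$) yields $\WWW'\in\Gamma_j[\pi']$ with $\WWW'\preceq_j\SSS'$; let $\widetilde\WWW\in\RRR_\pi$ be obtained from $\WWW'$ by inserting the same $P_{q-1},P_q$.

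The heart of the proof is then a short cardinality computation showing $\widetilde\WWW\preceq_i\SSS$. Put $D=\chi(P_{q-1})\cup\chi(P_q)$; since $P_{q-1},P_q$ use only the vertices $v_{q-1},v,v_{q+1}$, all lying in $X_i$, we have $D\subseteq\chi(X_i)$, and by construction $\chi(\SSS)=\chi(\SSS')\cup D$ and $\chi(\widetilde\WWW)=\chi(\WWW')\cup D$. The key observation, and the place where color-connectivity is used, is that $\chi(\SSS')\cap\chi(X_i)=\chi(\SSS')\cap\chi(X_j)$: every color of $\SSS'$ lies on a vertex of $V_j$, so if such a color also lies on $v\in V(G)\setminus V_j$, its connected color class meets the separator $X_j$ between $V_j\setminus X_j$ and $V(G)\setminus V_j$, hence lies in $\chi(X_j)$; thus $\chi(\SSS')\cap\chi(v)\subseteq\chi(X_j)$, and since $\chi(X_i)=\chi(X_j)\cup\chi(v)$ the identity follows. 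From $\WWW'\preceq_j\SSS'$ and this identity, $|A|\le|\chi(\SSS')|$ where $A:=\chi(\WWW')\cup(\chi(\SSS')\cap\chi(X_i))$, while $\chi(\widetilde\WWW)\cup(\chi(\SSS)\cap\chi(X_i))=A\cup D$ (using $D\subseteq\chi(X_i)$); and since $\chi(\SSS')\cap D\subseteq\chi(\SSS')\cap\chi(X_i)\subseteq A$, we get $D\setminus A\subseteq D\setminus\chi(\SSS')$, so $|A\cup D|=|A|+|D\setminus A|\le|\chi(\SSS')|+|D\setminus\chi(\SSS')|=|\chi(\SSS')\cup D|=|\chi(\SSS)|$, i.e.\ $\widetilde\WWW\preceq_i\SSS$. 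In particular $|\chi(\widetilde\WWW)|\le|\chi(\SSS)|\le k$, so $\widetilde\WWW$ was indeed kept in $\RRR_\pi$. Finally, since $\widetilde\WWW\in\RRR_\pi$, the $\textbf{Refine}$ guarantee gives $\WWW\in\Gamma_i[\pi]$ with $\WWW\preceq_i\widetilde\WWW$, and transitivity of $\preceq_i$ (Lemma~\ref{lem:transitive}) yields $\WWW\preceq_i\SSS$, establishing (iii).

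I expect the main obstacle to be precisely the color bookkeeping of the last paragraph — getting the identity $\chi(\SSS')\cap\chi(X_i)=\chi(\SSS')\cap\chi(X_j)$ and then the inequality right — since this is what lets the $\preceq_j$-domination inherited from the child transfer to a $\preceq_i$-domination at $X_i$ even though both the bag and the pattern change. Everything else (conformance of $\SSS'$, reducing to the child bag, the closing transitivity step) is routine once the structural fact that $v$ has no neighbour below $X_i$ is in hand.
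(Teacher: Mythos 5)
Your proposal is correct and follows essentially the same route as the paper's proof: drop the forced $v$-incident paths to obtain a sequence conforming to $(X_j,\pi')$, invoke the child's representative set to get a $\preceq_j$-dominating sequence, lift it back into $\RRR_\pi$, and finish with the \textbf{Refine()} guarantee and transitivity of $\preceq_i$. The only difference is that you make explicit, via color-connectivity and the identity $\chi(\SSS')\cap\chi(X_i)=\chi(\SSS')\cap\chi(X_j)$, the transfer of $\preceq_j$-domination at $X_j$ to $\preceq_i$-domination at $X_i$, which the paper compresses into the single sentence that both sequences contain $v$ or neither does.
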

\begin{proof}

It is clear that from the application of {\bf Refine()}, $\Gamma_i[\pi]$ does not contradict properties (i)-(ii) of the definition of representative sets. Assume now that there exists a sequence $\SSS \notin \Gamma_i[\pi]$ that conforms to $(X_i, \pi)$ such that $\SSS$ violates property (iii). We define the sequence $\SSS'$ that conforms to $\pi'$, and is the same as $\SSS$ on all paths that $\pi$ and $\pi'$ share. Since no two paths in $\SSS$ share a vertex that is not in $X_i$ (since $\SSS$ violates (iii)), and all paths in $\SSS'$ are also in $\SSS$, it follows that no two paths in $\SSS'$ share a vertex that is not in $X_j$. Since $\Gamma_j[\pi']$ is a representative set for $(X_j, \pi')$, it follows that there exists $\SSS'_1 \in \Gamma_j[\pi']$ such that $\SSS'_1 \preceq_j \SSS'$. Let $\SSS_1$ be the sequence
obtained from $\SSS'_1$ and conforming to $(X_i, \pi)$. Then $\SSS_1 \in \RRR_\pi$, and hence by Lemma~\ref{lem:refine}, there is a sequence $\SSS_2 \in \Gamma_i[\pi]$ such that $\SSS_2 \preceq_i \SSS_1$. Since either both $\SSS_1$ and $\SSS$ contain $v$ or none of them does, we have $\SSS_1 \preceq_i \SSS$. By transitivity of $\preceq_i$ (Lemma~\ref{lem:transitive}), it follows that $\SSS_2 \preceq_i \SSS$. This contradicts the assumption that $\SSS$ violates property (iii).
\end{proof}

\begin{enumerate}
	\item[Case 2.] $X_i$ is a forget node with child $X_j$. Let $X_i= X_j\setminus\{v\}$.
\end{enumerate}

Let $\pi = (s=v_1,\sigma_1,\ldots,\sigma_{r-1}, v_r=t)$ be a pattern for the vertices in $X_i$. For $q\in[r-1]$, such that $\sigma_q=1$, we define  $\pi^q=(s=v'_1,\sigma'_1,\ldots, \sigma'_{r}, v'_{r+1}=t)$ to be the pattern obtained from $\pi$ by inserting $v$ between $v_q$ and $v_{q+1}$ and setting $\sigma'_q=\sigma'_{q+1}=1$. More precisely, we set $v'_p=v_p$ and $\sigma'_p=\sigma_p$  for $1\le p\le q$, $v'_{q+1}=v$ and $\sigma'_{q+1}=1$, and finally $v'_p=v_{p-1}$ and $\sigma'_p=\sigma_{p-1}$  for $q+2\le p\le r$.
We define $\RRR_{\pi}$ as follows:

{\small
\begin{equation*}
 \RRR_\pi  = \Gamma_j[\pi]  \cup  \{\SSS = (P_1, \ldots, P_{q-1}, P_q\circ P_{q+1}, P_{q+2} ,\ldots, P_{r})\mid (P_1, \ldots ,P_{r})\in \Gamma_j[\pi^q], q\in [r-1]\wedge \sigma_q=1 \}.
\end{equation*}
}

Finally, we set $\Gamma_i[\pi]= \mbox{\bf Refine}(\RRR_\pi)$ and we claim that $\Gamma_i[\pi]$ is a representative set for $(X_i, \pi)$.

\begin{claim}\label{claim:forget}
	If $X_i$ is a forget node with child $X_j$, and $\Gamma_{j}$ contains for each
	pattern $\pi'$ for $X_j$ a representative set for $(X_j, \pi')$, then $\Gamma_i[\pi]$ defined above is a representative set for $(X_i, \pi)$.
\end{claim}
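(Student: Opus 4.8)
Properties (i) and (ii) of Definition~\ref{def:representativeset} are immediate: since $\Gamma_i[\pi] = \mbox{\bf Refine}(\RRR_\pi)$, Lemma~\ref{lem:refine} already guarantees that the output conforms to $(X_i,\pi)$ and satisfies (i) and (ii). So the plan is to concentrate on property (iii). Fix a sequence $\SSS=(P_1,\ldots,P_{r-1})$ that conforms to $(X_i,\pi)$, lies outside $\Gamma_i[\pi]$, and has the property that no two of its paths share a vertex that is not in $X_i$; I must exhibit $\WWW\in\Gamma_i[\pi]$ with $\WWW\preceq_i\SSS$. The argument splits according to whether the forgotten vertex $v$ appears on some path of $\SSS$, and in both cases the final two steps are the same: I first produce a sequence $\SSS_1\in\RRR_\pi$ with $\SSS_1\preceq_i\SSS$, then invoke Lemma~\ref{lem:refine} (with $\SSS_1$ as the input sequence) to get $\WWW\in\Gamma_i[\pi]$ with $\WWW\preceq_i\SSS_1$, and conclude by transitivity of $\preceq_i$ (Lemma~\ref{lem:transitive}).

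\textbf{Case 1: $v$ lies on no path of $\SSS$.} Then every internal vertex of every $P_j$ lies in $V_i\setminus X_i=(V_j\setminus X_j)\cup\{v\}$ but is distinct from $v$, hence lies in $V_j\setminus X_j$; together with $|\chi(\SSS)|\le k$ this shows $\SSS$ conforms to $(X_j,\pi)$, and since $X_i\subseteq X_j$ no two of its paths share a vertex outside $X_j$ either. Applying property (iii) of the representative set $\Gamma_j[\pi]$ (and taking $\SSS_1=\SSS$ in case $\SSS\in\Gamma_j[\pi]$) yields $\SSS_1\in\Gamma_j[\pi]$ with $\SSS_1\preceq_j\SSS$; Observation~\ref{obs:jtoi} upgrades this to $\SSS_1\preceq_i\SSS$, and $\SSS_1\in\Gamma_j[\pi]\subseteq\RRR_\pi$, so we are in the common endgame above.

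\textbf{Case 2: $v$ lies on some path $P_q$ of $\SSS$.} Since $v\notin X_i$ and no two paths of $\SSS$ share a vertex outside $X_i$, $P_q$ is the unique such path, and since $v\neq v_q,v_{q+1}$ (both lie in $X_i$) and empty paths carry no vertices, $\sigma_q=1$ and $v$ is internal to $P_q$. Split $P_q$ at $v$ into a $v_q$--$v$ subpath $P_q'$ and a $v$--$v_{q+1}$ subpath $P_q''$; both are induced (as subpaths of the induced path $P_q$) and their internal vertices, being internal vertices of $P_q$ other than $v$, lie in $V_j\setminus X_j$. Hence $\SSS'=(P_1,\ldots,P_{q-1},P_q',P_q'',P_{q+1},\ldots,P_{r-1})$ conforms to $(X_j,\pi^q)$ (note $\chi(\SSS')=\chi(\SSS)$, so $|\chi(\SSS')|\le k$), and no two of its paths share a vertex outside $X_j$, because $P_q'$ and $P_q''$ share only $v\in X_j$ while every other coincidence already occurs inside $X_i\subseteq X_j$. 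By property (iii) of $\Gamma_j[\pi^q]$ (again taking $\SSS_1=\SSS'$ if $\SSS'\in\Gamma_j[\pi^q]$) there is $\SSS_1\in\Gamma_j[\pi^q]$ with $\SSS_1\preceq_j\SSS'$. Gluing the two $v$-incident paths of $\SSS_1$ (positions $q$ and $q+1$ of $\pi^q$) produces a walk-sequence $\SSS_1^{\circ}\in\RRR_\pi$ with $\chi(\SSS_1^{\circ})=\chi(\SSS_1)$. Using $\chi(\SSS)=\chi(\SSS')$, $X_i\subseteq X_j$, and $\SSS_1\preceq_j\SSS'$,
\[
\bigl|\chi(\SSS_1^{\circ})\cup(\chi(\SSS)\cap\chi(X_i))\bigr|
=\bigl|\chi(\SSS_1)\cup(\chi(\SSS')\cap\chi(X_i))\bigr|
\le\bigl|\chi(\SSS_1)\cup(\chi(\SSS')\cap\chi(X_j))\bigr|
\le|\chi(\SSS')|=|\chi(\SSS)|,
\]
so $\SSS_1^{\circ}\preceq_i\SSS$, and we are again in the common endgame.

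The step I expect to be the most delicate is the bookkeeping in Case~2: checking that splitting $P_q$ at $v$ and then re-gluing a possibly different pair of paths lands us back inside the admissible set $\RRR_\pi$ — this requires verifying the correct endpoints and $\sigma$-values for $\pi^q$, that all the new internal vertices lie in $V_j\setminus X_j$, $k$-validity, and the ``no two paths share a vertex outside the bag'' condition both for $\SSS'$ with respect to $X_j$ and for the glued sequence. The accompanying point is the set-algebra computation that converts $\preceq_j$ with respect to $X_j$ into $\preceq_i$ with respect to the smaller bag $X_i$; this is exactly where it matters that gluing preserves color sets and that $X_i\subseteq X_j$, and it is also why the sequence $\SSS_1^{\circ}$ must first be placed in $\RRR_\pi$ before appealing to Lemma~\ref{lem:refine}, since that lemma only compares its output with the sequences it is fed as input.
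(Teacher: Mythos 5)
Your proof is correct and follows essentially the same route as the paper's: properties (i)–(ii) via \textbf{Refine()}, then a case split on whether the forgotten vertex $v$ appears on a path of $\SSS$, handling the second case by splitting at $v$, invoking the representative set $\Gamma_j[\pi^q]$, re-gluing into $\RRR_\pi$, and finishing with Observation~\ref{obs:jtoi}, Lemma~\ref{lem:refine}, and transitivity of $\preceq_i$. The only differences are cosmetic: you argue directly rather than by contradiction, spell out the set-algebra step the paper merely asserts, and explicitly cover the corner case where the intermediate sequence already lies in the child's representative set.
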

\begin{proof}

It is straightforward to see that $\Gamma_i[\pi]$ satisfies properties (i) and (ii) due to the way procedure {\bf Refine()} works. Assume for a contradiction that there exists a sequence $\SSS$ that violates property (iii). We distinguish two cases.

First, suppose that no path in $\SSS$ contains the vertex $v$. Then this path conforms to the pattern $\pi$ in $X_j$. Since no two paths in $\SSS$ share a vertex that is not in $X_i$, and since $\Gamma_j[\pi]$ is a representative set, there exists $\SSS_1 \in \Gamma_j[\pi]$ such that $\SSS_1 \preceq_j \SSS$.
Then $\SSS_1\in \RRR_\pi$, and hence by Lemma~\ref{lem:refine}, $\Gamma_i[\pi]$ contains a sequence $\SSS_2$ such that
 $\SSS_2 \preceq_i \SSS_1$. Since $\SSS_1 \preceq_j \SSS$ and $X_i \subsetneq X_j$, it follows from Observation~\ref{obs:jtoi} that $\SSS_1 \preceq_i \SSS$. By transitivity of $\preceq_i$, it follows that $\SSS_2 \preceq_i \SSS$, which is a contradiction to the assumption that $\SSS$ violates property (iii).

Second, suppose that there is a path $P_q$ in $\SSS$ that contains $v$ on a path between $v_q$ and $v_{q+1}$. We form a sequence $\SSS'$ from $\SSS$ by keeping every path $P \neq P_q$ in $\SSS$, and replacing $P_q$ in the sequence by the two subpaths of $P_q$, $P'_q = (v_q, \ldots, v)$ and $P'_{q+1} = (v, \ldots, v_{q+1})$. The sequence $\SSS'$ conforms to $(X_j, \pi^{q})$, and since no two paths in $\SSS$ share a vertex that is not in $X_i$, no two paths in $\SSS'$ share a vertex that is not in $X_j$. Since $\Gamma_{j}[\pi^{q}]$ is a representative set for $(X_j, \pi^{q})$, it follows that there exists a sequence $\SSS'_1 \in \Gamma_{j}[\pi^{q}]$ such that $\SSS'_1 \preceq_j \SSS'$. Let $\SSS_1$ be the sequence conforming to $(X_i, \pi)$ obtained from $\SSS'_1$ by applying the operation $\circ$ to the two paths in $\SSS'_1$ that share $v$. Then $\SSS_1 \in \RRR_{\pi}$. Therefore, by Lemma~\ref{lem:refine}, $\Gamma_i[\pi]$ contains a sequence $\SSS_2$ such that $\SSS_2 \preceq_i \SSS_1$. Since $\SSS'_1 \preceq_j \SSS'$, $\chi(\SSS') = \chi(\SSS)$, $\chi(\SSS'_1) = \chi(\SSS_1)$, and $X_i \subsetneq X_j$, it follows that $\SSS_1 \preceq_i \SSS$. By transitivity of $\preceq_i$, it follows that $\SSS_2 \preceq_i \SSS$, which is a contradiction to the assumption that $\SSS$ violates property (iii).
\end{proof}

\begin{enumerate}
	\item[Case 3.] $X_i$ is a join node with children $X_j$, $X_{j'}$.
\end{enumerate}

Let $\pi = (s=v_1,\sigma_1, \dots, \sigma_{r-1}, v_r=t)$ be a pattern for $X_i$. Initialize $\RRR_{\pi}=\emptyset$. For every two patterns $\pi_1=(s=v_1,\tau_1, \dots, \tau_{r-1}, v_r=t)$ and $\pi_2=(s=v_1,\mu_1, \dots, \mu_{r-1}, v_r=t)$ such that $\sigma_q= \tau_q + \mu_q$, and for every two sequences $\SSS_1 =(P_{1}^{1}, \ldots, P_{1}^{r-1}) \in \Gamma_j[\pi_1]$ and $\SSS_2 =(P_{2}^{1}, \ldots, P_{2}^{r-1}) \in \Gamma_{j'}[\pi_2]$, we add the sequence $\SSS=(P_1, \ldots, P_{r-1})$ to $\RRR_{\pi}$, where $P_q = P_{1}^{q}$ if $P_{2}^{q}$ is the empty path, otherwise, $P_q = P_{2}^{q}$, for $q \in [r-1]$. We set $\Gamma_i[\pi]= \mbox{\bf Refine}(\RRR_\pi)$, and we claim that $\Gamma_i[\pi]$ is a representative set for $(X_i, \pi)$.

\begin{claim}\label{claim:join}
	If $X_i$ is a join node with children $X_j$, $X_{j'}$, and $\Gamma_{j}$ (resp.~$\Gamma_{j'}$) contains for each
	pattern $\pi'$ for $X_j=X_{j'}=X_i$ a representative set for $(X_j, \pi')$ (resp.~$(\pi', X_{j'})$), then $\Gamma_i[\pi]$ defined above is a representative set for $(X_i, \pi)$.
\end{claim}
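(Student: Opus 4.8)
The plan is to establish the three defining properties of a representative set for $(X_i,\pi)$. Properties (i) and (ii) come essentially for free: the join procedure only retains combined sequences that conform to $(X_i,\pi)$, so every element of $\RRR_\pi$ conforms, {\bf Refine()} is applicable, and by Lemma~\ref{lem:refine} its output $\Gamma_i[\pi]=\mbox{\bf Refine}(\RRR_\pi)$ satisfies (i) and (ii). The real work is property (iii). So let $\SSS=(P_1,\ldots,P_{r-1})$ be any sequence that conforms to $(X_i,\pi)$ in which no two paths share a vertex outside $X_i$; I must exhibit $\WWW'\in\Gamma_i[\pi]$ with $\WWW'\preceq_i\SSS$.

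Since $X_i$ is a join node with children $X_j=X_{j'}=X_i$, the vertex sets $V_j\setminus X_i$ and $V_{j'}\setminus X_i$ are disjoint and there is no edge of $G$ between them. Hence, for every $q$ with $\sigma_q=1$, the internal vertices of $P_q$ lie entirely in $V_j\setminus X_i$ or entirely in $V_{j'}\setminus X_i$ (if $P_q$ has no internal vertex it is the edge $v_qv_{q+1}$, which I assign arbitrarily to the $j$-side). Setting $\tau_q=1$ precisely when $P_q$ is on the $j$-side and $\mu_q=\sigma_q-\tau_q$, I split $\SSS$ into $\SSS_1$ (the $j$-side paths, empty paths elsewhere), which conforms to $(X_j,\pi_1)$ with $\pi_1=(v_1,\tau_1,\ldots,\tau_{r-1},v_r)$, and $\SSS_2$ (the $j'$-side paths), which conforms to $(X_{j'},\pi_2)$ with $\pi_2=(v_1,\mu_1,\ldots,\mu_{r-1},v_r)$; note $\sigma_q=\tau_q+\mu_q$. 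Each of $\SSS_1,\SSS_2$ is $k$-valid because $\chi(\SSS_1),\chi(\SSS_2)\subseteq\chi(\SSS)$, and neither has two paths sharing a vertex outside $X_j$, resp.\ $X_{j'}$, since $\SSS$ does not. A crucial point is that $\chi(\SSS_1)\cap\chi(\SSS_2)\subseteq\chi(X_i)$: a color appearing on a vertex of $\SSS_1$ (hence in $V_j$) and on a vertex of $\SSS_2$ (hence in $V_{j'}$) is connected, so the subgraph it induces contains a path from $V_j$ to $V_{j'}$, which must meet $V_j\cap V_{j'}=X_i$.

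Next I invoke property (iii) for $\Gamma_j[\pi_1]$ and for $\Gamma_{j'}[\pi_2]$ (using reflexivity of $\preceq$ should $\SSS_1$ or $\SSS_2$ already lie in the corresponding table): there are $\SSS_1'\in\Gamma_j[\pi_1]$ and $\SSS_2'\in\Gamma_{j'}[\pi_2]$ with $\SSS_1'\preceq_j\SSS_1$ and $\SSS_2'\preceq_{j'}\SSS_2$; since $X_j=X_{j'}=X_i$, the relations $\preceq_j,\preceq_{j'},\preceq_i$ coincide, so these read $\SSS_1'\preceq_i\SSS_1$ and $\SSS_2'\preceq_i\SSS_2$. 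Running the join procedure on the pair of patterns $\pi_1,\pi_2$ and sequences $\SSS_1',\SSS_2'$ produces a combined sequence $\WWW$. The cardinality computation in the proof of Lemma~\ref{lem:join} (applied with $\SSS_1,\SSS_2$ in the roles of its ``$\SSS$''-arguments and $\SSS_1',\SSS_2'$ in the roles of its ``$\SSS'$''-arguments, using $\chi(\SSS_1)\cap\chi(\SSS_2)\subseteq\chi(X_i)$) shows both that $|\chi(\WWW)\cup(\chi(\SSS)\cap\chi(X_i))|\le|\chi(\SSS)|\le k$ — so $\WWW$ is $k$-valid, conforms to $(X_i,\pi)$, and is added to $\RRR_\pi$ — and that $\WWW\preceq_i\SSS$. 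Finally, Lemma~\ref{lem:refine} yields $\WWW'\in\Gamma_i[\pi]$ with $\WWW'\preceq_i\WWW$, and transitivity of $\preceq_i$ (Lemma~\ref{lem:transitive}) gives $\WWW'\preceq_i\SSS$, completing property (iii).

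The step I expect to be the main obstacle is the color accounting at the combination: one must be certain that gluing the \emph{refined} sequences $\SSS_1',\SSS_2'$ (rather than $\SSS_1,\SSS_2$ themselves) neither overruns the budget of $k$ colors nor loses domination of $\SSS$. This is exactly what Lemma~\ref{lem:join} is engineered to guarantee, but it works only because all ``cross-side'' colors lie in $X_i$, which in turn rests on color-connectivity together with $X_i$ being a separator between the two subtrees. So the separator/color-connectivity argument for $\chi(\SSS_1)\cap\chi(\SSS_2)\subseteq\chi(X_i)$ and the correct pairing of $\pi$ with $(\pi_1,\pi_2)$ via $\sigma_q=\tau_q+\mu_q$ are the points that need the most care; everything else is bookkeeping that reduces to the already-established Lemmas~\ref{lem:one_path_bound}, \ref{lem:transitive}, \ref{lem:join}, and~\ref{lem:refine}.
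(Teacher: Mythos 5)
Your proof is correct and follows essentially the same route as the paper's: split $\SSS$ into its $j$-side and $j'$-side subsequences, invoke the children's representative sets, recombine and apply Lemma~\ref{lem:join} (using $\chi(\SSS_1)\cap\chi(\SSS_2)\subseteq\chi(X_i)$), then finish with {\bf Refine()} and transitivity of $\preceq_i$. The only differences are cosmetic: you argue directly rather than by contradiction, and you spell out the color-connectivity/separator justification for $\chi(\SSS_1)\cap\chi(\SSS_2)\subseteq\chi(X_i)$ and the $k$-validity of the combined sequence, which the paper leaves implicit.
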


\begin{proof}
Clearly $\Gamma_i[\pi]$ satisfies properties (i) and (ii) due to the application of the procedure {\bf Refine()}. To argue that $\Gamma_i[\pi]$ satisfies property (iii), suppose not, and let $\SSS =(P_1, \ldots, P_{r-1})$ be a sequence that violates property (iii). Notice that every path $P_q$, $q \in [r-1]$, is either an edge between two vertices in $X_i$, or a path between two vertices in $X_i$ such that its internal vertices are either all in $V_j \setminus X_i$ or in  $V_{j'} \setminus X_i$; this is true because $X_i$ is a vertex-separator separating $V_j \setminus X_i$ from $V_{j'} \setminus X_i$ in $G$. Define the two sequences $\SSS_1 =(P_{1}^{1}, \ldots, P_{1}^{r-1})$ and $\SSS_2 =(P_{2}^{1}, \ldots, P_{2}^{r-1})$ as follows. For $q \in [r-1]$, if $P_q$ is empty then set both $P_{1}^{q}$ and $P_{2}^{q}$ to the empty path; if $P_q$ is an edge then set $P_{1}^{q} =P_q$ and $P_{2}^{q}$ to the empty path. Otherwise, $P_q$ is either a path in $G[V_j]$ or in $G[V_{j'}]$; in the former case set $P_{1}^{q} =P_q$ and $P_{2}^{q}$ to the empty path, and in the latter case set $P_{2}^{q} =P_q$ and $P_{1}^{q}$ to the empty path. Since no two paths in $\SSS$ share a vertex that is not in $X_i$, and $X_i=X_j=X_{j'}$, no two paths in $\SSS_1$ (resp.~$\SSS_2$) share a vertex that is not in $X_j$ (resp.~$X_{j'}$). Let $\pi_1=(s=v_1,\tau_1, \dots, \tau_{r-1}, v_r=t)$ and $\pi_2=(s=v_1,\mu_1, \dots, \mu_{r-1}, v_r=t)$ be the two patterns that $\SSS_1$ and $\SSS_2$ conform to, respectively, and observe that, for every $q \in [r-1]$, we have $\sigma_q= \tau_q + \mu_q$. Since $\Gamma_j[\pi_1]$ and $\Gamma_{j'}[\pi_2]$ are representative sets, it follows that there exist
$\SSS'_1 = (Y'_{1}, \ldots, Y'_{r-1})$ in $\Gamma_j[\pi_1]$ and $\SSS'_2 = (Z'_{1}, \ldots, Z'_{r-1})$ in $\Gamma_{j'}[\pi_2]$ such that $\SSS'_1 \preceq_j \SSS_1$ and $\SSS'_2 \preceq_{j'} \SSS_2$. Let $\SSS'=(P'_1, \ldots, P'_{r-1})$, where $P'_q = Y'_{q}$ if $Z'_{q}$ is the empty path, otherwise, $P_q = Z'_{q}$, for $q \in [r-1]$. The sequence $\SSS'$ conforms to $\pi$ and is in $\RRR_\pi$. By Lemma~\ref{lem:refine}, $\Gamma_i[\pi]$ contains a sequence $\SSS''$ such that $\SSS'' \preceq_i \SSS'$. From Observation~\ref{obs:jtoi}, since $X_i=X_j=X_{j'}$, from $\SSS'_1 \preceq_j \SSS_1$ and $\SSS'_2 \preceq_{j'} \SSS_2$ it follows that $\SSS'_1 \preceq_i \SSS_1$ and $\SSS'_2 \preceq_i \SSS_2$. Since $\chi(\SSS_1) \cup \chi(\SSS_2) = \chi(\SSS)$ and $\chi(\SSS'_1) \cup \chi(\SSS'_2) = \chi(\SSS')$, and since $\chi(\SSS_1) \cap \chi(\SSS_2) \subseteq \chi(X_i)$, by Lemma~\ref{lem:join}, it follows that $\SSS' \preceq_i \SSS$. Since $\SSS'' \preceq_i \SSS'$, by transitivity of $\preceq_i$, it follows that $\SSS'' \preceq_i \SSS$, which concludes the proof.
\end{proof}

We can now conclude with the following theorem:

\begin{theorem}
\label{thm:treewidth}
There is an algorithm that on input $(G, C, \chi, s, t, k)$ of \cmor{}, either outputs a $k$-valid $s$-$t$ path in $G$ or decides that no such path exists, in time $\Oh^{\star}(f(k)^{6\omega^2})$, where $\omega$ is the treewidth of $G$ and $f(k) =\Oh(c^{k^2} k^{2k^2+k})$, for some constant $c >1$. Therefore, \cmor{} parameterized by both $k$ and the treewidth of the input graph is in \FPT.	
\end{theorem}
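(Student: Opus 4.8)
The plan is to run a bottom-up dynamic programming algorithm over a nice tree decomposition $(\VVV,\TTT)$ of $G$ of width $O(\omega)$, which can be computed in \FPT{} time (parameterized by $\omega$) by standard algorithms. As described above, I first add $s$ and $t$ to every bag, so that $\{s,t\}\subseteq X_i$ and $|X_i|\le \omega+3$ for every bag $X_i$. The algorithm maintains, at each bag $X_i$, the table $\Gamma_i$ that stores for every pattern $\pi$ for $X_i$ (Definition~\ref{def:pattern}) a representative set $\RRR_\pi$ for $(X_i,\pi)$ (Definition~\ref{def:representativeset}). The table at a leaf bag is given by Claim~\ref{claim:leaf}, and the tables at introduce, forget, and join nodes are computed from the children's tables by the transition rules described above, whose correctness is Claim~\ref{claim:introduce}, Claim~\ref{claim:forget}, and Claim~\ref{claim:join}, respectively. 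The only delicate point in the transitions is that the join step first creates sequences of \emph{walks}; these are turned back into sequences of induced paths by \textbf{Refine()}, which by Lemma~\ref{lem:refine} preserves the $\preceq_i$-relations needed for the result to remain a representative set. Thus, by induction over $\TTT$, $\Gamma_i[\pi]$ is a representative set for $(X_i,\pi)$ at every bag.

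I would read off the answer at the root bag $X_r=\{s,t\}$: since $V_r=V(G)$ and $G_{st}^{r}=G$, a sequence conforming to the pattern $(s,1,t)$ is precisely an induced $s$-$t$ path in $G$ using at most $k$ colors, so it remains to show that there is a $k$-valid $s$-$t$ path in $G$ iff $\RRR_{(s,1,t)}\neq\emptyset$. The forward direction follows by taking a shortest $k$-valid $s$-$t$ path $\SSS$ (which is induced, hence conforms to $(X_r,(s,1,t))$ and has no repeated non-bag vertex) and invoking property~(iii) of representative sets to obtain some $\WWW\in\RRR_{(s,1,t)}$ with $\WWW\preceq_r \SSS$; the reverse direction is immediate. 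To \emph{output} a path rather than merely decide, I would annotate each table entry with back-pointers to the child entries (and, at join nodes, to the pair of constituent sequences) from which it was produced, and trace these pointers from the root down to the leaves to reassemble a concrete $k$-valid $s$-$t$ path; this costs only a polynomial overhead.

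For the running time, fix a bag $X_i$ with $b=|X_i|\le\omega+3$. A pattern is a sequence of at most $b$ pairwise-distinct useful vertices interleaved with bits, so there are at most $b!\,2^{b}=2^{O(\omega\log\omega)}$ patterns for $X_i$. By Lemma~\ref{lem:representativesbound}, each stored representative set has size at most $h(k)^{b^{2}}$ with $h(k)=\Oh(c^{k^{2}}k^{2k^{2}+k})$. The costliest transition is the join: it ranges over the at most $2^{b}$ pairs $(\pi_1,\pi_2)$ of sub-patterns of $\pi$ and over all pairs of sequences from $\Gamma_j[\pi_1]\times\Gamma_{j'}[\pi_2]$, producing at most $2^{b}h(k)^{2b^{2}}$ sequences, on which \textbf{Refine()} runs in time $\Oh^{\star}\!\big(2^{k}\cdot 2^{b}h(k)^{2b^{2}}+(2^{b}h(k)^{2b^{2}})^{2}\big)$ by Lemma~\ref{lem:refine}. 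Summing over the at most $2^{O(\omega\log\omega)}$ patterns and the $O(n)$ bags of $\TTT$, and absorbing the $2^{O(\omega\log\omega)}$, $2^{k}$, and polynomial factors, the total running time is of the form $f(k)^{O(\omega^{2})}$, which a careful accounting of the constants brings to $\Oh^{\star}(f(k)^{6\omega^{2}})$ with $f(k)=\Oh(c^{k^{2}}k^{2k^{2}+k})$; in particular \cmor{} is \FPT{} parameterized by $k$ and $\omega$.

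The structural correctness is essentially bolted together from the already-established Claims~\ref{claim:leaf}--\ref{claim:join} and Lemma~\ref{lem:refine}, so the main obstacle I expect is the running-time bookkeeping: counting patterns and, at join nodes, pairs of sub-patterns, and then propagating the $h(k)^{b^{2}}$ bound through the quadratic cost of \textbf{Refine()} so that the exponent of $f(k)$ stays $O(\omega^{2})$ (concretely $6\omega^{2}$) rather than blowing up. A secondary point that needs care is confirming that the temporary appearance of sequences of walks at join nodes does not break representativity — this is exactly what is isolated in Lemma~\ref{lem:refine} and used in Claim~\ref{claim:join}.
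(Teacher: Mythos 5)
Your proposal is correct and follows essentially the same route as the paper's proof: bottom-up dynamic programming over a nice tree decomposition with $s,t$ added to every bag, correctness by induction via Claims~\ref{claim:leaf}--\ref{claim:join} together with Lemma~\ref{lem:refine}, the answer read off at the root bag $\{s,t\}$ by applying property~(iii) of representative sets to an induced $k$-valid $s$-$t$ path, and the running time dominated by join nodes with representative-set sizes bounded via Lemma~\ref{lem:representativesbound}. The only cosmetic differences are that the paper obtains the decomposition by a polynomial-time $3/2$-approximation (via branch decomposition) rather than an unspecified \FPT{} computation of width $O(\omega)$, uses a color-minimal rather than a shortest induced witness path at the root (either works), and needs no back-pointers since the tables already store the sequences of paths themselves.
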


\begin{proof}
First, in time $\Oh(|V(G)|^4)$, we can compute a branch decomposition of $G$, and hence a tree decomposition, of width at most $3\omega/2$, where $\omega$ is the treewidth of $G$~\cite{rat1,rs,rat2}. From this tree decomposition, in polynomial time we can compute a nice tree decomposition $(\VVV, \TTT)$ of $G$ whose width is at most $3 \omega/2$ and satisfying $|\VVV| = \Oh(|V(G)|)$~\cite{k94}.
The algorithm starts by removing the colors of $s$ and $t$ from $G$, and decrements $k$ by $|\chi(s)\cup\chi(t)|$ (see Assumption~\ref{ass:sat}). Afterwards, if $k<0$, the algorithm concludes that there is no $k$-valid $s$-$t$ path in $G$. If $st\in E(G)$ and $k\ge 0$, the algorithm outputs the path $(s,t)$. Now we know that $s$ and $t$ are not adjacent, and that $\chi(s)=\chi(t)=\emptyset$. The algorithm then adds $s$ and $t$ to every bag in $\TTT$, and executes the dynamic programming algorithm based on $(\VVV, \TTT)$, described in this section, to compute a table $\Gamma_i$ that contains, for each
bag $X_i$ in $\TTT$ and each pattern $\pi$ for $X_i$, a representative set $\RRR_{\pi}$ for $(X_i, \pi)$.
	
From Claims~\ref{claim:leaf}, \ref{claim:introduce}, \ref{claim:forget}, \ref{claim:join}, it follows, by induction on the height of the tree-decomposition $(\mathcal{V}, \TTT)$ (the base case corresponds to the leaves), that the root node $X_r$ contains a representative set $\Gamma_r[\pi]$ for the sequence $\pi=(s,1,t)$. If $\Gamma_r[\pi]$ is empty, the algorithm concludes that there is no $k$-valid $s$-$t$ path in $G$. Otherwise, noting that there is only one sequence $\SSS$ in the representative set $\Gamma_r[\pi]$ since $X_r=\{s, t\}$ and $s$ and $t$ are empty, the algorithm outputs the $k$-valid $s$-$t$ path $P$ formed by $\SSS$. The correctness follows from the following argument, which shows that if there is a $k$-valid $s$-$t$ path in $G$, then the algorithm outputs such a path. Suppose that $P'$ is a $k$-valid induced $s$-$t$ path such that there does not exist an $s$-$t$ path $P''$ in $G$ satisfying $\chi(P'') \subsetneq \chi(P')$, and let $\SSS'=(P')$. Since $G_{st}^r=G$, it follows that $\SSS'$ conforms to $(X_r, \pi)$. Since $\SSS'$ contains exactly one path that is induced, no two paths in $\SSS'$ share a vertex. Therefore, by property (iii) of representative sets, there exists a sequence $\SSS$ in $\Gamma_r[\pi]$ satisfying $\SSS \preceq_r\SSS'$. Noting that a sequence in $\Gamma_r[\pi]$ must consist of a single $k$-valid $s$-$t$ path, it follows that the algorithm correctly outputs such a path.

Next, we analyze the running time of the algorithm. We observe that among the three types of bags in $\TTT$, the worst running time is for a join bag. Therefore, it suffices to upper bound the running time for a join bag, and since $|\VVV|=\Oh(n)$, the upper bound on the overall running time would follow.

Consider a join bag $X_i$ with children $X_j, X_{j'}$. Let $\omega'$ be the width of $\TTT$ plus 1, which serves as an upper bound on the bag size in $\TTT$, and note that $\omega'\leq 3\omega/2 +3$, where the (additional) plus 2 is to account for the vertices $s$ and $t$ that were added to each bag. The algorithm starts by enumerating each pattern $\pi$ for $X_i$. The number of such patterns is at most $2^{\omega'} \cdot \omega' \cdot \omega'!=\Oh^{*}(2^{\omega'} \cdot \omega'!)$, where $\omega' \cdot \omega'!$ is an upper bound on the number of ordered selections of a subset of vertices from the bag, and $2^{\omega'}$ is an upper bound on the number of combinations for the $\sigma_i$'s in the selected pattern. Fix a pattern $\pi$ for $X_i$. To compute $\Gamma_i[\pi]$, the algorithm enumerates all ways of partitioning $\pi$ into pairs of patterns $\pi_1, \pi_2$ for the children bags; there are $2^{\omega'}$ ways of partitioning $\pi$ into such pairs, because for each $\sigma_i=1$ in $\pi$, the path between $v_i$ and $v_{i+1}$ is either reflected in $\pi_1$ or in $\pi_2$. For a fixed pair $\pi_1, \pi_2$, the algorithm iterates through all pairs of sequences in the two tables $\Gamma_j[\pi_1]$ and $\Gamma_{j'}[\pi_2]$. Since each table contains a representative set, by Lemma~\ref{lem:representativesbound}, the size of each table is $\Oh(h_1(k)^{\omega'^2})$, where $h_1(k) =\Oh(c_{1}^{k^2} k^{2k^2+k})$, for some constant $c_1 >1$, and hence iterating over all pairs of sequences in the two tables can be done in $\Oh(h_1(k)^{2\omega'^2})$ time. From the above, it follows that the set $\RRR_{\pi}$ can be computed in time $2^{\omega'} \cdot \Oh(h_1(k)^{2\omega'^2})= \Oh(h_2(k)^{2\omega'^2})$, where $h_2(k) =\Oh(c_{2}^{k^2} k^{2k^2+k})$, for some constant $c_2 >1$, which is also an upper bound on the size of $\RRR_{\pi}$. By Lemma~\ref{lem:refine}, applying {\bf Refine()} to $\RRR_{\pi}$ takes time $\Oh^{*}(2^k h_2(k)^{2\omega'^2} + h_2(k)^{4\omega'^2}) = \Oh^{*}(h_3(k)^{4\omega'^2})$, where $h_3(k) =\Oh(c_{3}^{k^2} k^{2k^2+k})$, for some constant $c_3 >1$. It follows from all the above that the running time taken by the algorithm to compute $\Gamma_i$ is $\Oh^{*}(h_3(k)^{4\omega'^2} \cdot 2^{\omega'} \cdot \omega'!)=\Oh^{*}(h_4(k)^{4\omega'^2})$, where $h_4(k) =\Oh(c_{4}^{k^2} k^{2k^2+k})$, for some constant $c_4 >1$, and hence
the running time of the algorithm is $\Oh^{\star}(f(k)^{6\omega^2})$, where $f(k) =\Oh(c^{k^2} k^{2k^2+k})$, for some constant $c >1$.
\end{proof}

\section{Extensions and Applications}
\label{sec:extension}
In this section, we extend the \FPT{} results for \cmor{} w.r.t.~the combined parameters $k$ and $\omega$---the treewidth of the input graph, to show that \cmor{} parameterized by both $k$ and the length $\ell$ of the sought path is \FPT{}. We also show some applications of these \FPT{} results. We formally define the problem \bcmor{}:

\paramproblem{\bcmor} {A planar graph $G$; a set of colors $C$; $\chi: V \longrightarrow 2^{C}$; two designated vertices $s, t \in V(G)$; and $k, \ell \in \nat$}{Does there exist a $k$-valid $s$-$t$ path of length at most $\ell$ in $G$?} \\

We first start by showing that if we parameterize only by one of $\ell, k$ then the problem is \W[1]-hard.

\begin{theorem}\label{thm:bcmor_hard_k}
\bcmor{} is {\rm \W[1]}-hard parameterized by $k$.
\end{theorem}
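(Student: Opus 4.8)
The plan is to give an \FPT-reduction from the \W[1]-hard problem \textsc{Grid Tiling}: given integers $\kappa$ and $n$ and sets $S_{i,j}\subseteq [n]\times[n]$ for $i,j\in[\kappa]$, decide whether there are values $\rho_1,\dots,\rho_\kappa,\sigma_1,\dots,\sigma_\kappa\in[n]$ with $(\rho_i,\sigma_j)\in S_{i,j}$ for all $i,j\in[\kappa]$. From such an instance I would build a planar, color-connected instance $(G,C,\chi,s,t,k,\ell)$ of \bcmor{} with $k=\kappa$ and $\ell$ bounded by a polynomial in $n$ and $\kappa$. The colors are $C=\{C_{j,y}\mid j\in[\kappa],\, y\in[n]\}$, where $C_{j,y}$ is meant to encode ``column $j$ takes the value $y$''. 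The graph $G$ is laid out as a $\kappa\times\kappa$ array of \emph{cell gadgets} $\mathcal{D}_{i,j}$, and the construction is engineered so that a $k$-valid $s$-$t$ path of length at most $\ell$ is forced to be an alternating-direction ``snake'' that visits every cell $\mathcal{D}_{i,j}$ exactly once. While passing through $\mathcal{D}_{i,j}$ the snake enters and leaves at a common horizontal ``level'' $x\in[n]$; this level encodes $\rho_i$ and is kept constant along each row by the corridors joining consecutive cells of a row. Inside $\mathcal{D}_{i,j}$ the snake is routed through exactly one ``marker'' vertex $m_{i,j,x,y}$, colored $\{C_{j,y}\}$, which is present in $\mathcal{D}_{i,j}$ precisely when $(x,y)\in S_{i,j}$.

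In more detail, each cell $\mathcal{D}_{i,j}$ has $n$ entry vertices and $n$ exit vertices, one per level, together with, for every $(x,y)\in S_{i,j}$, a short ``passage'' going from the level-$x$ entry through $m_{i,j,x,y}$ to the level-$x$ exit (so a passage never changes the level); consecutive cells of a row are joined exit-to-entry level by level, and at the end of each row a small ``ladder'' gadget lets the snake descend to the next row while changing its level arbitrarily. Since the snake traverses all $\kappa^2$ cells, picks up exactly one marker in each, and passes through $\kappa$ cells of each column $j$, it uses at least $\kappa$ colors, with equality if and only if for every column $j$ the same value $\sigma_j$ is chosen in all $\kappa$ cells of that column: this is exactly the column-consistency requirement of \textsc{Grid Tiling}; row consistency comes for free from the corridors, and the existence of the marker used in $\mathcal{D}_{i,j}$ is exactly the constraint $(\rho_i,\sigma_j)\in S_{i,j}$. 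To make $G$ color-connected I would add, for each color $C_{j,y}$, a \emph{monochromatic bus}: a path all of whose vertices are colored $C_{j,y}$ that threads together all marker vertices $m_{i,j,x,y}$ of column $j$. Routing the buses inside dedicated vertical channels of the layout keeps $G$ planar, and placing each marker at the crossing of the horizontal ``passage flow'' and its vertical bus is possible because the passages are rigidly level-preserving. Crucially, every bus edge is then replaced by a long subdivided (still monochromatic) path, so that any bus segment has length exceeding $\ell$; consequently no path of length at most $\ell$ can ever enter a bus, hence cannot abuse a bus to change level inside a cell, to skip cells, or to find an alternative $s$-$t$ route. Finally $\ell$ is set to the maximum length of an honest snake (attained when all $\rho_i=n$), and $s$ and $t$ are attached to the first and last cell gadgets of the snake.

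Given this, both directions of correctness are short: a \textsc{Grid Tiling} solution $(\rho_i),(\sigma_j)$ yields the snake that in $\mathcal{D}_{i,j}$ uses level $\rho_i$ and marker $m_{i,j,\rho_i,\sigma_j}$, so it uses exactly the $\kappa$ colors $C_{1,\sigma_1},\dots,C_{\kappa,\sigma_\kappa}$ and has length at most $\ell$; conversely, any $k$-valid $s$-$t$ path of length at most $\ell$ avoids all buses, is therefore a snake with a well-defined level $\rho_i$ in each row, and since $k=\kappa$ it must choose a well-defined value $\sigma_j$ in each column, the marker it uses in $\mathcal{D}_{i,j}$ witnessing $(\rho_i,\sigma_j)\in S_{i,j}$. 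The reduction runs in polynomial time and sets $k=\kappa$, so \bcmor{} is \W[1]-hard parameterized by $k$; by Remark~\ref{rem:gadget}, the same hardness carries over to \gcmor{}. I expect the main obstacle to be the simultaneous planarity and color-connectivity bookkeeping: each marker must lie at the intersection of the snake's horizontal flow and a vertical monochromatic bus without forcing a non-planar drawing (this dictates the level-preserving passages and the per-column bus channels), and one must verify that lengthening the buses neither breaks color-connectivity (it does not, since a subdivided monochromatic path is still connected) nor creates a shorter unintended $s$-$t$ path. This ``long bus'' device is precisely where the length parameter $\ell$ is used, which is consistent with the fact that the analogous question for \cmor{} parameterized by $k$ alone is left open.
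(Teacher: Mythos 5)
Your high-level plan (Grid Tiling, snake of cell gadgets, colors only for column values, length budget to forbid cheating) is plausible, but the one step you defer --- making the instance simultaneously planar and color-connected --- is exactly where the construction breaks, and the ``long monochromatic bus'' device does not repair it. Planarity forces every bus to meet the snake structure at shared \emph{vertices}, not just to be unusable as a route. Concretely, the bus for color $C_{j,y}$ must connect markers lying in cells $\mathcal{D}_{1,j},\dots,\mathcal{D}_{\kappa,j}$, so it must cross every level line of every row of column $j$ that separates two of its markers. At a level $x$ of row $i$ with $(x,y)\notin S_{i,j}$ it has no marker to cross through, so it must pass through some vertex of a level-$x$ passage, of the entry/exit vertex $a_x$ or $b_x$, or of the corridor at level $x$; since the bus must be monochromatic for color-connectivity, that shared vertex now carries $C_{j,y}$. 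But every such vertex lies on a route that some legitimate Grid Tiling solution would use (namely any solution with $\rho_i=x$, and with $\sigma_j$ equal to the value of the poisoned passage if the shared vertex is an interior passage vertex), and your color budget is exactly $k=\kappa$, one color per column, with no slack. Hence the forward direction fails: a yes-instance of Grid Tiling can map to a no-instance, because the intended snake is forced through bus-contaminated vertices. You also cannot absorb the damage by enlarging $k$, since the number of unavoidable bus/level crossings grows with $n$, not with $\kappa$; and subdividing the buses (your length trick) only prevents the path from traveling \emph{along} a bus, not from paying for the colors at the crossing vertices it must step on. The same obstruction reappears for any rerouting (around cells, through inter-row gaps, through other columns' markers): the markers for distinct $y$ at a common level are nested between shared endpoints, so any monochromatic connector reaching an inner marker must pass through vertices of other passages of the same column, which is precisely the poisoning you need to avoid. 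This is not a bookkeeping detail but the core difficulty --- it is essentially why the color-connected problem parameterized by $k$ alone is open --- so the proposal as it stands has a genuine gap.

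For contrast, the paper's proof sidesteps this tension entirely: it reduces from \textsc{Clique} using a one-dimensional chain of edge gadgets realized geometrically, where the obstacle of a vertex $v_j$ is a single connected ring $\partial(O\cup\bigcup_{v_j\in e_i}O_i^3)$, so color-connectivity and planarity are automatic, and two auxiliary obstacles $A_1,A_2$ guard the ``long'' detour $O_i^1,O_i^2$ around each edge gadget. The length budget $\ell=3m-\binom{k}{2}+1$ then forces at least $\binom{k}{2}$ traversals of the ``short'' regions $O_i^3$, each of which charges both endpoints of the edge $e_i$, while the obstacle budget $k+2$ allows only $k$ vertex obstacles; together these force the shortcut edges to span exactly $k$ vertices, i.e., a $k$-clique. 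In other words, the paper uses $\ell$ to create the short-versus-long tension that encodes the clique, rather than (as in your proposal) to disable auxiliary connectivity gadgets whose mere planar presence already corrupts the color accounting. If you want to salvage a Grid Tiling-style reduction, you would need connectivity structures whose intersections with potential solution paths are either impossible or provably harmless under the budget, and that is a substantial missing idea, not a routine verification.
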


\begin{proof}
We reduce from the \W[1]-hard problem \textsc{Clique}. The reduction is similar to that in the proof of Theorem~\ref{thm:pathwidthnphardness}. Let $(G,k)$ be an instance of \textsc{Clique}, where $V(G)=\{v_1,\ldots, v_n\}$ and $E(G)=\{e_1,\ldots, e_m\}$. We assume that the edges in $E(G)$ are incident to at least $k+3$ different vertices. This assumption is safe because \textsc{Clique} is \FPT{} for instances where the edges in $E(G)$ are incident to at most $k+2$ different vertices. Similarly to the proof of Theorem~\ref{thm:pathwidthnphardness}, we start by describing the instance $I$ of \gcmor{} whose associated graph is the desired instance of \bcmor.
	
The regions of $I$ are $O \cup \{Z_0, \ldots, Z_m\} \cup \bigcup_{i=1}^{m} \{O_{i}^{1}, O_{i}^{2}, O_{i}^{3}\}$, depicted in Figure~\ref{fig:bcmor_hard_k}. %(left figure).
The obstacles of $I$ are defined as follows. For each vertex $v_j \in V(G)$, the obstacle $V_j$ corresponding to $v_j$ is the polygon whose boundary is the boundary of the region formed by the union of $O$, and each $O_{i}^{3}$ such that $v_j$ is incident to $e_i$. More formally, the obstacle corresponding to $v_j$ is $\partial(O \cup \bigcup_{v_j\in e_i} O_{i}^{3})$. Besides the obstacles corresponding to the vertices of $G$, there are two auxiliary obstacles $A_1=\partial(O \cup \bigcup_{i\in [m]} O_{i}^{1})$ and $A_2=\partial(O \cup \bigcup_{i\in [m]} O_{i}^{2})$. Finally, we place $s$ in $Z_0$, $t$ in $Z_m$, and ask whether there is a path from $s$ to $t$ that intersects at most $k+2$ obstacles and visits at most $3m-\binom{k}{2}+1$ regions (including $Z_0$ and $Z_m$).
	
Suppose that $H$ is a complete subgraph of $G$ with exactly $k$ vertices. We define an $s$-$t$ path $P$ as follows. $P$ starts at $s$ in $Z_0$. If for $i\in [m]$, $P$ enters $Z_{i-1}$ and $e_i\in E(H)$, then $P$ goes to $Z_i$ through $O_i^3$; otherwise, $P$ goes to $Z_i$ through $O_i^1$ and $O_i^2$. Since $P$ does not enter the region $O$, and enters region $O_i^3$ if and only if edge $e_i$ is part of the clique $H$, $P$ intersects an obstacle $V_j$ if and only if $v_j\in V(H)$. Hence, together with $A_1$ and $A_2$, $P$ intersects at most $k+2$ obstacles. Moreover, $P$ visits regions $Z_0,\ldots, Z_m$, regions $O_i^3$ for $e_i\in E(H)$ ($\binom{k}{2}$ times), and regions $O_i^1$, $O_i^2$ for $e_i\notin E(H)$ ($m-\binom{k}{2}$ times). Therefore, $P$ visits exactly $3m-\binom{k}{2}+1$ regions.
	
Conversely, suppose that there is an $s$-$t$ path $P$ that visits at most $3m-\binom{k}{2}+1$ regions and intersects at most $k+2$ obstacles. It is easy to see that $P$ does not visit region $O$. Furthermore, by our assumption, the edges of $G$ are incident to at least $k+3$ vertices, and hence $P$ cannot intersect all the $O_i^3$'s, for $i\in [m]$. Therefore, $P$ visits $O_i^1$ and $O_i^2$ for some $i\in[m]$, and hence intersects $A_1$ and $A_2$. Furthermore, since $P$ visits at most $3m-\binom{k}{2}+1$ regions, $P$ visits at least $\binom{k}{2}$ different $O_i^3$'s. Since each $O_i^3$ contains a pair of two obstacles $V_{j_1}$, $V_{j_2}$ such that $e_i=v_{j_1}v_{j_2}$, there are no multiple edges in $G$ (and hence, no two $O_i^3$'s have the same set of obstacles), and $P$ intersects at most $k$ obstacles that corresponds to vertices of $G$, it follows that $P$ intersects exactly $k$ obstacles that correspond to vertices, and visits exactly $\binom{k}{2}$ different $O_i^3$'s. Moreover, for each pair of obstacles $V_{j_1}$, $V_{j_2}$ that $P$ intersect, there has to be an $O_i^3$ that contains exactly these two obstacles. This means that there is an edge $v_{j_1}v_{j_2}$ in $G$ for each such pair, and hence a $k$-clique in $G$.
\end{proof}

\begin{figure}[htbp]
	\begin{center}
		\includegraphics[width=0.95\textwidth]{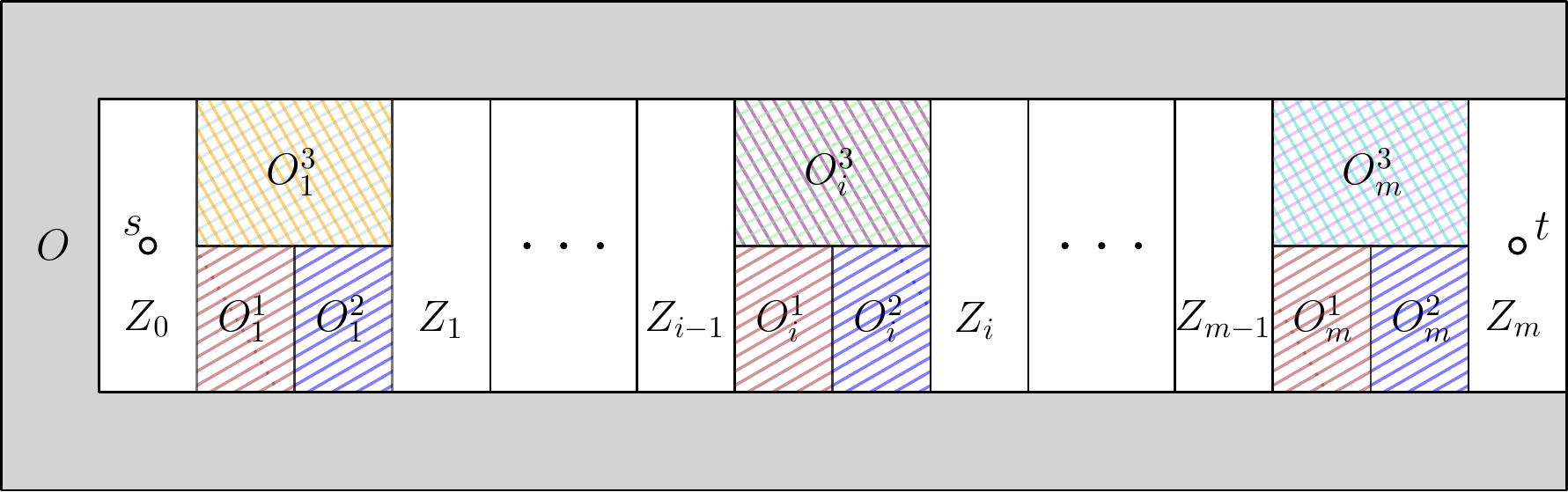}
	\end{center}
	\caption{Illustration for the proof of Theorem~\ref{thm:bcmor_hard_k}.}
	\label{fig:bcmor_hard_k}
\end{figure}

\begin{theorem}\label{thm:bcmor_hard_ell}
	\bcmor{} is {\rm \W[1]}-hard parameterized only by $\ell$.
\end{theorem}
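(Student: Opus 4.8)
The plan is to reduce from \textsc{Multi-Colored Clique}, which is \W[1]-hard parameterized by the number $k$ of color classes~\cite{Hartung2013}, to \bcmor{} parameterized by the length $\ell$ of the path. The $\ell$ we produce will be $\Theta(k^2)$, a function of $k$, so this is an \FPT-reduction, whereas the color budget of the produced instance will be $m+k$ (with $m=|E(G)|$), which is \emph{not} bounded by any function of $k$ --- hence no clash with Theorem~\ref{thm:ext_treewidth}. As in the proof of Theorem~\ref{thm:bcmor_hard_k}, I would describe a geometric instance of \gcmor{} whose associated graph is the desired instance of \bcmor{}; working geometrically with connected obstacles makes planarity and color-connectivity automatic.

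Given $(G,k)$ with $V(G)=V_1\cup\cdots\cup V_k$ and $E(G)=\{e_1,\dots,e_m\}$, the core of the construction is a single ``corridor'' $Z_0,\dots,Z_p$ with $p=\binom k2$, with one block between $Z_{q-1}$ and $Z_q$ for each unordered pair $\{i,j\}$ of classes (fix any order of the $\binom k2$ pairs). The block for $\{i,j\}$ offers, in parallel, a region $O^{3}_{e}$ for every edge $e=uw\in E(G)$ with $u\in V_i$, $w\in V_j$, each adjacent only to $Z_{q-1}$ and $Z_q$; the region $O^{3}_{e}$ lies inside exactly the two obstacles $V_u,V_w$, where for each $v\in V(G)$ the obstacle $V_v$ is a connected ``comb'' with a tooth inside every $O^{3}_{e}$ with $v\in e$ and a spine routed through a common strip above the corridor (spines may overlap freely, which only raises the overlapping number). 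So far the only way from $Z_0$ to $Z_p=t$ crossing few obstacles is to move straight through the corridor, choosing one edge per block, thereby crossing exactly the obstacles $\{V_v: v\text{ is an endpoint of a chosen edge}\}$; using the spine strip to ``jump'' between blocks enters a region inside more than $k$ of the $V_v$'s (we may assume $|V(G)|>k$), and is hence too expensive. Finally, I would add $m$ ``global'' filler obstacles $F_1,\dots,F_m$, realized as nested large regions that all contain the entire corridor, with $s$ placed in the single region outside all of them; thus every $s$--$t$ path enters the corridor through one boundary crossing that simultaneously puts it inside all $m$ fillers. I would set the length bound $\ell := 2\binom k2 + c$ for the appropriate small constant $c$ (the exact number of regions on the straight corridor path) and the color budget $\kappa := m+k$.

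For correctness: if $Q=\{v^{(1)},\dots,v^{(k)}\}$ is a multicolored clique, choosing in the block for $\{i,j\}$ the region $O^{3}_{v^{(i)}v^{(j)}}$ gives an $s$--$t$ path visiting exactly $2\binom k2+c$ regions and crossing exactly the $m$ fillers together with $V_{v^{(1)}},\dots,V_{v^{(k)}}$, i.e.\ $m+k=\kappa$ obstacles; so it is valid. Conversely, a valid $s$--$t$ path must, by the length bound, be the straight corridor path (any detour, in particular any excursion into the spine strip or through two edge-regions in one block, adds at least one region and violates $\ell$); hence it crosses all $m$ fillers plus $\{V_v: v\text{ an endpoint of a chosen edge}\}$, and $\kappa=m+k$ forces the latter set to have size at most $k$. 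Since each class $V_i$ contributes at least one such endpoint (the $V_i$-endpoint of the edge chosen in any block containing $i$), it has size exactly $k$, which forces, for each $i$, all these $V_i$-endpoints to coincide in a single vertex $v^{(i)}$; then $v^{(i)}v^{(j)}\in E(G)$ for all $i\neq j$, so $\{v^{(1)},\dots,v^{(k)}\}$ is a multicolored clique.

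I expect the main obstacle to be designing the filler mechanism so that the $m$ fillers cost only $O(1)$ extra regions on the path while consuming $m$ units of the color budget --- handled by the global-containment idea above, since $s$ is the unique region outside all fillers --- and, relatedly, ruling out ``clever'' short paths. This reduces to checking three things: that the corridor is genuinely a series of forced binary choices (no shortcut edges between $Z$'s or between distinct blocks), that any path visiting the spine strip meets a region inside more than $k$ of the combs $V_v$, and that passing through two $O^{3}$-regions in one block is strictly longer and more obstacle-costly than the intended route. The remaining geometric bookkeeping --- realizing the combs $V_v$ as connected polygons and verifying that the associated graph is planar and color-connected --- is routine and parallels the construction in the proof of Theorem~\ref{thm:bcmor_hard_k}.
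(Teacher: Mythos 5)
There is a genuine gap, and it is located exactly in the part you dismiss as ``routine geometric bookkeeping.'' First, a global sanity check shows the reduction cannot be sound as described: your filler obstacles are crossed by \emph{every} $s$--$t$ path (indeed $t$ lies inside all of them), so their colors are path-independent. Deleting the $m$ filler colors from the produced instance and lowering the budget from $m+k$ to $k$ yields an equivalent instance of \bcmor{} (planarity and color-connectivity of the remaining colors are unaffected), in which \emph{both} the color budget $k$ and the length bound $\ell=\Theta(k^2)$ are functions of the clique parameter. If your construction were correct, this would be an \FPT-reduction showing \W[1]-hardness of \bcmor{} parameterized by $k$ \emph{and} $\ell$ together, contradicting Theorem~\ref{thm:ext_treewidth}. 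So the fillers do not genuinely buy you an unbounded budget; they are a removable additive constant. (Incidentally, even realizing them as claimed is delicate: nested fillers with distinct boundaries force the path through $m$ intermediate ring regions, blowing the length bound, unless their boundaries coincide along the crossing.)

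The concrete failure point is the claim that the combs $V_v$ can be realized so that each region $O^3_e$ lies in exactly its two endpoint obstacles while the corridor and the other cells stay clean, ``paralleling'' Theorem~\ref{thm:bcmor_hard_k}. It does not parallel it: in that construction each block contains a \emph{single} edge-region adjacent to the one common region $O$, so every obstacle connects through $O$ trivially. In your blocks there are many parallel two-colored cells, and the obstacles of the \emph{inner} cells must, as connected regions, reach their occurrences in other blocks. In the auxiliary graph, the $4$-cycle through $z_{q-1}$, $z_q$ and the two rotation-neighbors of an inner cell is a Jordan separator whose only non-filler colors are the (at most four) endpoint colors of those two neighboring cells; by Observation~\ref{obs:colorconnectivity}, every color of the inner cell that occurs outside this separator must appear on it. Since other blocks attach only through $z_{q-1}$ and $z_q$ and hence occupy at most two faces of the block's $K_{2,r}$-structure, almost all inner cells would need each of their two colors to recur on a stack-neighbor --- which fails for general instances (e.g.\ when the edges between two classes are far from forming an Eulerian-trail-like sequence). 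Geometrically this is the statement that the comb teeth must pierce or wrap around other cells, either adding spurious colors to intended paths, splitting cells (raising length and cost), or destroying the $Z$-adjacencies. The paper's proof of Theorem~\ref{thm:bcmor_hard_ell} circumvents exactly this with the complementation trick: gadget vertices carry $C'_{j,j'}\setminus\{c_e\}$ rather than $\{c_u,c_w\}$, so every color is nearly ubiquitous and can be connected by chaining consecutive copies plus one all-colored vertex $o$, and the unbounded part of the budget ($|E(G)|-\binom{k}{2}$) is genuinely path-dependent --- the \emph{avoided} colors encode the clique --- rather than a forced, strippable surcharge. If you want to salvage your approach, you need a mechanism of this kind; the direct edge-selection corridor cannot be made planar and color-connected with both parameters small.
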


\begin{proof}
	The proof is similar to the proof of Lemma~\ref{lem:w1hard}. We will describe the reduction from \textsc{Multi-Colored Clique} and point out the differences.
	
	Let $(G, k)$ be an instance of \textsc{Multi-Colored Clique}, where $V(G)$ is partitioned into the color classes $C_1, \ldots, C_k$.  We assume that all color classes have the same cardinality $N$. Let $C_j = \{u_{i}^{j} \mid i \in [N]\}$. We describe how to construct an instance $(G', C', \chi', s, t, k',\ell)$ of \bcmor{}. For an edge $e \in G$, associate a distinct color $c_e$, and define $C' = \{c_e \mid e \in E(G)\}$. Moreover, for convenience, we denote by $C'_{i,j}$ the set $\{c_e \mid e=uv \in E(G)\wedge u\in C_i\wedge v\in C_j \}$.
	To simplify the description of the construction, we start by defining a gadget that will serve as a building block for this construction.
	
	For a vertex $u_{i}^{j}$ in color class $C_j$, we define the gadget $G_{i,j}$, which is very similar to the one in Lemma~\ref{lem:w1hard}, %However, we do not remove the non-neighbors of $u_{i}^{j}$ from $C_{j'}$, a %vertex in $C'_{j'}$ we color by $C_{j,j'}\setminus \{c_e\}$ (instead of $c_e$ as in Lemma~\ref{lem:w1hard})  for neighbors and by $C_{j,j'}$ for non-neighbors of $u_i^j$, and we connect each $C'_{j'}$ by a path.
	as follows. First, we create a copy of each color class $C_{j'}$, $j' \neq j$.
	%, and remove from each $C_{j'}$ all copies of vertices that are not neighbors of $u_{i}^{j}$ in $G$.
	Let the resulting copies of the color classes be $C'_1, \ldots, C'_{k-1}$. We define the color of a copy $v'$ of a vertex $u_{i'}^{j'}$ as $\chi'(v') = C'_{j,j'}\setminus \{c_e\}$ if there is an edge $e=u_{i}^{j}u_{i'}^{j'}$, and $\chi'(v') = C'_{j,j'}$ otherwise. Moreover, for each $i'\in [N-1]$ we connect the copies of vertices $u_{i'}^{j'}$ and $u_{i'+1}^{j'}$ by an edge.

 %To make the colors connected, we first add an edge between a copy of $u_{|C_j'|}^{j'}$
	Next, we introduce $k-2$ empty vertices $y_{r}$, $r \in [k-2]$. For $r \in [k-2]$, we connect all vertices in $C'_r$ to $y_r$, and connect $y_r$ to all vertices in $C'_{r+1}$. This completes the construction of gadget $G_{i,j}$; we refer to $C'_1$ and $C'_{k-1}$ as the first and last color classes in gadget $G_{i,j}$, respectively.	Observe that each color $c_e$ for an edge $e=u_i^ju_{i'}^{j'}$ appears exactly on
	\begin{itemize}
		\item all vertices in the copies of $C_{j'}$ in the gadgets $G_{i^*,j}$ for $i^*\neq i$,
		\item all vertices in the copies of $C_{j}$ in the gadgets $G_{i^*,j'}$ for $i^*\neq i'$,
		\item all vertices but the copy of $u_{i'}^{j'}$ in the copies of $C_{j'}$ in the gadget $G_{i,j}$, and
		\item all vertices but the copy of $u_{i}^{j}$ in the copies of $C_{j}$ in the gadget $G_{i',j'}$.
	\end{itemize}
	 Furthermore, every path from a vertex in $C'_1$ to a vertex in $C'_{k-1}$ of length at most $2k-2$ contains exactly one vertex from each $C'_r$, $r \in [k-1]$, and contains all vertices $y_{r}$, $r \in [k-2]$. Moreover, each such path contains all but at most one color from each $C_{j,j'}$, for all $j'$ such that $j'\neq j$ and if the path %from a vertex in $C'_1$ to a vertex in $C'_{k-1}$
	 does not contain a color $c_e\in C_{j,j'}$, then it contains a copy of a vertex $u_{i'}^{j'}$ such that there is an edge $e=u_{i}^{j}u_{i'}^{j'}$ in $G$.
	 %Therefore, any such path contains the colors of exactly $k-1$ distinct edges that are incident to $u_{i}^{j}$.
	
	%We finish the construction of $G'$ by
	We continue the construction similarly as in Lemma~\ref{lem:w1hard} by introducing $k+1$ new empty vertices $z_0, \ldots, z_{k}$, and connecting them as follows. For each color class $C_j$, $j \in [k]$, and each vertex $u_{i}^{j} \in C_j$, we create the gadget $G_{i,j}$, connect $z_{j-1}$ to each vertex in the first color class of $G_{i,j}$, and connect each vertex in the last color class of $G_{i,j}$ to $z_j$. Let $G'$ be the resulting graph.  We now set $s=z_0$, $t=z_k$, and $k'=|E(G)|-\binom{k}{2}$ and $\ell=2k^2$. %See Figure~\ref{fig:w1_hardness_whole} \ifshort in Section~\ref{sec:figures} \fi for illustration.
	We are nearly finished with the reduction. However, we need to make the colors in $G'$ connected. To achieve this, we first introduce a new vertex $o$ such that $\chi(o)=C'$. Now for each $j\in [k]$, each $j'\neq j$, and each $i\in [N-1]$ we connect the copy of $u_{N}^{j'}$ in $G_{i,j}$ with the copy of $u_{1}^{j'}$ in $G_{i+1,j}$, and we connect the vertex $o$ with the copies of $u_{1}^{j'}$ in $G_{1,j}$ and of $u_{N}^{j'}$ in $G_{N,j}$, respectively. It is not hard to see that every color is now connected.  See Figure~\ref{fig:w1_kl_hardness_l}
	for an illustration of the construction, and Figure~\ref{fig:w1_kl_hardness_l_1color}, which highlights the subgraph induced by one color.
	
	This completes the construction of the instance $(G', C', \chi', s, t, k',\ell)$ of \bcmor{}. %Observe that each vertex in $G'$ contains at most one color, and that each color $c_e$ of an edge $e=u_{i}^{j}u_{i'}^{j'}$ in $G$, appears on exactly two vertices in $G'$: the copy of $u_{i'}^{j'}$ in the gadget $G_{i,j}$ of $u_{i}^{j}$, and the copy of $u_{i}^{j}$ in the gadget $G_{i',j'}$ of $u_{i'}^{j'}$.
	
	Clearly, the reduction that takes an instance $(G, k)$ of \textsc{Multi-Colored Clique} and produces the instance $(G', C', \chi', s, t, k', \ell)$ of \bcmor{} is computable in \FPT-time. To show its correctness, suppose that
	$(G, k)$ is a yes-instance of \textsc{Multi-Colored Clique}, and let $Q$ be a $k$-clique in $G$. Then $Q$ contains a vertex from each $C_j$, for $j \in [k]$. For a vertex $u_{i}^{j} \in Q$, let $G_{i,j}$ be its gadget, and define the path $P_j$ as follows. In each color class in $G_{i,j}$, pick the unique vertex that is a copy of a neighbor of $u_{i}^{j}$ in $Q$; define $P_j$ to be the path in $G_{i,j}$ induced by the picked vertices, plus the empty vertices $y_{r}$, $r \in [k-2]$, that appear in $G_{i,j}$. Finally, define $P$ to be the $s$-$t$ path in $G'$ whose edges are: the (unique) edge between $z_{r-1}$ and an endpoint of $P_r$, $P_r$, and the (unique) edge between an endpoint of $P_r$ and $z_r$, for $r \in [k]$. Clearly, the length of $P$ is exactly $(2k-2)k+2k=2k^2=\ell$. To show that $P$ is $k'$-valid, observe that none of the nonempty vertices in $P$ contains a color of an edge between two vertices in $Q$. This shows that the number of colors that appear on $P$ is at most $k'=|E(G)|-\binom{k}{2}$, and hence, $P$ is $k'$-valid.
	 It follows that $(G', C', \chi', s, t, k')$ is a yes-instance of \bcmor{}.

	Conversely, suppose that $P$ is a $k'$-valid $s$-$t$ path in $G'$ of length at most $2k^2$. It is easy to see that $P$ does not contain $o$, because $|\chi(o)|=|E(G)|$. Moreover, notice that the shortest path from $s$ to $t$ in $G-o$ has length $2k^2$ and each $s$-$t$ path of length $2k^2$ must start at $s$, visit the gadgets of exactly $k$ vertices $u_{i_j}^{j} \in C_j$, for $j \in [k], i_j \in [N]$, and end at $t$. Furthermore, the subpath of the path in each gadget $G_{i,j}$ has length exactly $2k-2$.
	We claim that
	$Q= \{u_{i_j}^{j} \mid j \in [k]\}$ is a clique in $G$.
	%Note that the subpath of $P$ that
	 Recall that the subpath of $P$ that traverses a gadget $G_{i, j}$ contains all but at most one color from each $C_{j,j'}$, for all $j'$ such that $j'\neq j$ and if such path %from a vertex in $C'_1$ to a vertex in $C'_{k-1}$
	does not contain a color $c_e\in C_{j,j'}$, then it contains a copy of a vertex $u_{i'}^{j'}$ such that $e=u_{i}^{j}u_{i'}^{j'}$ is in $G$.	
%	traverses a gadget $G_{i, j}$ of $u_{i_j}^{j}$ contains the colors of exactly $k-1$ edges that are incident to $u_{i_j}^{j}$.
	It follows that if $P$ does not contain the color $c_e$, for an edge $e=u_{i}^{j}u_{i'}^{j'}$, then it has to traverse the gadgets for $u_{i}^{j}$ and the gadget for $u_{i'}^{j'}$. Since $P$ traverses at most $k$ gadgets and it does not contain $\binom{k}{2}$ colors, it follows that there has to be an edge between every pair of vertices in $G$ for which $P$ traverses the corresponding gadgets, and $Q$ is a clique in $G$.
	%Therefore, the total number of occurrences of colors (counting multiplicities) on $P$ is precisely $(k-1)k$. Since $P$ is $\binom{k}{2}$-valid, and each color $c_e$ of an edge $e$ in $G$ appears exactly twice in $G'$, it follows that each color that appears on $P$ appears exactly twice on $P$. This is only possible if the gadgets corresponding to the two endpoints of the edge are traversed by $P$, and hence, both endpoints of the edge are in $Q$. Therefore, $P$ contains the colors of $k'=\binom{k}{2}$ edges, whose both endpoints are in $Q$.
	Since $|Q|=k$, it follows that $Q$ is a $k$-clique in $G$, and that $(G, k)$ is a yes-instance of \textsc{Multi-Colored Clique}.
\end{proof}

%\begin{figure}[htbp]
	
	\begin{figure}[htbp]%{.48\textwidth}
		\begin{center}
			\includegraphics[width=\textwidth,page=1]{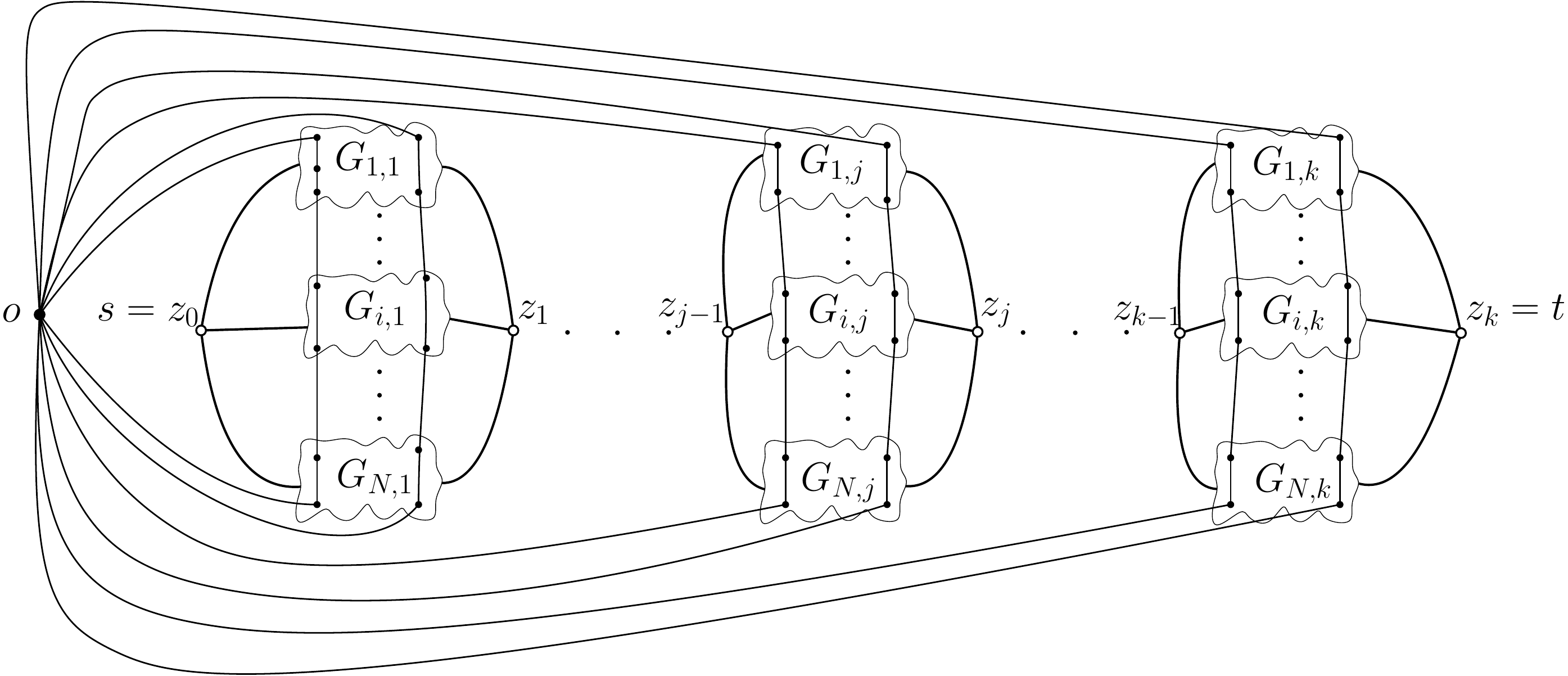}
		\end{center}
	\caption{Illustration of the reduced instance for the proof of Theorem~\ref{thm:bcmor_hard_ell}.}\label{fig:w1_kl_hardness_l}
	\end{figure}
	%\hspace*{-2cm}
	\begin{figure}[htbp]%{.48\textwidth}
		\begin{center}
			\includegraphics[width=\textwidth,page=2]{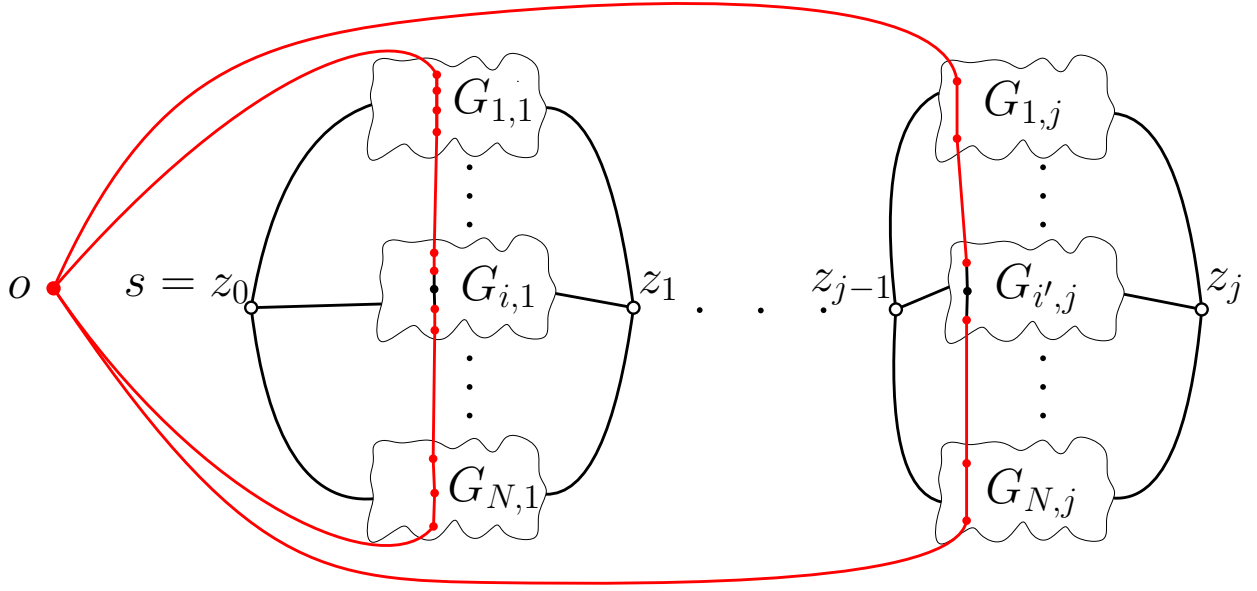}
		\end{center}
	\caption{Illustration for the proof of Theorem~\ref{thm:bcmor_hard_ell}. The red highlights a color representing the edge between vertices $u_i^1$ in $C_1$ and $u_{i'}^j$ in $C_j$.}\label{fig:w1_kl_hardness_l_1color}
	\end{figure}
	
	%\caption{Illustration for the proof of Theorem~\ref{thm:bcmor_hard_ell}. The left figure shows an instance in which the optimal path crosses two obstacles, zigzagging between the other obstacles. The right figure shows an instance and its auxiliary plane graph.}
%	\label{fig:w1_kl_hardness_l}
%\end{figure}

We now switch our attention to showing that \bcmor{} parameterized by both $k$ and $\ell$ is \FPT{}. We start with the following lemma that enables us to upper bound the treewidth of the input graph by a function of the parameter $\ell$:

\begin{lemma}\label{lem:ext_distantvertex}
	Let $(G, C, \chi, s, t, k, \ell)$ be an instance of \bcmor, and let $v$ be a vertex in $G$ such that $d_G(s, v) > \ell +1$. Let $G'$ be the graph obtained from $G$ by contracting any edge $uv$ that is incident to $v$, and let $\chi'(x) = \chi(u) \cup \chi(v)$, where $x$ is the new vertex resulting from contracting $uv$, and $\chi'(w) = \chi(w)$ for any $w \in V(G) \setminus \{u, v\}$. Then $(G', C, \chi', s, t, k, \ell)$
	is a yes-instance of \bcmor{} if and only if $(G, C, \chi, s, t, k, \ell)$ is.
\end{lemma}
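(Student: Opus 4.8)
The plan is to show that both endpoints $u,v$ of the contracted edge are too far from $s$ to lie on any $s$-$t$ path of length at most $\ell$, and that the contracted vertex $x$ inherits this property in $G'$; the equivalence then follows because, away from $\{u,v\}$ in $G$ and away from $\{x\}$ in $G'$, the two graphs and their colour functions coincide.

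First I would dispose of degenerate endpoint cases. Since $d_G(s,s)=0<\ell+1$ we have $s\notin\{u,v\}$. If $t\in\{u,v\}$, then using $d_G(s,u)\ge d_G(s,v)-1>\ell$ we get $d_G(s,t)>\ell$, so $G$ has no $s$-$t$ path of length at most $\ell$; the distance bound for $x$ proved below shows $G'$ has no such path either (its terminal is $x$), so both are no-instances. Hence we may assume $\{u,v\}\cap\{s,t\}=\emptyset$. Next I establish the key distance bounds. Because $uv\in E(G)$, $d_G(s,u)\ge d_G(s,v)-1>\ell$, so $d_G(s,u)\ge\ell+1$ and $d_G(s,v)\ge\ell+2$. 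Any vertex $w$ on an $s$-$t$ path $P$ with $|P|\le\ell$ satisfies $d_G(s,w)\le|P|\le\ell$, hence $w\notin\{u,v\}$; thus \emph{no} $k$-valid $s$-$t$ path of length at most $\ell$ in $G$ meets $u$ or $v$. On the $G'$ side I would prove $d_{G'}(s,x)\ge\ell+1$: the neighbours of $x$ in $G'$ are exactly $(N_G(u)\cup N_G(v))\setminus\{u,v\}$, so a shortest $s$-$x$ path $Q'$ in $G'$ has penultimate vertex $w\neq x$ with $w\in N_G(u)$ or $w\in N_G(v)$; replacing the last edge $wx$ by $wu$ (resp.\ $wv$) and $x$ by $u$ (resp.\ $v$) lifts $Q'$ to an $s$-$u$ (resp.\ $s$-$v$) path in $G$ of the same length, giving $d_{G'}(s,x)\ge\min\{d_G(s,u),d_G(s,v)\}\ge\ell+1$. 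Therefore no $k$-valid $s$-$t$ path of length at most $\ell$ in $G'$ meets $x$.

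Finally I would prove the two implications. Forward ($G$ a yes-instance $\Rightarrow$ $G'$ a yes-instance): let $P$ be a $k$-valid $s$-$t$ path of length at most $\ell$ in $G$. By the above, $P$ avoids $u$ and $v$, so $V(P)\subseteq V(G)\setminus\{u,v\}$ and every edge of $P$ is an edge of $G$ not incident to $u$ or $v$, hence an edge of $G'$ (contraction deletes only edges incident to $u$ or $v$). Since $\chi'$ agrees with $\chi$ on $V(G)\setminus\{u,v\}\supseteq V(P)$, the set of colours on $P$ is unchanged, so $P$ is a $k$-valid $s$-$t$ path of length at most $\ell$ in $G'$. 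Conversely, let $P'$ be a $k$-valid $s$-$t$ path of length at most $\ell$ in $G'$. Since $d_{G'}(s,x)\ge\ell+1>\ell\ge|P'|$, $P'$ avoids $x$, so $V(P')\subseteq V(G)\setminus\{u,v\}$ and every edge of $P'$ is an edge of $G'$ not incident to $x$, which are exactly the edges of $G$ not incident to $u$ or $v$; again $\chi$ and $\chi'$ agree on $V(P')$, so $P'$ is a $k$-valid $s$-$t$ path of length at most $\ell$ in $G$.

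I do not expect a real obstacle here. The only step requiring genuine care is the lower bound $d_{G'}(s,x)\ge\ell+1$ — that contracting an edge whose endpoints are far from $s$ cannot create a short route to the new vertex — which is handled by the path-lifting argument above; everything else is bookkeeping about which edges and colours survive the contraction.
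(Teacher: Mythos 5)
Your proposal is correct and takes essentially the same route as the paper: both argue that $d_G(s,v)>\ell+1$ forces every length-$\le\ell$ solution in $G$ to avoid $u$ and $v$ and every such solution in $G'$ to avoid $x$, so paths transfer verbatim with unchanged colours (your explicit lifting argument for $d_{G'}(s,x)\ge\ell+1$ just fills in what the paper asserts without detail). The only point the paper includes that you omit is the one-line observation that $G'$ is still color-connected (and planar), which is needed for $(G',C,\chi',s,t,k,\ell)$ to be a legitimate \bcmor{} instance and is exactly why the contraction assigns $\chi'(x)=\chi(u)\cup\chi(v)$ instead of deleting $v$.
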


\begin{proof}
	Since $G$ is color-connected and $\chi(x) = \chi(u) \cup \chi(v)$, it is easy to see that $G'$ is color-connected as well. Because $d_G(s, v) > \ell+1$, any solution to $(G, C, \chi, s, t, k, \ell)$ does not contain any of $u, v$, and hence, is a solution to $(G', C, \chi', s, t, k, \ell)$. Conversely, because $d_G(s, v) \geq \ell+1$, any solution to $(G', C, \chi', s, t, k, \ell)$ does not contain $x$, and hence is a solution to $(G, C, \chi, s, t, k, \ell)$.
\end{proof}

By Lemma~\ref{lem:ext_distantvertex}, we may assume w.l.o.g.~that in an instance $(G, C, \chi, s, t, k, \ell)$ of \bcmor{}, every vertex $v \in V(G)$ satisfies $d_G(s,v) \leq \ell+1$. Therefore, we may assume that $G$ has radius at most $\ell+1$, and hence $G$ has treewidth at most $3\cdot(\ell+1)+1=3\ell+4$~\cite{rs}.

At this point we draw the following observation. Although the treewidth of $G$ is bounded by a function of $\ell$, we cannot use the \FPT{} algorithm for \mor{}, parameterized by $k$ and the treewidth of $G$, to solve \bcmor{} because the $k$-valid path returned by the algorithm for \cmor{} may have length exceeding the desired upper bound $\ell$. In fact, extending the \FPT{} results for \mor{} to \bcmor{} turns out to be a nontrivial task, that necessitates a nontrivial extension of the structural results in Section~\ref{sec:structural}, as well as the dynamic programming algorithm in Section~\ref{sec:algo}. In particular, the color contraction operation, on which the structural results developed in Section~\ref{sec:structural} hinge, is no longer applicable since contracting an edge may decrease the distance between $s$ and $t$ in the resulting instance, and hence, may not result in an equivalent instance of the problem. However, we will show in the next subsection that we can extend the notion of a minimal set of $k$-valid paths between two vertices to incorporate the length of these paths, while still be able to upper bound the size of such a set by a function of both $k$ and the length of these paths.

\subsection{Extended Structural Results}
\label{subsec:structural}
We start with the following definition:

\begin{definition}\rm
\label{def:minimalpathsextended}
Let $u, v, w \in V(G)$, and let $\lambda \in [\ell]$. Let ${\cal P}$ be a set of $k$-valid $u$-$v$ paths in $G-w$, each of length $\lambda$. The set $\PPP$ is said to be {\em $\lambda$-minimal} w.r.t.~$w$ if there does not exist two paths $P_1, P_2 \in {\cal P}$ such that $\chi(P_1) \cap \chi(w) = \chi(P_2) \cap \chi(w)$.
%\item[(ii)] there does not exist two paths $P_1, P_2 \in {\cal P}$ such that $\chi(P_1) \subseteq \chi(P_2)$; and
%\item[(iii)] for any $P \in {\cal P}$, there does not exist a $u$-$v$ path $P'$ in $G-w$ such that $\chi(P') %\subsetneq \chi(P)$.
\end{definition}

Let $u, v, w \in V(G)$, $\lambda \in [\ell]$, and let ${\cal P}$ be a set of $\lambda$-minimal $k$-valid $u$-$v$ paths in $G-w$.  Let ${\cal M}$ be a set of $\lambda$-minimal $k$-valid color-disjoint $u$-$v$ paths in $G-w$. Let $H$ be the subgraph of $G-w$ induced by the edges of the paths in ${\cal P}$, and let $M$ be that induced by the edges of the paths in ${\cal M}$.

\begin{lemma}
\label{lem:separatorextended}
$M$ has a $u$-$v$ vertex-separator of cardinality at most $2(\lambda+1)$.
\end{lemma}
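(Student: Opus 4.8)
The statement says that the subgraph $M$, induced by a set $\MMM$ of color-disjoint $k$-valid $u$-$v$ paths all of length exactly $\lambda$, has a $u$-$v$ vertex separator of size at most $2(\lambda+1)$. Note this should be compared with Lemma~\ref{lem:separator}, where the separator had size $O(k)$; here we instead bound it by a function of the path length $\lambda$, and crucially \emph{without} assuming irreducibility (the color contraction operation is no longer available in the bounded-length setting). The natural plan is to mimic the proof of Lemma~\ref{lem:separator} but replace the ``count distinct colors along a path'' argument by a ``count the number of vertices along a path'' argument, since every path in $\MMM$ has the fixed small length $\lambda$.

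First I would set up as in Lemma~\ref{lem:separator}: let $u_1,\dots,u_q$ be the neighbours of $u$ in $M$, and $P_1,\dots,P_q\in\MMM$ the paths through them (color-disjointness forces $P_i$ to avoid $u_j$ for $j\neq i$). Here I cannot directly invoke Lemma~\ref{lem:paths} in the ``external color'' form used before, but I can still apply the purely topological Lemma~\ref{lem:generalpaths}/Lemma~\ref{lem:paths} to extract two paths, say $P_1,P_2$, with $V_{12}=V(P_1)\cup V(P_2)-\{v\}$ separating $\{u_3,\dots,u_q\}$ from $v$; since $|P_1|=|P_2|=\lambda$, we have $|V_{12}|\le 2\lambda+1$ (they share $u$). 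Now suppose toward a contradiction that $M^-=M-P_1-P_2$ has no $u$-$v$ separator of size $2(\lambda+1)$; by Menger there are $r'\ge 2\lambda+3$ vertex-disjoint $u$-$v$ paths $\DDD$ in $M^-$, and as before at least one of $P_1,P_2$, say $P_1$, meets $\ge \lambda+2$ of them (at vertices $\ne v$). Order the meeting paths $Q_1,\dots,Q_r$, $r\ge\lambda+2$, counterclockwise around $u$, and carve out subpaths $P_1^1,\dots,P_1^{r-1}$ of $P_1$, where the endpoints of $P_1^i$ lie on $Q_i$ and $Q_{i+1}$ and are the only vertices of $P_1^i$ on any $Q_j$.

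The key divergence from Lemma~\ref{lem:separator}: there, each $P_1^i$ had to contain a distinct \emph{color} (using color-connectivity and the Jordan-curve separator $Q_i\cup Q_{i+1}$), bounding $r-1\le k$. Here, instead, I would argue directly about vertices: the subpaths $P_1^1,\dots,P_1^{r-1}$ are edge-disjoint subpaths of $P_1$ and together with their shared endpoints they partition (the edge set of) $P_1$ into $r-1$ nonempty segments, so $r-1\le$ (number of edges of $P_1$) $=\lambda$. Hence $r\le\lambda+1$, contradicting $r\ge\lambda+2$. This yields that $M^-$ has a $u$-$v$ separator of size $\le 2(\lambda+1)$; but wait — this only removed $P_1,P_2$, so in fact $M$ itself has a separator of size $\le 2(\lambda+1)$, since a separator of $M-P_1-P_2$ together with... actually the clean way: the above shows $M$ has a $u$-$v$ separator of size $\le 2\lambda+3 \le 2(\lambda+1)+1$, and a more careful accounting (or just taking the slightly weaker bound and noting the lemma as stated allows $2(\lambda+1)$) closes it. I should double-check the exact arithmetic: $V_{12}$ has size $\le 2\lambda+1$ and serves as the separator between $\{u_3,\dots,u_q\}$ and $v$; combined with the $\lambda+1$ bound on $r$, the overall separator for $M$ comes out to roughly $2\lambda+1$, comfortably within $2(\lambda+1)$.

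The main obstacle I anticipate is the absence of irreducibility. In Lemma~\ref{lem:separator}, irreducibility was used to ensure the $u_i$'s are nonempty and that endpoints of the $P_1^i$'s are empty, which fed the color-counting. In the bounded-length version we don't need nonemptiness at all — the counting is combinatorial on vertices/edges of a fixed-length path, so this actually \emph{simplifies} the argument rather than complicating it. The one genuinely delicate point is making sure the topological extraction of $P_1,P_2$ via Lemma~\ref{lem:paths} still goes through: Lemma~\ref{lem:paths} requires each $P_i$ to contain a vertex incident to some face $f$ other than $v$, a hypothesis that in Lemma~\ref{lem:separator} came from external colors. I would resolve this either by taking $f$ to be the external face of $M$ (every path, being a $u$-$v$ path in a plane graph, does touch the outer face appropriately after a suitable embedding choice), or by invoking the pure topological Lemma~\ref{lem:generalpaths} directly with the auxiliary vertex construction, which does not need the face hypothesis. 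So the plan is: (1) extract $P_1,P_2$ topologically with $|V(P_1)\cup V(P_2)|\le 2\lambda+1$; (2) Menger + pigeonhole to get many disjoint paths all meeting $P_1$; (3) carve $P_1$ into $r-1$ edge-disjoint segments forcing $r-1\le\lambda$; (4) derive the contradiction and conclude.
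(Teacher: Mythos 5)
Your replacement of the color-counting by an edge-counting along $P_1$ is sound in isolation, but the proposal has two genuine gaps. First, the statement you must prove concerns a $u$-$v$ separator of $M$ \emph{itself}, whereas your argument, modeled on Lemma~\ref{lem:separator}, can only deliver a separator of $M-P_1-P_2$ (in fact of $M$ minus up to three paths, since you also silently ignore the possible path of $\MMM$ carrying only internal colors, which is exactly why Lemma~\ref{lem:separator} is phrased with three removed paths). Your attempted patch ("so in fact $M$ itself has a separator of size $\le 2\lambda+3$ ... comes out to roughly $2\lambda+1$") is not a proof: a separator of $M-P_1-P_2$ is not a separator of $M$, because $P_1,P_2$ (and the all-internal path) are themselves $u$-$v$ paths of $M$ that must be blocked, and you never exhibit a vertex set of size $2(\lambda+1)$ that blocks them together with all paths weaving between them and the rest of $M$. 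The distinction matters downstream: Lemma~\ref{lem:auxextended} uses a separator of all of $M$ of size $2\lambda+2$ and, unlike Lemma~\ref{lem:aux}, removes no paths and adds no constant. Second, without irreducibility you cannot verify the hypotheses of Lemma~\ref{lem:paths}: the neighbors $u_1,\dots,u_q$ of $u$ may now be empty vertices shared by several paths of $\MMM$, so the requirement that $P_i$ avoid every $u_j$ with $j\neq i$ can fail; and the face-touching hypothesis cannot be obtained the way you suggest. A $u$-$v$ path in a plane graph need not meet the outer face, the embedding is fixed (the face that matters is the external face w.r.t.~$w$), and invoking Lemma~\ref{lem:generalpaths} "directly" does not help, since the auxiliary-vertex construction inside the face is precisely what converts its conclusion into separation from $v$; without each path reaching that face you only get separation from some other vertex, not from $v$.

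For comparison, the paper's proof avoids Lemma~\ref{lem:paths} (and hence irreducibility) altogether and directly bounds the Menger number of $M$: assuming $r\ge 2\lambda+3$ vertex-disjoint $u$-$v$ paths exist in $M$, it orders the corresponding neighbors of $u$, picks a path $Q_{r'}\in\MMM$ through a neighbor as close to the middle as possible subject to $Q_{r'}$ carrying an external color (possible since, by Observation~\ref{obs:internalpath}, at most one path of $\MMM$ is all-internal), and uses Observation~\ref{obs:externalcolor} to force $Q_{r'}$ to reach the external face w.r.t.~$w$; getting there from the middle of the fan forces $Q_{r'}$ to meet at least $\lfloor r/2\rfloor-1$ of the pairwise disjoint paths, so $\lambda\ge r/2-1$ and $r\le 2(\lambda+1)$, a contradiction. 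If you want to salvage your route, you would have to (a) restore the hypotheses of Lemma~\ref{lem:paths} without irreducibility, (b) handle the all-internal path, and (c) actually convert $V(P_1)\cup V(P_2)\setminus\{u,v\}$ (of size at most $2(\lambda-1)$) into a separator of all of $M$ rather than of $M-P_1-P_2$; none of these steps is carried out in the proposal.
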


\begin{proof}
We proceed by contradiction, and assume that $M$ does not have a $u$-$v$ vertex-separator of cardinality at most $2(\lambda+1)$. By Menger's theorem~\cite{diestel}, there exists a set $\DDD=\{P_1, \ldots, P_{r}\}$, where $r \geq 2\lambda+3$, of vertex-disjoint $u$-$v$ paths in $M$. Let $u_1, \ldots, u_{r}$ be the neighbors of $u$ in counterclockwise order such that $P_i$ contains $u_i$, $i \in [r]$, and let $Q_i$ be a path in $\MMM$ containing $u_i$.

Since all paths in $\MMM$ have the same length $\lambda$, Definition~\ref{def:minimalpathsextended} implies that Observation~\ref{obs:internalpath} holds. Therefore, at most one path in $\MMM$ contains only internal colors w.r.t.~$w$ in $M$. By Observation~\ref{obs:externalcolor}, any vertex on a path in $\MMM$ such that the vertex contains an external color w.r.t.~$w$ in $M$ must be incident to the external face to $w$ in $M$. Choose $r' \in [r]$ such that $|r'-\lfloor \frac{r}{2}\rfloor|$ is minimum and $Q_{r'}$ contains an external color w.r.t.~$w$ in $M$. Since $Q_{r'}$ contains an external color w.r.t.~$w$ in $M$, $Q_{r'}$ contains a vertex incident to the external face to $w$ in $M$. Since all paths in $\DDD$ are $u$-$v$ vertex-disjoint paths, $Q_{r'}$ contains vertices other than $u$ and $v$ from at least $\lfloor r/2\rfloor -1$ distinct paths (including itself) in $\DDD$. Since the paths in $\DDD$ are all vertex disjoint, it follows that $|Q_{r'}| \geq r/2 -1$, and hence $\lambda \geq r/2-1$, which implies that $r \leq 2(\lambda+1)$. This contradicts our assumption that $r \geq 2\lambda+3$.
\end{proof}

\begin{lemma}
\label{lem:auxextended}
$|{\cal M}|\leq g(\lambda)$, where $g(\lambda)=\Oh(c^{\lambda} \lambda^{3\lambda})$, for some constant $c > 1$.
\end{lemma}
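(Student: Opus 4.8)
The plan is to run an induction of the same flavour as in Lemma~\ref{lem:aux}, but to induct on the common length $\lambda$ of the paths in $\MMM$ rather than on $k$. This change of induction parameter is forced on us: as explained above, the color contraction operation, and with it the reduction to an irreducible $M$ that underpins Section~\ref{sec:structural}, is no longer available. Two facts make the length induction work in its place. First, Lemma~\ref{lem:separatorextended} produces a $u$-$v$ vertex-separator of $M$ of size $\Oh(\lambda)$ \emph{without} any irreducibility assumption. Second, any proper subpath of a path of length $\lambda$ has length at most $\lambda-1$, so that splitting the paths of $\MMM$ at this separator strictly decreases the parameter.

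First I would dispose of the trivial cases: if $|\MMM|\le 1$ there is nothing to prove, and if $\lambda=1$ then $\MMM$ contains at most one path (the edge $uv$, if it is present). For the inductive step, assume the statement for every $\lambda'<\lambda$, over all color-connected plane graphs, all triples of vertices, and all $\lambda'$-minimal color-disjoint $k$-valid sets. By property (i) of Definition~\ref{def:minimalpathsextended}, taken with intersection value $\emptyset$, at most one path of $\MMM$ satisfies $\chi(P)\cap\chi(w)=\emptyset$; discard it, accounting for a $+1$ in the final bound. Next, invoke Lemma~\ref{lem:separatorextended} to obtain a $u$-$v$ vertex-separator $S$ of $M$ with $|S|\le 2(\lambda+1)$, chosen minimal w.r.t.~containment, splitting $M$ into $M_u\ni u$ and $M_v\ni v$ with no edge between the two sides; in particular $u,v\notin S$. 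By color-disjointness, at most $|S|$ paths of $\MMM$ pass through a nonempty vertex of $S$; discard those too, accounting for $\Oh(\lambda)$ more. Let $\MMM^{\circ}$ be what remains. Every $P\in\MMM^{\circ}$ meets $S$ only in empty vertices and satisfies $\chi(P)\cap\chi(w)\neq\emptyset$; also, since we may assume $|\MMM|\ge 2$, both $u$ and $v$ are empty, as a color on either would lie on every path of $\MMM$.

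The heart of the argument is bounding $|\MMM^{\circ}|$ by recursion. Put $A=S\cup\{u,v\}$, so $|A|=\Oh(\lambda)$, and note that every vertex of $A$ is empty on every path of $\MMM^{\circ}$. For $P\in\MMM^{\circ}$, decompose $P$ into the maximal subpaths $P_1,\dots,P_q$ whose endpoints lie in $A$ and whose interiors avoid $A$; since $S$ separates $u$ and $v$ in $M$, we have $q\ge 2$, each $|P_i|$ lies in $\{1,\dots,\lambda-1\}$, and $\sum_i|P_i|=\lambda$. Fix a color $c_P\in\chi(P)\cap\chi(w)$; its witness on $P$ is a colored vertex, hence not in $A$, hence interior to a unique one of the subpaths, $P_{i(P)}$, and so $\chi(P_{i(P)})\cap\chi(w)\neq\emptyset$. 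Assign $P$ to the bucket indexed by the two endpoints of $P_{i(P)}$ in $A$ together with the integer $|P_{i(P)}|$; there are at most $|A|^2(\lambda-1)=\Oh(\lambda^3)$ buckets. For a fixed bucket with endpoints $a,b$ and length $\lambda'<\lambda$, the assigned subpaths form a set of pairwise color-disjoint $a$-$b$ paths of length $\lambda'$ in $G-w$, each with nonempty intersection with $\chi(w)$; consequently their traces on $\chi(w)$ are pairwise distinct and nonempty, so this family satisfies property (i) of Definition~\ref{def:minimalpathsextended} and is therefore a $\lambda'$-minimal set w.r.t.~$w$ (it is $k$-valid because each $P_{i(P)}$ is a subpath of a $k$-valid path). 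By the induction hypothesis its size is at most $g(\lambda')\le g(\lambda-1)$. Moreover $P\mapsto P_{i(P)}$ is injective on $\MMM^{\circ}$: if two core paths $P,P'$ yielded the same subpath $Q$ then $\chi(Q)\subseteq\chi(P)\cap\chi(P')=\emptyset$, contradicting $\chi(Q)\cap\chi(w)\neq\emptyset$. Summing over buckets gives $|\MMM^{\circ}|\le\Oh(\lambda^3)\,g(\lambda-1)$, hence $|\MMM|\le 1+2(\lambda+1)+\Oh(\lambda^3)\,g(\lambda-1)=\Oh(\lambda^3)\,g(\lambda-1)$; unrolling this recurrence with $g(1)=\Oh(1)$ yields $g(\lambda)=\Oh\!\big(c^{\lambda}(\lambda!)^3\big)=\Oh(c^{\lambda}\lambda^{3\lambda})$ for a suitable constant $c>1$.

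The step I expect to be most delicate is exactly this replacement of the irreducibility-based parts of Lemma~\ref{lem:aux} --- in particular its handling of the ``one-sided'' path groups through Lemma~\ref{lem:externalpaths}, which needs $M$ irreducible and whose recursion decreased $k$ rather than the length. The resolution is to drop the group distinction and route every path of $\MMM^{\circ}$ through one uniform recursion keyed by a pair of separator/terminal vertices and a strictly smaller length; the care lies in (i) confirming that each per-bucket family is genuinely $\lambda'$-minimal, which works because $\lambda$-minimality in Definition~\ref{def:minimalpathsextended} demands only property (i), and that property follows from color-disjointness once every core path is known to meet $\chi(w)$, and (ii) guaranteeing that no path is lost, i.e.\ that the witness subpath $P_{i(P)}$ is always a genuine, colored, strictly shorter subpath with both endpoints in $A$. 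The remaining steps are bookkeeping and the routine unrolling of the recurrence above.
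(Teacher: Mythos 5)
Your proof is correct and takes essentially the same route as the paper's: induction on $\lambda$, the $\Oh(\lambda)$-size $u$-$v$ separator from Lemma~\ref{lem:separatorextended}, discarding the single empty-trace path and the $\Oh(\lambda)$ paths through nonempty separator vertices, splitting each remaining path at $S\cup\{u,v\}$ into strictly shorter subpaths bucketed by endpoint pair and length, and unrolling the recurrence $g(\lambda)=\Oh(\lambda^3)\,g(\lambda-1)$. The only (harmless) deviation is that you choose the witness subpath to carry a color of $\chi(w)$ rather than an external color as the paper does, which if anything makes the check that each bucket is a $\lambda'$-minimal color-disjoint family (distinct, nonempty traces on $\chi(w)$) more immediate.
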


\begin{proof}
 As in the proof of Lemma~\ref{lem:separatorextended}, Definition~\ref{def:minimalpathsextended} implies that Observation~\ref{obs:internalpath} holds, and hence, at most one path in $\MMM$ contains only internal colors w.r.t.~$w$ in $G-w$. Therefore, we upper bound the number of paths in $\MMM$ that each contains at least one external color to $w$ in $G-w$, and add 1 to $g(\lambda)$ at the end. Henceforth, we shall assume that every path in $\MMM$ contains a color that is external to $w$ in $M$.

 The proof is by induction on $\lambda$, over every color-connected plane graph $G$, every triplet of vertices $u, v, w$ in $G$, and every $\lambda$-minimal set $\MMM$ w.r.t.~$w$ in $G-w$ of $k$-valid pairwise color-disjoint $u$-$v$ paths.  If $\lambda=1$, then $|\MMM| \leq 1 \leq g(1)$, if we choose $g(1)$ to be at least $1$.

Suppose, by the inductive hypothesis, that for any $1 \leq i < \lambda$, we have $|\MMM| \leq g(i)$.  By Lemma~\ref{lem:separatorextended}, $M$ has a $u$-$v$ vertex-separator $S$ satisfying $|S| \leq 2\lambda+2$. $S$ separates $M$ into two subgraphs $M_u$ and $M_v$ such that $u \in V(M_u)$, $v \in V(M_v)$, and there is no edge between $M_u$ and $M_v$. We partition $\MMM$ into two groups: (1) The set of paths in $\MMM$ that each contains a nonempty vertex in $S$; and (2) the set of remaining paths $\MMM_{\emptyset}$, which contains each path in $\MMM$ whose intersection with $S$ consists of only empty vertices.  Since the paths in $\MMM$ are pairwise color-disjoint, no nonempty vertex in $S$ can appear on two distinct paths from group (1). Therefore, the number of paths in group (1) is at most $|S| \leq 2\lambda+2$.

To upper bound the number of paths in group (2), suppose that $S=\{s_2, \ldots, s_{r-1}\}$, where $r \leq 2\lambda+4$, and extend $S$ by adding the two vertices $s_1=u$ and $s_r=v$ to form the set
$A=\{s_1, s_2, \ldots, s_{r}\}$. For every two (distinct) vertices $s_j, s_{j'} \in A$, $j, j' \in [r], j < j'$, we define a set of paths $\PPP_{jj'}$ in $G-w$ whose endpoints are $s_j$ and $s_{j'}$ as follows.
For each path $P$ in group (2), partition $P$ into subpaths $P_1, \ldots, P_q$ satisfying the property that the endpoints of each $P_i$, $i \in [q]$, are in $A$, and no internal vertex to $P_i$ is in $A$.
Since $P$ contains a vertex that contains an external color to $w$ in $G-w$, there exists an $i \in [q]$ such that $P_i$ contains a vertex that contains an external color to $w$ in $G-w$; pick any such $i \in [q]$, and assign $P_i$ to the set of paths $\PPP_{jj'}$ such that $s_j$ and $s_{j'}$ are the endpoints of $P_i$. Since each $P_i$ contains an external color that appears on $P$ and the paths in $\MMM$ are pairwise-color disjoint, it follows that the map that maps each $P$ to its $P_i$ is a bijection. Moreover, since each path $P$ must intersect $S \setminus \{u, v\}$, the length of each path in $\PPP_{jj'}$, for $j, j' \in [r], j < j'$, is strictly smaller than $\lambda$.

Fix a set $\PPP_{jj'}$.
For any fixed length $i' \in [\lambda-1]$, the subset of paths in $\PPP_{jj'}$ of length $i'$, $\PPP_{jj'}^{i'}$, have $s_j, s_{j'}$ as endpoints, and are pairwise color-disjoint. Moreover, each path in $\PPP_{jj'}^{i'}$ contains a vertex that contains an external color to $w$ in $G-w$. It follows from the previous statements that $\PPP_{jj'}^{i'}$ satisfies Definition~\ref{def:minimalpathsextended} w.r.t.~$G$ and $w$, and
hence $\PPP_{jj'}^{i'}$, $i' \in [\lambda-1]$, is an $i'$-minimal set of $k$-valid $s_j$-$s_{j'}$ paths in $G$ w.r.t.~$w$. By the inductive hypothesis, we have $|\PPP_{jj'}^{i'}| \leq g(i')$. Since the number of sets $\PPP_{jj'}$ is at most $\binom{2\lambda+4}{2}$, $i'\leq \lambda-1$, and noting that $g$ is an increasing function, the number of paths in group (2) is $\Oh(\lambda^2) \cdot (\lambda-1)\cdot g(\lambda-1)=\Oh(\lambda^3) \cdot g(\lambda-1)$.

It follows from the above that $|\MMM| \leq g(\lambda)$, where $g(\lambda)$ satisfies the recurrence relation $g(\lambda) \leq (2\lambda+2) + \Oh(\lambda^3) \cdot g(\lambda-1) = \Oh(\lambda^3) \cdot g(\lambda-1)$. Solving the aforementioned recurrence relation gives $g(\lambda) =\Oh(c^{\lambda} \lambda^{3\lambda})$, where $c > 1$ is a constant. Adding 1 to $g(\lambda)$ to account for the single path in $\MMM$ containing only internal colors w.r.t.~$w$ in $M$ yields the same asymptotic upper bound.
\end{proof}

\begin{theorem}
\label{thm:ext_main}
Let $G$ be a plane color-connected graph, let $u, v, w \in V(G)$, let $\lambda \in [\ell]$, and let ${\cal P}$ be a set of $\lambda$-minimal $k$-valid $u$-$v$ paths w.r.t.~$w$ in $G-w$. Then $|\PPP|\leq h(k,\lambda)$, where $h(k,\lambda)=\Oh(c^{\lambda k} \cdot k^k \cdot \lambda^{3\lambda k})$, for some constant $c > 1$.
\end{theorem}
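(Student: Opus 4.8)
The plan is to follow closely the proof of Theorem~\ref{thm:main}, but with $\lambda$ carried along as a fixed parameter and with Lemma~\ref{lem:auxextended} (the length-bounded analogue of Lemma~\ref{lem:aux}) playing the role of the base bound for color-disjoint families. Concretely, I would argue by induction on $k$, uniformly over every color-connected plane graph $G$, every triple $u,v,w\in V(G)$, every $\lambda\in[\ell]$, and every set $\PPP$ of $\lambda$-minimal $k$-valid $u$-$v$ paths w.r.t.~$w$ in $G-w$, proving $|\PPP|\le h(k,\lambda)$ for a function $h$ satisfying $h(1,\lambda)=g(\lambda)$ and $h(k,\lambda)\le k\cdot g(\lambda)\cdot h(k-1,\lambda)$, where $g(\lambda)=\Oh(c^{\lambda}\lambda^{3\lambda})$ is the function from Lemma~\ref{lem:auxextended}. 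Solving this recurrence gives $h(k,\lambda)=\Oh\!\big(k!\cdot g(\lambda)^{k}\big)=\Oh\!\big(c^{\lambda k}\cdot k^{k}\cdot\lambda^{3\lambda k}\big)$, exactly the claimed bound.

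For the base case $k=1$ I would first observe that a $\lambda$-minimal set $\PPP$ of $1$-valid paths is automatically pairwise color-disjoint: each path uses at most one color, so if two distinct paths $P_1,P_2\in\PPP$ used the same single color $c$ (or both used no color), then $\chi(P_1)\cap\chi(w)=\chi(P_2)\cap\chi(w)$, contradicting Definition~\ref{def:minimalpathsextended}(i). Hence any maximal pairwise color-disjoint subset $\MMM$ of $\PPP$ equals $\PPP$, and since all paths of $\MMM$ have length $\lambda$, are $1$-valid, pairwise color-disjoint, and pairwise distinct on $\chi(w)$, Lemma~\ref{lem:auxextended} gives $|\PPP|=|\MMM|\le g(\lambda)=h(1,\lambda)$.

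For the inductive step, assume the statement holds for all $1\le i<k$, and pick a maximal set $\MMM\subseteq\PPP$ of pairwise color-disjoint paths; as in the base case, $\MMM$ is itself a $\lambda$-minimal set of $k$-valid color-disjoint $u$-$v$ paths in $G-w$, so Lemma~\ref{lem:auxextended} yields $|\MMM|\le g(\lambda)$. Since each path of $\MMM$ is $k$-valid, the vertices of the paths of $\MMM$ carry at most $r\le k\cdot g(\lambda)$ distinct colors $c_1,\dots,c_r$, and by maximality of $\MMM$ every path of $\PPP$ contains some $c_i$; thus $\PPP=\bigcup_{i=1}^{r}\PPP_i$, where $\PPP_i$ collects the paths of $\PPP$ containing $c_i$. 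For each $i$ I would form $G_i$ from $G$ by deleting color $c_i$ from every vertex it appears on (keeping the same underlying plane graph, so $G_i-w$ has the same faces and $w$ lies in the same face), with color function $\chi'$, and let $\PPP_i'$ be the image of $\PPP_i$ under this deletion; then $|\PPP_i'|=|\PPP_i|$ (no vertex or edge is removed, so paths are neither merged nor shortened), every path of $\PPP_i'$ is a $(k-1)$-valid $u$-$v$ path of length exactly $\lambda$, and $G_i$ is still color-connected. The one verification that needs care is that $\PPP_i'$ is $\lambda$-minimal w.r.t.~$w$ in $G_i-w$: for $P_1',P_2'\in\PPP_i'$ one has $\chi(P_j)\cap\chi(w)=(\chi'(P_j')\cap\chi'(w))\cup(\{c_i\}\cap\chi(w))$ for $j\in\{1,2\}$ (using $c_i\in\chi(P_j)$ and $c_i\notin\chi'(P_j')$), so $\chi'(P_1')\cap\chi'(w)=\chi'(P_2')\cap\chi'(w)$ would force $\chi(P_1)\cap\chi(w)=\chi(P_2)\cap\chi(w)$, contradicting the $\lambda$-minimality of $\PPP$. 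Applying the induction hypothesis to each $G_i$ then gives $|\PPP_i'|\le h(k-1,\lambda)$, whence $|\PPP|\le\sum_{i=1}^{r}|\PPP_i|\le r\cdot h(k-1,\lambda)\le k\cdot g(\lambda)\cdot h(k-1,\lambda)$, closing the induction.

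I expect the main obstacle to be conceptual rather than computational: it is the realization that, unlike in the unbounded setting, one cannot use the color contraction operation (it can shorten paths and therefore destroys the fixed length $\lambda$), so the base case must instead exploit that a $1$-valid $\lambda$-minimal family is already color-disjoint, and the inductive step must be driven purely by color deletion. Once that is in place, everything else is bookkeeping: checking that deleting a color preserves $\lambda$-minimality (done above), that the length parameter $\lambda$ is genuinely unchanged, and that $g$ and $h(\cdot,\lambda)$ are nondecreasing in their first argument (immediate from the recurrences) so the hypothesis can be applied at $k-1$.
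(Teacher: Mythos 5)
Your proposal is correct and follows essentially the same route as the paper's proof: induction on $k$, with the color-disjoint bound of Lemma~\ref{lem:auxextended} handling the base case (where $\lambda$-minimality forces $\PPP=\MMM$), and the inductive step grouping $\PPP$ by the at most $k\cdot g(\lambda)$ colors appearing on a maximal color-disjoint subfamily, deleting one color to pass to a $(k-1)$-valid $\lambda$-minimal family, and solving the recurrence $h(k,\lambda)\le k\, g(\lambda)\, h(k-1,\lambda)$. Your explicit verification that color deletion preserves $\lambda$-minimality, and your observation that color contraction is unavailable in the length-bounded setting, are exactly the points the paper relies on (the latter being why Definition~\ref{def:minimalpathsextended} is used instead of the contraction-based machinery).
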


\begin{proof}
The proof is by induction on $k$. If $k=0$, then by minimality of $\PPP$, there can be at most one path in $\PPP$, namely the path consisting of empty vertices. If $k=1$, then by minimality of $\PPP$, we have $\PPP=\MMM$, and by Lemma~\ref{lem:auxextended}, $|\PPP|=\Oh(c^{\lambda} \lambda^{3\lambda})=\Oh(c^{\lambda k} \cdot k^k \cdot \lambda^{3\lambda k})$.

Assume by the inductive hypothesis that the statement of the lemma is true for $1 \leq i < k$. Let $\MMM$ be a maximal set of pairwise color-disjoint paths in $\PPP$. By Lemma~\ref{lem:aux}, $|\MMM| \leq g(\lambda)=\Oh(c^{\lambda} \lambda^{3\lambda})$. The number of colors contained in the vertices of $\MMM$ is at most $r \leq k\cdot g(\lambda)$. We group the paths in $\PPP$ into $r$ groups $\PPP_1, \ldots, \PPP_r$, such that all the paths in $\PPP_i$, $i \in [r]$, share the same color $c_i$, where $i \in [r]$, that is distinct from each color $c_j$ shared by the paths in $\PPP_j$, for $j \neq i$. We upper bound the number of paths in each $\PPP_i$, $i \in [r]$, to obtain an upper bound on $|\PPP|$.

%Let $H_i$, $i \in [r]$, be the subgraph of $H$ induced by the edges in $\PPP_i$.
%First note that if $c$ is an external color to $w$ in $H$, then there exists $i \in [r]$ such that $c$ is external to $w$ in $H_i$. This is because $c$ appears on a vertex $x$ that is external to $w$ in $H$, and hence, $x$ %must appear in an $H_i$, and must be external to $w$ in $H_i$.
%Fix an $H_i$, where $i \in [r]$.
%Let $H'_i$
Let $G_i$ be the graph obtained by removing color $c_i$ from each vertex in %$H_i$
$G$ that $c_i$ appears on, and let $\PPP'_i$ be the set of paths obtained from $\PPP_i$ by removing color $c_i$ from each vertex in $\PPP_i$ that $c_i$ appears on. Clearly, every path in $\PPP'_i$ is a $(k-1)$-valid $u$-$v$ path of length $\lambda$. Moreover, it is easy to verify that $\PPP'_i$ satisfies Definition~\ref{def:minimalpathsextended}, and hence, $\PPP'_i$ is $\lambda$-minimal w.r.t.~$w$ in %$H'_i$
$G_i-w$. By the inductive hypothesis, we have $|\PPP'_i| \leq h(k-1, \lambda)$.
%Therefore, the number of external colors in $H_i$ is at most $h(k-1) +1$, where the additional 1 is to account for color $c_i$, which could be external to $w$ in $H$.
It follows that the total number of paths in $\PPP$ is at most $h(k, \lambda)$, where $h(k, \lambda)$ satisfies the recurrence relation $h(k, \lambda) \leq r \cdot h(k-1, \lambda) \leq k \cdot g(\lambda) \cdot h(k-1, \lambda)$. Solving the aforementioned recurrence relations yields $h(k, \lambda)=\Oh((k\cdot g(\lambda))^k)=\Oh(c^{\lambda k} \cdot k^k \cdot \lambda^{3\lambda k})$.
\end{proof}

The result of Theorem~\ref{thm:ext_main} will be employed in the next section in the form presented in the following corollary:

\begin{corollary}
\label{cor:ext_main}
Let $G$ be a plane color-connected graph, let $w \in V(G)$, and let $\lambda \in [\ell]$. Let $G'$ be a subgraph of $G-w$, and let $u, v \in V(G')$. Every set ${\cal P}$ of $\lambda$-minimal $k$-valid $u$-$v$ paths in $G'$ w.r.t.~$w$ satisfies $|\PPP|\leq h(k,\lambda)$, where $h(k, \lambda)=\Oh(c^{\lambda k} \cdot k^k \cdot \lambda^{3\lambda k})$, for some constant $c > 1$.
\end{corollary}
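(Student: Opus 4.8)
The plan is to mimic the reduction used to derive Corollary~\ref{cor:main} from Theorem~\ref{thm:main}, the only extra point being that the lengths of the relevant paths are preserved, so that $\lambda$-minimality carries over. First I would build from $G$ an auxiliary plane color-connected graph $G''$ as follows: contract each connected component $K$ of $(G-w)-G'$ into a single vertex $z_K$, set $\chi''(z_K)$ to be the union of the color sets of the vertices of $K$ together with $k+1$ pairwise distinct brand-new colors, and put $\chi''(x)=\chi(x)$ for every other vertex $x$ (in particular for $w$, $u$, $v$, and every vertex of $G'$). Contraction preserves planarity, each fresh color is carried by a single vertex, and every old color that was connected in $G$ remains connected in $G''$ since each component $K$ is collapsed to one vertex; hence $G''$ is plane and color-connected, $w\in V(G'')$, and $u,v\in V(G'')$.

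Next I would observe that no $k$-valid path in $G''$ can pass through any of the new vertices $z_K$, because each $z_K$ carries more than $k$ colors. Consequently the $k$-valid $u$-$v$ paths in $G''-w$ are exactly the $k$-valid $u$-$v$ paths in $G'$, and each such path has the same vertex set, hence the same length, in $G'$ and in $G''-w$. Since Definition~\ref{def:minimalpathsextended} refers only to $\chi(w)$ and to the color sets of the paths in the family — all of which are unchanged by the construction — a set $\PPP$ of $\lambda$-minimal $k$-valid $u$-$v$ paths in $G'$ w.r.t.~$w$ is at the same time a set of $\lambda$-minimal $k$-valid $u$-$v$ paths in $G''$ w.r.t.~$w$.

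Finally, applying Theorem~\ref{thm:ext_main} to $\PPP$ in the plane color-connected graph $G''$, with the triple $u,v,w$ and the same value $\lambda\in[\ell]$, gives $|\PPP|\le h(k,\lambda)$ with $h(k,\lambda)=\Oh(c^{\lambda k}\cdot k^k\cdot \lambda^{3\lambda k})$ for some constant $c>1$, which is the claim. There is no genuine obstacle here; the single thing worth checking carefully is that collapsing the components of $(G-w)-G'$ does not change the lengths of the paths in question — and this holds precisely because those paths live inside $G'$ (so they avoid the collapsed part) while the $k+1$ fresh colors on each $z_K$ ensure that no other $k$-valid $u$-$v$ path in $G''$ enters the collapsed part either.
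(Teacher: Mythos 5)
Your proposal is correct and follows essentially the same route as the paper: contract each component of $(G-w)-G'$ to a single vertex carrying the union of its colors plus $k+1$ fresh colors, note that this preserves planarity, color-connectivity, and the paths of $\PPP$ together with their lengths and color sets (so $\lambda$-minimality w.r.t.~$w$ transfers), and then invoke Theorem~\ref{thm:ext_main}. The only cosmetic difference is that you additionally argue that no $k$-valid path can enter a contracted vertex, a point the paper states without elaboration.
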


\begin{proof}
Contract every connected component of $(G-w)-G'$ into a single vertex containing the union of the color-sets of the vertices in the component, and add $k+1$ new distinct colors to the resulting vertex. Denote the resulting graph by $G''$. Observe that the resulting graph is color-connected, and that every $k$-valid $u$-$v$ path of length $\lambda$ in $G'$ w.r.t.~$w$ is a $k$-valid $u$-$v$ path of length $\lambda$ in $G''$ w.r.t.~$w$, and vice versa. Therefore, every set ${\cal P}$ of $\lambda$-minimal $k$-valid $u$-$v$ paths in $G'$ w.r.t.~$w$ is also a set of $\lambda$-minimal $k$-valid $u$-$v$ paths in $G''$ w.r.t.~$w$. For any set ${\cal P}$ of $\lambda$-minimal $k$-valid $u$-$v$ paths w.r.t.~$w$ in $G'$, by applying Theorem~\ref{thm:ext_main} to ${\cal P}$ in $G''-w$, the corollary follows.
\end{proof}

\subsection{The Extended Algorithm}
\label{subsec:extendedalgo}

Let $(G, C, \chi, s, t, k, \ell)$ be an instance of \bcmor. The algorithm is a dynamic programming algorithm based on a tree decomposition of $G$. Let $({\cal V}, {\cal T})$ be a nice tree decomposition of $G$. By Assumption~\ref{ass:sat}, we can assume that $s$ and $t$ are nonadjacent empty vertices. We add $s$ and $t$ to every bag in $\TTT$, and from now on, we assume that $\{s,t\}\subseteq X_i$, for every bag $X_i \in \TTT$.
For a bag $X_i$, we say that $v\in X_i$ is \emph{useful} if $|\Col(v)|\le k$. Let $U_i$ be the set of all useful vertices in $X_i$ and let $\overline{U_i}=X_i\setminus U_i$.
We denote by $V_i$ the set of vertices in the bags of the subtree of $\mathcal{T}$ rooted at $X_i$. For any two vertices $u, v \in X_i$, let $G_{uv}^i=G[(V_i\setminus X_i) \cup \{u, v\}]$. We extend the notion of a $\lambda$-minimal set of $k$-valid $u$-$v$ paths w.r.t.~a vertex, developed in the previous section, to the set of vertices in a bag of $\TTT$.

\begin{definition}
\label{def:ext_minimalpathbag}
Let $\lambda \in [\ell]$. A set of $k$-valid $u$-$v$ paths $\PPP_{uv}$ in $G_{uv}^i$ is {\em $\lambda$-minimal} w.r.t.~$X_i$ if each path in  $\PPP_{uv}$ has length exactly $\lambda$ and there does not exist two paths $P_1, P_2 \in \PPP_{uv}$ such that $\chi(P_1) \cap \chi(X_i) = \chi(P_2) \cap \chi(X_i)$.
%\item[(ii)] there does not exist two paths $P_1, P_2 \in \PPP_{uv}$ such that $\chi(P_1) \subseteq \chi(P_2)$; and
%\item[(iii)] for any $P \in \PPP_{uv}$ there does not exist a $u$-$v$ path $P'$ in $G_{uv}^i$ such that $\chi(P') %\subsetneq \chi(P)$.
%\end{itemize}
\end{definition}

\begin{lemma}
	\label{lem:ext_one_path_bound}
	Let $X_i$ be bag, $u, v \in X_i$, $\lambda \in [\ell]$ and $\PPP_{uv}$ a $\lambda$-minimal set of $k$-valid $u$-$v$ paths w.r.t.~$X_i$ in $G_{uv}^i$. Then the number of paths in $\PPP_{uv}$ is at most $h(k,\lambda)^{|X_i|}$, where $h(k,\lambda)=\Oh(c^{\lambda k} \cdot k^k \cdot \lambda^{3\lambda k})$, for some constant $c >1$.
\end{lemma}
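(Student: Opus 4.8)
The plan is to mirror the proof of Lemma~\ref{lem:one_path_bound} verbatim, replacing the single appeal to Corollary~\ref{cor:main} by an appeal to its length-constrained analogue Corollary~\ref{cor:ext_main}. Indeed, all the genuine topological work has already been done in Lemma~\ref{lem:auxextended} and Theorem~\ref{thm:ext_main}; what remains is a bookkeeping argument on the traces of the paths on the colors of the bag.

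Concretely, I would write $X_i\setminus\{u,v\}=\{w_1,\dots,w_r\}$ with $r=|X_i|-2$, and for each $j\in[r]$ fix a $\lambda$-minimal set $\PPP_j$ of $k$-valid $u$-$v$ paths w.r.t.~$w_j$ in $G_{uv}^i$, chosen to be maximal in the sense that no $k$-valid $u$-$v$ path of length exactly $\lambda$ can be added to $\PPP_j$ while keeping it $\lambda$-minimal w.r.t.~$w_j$ (Definition~\ref{def:minimalpathsextended}). Since $w_j\notin V(G_{uv}^i)$ and $G_{uv}^i$ is a subgraph of $G-w_j$, Corollary~\ref{cor:ext_main} applies and yields $|\PPP_j|\le h(k,\lambda)$ for every $j\in[r]$. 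For $P\in\PPP_{uv}$ I then define its \emph{signature} to be the tuple $(\Col(P)\cap\Col(w_1),\dots,\Col(P)\cap\Col(w_r))$, and establish the two usual claims: (a) distinct paths of $\PPP_{uv}$ have distinct signatures --- if $P_1,P_2\in\PPP_{uv}$ agreed on every $\Col(w_j)$ then, since $u$ and $v$ (hence $\Col(u),\Col(v)$) lie on both paths, we would get $\Col(P_1)\cap\Col(X_i)=\Col(P_2)\cap\Col(X_i)$, contradicting the $\lambda$-minimality of $\PPP_{uv}$ w.r.t.~$X_i$ (Definition~\ref{def:ext_minimalpathbag}); and (b) for each $P\in\PPP_{uv}$ and each $j\in[r]$ there is some $P'\in\PPP_j$ with $\Col(P')\cap\Col(w_j)=\Col(P)\cap\Col(w_j)$ --- otherwise $\PPP_j\cup\{P\}$ would again be a $\lambda$-minimal set w.r.t.~$w_j$ (all members have length exactly $\lambda$, all are $k$-valid, and $P$ differs from each element of $\PPP_j$ on $\Col(w_j)$), contradicting the maximality of $\PPP_j$. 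Combining (a) and (b), the number of signatures --- and hence $|\PPP_{uv}|$ --- is at most $\prod_{j=1}^{r}|\PPP_j|\le h(k,\lambda)^{r}\le h(k,\lambda)^{|X_i|}$, which is the claimed bound.

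I do not expect a serious obstacle here: the only point requiring a moment's care is that the length constraint propagates cleanly through the argument. Both notions of $\lambda$-minimality (w.r.t.~a single vertex and w.r.t.~a bag) insist that every path have length exactly $\lambda$, so adjoining the length-$\lambda$ path $P$ to $\PPP_j$ in step (b) preserves this invariant, and Corollary~\ref{cor:ext_main} is exactly tailored to count such sets; no new structural input beyond it is needed.
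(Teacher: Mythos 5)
Your proposal is correct and follows essentially the same route as the paper's proof: fix a maximal $\lambda$-minimal set $\PPP_j$ w.r.t.\ each $w_j\in X_i\setminus\{u,v\}$, bound each $|\PPP_j|$ by $h(k,\lambda)$ via Corollary~\ref{cor:ext_main}, and then count signatures $(\chi(P)\cap\chi(w_1),\dots,\chi(P)\cap\chi(w_r))$, using the bag-minimality of $\PPP_{uv}$ for injectivity and the maximality of each $\PPP_j$ to show every coordinate value is realized in $\PPP_j$. The only (harmless) addition is your explicit remark that $w_j\notin V(G_{uv}^i)$, which the paper leaves implicit when invoking the corollary.
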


\begin{proof}
	Let $X_i\setminus \{u, v\}=\{w_1, \ldots, w_r\}$, where $r=|X_i|-2$. For each $w_j \in X_i$, $j \in [r]$, let $\PPP_j$ be a $\lambda$-minimal set of $k$-valid $u$-$v$ paths w.r.t.~$w_j$ in $G_{uv}^i$. Without loss of generality, we can pick $\PPP_j$ such that there is no $k$-valid $u$-$v$ path $P$ in $G_{uv}^i$ such that $\PPP_j \cup \{P\}$ is $\lambda$-minimal. From Corollary~\ref{cor:ext_main}, we have $|\PPP_j| \leq h(k,\lambda) =\Oh(c^{\lambda k} \cdot k^k \cdot \lambda^{3\lambda k})$, for some constant $c >1$. For each $P \in \PPP_{uv}$, and each $j \in [r]$, define $C_j = \chi(P) \cap \chi(w_j)$. Define the \emph{signature} of $P$ (w.r.t.~the colors of $w_1, \ldots, w_r$) to be the tuple $(C_1, \ldots, C_r)$.
	Observe that no two (distinct) paths $P_1, P_2 \in \PPP_{uv}$ have the same signature; otherwise, since $u$ and $v$ appear on both $P_1, P_2$, $\chi(P_1) \cap \chi(X_i)= \chi(P_2) \cap \chi(X_i)$, which contradicts the definition of the $\lambda$-minimality of $\PPP_{uv}$. For each $P \in \PPP_{uv}$, and each $j \in [r]$, there is a path $P' \in \PPP_{j}$ such that $\chi(P') \cap \chi(w_j) =C_j$. Otherwise,  $\PPP_{j}\cup\{P\}$ satisfy Definition~\ref{def:minimalpathsextended}, which contradicts our assumption  that there is no $k$-valid $u$-$v$ path $P$ in $G_{uv}^i$ such that $\PPP_j \cup \{P\}$ is $\lambda$-minimal. It follows that the number of signatures of paths in $\PPP_{uv}$ is at most
	$\prod_{j=1}^{r}|\PPP_j| \leq h(k, \lambda)^{|X_i|}$. Since no two distinct paths in $\PPP_{uv}$ have the same signature, it follows that $|\PPP_{uv}| \leq h(k, \lambda)^{|X_i|}$.
\end{proof}

We define the length of a sequence of paths (walks)  $\SSS$, denoted by $|\SSS|$, to be the sum of the lengths of the paths in $\SSS$.

\begin{definition}\rm
\label{def:representivesetextended}
Let $X_i$ be a bag and $\pi=(v_1, \sigma_1, v_2 \ldots, \sigma_{r-1}, v_r)$ a pattern for $X_i$. A set $\RRR_\pi$ of sequences of length at most $\ell$ that conform to $(X_i, \pi)$ is a {\em representative} set for $(X_i, \pi)$ if:
\begin{itemize}
	\item[(i)]  For every sequence $\SSS_1\in \RRR_\pi$, and for every sequence $\SSS_2\neq \SSS_1$ that conforms to $(X_i, \pi)$, if $\SSS_1 \preceq_i \SSS_2$ and $|\SSS_1| \leq |\SSS_2|$ then $\SSS_2 \notin \RRR_{\pi}$; and
%	\item[(ii)] for every sequence $\SSS\in\RRR_{\pi}$, and for every path $P \in \SSS$ between $v_j$ and $v_{j+1}$, %$j\in [r-1]$, there does not exist a $v_j$-$v_{j+1}$ path $P'$ in $G_{v_jv_{j+1}}^i$ such that $\chi(P') \subsetneq %\chi(P)$; and
	\item[(ii)] for every sequence $\SSS\notin\RRR_{\pi}$, $|\SSS|\le \ell$, that conforms to $(X_i, \pi)$ and satisfies that no two paths in $\SSS$ share a vertex that is not in $X_i$, there is a sequence $\WWW \in \RRR_{\pi}$ such that $\WWW \preceq_i \SSS$ and $|\WWW| \leq |\SSS|$.
\end{itemize}
\end{definition}

We mention that Lemma~\ref{lem:transitive} and Observation~\ref{obs:jtoi} extend as they are to the current setting.

\begin{lemma}
	\label{lem:representativesboundextended}
	Let $X_i$ be bag, $\pi$ a pattern for $X_i$, %$\lambda \in [\ell]$
	and $\RRR_{\pi}$ a representative set for $(X_i, \pi)$. Then the number of sequences in $\RRR_{\pi}$ is at most $h(k,\ell)^{|X_i|^2}$, where $h(k,\ell) =\Oh(c^{\ell k} \cdot k^k \cdot \ell^{3\ell k})$, for some constant $c >1$.
\end{lemma}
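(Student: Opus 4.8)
The plan is to mimic the signature argument used in the proof of Lemma~\ref{lem:representativesbound}, but to augment each signature with the \emph{lengths} of the individual paths, so that two sequences sharing a signature are forced to be comparable under $\preceq_i$ \emph{and} to have equal length; this is precisely what property~(i) of Definition~\ref{def:representivesetextended} needs in order to exclude one of them from $\RRR_{\pi}$.

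Concretely, I would write $\pi=(v_1=s,\sigma_1,v_2,\sigma_2,\dots,\sigma_{r-1},v_r=t)$ and note $r\le|X_i|$, since the $v_j$ are distinct vertices of $X_i$. For each $j\in[r-1]$ with $\sigma_j=1$ and each $\lambda\in\{0,1,\dots,\ell\}$, fix a $\lambda$-minimal set $\PPP_j^{\lambda}$ of $k$-valid $v_j$--$v_{j+1}$ paths w.r.t.~$X_i$ in $G_{v_jv_{j+1}}^i$ (Definition~\ref{def:ext_minimalpathbag}), chosen maximal in the sense that no $k$-valid $v_j$--$v_{j+1}$ path of length $\lambda$ can be added while preserving $\lambda$-minimality. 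By Lemma~\ref{lem:ext_one_path_bound}, $|\PPP_j^{\lambda}|\le h(k,\lambda)^{|X_i|}$, and since $h$ is nondecreasing in $\lambda$, there are at most $(\ell+1)\,h(k,\ell)^{|X_i|}$ distinct pairs $(\lambda,\chi(P)\cap\chi(X_i))$ arising from paths in $\bigcup_{\lambda}\PPP_j^{\lambda}$.

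The key step is to assign to each sequence $\SSS=(P_1,\dots,P_{r-1})$ conforming to $(X_i,\pi)$ (hence of length $\le\ell$) the \emph{signature} $\big((|P_1|,\chi(P_1)\cap\chi(X_i)),\dots,(|P_{r-1}|,\chi(P_{r-1})\cap\chi(X_i))\big)$, and to establish two facts. First, no two distinct sequences of $\RRR_{\pi}$ share a signature: equal signatures give $|\SSS_1|=|\SSS_2|$ and $\chi(\SSS_1)\cap\chi(X_i)=\chi(\SSS_2)\cap\chi(X_i)$, so $\chi(\SSS_1)\cup(\chi(\SSS_2)\cap\chi(X_i))=\chi(\SSS_1)$ and symmetrically, whence whichever of $\chi(\SSS_1),\chi(\SSS_2)$ is no larger puts the corresponding sequence $\preceq_i$ the other; together with the equal lengths, property~(i) then forbids keeping both. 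Second, each coordinate with $\sigma_j=1$ is realized by a path of $\PPP_j^{|P_j|}$: indeed $P_j$ is $k$-valid (as $\chi(P_j)\subseteq\chi(\SSS)$), so if its trace on $\chi(X_i)$ were not already realized, $\PPP_j^{|P_j|}\cup\{P_j\}$ would still be $|P_j|$-minimal, contradicting maximality. Counting, $|\RRR_{\pi}|$ is at most the number of signatures, which is at most $\big((\ell+1)\,h(k,\ell)^{|X_i|}\big)^{r-1}\le\big((\ell+1)\,h(k,\ell)\big)^{|X_i|^2}$, using $r-1\le|X_i|$ and $(\ell+1)^{|X_i|}\le(\ell+1)^{|X_i|^2}$. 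Finally, $(\ell+1)\,h(k,\ell)=\Oh(c^{\ell k}\cdot k^k\cdot\ell^{3\ell k})$ for a possibly larger constant $c$ (the factor $\ell+1$ is absorbed into $\ell^{3\ell k}$ up to a constant), so after renaming the constant this is $h(k,\ell)^{|X_i|^2}$, as claimed.

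I expect the only mildly delicate point to be the maximality argument for the second fact: one must check that adjoining $P_j$ still yields a valid $\lambda$-minimal set w.r.t.~$X_i$. Because $\lambda$-minimality (Definition~\ref{def:ext_minimalpathbag}) asks only for a common length and pairwise-distinct traces on $\chi(X_i)$ --- rather than the three conditions of ordinary minimality --- this check is immediate once the new trace is assumed distinct, and the whole argument is a cleaner, length-indexed version of the proof of Lemma~\ref{lem:representativesbound}.
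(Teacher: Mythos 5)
Your proof is correct and follows essentially the same route as the paper: a signature-based count using the $\lambda$-minimal sets from Lemma~\ref{lem:ext_one_path_bound}, with lengths incorporated so that two sequences sharing a signature are comparable under $\preceq_i$ and of equal length, hence excluded by property~(i). The only difference is bookkeeping: the paper first enumerates the length tuples $\Lambda$ via weak compositions (a $2^{|X_i|+\ell}$ factor) and counts color-trace signatures within each class, whereas you fold the per-path lengths into the signature itself (an $(\ell+1)$ factor per coordinate); both reorganizations are absorbed into the same final bound.
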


\begin{proof}
Let $\pi =(v_1=s, \sigma_1, v_2, \sigma_2, \dots, \sigma_{r-1}, v_r=t)$, and let
$\Lambda = (\lambda_1, \ldots, \lambda_{r-1})$ be such that, for each $j \in [r-1]$: (1) $\lambda_j=0$ if $\sigma_j=0$ and $\lambda_j\in [\ell]$ otherwise, and (2) $\sum_{j=1}^{r-1}\lambda_j\le \ell$.
For each $\lambda \in [\ell]$, the number of tuples $\Lambda = (\lambda_1, \ldots, \lambda_{r-1})$ satisfying $\sum_{j=1}^{r-1}\lambda_j = \lambda$ is the number of {\em weak compositions} of
$\lambda$ into $r-1$ parts, which is $\binom{\lambda+r-2}{r-2}$. It follows that the number of tuples $\Lambda = (\lambda_1, \ldots, \lambda_{r-1})$ satisfying $\sum_{j=1}^{r-1}\lambda_j \leq \ell$ is upper bounded by
$\binom{\ell+r-1}{r-2} \leq \binom{|X_i|+\ell}{r-2}\leq 2^{|X_i|+\ell}$. Therefore, if we upper bound the number of sequences in $\RRR_{\pi}$ corresponding to some fixed tuple $\Lambda$ by $h_1(k,\ell)^{|X_i|^2} = \Oh(c_1^{\ell k} \cdot k^k \cdot \ell^{3\ell k})$ for some constant $c_1 >1$, then we obtain $\RRR_{\pi}\le  2^{|X_i|+\ell}\cdot h_1(k,\ell)^{|X_i|^2} \le h(k,\ell)^{|X_i|^2}$, where $h(k,\ell) =\Oh(c^{\ell k} \cdot k^k \cdot \ell^{3\ell k})$, for some constant $c >1$. Therefore, for the rest of the proof, we fix $\Lambda = (\lambda_1, \ldots, \lambda_{r-1})$ and we let $\RRR_{\pi}^\Lambda$ be the subset of $\RRR_\pi$ such that for each sequence $\SSS = (P_1,\ldots, P_{r-1})$ in $\RRR_{\pi}^\Lambda$ it holds that the length of $P_j$ is $\lambda_j$ for each $j\in[r-1]$ such that $\sigma_j=1$.

For each $j\in [r-1]$ such that $\sigma_j=1$, let $\PPP_j$ be a $\lambda_j$-minimal set of $k$-valid $v_j$-$v_{j+1}$ paths w.r.t.~$X_i$. Without loss of generality, we can pick $\PPP_j$ such that there is no $k$-valid $u$-$v$ path $P$ of length $\lambda_j$ in $G_{v_jv_{j+1}}^i$ such that $\PPP_j \cup \{P\}$ is $\lambda_j$-minimal w.r.t.~$X_i$. From Lemma~\ref{lem:ext_one_path_bound}, it follows that $|\PPP_j|\le h_1(k,\lambda_j)^{|X_i|}$, where $h_1(k,\lambda_j)=\Oh(c_1^{\lambda_j k} \cdot k^k \cdot \lambda_j^{3\lambda_j k})$, for some constant $c_1 >1$.

	For a sequence $\SSS= (P_1,\ldots, P_{r-1})$ in $\RRR_{\pi}$ we define the \emph{signature} of $\SSS$ (w.r.t.~$X_i$) to be the tuple $(\chi(P_1)\cap\chi(X_i), \ldots, \chi(P_{r-1})\cap\chi(X_i))$. Observe that if $\SSS_1$ and $\SSS_2$ have the same signature w.r.t.~$X_i$, then $\chi(\SSS_1)\cup (\chi(\SSS_2)\cap \chi(X_i))=\chi(\SSS_1)$ and $\chi(\SSS_2)\cup (\chi(\SSS_1)\cap \chi(X_i))=\chi(\SSS_2)$; hence, either $\SSS_1\preceq_i \SSS_2$ or $\SSS_2\preceq_i \SSS_1$. Since all sequences in $\RRR_{\pi}^\Lambda$ have the same length, it follows from property (i) of representative sets that no two sequences in $\RRR_{\pi}$ have the same signature w.r.t.~$X_i$. Now let $\SSS = (P_1,\ldots, P_{r-1})$ be a sequence in $\RRR_{\pi}$ with a signature $(C_1,\ldots, C_{r-1})$. Note that if $C_j\neq\emptyset$, then $P_j$ is not the empty path, and hence $\sigma_j=1$ and the length of $P_j$ is $\lambda_j$. For each $j\in [r-1]$ such that $C_j\neq\emptyset$, there is a path $P\in \PPP_j$ such that $\chi(P)\cap\chi(X_i) = C_j$; otherwise, since $\chi(P_j)\cap\chi(X_i)= C_j$ and the length of $P_j$ is $\lambda_j$, $\PPP_j\cup \{P_j\}$ would also be a $\lambda_j$-minimal set of paths w.r.t.~$X_i$, which contradicts our choice of $\PPP_j$.
	 It follows that the number of signatures of sequences in $\RRR_{\pi}^\Lambda$ is at most
	$\prod_{j=1}^{r-1}|\PPP_j|\le \prod_{j=1}^{r-1}h_1(k,\lambda_j) \leq h_1(k,\ell)^{|X_i|^2}$. Since no two distinct sequences in $\RRR_{\pi}^\Lambda$ have the same signature, it follows that $|\RRR_{\pi}^\Lambda| \leq h_1(k,\ell)^{|X_i|^2}$ and $|\RRR_{\pi}|\le 2^{|X_i|+\ell} \cdot h_1(k,\ell)^{|X_i|^2}\le h(k,\ell)^{|X_i|^2}$, where $h(k,\ell) =\Oh(c^{\ell k} \cdot k^k \cdot \ell^{3\ell k})$, for some constant $c >1$.
	\end{proof}

For each bag $X_i$, we maintain a table $\Gamma_i$ that contains, for each
pattern for $X_i$, a representative set of sequences $\RRR_{\pi}$ for $(X_i, \pi)$.
For two vertices $u, v \in X_i$ and two $u$-$v$ paths $P, P'$ in $G_{uv}^{i}$, we say that $P'$ \emph{refines} $P$ if $\chi(P') \subseteq \chi(P)$. For two sequences $\SSS=(P_1, \ldots, P_{r-1})$ and $\SSS'=(P'_1, \ldots, P'_{r-1})$ that conform to $(X_i, \pi)$, we say that $\SSS'$ \emph{refines} $\SSS$ if each path $P'_j$ refines $P_j$, for $j \in [r-1]$.

\begin{lemma}
\label{lem:ext_refinetime}
Let $X_i$ be a bag, $\pi =(v_1=s, \sigma_1, v_2, \sigma_2, \dots, \sigma_{r-1}, v_r=t)$ a pattern for $X_i$, and $\WWW=(W_1, \ldots, W_{r-1})$ a sequence such that each $W_j$ is a walk between vertices $v_j$ and $v_{j+1}$ in $G_{v_jv_{j+1}}^{i}$ satisfying $\chi(W_j) \leq k$. Then in time $\Oh(r\cdot(|V(G)|+|V(E)|))$ we can compute a sequence $\SSS=(P_1, \ldots, P_{r-1})$, where for each $j \in [r-1]$, $P_j$ is an induced path between $v_j$ and $v_{j+1}$ in $G_{v_jv_{j+1}}^{i}$ such that $\chi(P_j) \subseteq \chi(W_j)$ and the length of $P_j$ is at most the length of $W_j$.
\end{lemma}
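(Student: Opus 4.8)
\textbf{Proof proposal for Lemma~\ref{lem:ext_refinetime}.}

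The plan is to process each walk $W_j$ independently, replacing it by an induced path $P_j$ between $v_j$ and $v_{j+1}$ that uses only a subset of the colors of $W_j$ and has length no larger than that of $W_j$. The key observation is that we do not need to search over subsets of colors (as was done in Lemma~\ref{lem:refinetime} in the treewidth algorithm, costing $\Oh^*(2^k)$); here we only need to \emph{shortcut} the walk $W_j$ into a path, and shortcutting can only remove vertices, hence can only remove colors and can only decrease the length.

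First I would, for each $j \in [r-1]$ with $\sigma_j=1$, take the walk $W_j = (v_j = x_0, x_1, \ldots, x_m = v_{j+1})$ and repeatedly eliminate repeated vertices: if some vertex $x_a = x_b$ with $a < b$, delete the segment $x_{a+1}, \ldots, x_b$, producing a strictly shorter walk between the same endpoints. Iterating until no vertex repeats yields a path $P'_j$ with $V(P'_j) \subseteq V(W_j)$, so $\chi(P'_j) \subseteq \chi(W_j)$, and with length at most the length of $W_j$. To then make $P'_j$ an \emph{induced} path, I would further shortcut along chords: as long as $P'_j = (v_j = y_0, \ldots, y_q = v_{j+1})$ has an edge $y_a y_b \in E(G)$ with $b > a+1$, replace the subpath $y_a, \ldots, y_b$ by the single edge $y_a y_b$; this again only removes vertices and shortens the path, so termination is guaranteed and the final $P_j$ is an induced $v_j$-$v_{j+1}$ path in $G^i_{v_jv_{j+1}}$ with $\chi(P_j) \subseteq \chi(W_j)$ and length at most that of $W_j$. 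For $j$ with $\sigma_j = 0$ we simply keep the empty path. Assembling these gives $\SSS = (P_1, \ldots, P_{r-1})$.

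For the running time, I would argue that both shortcutting phases for a single $W_j$ can be implemented in time linear in $|V(G)| + |E(G)|$: finding a repeated vertex (resp.\ a chord) can be done in one linear-time scan using a visited-marker array (resp.\ by checking adjacency along the current path), and each successful shortcut strictly decreases the number of vertices on the current walk, bounding the number of shortcut operations by $|V(G)|$; a careful amortized implementation (e.g.\ scanning the walk once, maintaining for each vertex the position of its last occurrence, and for the induced-path phase marking which path-vertices have already been confirmed to have no forward chord) brings the per-$W_j$ cost to $\Oh(|V(G)| + |E(G)|)$. Summing over the $r-1$ walks gives the claimed $\Oh(r \cdot (|V(G)| + |E(G)|))$ bound.

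The main obstacle I anticipate is purely one of bookkeeping in the time analysis: a naive implementation of repeated shortcutting is quadratic, so the linear-per-walk claim requires the standard amortization argument (each vertex is removed at most once, and the scan to detect the next shortcut does not re-examine already-finalized prefixes). I expect no conceptual difficulty — correctness is immediate since every operation is a restriction to a subset of vertices forming a subpath/subwalk with the same endpoints — and would present the amortized implementation in enough detail to justify the stated bound.
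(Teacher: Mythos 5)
Your proposal is correct, but it proves the lemma by a different mechanism than the paper. The paper's proof does no surgery on the walk at all: for each $j$ it forms the subgraph $G'$ of $G_{v_jv_{j+1}}^{i}$ induced by the vertices $x$ with $\chi(x) \subseteq \chi(W_j)$, observes that $W_j$ lives inside $G'$, and takes a shortest $v_j$-$v_{j+1}$ path in $G'$ via one BFS; shortestness gives the length bound and, since a shortest path has no chords and $G'$ is an induced subgraph, inducedness comes for free, so the linear per-walk bound needs no amortization. You instead shortcut the walk itself (first removing repeated vertices, then eliminating chords), which is equally valid: every operation keeps a subsequence of $W_j$ with the same endpoints, so $\chi(P_j)\subseteq\chi(W_j)$ and the length only drops, and your remark that chords can be taken in $E(G)$ is harmless because $G_{v_jv_{j+1}}^{i}$ is an induced subgraph of $G$. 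What each approach buys: yours yields the slightly stronger guarantee $V(P_j)\subseteq V(W_j)$, while the paper's path may leave $W_j$ but is obtained with a conceptually simpler, worry-free linear-time routine. If you want to keep your "stay inside the walk" guarantee but drop the amortized chord-elimination bookkeeping (which is the only delicate part of your write-up), note that you can simply run BFS in $G[V(W_j)]$: a shortest path there is automatically induced, giving the same conclusion with the same one-scan time analysis as the paper.
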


\begin{proof}
For each walk $W_j$, $j \in [r-1]$, we do the following.
We form the subgraph $G'$ from $G_{v_jv_{j+1}}^{i}$ by removing every vertex $x$ in $G_{v_jv_{j+1}}^{i}$ that does not satisfy $\chi(x) \subseteq \chi(W_j)$. Clearly, $W_j$ is a subgraph of $G'$, and hence there exists a $v_j$-$v_{j+1}$ path of length at most the length of $W_j$ in $G'$. We find a shortest $v_j$-$v_{j+1}$ path in $G'$ in time $\Oh(|V(G)|+|E(G)|)$ and set $P_j$ to this path. Clearly, the computation of $\SSS$ takes time $\Oh(r\cdot (|V(G)|+|E(G)|))$.
\end{proof}

For a bag $X_i$, pattern $\pi$ for $X_i$, and a set of sequences (of walks) $\RRR$ that conform to $(X_i, \pi)$, we define the procedure {\bf Refine()} that takes the set $\RRR$ and outputs a set $\RRR'$ of sequences of length at most $\ell$ that conform to $(X_i, \pi)$, and does not violate property (i) of Definition~\ref{def:representivesetextended}. First, for each sequence $\SSS$ in $\RRR$, we compute a sequence $\SSS'$
that refines $\SSS$ and has length at most the length of $\SSS$, and replace $\SSS$ with $\SSS'$ in $\RRR$. Afterwards, we initialize $\RRR' = \emptyset$, and order the sequences in $\RRR$ arbitrarily. We iterate through the sequences in $\RRR$ in order, and add a sequence $\SSS_p$ to $\RRR'$ if $|\SSS_p|\le \ell$, there is no sequence $\SSS$ already in $\RRR'$ such that $\SSS \preceq_i  \SSS_p$ and $|\SSS|\le |\SSS_p|$, and there is no sequence $\SSS_q \in \RRR$, $q > p$ (\ie, $\SSS_q$ comes after $\SSS_p$ in the order), such that $\SSS_q \preceq \SSS_p$ and $|\SSS_q| \le |\SSS_p|$.

\begin{lemma}
\label{lem:ext_refine}
Let $X_i$ be a bag, $\pi$ % =(v_1=s, \sigma_1, v_2, \sigma_2, \dots, \sigma_{r-1}, v_r=t)$
a pattern for $X_i$, and $\WWW$ be a set of sequences of walks that conforms to $(X_i, \pi)$.
%$\WWW=(W_1, \ldots, W_{r-1})$ a sequence of walks, where each $W_j$ is a walk between vertices $v_j$ and $v_{j+1}$ in $G_{v_jv_{j+1}}^{i}$ satisfying $\chi(W_j) \leq k$.
The procedure {\bf Refine()}, on input $\WWW$, produces a set of sequences of induced paths $\RRR'$ that conform to $(X_i, \pi)$ and satisfy property (i) of Definition~\ref{def:representivesetextended}, and such that for each sequence $\SSS \in \WWW$ with $|\SSS|\le \ell$, there is a sequence $\SSS' \in \RRR'$ satisfying $\SSS' \preceq_i \SSS$ and $|\SSS'|\le |\SSS|$. Moreover, the procedure runs in time $\Oh^{*}(|\WWW|^2)$.
\end{lemma}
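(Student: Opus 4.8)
The plan is to follow the structure of the proof of Lemma~\ref{lem:refine}, with two modifications dictated by the current setting: the exponential-time refinement subroutine is replaced by the polynomial-time one of Lemma~\ref{lem:ext_refinetime}, and the length functional $|\cdot|$ is carried along every comparison. Concretely, I would prove three things: (a) the output $\RRR'$ is a set of sequences of induced paths that conform to $(X_i,\pi)$, each of length at most $\ell$, and it satisfies property~(i) of Definition~\ref{def:representivesetextended}; (b) the covering property, that for each $\SSS\in\WWW$ with $|\SSS|\le \ell$ there is $\SSS'\in\RRR'$ with $\SSS'\preceq_i\SSS$ and $|\SSS'|\le|\SSS|$; and (c) the running-time bound $\Oh^{*}(|\WWW|^2)$, which is now free of the $2^k$ factor precisely because Lemma~\ref{lem:ext_refinetime} runs in polynomial time.

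First I would dispatch the refinement step. For each sequence of walks $\SSS\in\WWW$, invoke Lemma~\ref{lem:ext_refinetime} to produce a sequence $\hat{\SSS}$ of induced paths that refines $\SSS$ with $|\hat{\SSS}|\le|\SSS|$. Since $\chi(\hat{\SSS})\subseteq\chi(\SSS)$, Definition~\ref{def:preceq} gives $\hat{\SSS}\preceq_i\SSS$ immediately, the color count of $\hat{\SSS}$ is at most that of $\SSS$ so $\hat{\SSS}$ still conforms to $(X_i,\pi)$, and $|\hat{\SSS}|\le|\SSS|$. Using that Lemma~\ref{lem:transitive} extends verbatim, transitivity of $\preceq_i$ together with transitivity of $\le$ on lengths reduces the covering claim for the original $\SSS$ to the analogous claim for the refined sequences: it suffices to find $\SSS'\in\RRR'$ with $\SSS'\preceq_i\hat{\SSS}$ and $|\SSS'|\le|\hat{\SSS}|$. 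From this point I would work only with the family $\hat{\SSS}_1,\hat{\SSS}_2,\dots$ of refined sequences in the fixed processing order, and with the (reflexive, transitive) preorder $\sqsubseteq$ defined by "$\SSS\sqsubseteq\SSS'$ iff $\SSS\preceq_i\SSS'$ and $|\SSS|\le|\SSS'|$".

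Next I would analyze the filtering loop, which adds $\hat{\SSS}_p$ to $\RRR'$ exactly when $|\hat{\SSS}_p|\le\ell$, no sequence already in $\RRR'$ $\sqsubseteq$-dominates it, and no later-indexed sequence $\sqsubseteq$-dominates it. Property~(i) is then immediate: if distinct $\SSS_1,\SSS_2\in\RRR'$ satisfy $\SSS_1\sqsubseteq\SSS_2$, whichever of the two is processed first blocks the other (as an earlier member of $\RRR'$, or as a later-indexed dominator), a contradiction. For the covering claim, fix $\hat{\SSS}_p$ with $|\hat{\SSS}_p|\le\ell$ and consider its down-set $D_p=\{\hat{\SSS}_q : \hat{\SSS}_q\sqsubseteq\hat{\SSS}_p\}$, which is nonempty since $\hat{\SSS}_p\in D_p$. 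Assume for contradiction that $D_p\cap\RRR'=\emptyset$, and let $\hat{\SSS}_{q^{*}}$ be the element of $D_p$ of largest index. When $\hat{\SSS}_{q^{*}}$ was processed it was rejected, and each possible reason — a member of $\RRR'$ $\sqsubseteq$-dominating it, or a later-indexed sequence $\sqsubseteq$-dominating it — yields, via transitivity of $\sqsubseteq$ through $\hat{\SSS}_{q^{*}}\sqsubseteq\hat{\SSS}_p$, a further element of $D_p$ that is either already in $\RRR'$ or has index exceeding $q^{*}$; both contradict our choices. Hence some $\hat{\SSS}_{q'}\in D_p\cap\RRR'$, which is the desired witness. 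Finally, for the running time: each refinement is polynomial by Lemma~\ref{lem:ext_refinetime}, so the refinement phase costs $\Oh^{*}(|\WWW|)$; the filtering loop performs $\Oh(|\WWW|^2)$ pairwise comparisons, each computing and comparing color sets of size at most $k$ in polynomial time, for a total of $\Oh^{*}(|\WWW|^2)$.

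The main obstacle I anticipate is bookkeeping rather than conceptual: $\preceq_i$ is only a preorder (not antisymmetric), so it must be combined coherently with the length comparison throughout, and in particular in the two-sided filtering rule that blocks a sequence by both earlier-retained and later-indexed dominators. The "largest index in $D_p$" argument is exactly what reconciles the forward-looking part of that rule with the covering guarantee; pinning down that interaction precisely, and double-checking that $\hat{\SSS}$ still conforms to $(X_i,\pi)$ after refinement (color count, endpoint/internal-vertex conditions, length at most $\ell$ when retained), are the only delicate points.
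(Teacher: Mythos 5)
Your proposal is correct and follows essentially the same route as the paper's proof: refine each walk sequence via Lemma~\ref{lem:ext_refinetime} (which removes the $2^k$ factor), then apply the two-sided dominance filter and bound the time by $\Oh^{*}(|\WWW|^2)$. The paper's own proof is terser, leaving the covering and property~(i) verifications essentially implicit, whereas your maximal-index/down-set argument spells them out; this is a faithful elaboration rather than a different approach.
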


\begin{proof}
By Lemma~\ref{lem:ext_refinetime}, refining a sequence in $\WWW$ takes $\Oh(|V(G)|+|E(G)|)$ time, and hence, refining all sequences in $\WWW$ takes $\Oh^{*}(|\WWW|)$ time. After refining $\WWW$, we initialize $\RRR'$ to the empty set, and iterate through the sequences in $\WWW$, adding a sequence $\SSS_p \in \WWW$ to $\RRR'$ if: $|\SSS_p|\le \ell$, there is no sequence $\SSS$ already in $\RRR'$ such that $\SSS \preceq_i  \SSS_p$, and $|\SSS| \le  |\SSS_p|$. Clearly, this takes $\Oh^{*}(|\WWW|^2)$ time. Moreover, for a sequence $\SSS \in \WWW$ with $|\SSS|\le \ell$, the refined sequence $\SSS'$ we obtained from the application of Lemma~\ref{lem:ext_refinetime} to $\SSS$ satisfies $|\SSS'|\le \ell$. The lemma follows.
\end{proof}

If a bag $X_i$ is a leaf in $\mathcal{T}$, then $X_i=V_i=\{s, t\}$, and there are only two patterns $(s,0,t)$ and $(s,1,t)$ for $X_i$. Clearly, the only sequence that conforms to $(s,0,t)$ is the sequence $(())$ containing exactly one empty path. Moreover, there is  no edge $st \in E(G)$. Therefore, there is no sequence that conforms to $(s,1,t)$, and the following claim holds:
% is only one representative set $\RRR_{(s,0,t)}$, which is empty (because there is no edge $st \in G$). Therefore, we set $\Gamma_i=\{((s,0,t), \emptyset)\}$.
\begin{claim}\label{claim:ext_leaf}
	If a bag $X_i$ is a leaf in $\mathcal{T}$, then $\Gamma_i=\{((s,0,t), \{(())\} ), ((s,1,t), \emptyset)\}$ contains, for each
	pattern for $X_i$, a representative set for $(X_i, \pi)$.
\end{claim}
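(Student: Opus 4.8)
If a bag $X_i$ is a leaf in $\mathcal{T}$, then $\Gamma_i=\{((s,0,t), \{(())\} ), ((s,1,t), \emptyset)\}$ contains, for each pattern for $X_i$, a representative set for $(X_i, \pi)$.

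This is an easy base case, so the plan is short.

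The plan is to first observe that for a leaf bag we have $X_i = V_i = \{s,t\}$: by construction every bag is nice, so leaves satisfy $X_i=\emptyset$ in the original decomposition, but after we add $s$ and $t$ to every bag the leaf bag becomes exactly $\{s,t\}$, and the subtree rooted at $X_i$ is just $X_i$ itself, so $V_i=\{s,t\}$. Hence the only ordered subsequences of $\{s,t\}$ starting at $s$ and ending at $t$ give exactly two patterns, $\pi_0=(s,0,t)$ and $\pi_1=(s,1,t)$; both $s$ and $t$ are useful since they are empty (Assumption~\ref{ass:sat}), so both are legal patterns. Next I would enumerate the sequences conforming to each pattern. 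For $\pi_0$, a conforming sequence is a single ``path'' slot with $\sigma_1=0$, which by Definition~\ref{def:pattern} must be the empty path $()$; the color count is $0\le k$ and the length is $0\le \ell$, so the unique conforming sequence is $(())$. For $\pi_1$, a conforming sequence would be a single induced $s$-$t$ path whose internal vertices lie in $V_i\setminus X_i=\emptyset$, i.e.\ a direct edge $st$; but by Assumption~\ref{ass:sat} $s$ and $t$ are nonadjacent, so $st\notin E(G)$ and no such sequence exists.

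It then remains to verify the two defining properties of a representative set (Definition~\ref{def:representivesetextended}) for each pattern. For $\pi_1$: the set $\RRR_{\pi_1}=\emptyset$ trivially satisfies property (i) (vacuous, no sequences in it), and property (ii) is also vacuous since there is no sequence conforming to $(X_i,\pi_1)$ at all. For $\pi_0$: $\RRR_{\pi_0}=\{(())\}$ and the unique conforming sequence is $(())$ itself. Property (i) is vacuous because there is no $\SSS_2\neq\SSS_1$ conforming to $(X_i,\pi_0)$. Property (ii) is vacuous as well since every conforming sequence is already in $\RRR_{\pi_0}$. (Alternatively, if one wants to be fully explicit, the relation $\SSS\preceq_i\SSS$ holds trivially and $|\SSS|\le|\SSS|$, so even a self-check is satisfied.) Hence $\Gamma_i$ as defined contains, for each pattern for $X_i$, a representative set for $(X_i,\pi)$, which proves the claim.

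I do not anticipate any obstacle here; the only subtlety worth stating clearly is the bookkeeping that after padding the tree decomposition with $s$ and $t$, a ``leaf'' bag equals $\{s,t\}$ rather than $\emptyset$, and that the two assumptions in Assumption~\ref{ass:sat} ($s,t$ empty and nonadjacent) are exactly what pins down the two cases.
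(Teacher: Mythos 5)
Your proposal is correct and follows essentially the same route as the paper: identify that a leaf bag (after adding $s$ and $t$) equals $\{s,t\}$ with $V_i=\{s,t\}$, note the only two patterns, observe that $(())$ is the unique sequence conforming to $(s,0,t)$ while Assumption~\ref{ass:sat} (nonadjacency of $s$ and $t$) rules out any sequence conforming to $(s,1,t)$, and conclude that the representative-set properties hold vacuously. The paper states this more tersely, but the content is identical.
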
	

 We describe next how to update the table stored at a bag $X_i$, based on the tables stored at its children in $\mathcal{T}$.
We distinguish the following cases based on the type of bag $X_i$.

\begin{enumerate}
	\item[Case 1.] $X_i$ is an introduce node with child $X_j$. Let $X_i= X_j \cup\{v\}$.
\end{enumerate}
Clearly, for every pattern $\pi$ for $X_i$ that does not contain $v$, we can set $\Gamma_i[\pi] = \Gamma_j[\pi]$. $\Gamma_j[\pi]$ is a representative set for $(X_i, \pi)$ for the following reasons: (i) follows because every color in $\chi(X_i)\setminus \chi(X_j)$ does not appear in $V_j$, since $X_i$ is a vertex-separator in $G$ separating $v$ and $V_j$ and colors are connected. Hence, if two sequences in $\Gamma_j[\pi]$ that conform to $(X_i, \pi)$ contradict (i), they contradict (i) w.r.t.~$(X_j, \pi)$ as well, thus contradicting that $\Gamma_j[\pi]$ is a representative set for $(X_j, \pi)$. For property (ii), it is easy to observe that $v$ does not appear on any path between two vertices in $\pi$ having internal vertices in $V_i\setminus X_i$, and hence, this property is inherited from the child node $X_j$.

Now let $\pi=(v_1=s, \sigma_1, v_2, \sigma_2, \dots, \sigma_{r-1}, v_r=t)$ be a pattern such that $v_q=v$, $q\in \{2, \ldots,r-1\}$, and let $\pi'=(v_1, \sigma_1,\ldots v_{q-1}, 0, v_{q+1},\sigma_{q+1}, \ldots, \sigma_{r-1}, v_r)$.
Note that since $X_j$ is a separator between $v$ and $V_j$, the only possibility for a path from $v$ to a different vertex in $X_i$ to have all internal vertices in $V_i\setminus X_i$ is if it is a direct edge. Therefore, if $\sigma_{q-1}=1$ (resp. $\sigma_{q}=1$) then $v_{q-1}v$ (resp. $v_qv$) is an edge in $G$. Otherwise, there is no sequence that conforms to $(X_i, \pi)$.

We obtain $\Gamma_i[\pi]$ from $\Gamma_j[\pi']$ as follows. For every $\SSS' = (P'_1, P'_2, \ldots, P'_{r-2}) \in \Gamma_j[\pi']$, we replace the empty path corresponding to $0$ between $v_{q-1}$ and $v_{q+1}$ in $\pi'$ by two paths  $P_{q-1}, P_{q}$ such that $P_{q-1} = ()$ (resp. $P_{q} = ()$)  if $\sigma_{q-1}=0$ (resp. $\sigma_{q}=0$) and $P_{q-1} = (v_{q-1},v)$ (resp. $P_{q-1} = (v,v_{q})$) otherwise and we obtain $\SSS=(P'_1, \dots, P'_{q-2}, P_{q-1}, P_{q}, P'_q, \ldots, P'_{r-2})$. Denote by $\RRR_\pi$ the set of all formed sequences $\SSS$.  Finally, we set $\Gamma_i[\pi]= \mbox{\bf Refine}(\RRR_\pi)$. We claim that $\Gamma_i[\pi]$ is a representative set for $(X_i, \pi)$.

\begin{claim}\label{claim:ext_introduce}
	If $X_i$ is an introduce node with child $X_j$, and $\Gamma_{j}$ contains for each
	pattern $\pi'$ for $X_j$ a representative set for $(X_j, \pi')$, then $\Gamma_i[\pi]$ defined above is a representative set for $(X_i, \pi)$.
\end{claim}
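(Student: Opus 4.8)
The plan is to prove Claim~\ref{claim:ext_introduce} by verifying the two defining properties of a representative set from Definition~\ref{def:representivesetextended}: property (i), the minimality/incomparability condition, and property (ii), the domination condition. Property (i) follows directly from the construction, since $\Gamma_i[\pi]$ is the output of the procedure {\bf Refine()} applied to $\RRR_\pi$, and Lemma~\ref{lem:ext_refine} guarantees precisely that {\bf Refine()} produces a set of sequences of induced paths conforming to $(X_i,\pi)$ that satisfies property (i) of Definition~\ref{def:representivesetextended}. So the work is concentrated in property (ii).

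For property (ii), I would follow the template of the proof of Claim~\ref{claim:introduce} in the treewidth-only case, but carry the length bookkeeping throughout. Let $\SSS$ be a sequence that conforms to $(X_i,\pi)$ with $|\SSS|\le \ell$, such that no two paths in $\SSS$ share a vertex outside $X_i$, and suppose $\SSS\notin\Gamma_i[\pi]$. Since $X_j$ separates $v$ from $V_j$ in $G$, the only way a subpath of $\SSS$ between two vertices of $X_i$ can pass through $v$ internally is impossible—$v\in X_i$—so the paths of $\SSS$ incident to $v$ are exactly the two edges $v_{q-1}v$ and $vv_{q+1}$ dictated by the pattern. Gluing these two edges (when both are present) gives a sequence $\SSS'$ that conforms to $(X_j,\pi')$, has the same color set as $\SSS$ (the added colors on $v$ belong to $\chi(X_i)$ and are handled correctly), and has length $|\SSS'| = |\SSS| - (\sigma_{q-1}+\sigma_q) \le |\SSS| \le \ell$; moreover no two paths of $\SSS'$ share a vertex outside $X_j$. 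Since $\Gamma_j[\pi']$ is a representative set for $(X_j,\pi')$, there is a sequence $\SSS'_1 \in \Gamma_j[\pi']$ with $\SSS'_1 \preceq_j \SSS'$ and $|\SSS'_1| \le |\SSS'|$. Reinserting the two edges at $v$ turns $\SSS'_1$ into a sequence $\SSS_1 \in \RRR_\pi$ (conforming to $(X_i,\pi)$) with $|\SSS_1| = |\SSS'_1| + (\sigma_{q-1}+\sigma_q) \le |\SSS'| + (\sigma_{q-1}+\sigma_q) = |\SSS|$, and with $\chi(\SSS_1) = \chi(\SSS'_1) \cup (\text{colors of } v)$, $\chi(\SSS) = \chi(\SSS') \cup (\text{colors of } v)$; since $\SSS'_1 \preceq_j \SSS'$ and the extra colors of $v$ are common to both and lie in $\chi(X_i)$, one checks $\SSS_1 \preceq_i \SSS$ (using $X_i\setminus X_j=\{v\}$ and Observation~\ref{obs:jtoi} to pass from $\preceq_j$ to $\preceq_i$). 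Finally, $\SSS_1 \in \RRR_\pi$ and Lemma~\ref{lem:ext_refine} give a sequence $\SSS_2 \in \Gamma_i[\pi] = \mbox{\bf Refine}(\RRR_\pi)$ with $\SSS_2 \preceq_i \SSS_1$ and $|\SSS_2| \le |\SSS_1| \le |\SSS|$. By transitivity of $\preceq_i$ (Lemma~\ref{lem:transitive}, which extends to this setting), $\SSS_2 \preceq_i \SSS$, so $\SSS_2$ witnesses property (ii) for $\SSS$. If the pattern $\pi$ does not contain $v$, the simpler argument sketched in the paragraph before the claim applies verbatim, with the length bound trivially preserved since $\Gamma_i[\pi]=\Gamma_j[\pi]$.

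The main obstacle I anticipate is the careful handling of the length inequalities at every step: unlike the treewidth-only proof, where only the relation $\preceq_i$ had to be tracked, here each transfer between $\SSS$, $\SSS'$, $\SSS'_1$, $\SSS_1$, $\SSS_2$ must simultaneously preserve $\preceq_i$ (or $\preceq_j$) \emph{and} not increase length, and the insert/glue operations at $v$ change lengths by exactly $\sigma_{q-1}+\sigma_q$, so one must make sure these offsets cancel consistently on both sides. A secondary subtlety is confirming that $\SSS_1$ genuinely belongs to $\RRR_\pi$: this requires that $\SSS'_1$, being in $\Gamma_j[\pi']$, has the empty path in coordinate corresponding to the ``$0$'' between $v_{q-1}$ and $v_{q+1}$ in $\pi'$ (which is forced, since $\sigma$ there is $0$), so that the reinsertion of the two edges is well-defined and produces exactly one of the sequences enumerated in the construction of $\RRR_\pi$. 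Once these two points are checked, the rest is a routine set-algebra verification of $\SSS_1\preceq_i\SSS$, entirely analogous to the corresponding step in Claim~\ref{claim:introduce}.
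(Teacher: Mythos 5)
Your proposal is correct and follows essentially the same route as the paper's proof: property (i) is discharged by {\bf Refine()} (Lemma~\ref{lem:ext_refine}), and property (ii) is proved by projecting the violating sequence $\SSS$ to a sequence $\SSS'$ conforming to $(X_j,\pi')$, invoking the child's representative set to get $\SSS'_1\preceq_j\SSS'$ with $|\SSS'_1|\le|\SSS'|$, lifting back to $\SSS_1\in\RRR_\pi$, applying Lemma~\ref{lem:ext_refine} to obtain $\SSS_2$, and concluding by transitivity of $\preceq_i$ and $\le$, with the length offsets $\sigma_{q-1}+\sigma_q$ tracked exactly as needed. The only cosmetic slip is your appeal to Observation~\ref{obs:jtoi}, which transfers dominance from a larger bag to a smaller one and so does not apply here directly; the step $\SSS_1\preceq_i\SSS$ is instead justified, as in the paper, by the fact that $\SSS_1$ and $\SSS$ acquire the same colors at $v$ (all lying in $\chi(X_i)$) and that, by color-connectivity, colors of $\chi(v)\setminus\chi(X_j)$ cannot appear on $V_j\setminus X_j$ --- a point you essentially acknowledge, so this does not change the approach.
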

\begin{proof}

From the application of {\bf Refine()}, it is clear that $\Gamma_i[\pi]$ does not violate property (i) of the definition of representative sets. Assume now that there exists a sequence $\SSS \notin \Gamma_i[\pi]$ of length at most $\ell$ that conforms to $(X_i, \pi)$ and violates property (ii) of Definition~\ref{def:representivesetextended}. We define the sequence $\SSS'$ that conforms to $\pi'$, and is the same as $\SSS$ on all paths that $\pi$ and $\pi'$ share. Since no two paths in $\SSS$ share a vertex that is not in $X_i$ (since $\SSS$ violates (ii)), and all paths in $\SSS'$ are also in $\SSS$, it follows that no two paths in $\SSS'$ share a vertex that is not in $X_j$. Moreover, $|\SSS'|\le |\SSS|\le \ell$. Since $\Gamma_j[\pi']$ is a representative set for $(X_j, \pi')$, it follows that there exists $\SSS'_1 \in \Gamma_j[\pi']$ such that $\SSS'_1 \preceq_j \SSS'$ and $|\SSS'_1|\le |\SSS'|$. Let $\SSS_1$ be the sequence
obtained from $\SSS'_1$ and conforming to $(X_i, \pi)$. Then $\SSS_1 \in \RRR_\pi$ and it is easy to verify that $|\SSS_1|\le \ell$. Hence by Lemma~\ref{lem:ext_refine}, there is a sequence $\SSS_2 \in \Gamma_i[\pi]$ such that $\SSS_2 \preceq_i \SSS_1$ and $|\SSS_2|\le |\SSS_1|$. Since either both $\SSS_1$ and $\SSS$ contain $v$ or none of them does, we have $\SSS_1 \preceq_i \SSS$ and $|\SSS_1|\le |\SSS|$. By transitivity of $\preceq_i$ (Lemma~\ref{lem:transitive}) and of $\le$, it follows that $\SSS_2 \preceq_i \SSS$ and $|\SSS_2|\le |\SSS|$. This contradicts the assumption that $\SSS$ violates property (ii).
\end{proof}

\begin{enumerate}
	\item[Case 2.] $X_i$ is a forget node with child $X_j$. Let $X_i= X_j\setminus\{v\}$.
\end{enumerate}

Let $\pi = (s=v_1,\sigma_1,\ldots,\sigma_{r-1}, v_r=t)$ be a pattern for $X_i$. For $q\in[r-1]$ such that $\sigma_q=1$, we define  $\pi^q=(s=v'_1,\sigma'_1,\ldots, \sigma'_{r}, v'_{r+1}=t)$ to be the pattern obtained from $\pi$ by inserting $v$ between $v_q$ and $v_{q+1}$ and setting $\sigma'_q=\sigma'_{q+1}=1$. More precisely, we set $v'_p=v_p$ and $\sigma'_p=\sigma_p$  for $1\le p\le q$, $v'_{q+1}=v$ and $\sigma'_{q+1}=1$, and finally $v'_p=v_{p-1}$ and $\sigma'_p=\sigma_{p-1}$  for $q+2\le p\le r$.
We define $\RRR_{\pi}$ as follows:

{\small
\begin{equation*}
 \RRR_\pi  = \Gamma_j[\pi]  \cup  \{\SSS = (P_1, \ldots, P_{q-1}, P_q\circ P_{q+1}, P_{q+2} ,\ldots, P_{r})\mid (P_1, \ldots ,P_{r})\in \Gamma_j[\pi^q], q\in [r-1]\wedge \sigma_q=1 \}.
\end{equation*}
}

Finally, we set $\Gamma_i[\pi]= \mbox{\bf Refine}(\RRR_\pi)$ and we claim that $\Gamma_i[\pi]$ is a representative set for $(X_i, \pi)$.

\begin{claim}\label{claim:ext_forget}
	If $X_i$ is a forget node with child $X_j$, and $\Gamma_{j}$ contains for each
	pattern $\pi'$ for $X_j$ a representative set for $(X_j, \pi')$, then $\Gamma_i[\pi]$ defined above is a representative set for $(X_i, \pi)$.
\end{claim}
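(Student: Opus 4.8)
\textbf{Proof proposal for Claim~\ref{claim:ext_forget}.} The plan is to mirror the forget-node argument of Claim~\ref{claim:forget}, but threading the path-length bookkeeping through every step, since the extended notion of representative set (Definition~\ref{def:representivesetextended}) demands in property (ii) not only $\WWW\preceq_i\SSS$ but also $|\WWW|\le|\SSS|$. Property (i) is immediate: $\Gamma_i[\pi]=\textbf{Refine}(\RRR_\pi)$, and by construction (Lemma~\ref{lem:ext_refine}) {\bf Refine()} never retains two sequences $\SSS_1,\SSS_2$ with $\SSS_1\preceq_i\SSS_2$ and $|\SSS_1|\le|\SSS_2|$. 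So the whole work lies in verifying property (ii). Suppose, for contradiction, that some sequence $\SSS\notin\Gamma_i[\pi]$ with $|\SSS|\le\ell$ conforms to $(X_i,\pi)$, has no two paths sharing a vertex outside $X_i$, yet admits no $\WWW\in\Gamma_i[\pi]$ with $\WWW\preceq_i\SSS$ and $|\WWW|\le|\SSS|$. I would split into two cases according to whether some path of $\SSS$ uses the forgotten vertex $v$.

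If no path of $\SSS$ contains $v$, then $\SSS$ conforms to $(X_j,\pi)$, has no two paths sharing a vertex outside $X_j$, and $|\SSS|\le\ell$; since $\Gamma_j[\pi]$ is a representative set for $(X_j,\pi)$, there is $\SSS_1\in\Gamma_j[\pi]\subseteq\RRR_\pi$ with $\SSS_1\preceq_j\SSS$ and $|\SSS_1|\le|\SSS|$. As $X_i\subsetneq X_j$, the (extended) Observation~\ref{obs:jtoi} gives $\SSS_1\preceq_i\SSS$. By Lemma~\ref{lem:ext_refine} there is $\SSS_2\in\Gamma_i[\pi]$ with $\SSS_2\preceq_i\SSS_1$ and $|\SSS_2|\le|\SSS_1|$; transitivity of $\preceq_i$ (extended Lemma~\ref{lem:transitive}) and of $\le$ then yield $\SSS_2\preceq_i\SSS$ and $|\SSS_2|\le|\SSS|$, contradicting the choice of $\SSS$.

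If instead some path $P_q$ of $\SSS$ (from $v_q$ to $v_{q+1}$) passes through $v$, then since a path visits each vertex once, $v$ occurs on $P_q$ exactly once, and I would split $P_q$ at $v$ into the subpaths $P'_q=(v_q,\dots,v)$ and $P'_{q+1}=(v,\dots,v_{q+1})$ (both induced, being subpaths of an induced path), obtaining a sequence $\SSS'$ conforming to $(X_j,\pi^q)$ with $\chi(\SSS')=\chi(\SSS)$, $|\SSS'|=|\SSS|\le\ell$, and no two of its paths sharing a vertex outside $X_j$ (the only shared vertex between $P'_q$ and $P'_{q+1}$ is $v\in X_j$). Since $\Gamma_j[\pi^q]$ is a representative set for $(X_j,\pi^q)$, there is $\SSS'_1\in\Gamma_j[\pi^q]$ with $\SSS'_1\preceq_j\SSS'$ and $|\SSS'_1|\le|\SSS'|$. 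Let $\SSS_1$ be obtained from $\SSS'_1$ by gluing (via $\circ$) its two paths meeting at $v$; then $\SSS_1\in\RRR_\pi$, $\chi(\SSS_1)=\chi(\SSS'_1)$, and $|\SSS_1|=|\SSS'_1|\le|\SSS'|=|\SSS|\le\ell$. From $\SSS'_1\preceq_j\SSS'$, $\chi(\SSS'_1)=\chi(\SSS_1)$, $\chi(\SSS')=\chi(\SSS)$, and $X_i\subsetneq X_j$ one deduces $\SSS_1\preceq_i\SSS$. Finally Lemma~\ref{lem:ext_refine} supplies $\SSS_2\in\Gamma_i[\pi]$ with $\SSS_2\preceq_i\SSS_1$ and $|\SSS_2|\le|\SSS_1|$, and transitivity gives $\SSS_2\preceq_i\SSS$ with $|\SSS_2|\le|\SSS|$ — again contradicting the choice of $\SSS$.

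The main obstacle, such as it is, is not conceptual but bookkeeping: at each splitting/gluing step and at each invocation of {\bf Refine()} one must verify that the relation ``$|\cdot|\le|\cdot|$'' propagates alongside ``$\preceq_i$'', that splitting an induced path at an internal vertex and re-gluing are length-preserving, and that these operations yield sequences conforming to the intended patterns with unchanged color sets. The only genuinely new ingredient relative to Claim~\ref{claim:forget} is precisely that the extended Definition~\ref{def:representivesetextended} adds the length inequality to property (ii), so every appeal to the representative-set property at $X_j$ or $X_j^q$ must be checked to deliver $|\SSS'_1|\le|\SSS'|$ — which, by definition of those tables, it does.
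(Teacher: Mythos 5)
Your proposal is correct and follows essentially the same route as the paper's own proof: property (i) via \textbf{Refine()}, then the two-case contradiction argument (forgotten vertex $v$ absent from $\SSS$, or present on some $P_q$ which is split at $v$ to conform to $(X_j,\pi^q)$ and later re-glued with $\circ$), invoking the child's representative sets, Observation~\ref{obs:jtoi}, Lemma~\ref{lem:ext_refine}, and transitivity of $\preceq_i$, with the length inequality carried along at each step exactly as in the paper. No gaps to report.
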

\begin{proof}

It is straightforward to see that $\Gamma_i[\pi]$ satisfies property (i) due to the way procedure {\bf Refine()} works. Assume for a contradiction that there exists a sequence $\SSS$ that violates property (ii). We distinguish two cases.

First, suppose that no path in $\SSS$ contains $v$. Then $\SSS$ conforms to $(X_j, \pi)$. Since no two paths in $\SSS$ share a vertex that is not in $X_i$, and since $\Gamma_j[\pi]$ is a representative set, there exists $\SSS_1 \in \Gamma_j[\pi]$ such that $\SSS_1 \preceq_j \SSS$ and $|\SSS_1| \le |\SSS|\le \ell$.
Then $\SSS_1\in \RRR_\pi$, and hence by Lemma~\ref{lem:ext_refine}, $\Gamma_i[\pi]$ contains a sequence $\SSS_2$ such that
 $\SSS_2 \preceq_i \SSS_1$ and $|\SSS_2|\le |\SSS_1|$. Since $\SSS_1 \preceq_j \SSS$ and $X_i \subsetneq X_j$, it follows from Observation~\ref{obs:jtoi} that $\SSS_1 \preceq_i \SSS$. By transitivity of $\preceq_i$, it follows that $\SSS_2 \preceq_i \SSS$. Moreover, $|\SSS_2|\le |\SSS|$, which is a contradiction to the assumption that $\SSS$ violates property (ii).

Second, suppose that there is a path $P_q$ in $\SSS$ between $v_q$ and $v_{q+1}$ that contains $v$. We form a sequence $\SSS'$ from $\SSS$ by keeping every path $P \neq P_q$ in $\SSS$, and replacing $P_q$ in the sequence by the two subpaths of $P_q$, $P'_q = (v_q, \ldots, v)$ and $P'_{q+1} = (v, \ldots, v_{q+1})$. The sequence $\SSS'$ conforms to $(X_j, \pi^{q})$, and since no two paths in $\SSS$ share a vertex that is not in $X_i$, no two paths in $\SSS'$ share a vertex that is not in $X_j$. Moreover, it is straightforward that $|\SSS'|=|\SSS|$. Since $\Gamma_{j}[\pi^{q}]$ is a representative set for $(X_j, \pi^{q})$, it follows that there exists a sequence $\SSS'_1 \in \Gamma_{j}[\pi^{q}]$ such that $\SSS'_1 \preceq_j \SSS'$ and $|\SSS'_1|\le |\SSS'|$. Let $\SSS_1$ be the sequence conforming to $(X_i, \pi)$ obtained from $\SSS'_1$ by applying the operation $\circ$ to the two paths in $\SSS'_1$ that share $v$. Then $\SSS_1 \in \RRR_{\pi}$ and $|\SSS_1|=|\SSS'_1|\le \ell$. Therefore, by Lemma~\ref{lem:refine}, $\Gamma_i[\pi]$ contains a sequence $\SSS_2$ such that $\SSS_2 \preceq_i \SSS_1$. Since $\SSS'_1 \preceq_j \SSS'$, $\chi(\SSS') = \chi(\SSS)$, $\chi(\SSS'_1) = \chi(\SSS_1)$, and $X_i \subsetneq X_j$, it follows that $\SSS_1 \preceq_i \SSS$. By transitivity of $\preceq_i$, it follows that $\SSS_2 \preceq_i \SSS$. Moreover $|\SSS_2|\le |\SSS|$, which is a contradiction to the assumption that $\SSS$ violates property (ii).
\end{proof}

\begin{enumerate}
	\item[Case 3.] $X_i$ is a join node with children $X_j$, $X_{j'}$.
\end{enumerate}

Let $\pi = (s=v_1,\sigma_1, \dots, \sigma_{r-1}, v_r=t)$ be a pattern for $X_i$. Initialize $\RRR_{\pi}=\emptyset$. For every two patterns $\pi_1=(s=v_1,\tau_1, \dots, \tau_{r-1}, v_r=t)$ and $\pi_2=(s=v_1,\mu_1, \dots, \mu_{r-1}, v_r=t)$ such that $\sigma_q= \tau_q + \mu_q$, and for every two sequences $\SSS_1 =(P_{1}^{1}, \ldots, P_{1}^{r-1}) \in \Gamma_j[\pi_1]$ and $\SSS_2 =(P_{2}^{1}, \ldots, P_{2}^{r-1}) \in \Gamma_{j'}[\pi_2]$, we add the sequence $\SSS=(P_1, \ldots, P_{r-1})$ to $\RRR_{\pi}$, where $P_q = P_{1}^{q}$ if $P_{2}^{q}$ is the empty path, otherwise, $P_q = P_{2}^{q}$, for $q \in [r-1]$. We set $\Gamma_i[\pi]= \mbox{\bf Refine}(\RRR_\pi)$, and we claim that $\Gamma_i[\pi]$ is a representative set for $(X_i, \pi)$.

\begin{claim}\label{claim:ext_join}
	If $X_i$ is a join node with children $X_j$, $X_{j'}$, and $\Gamma_{j}$ (resp.~$\Gamma_{j'}$) contains for each
	pattern $\pi'$ for $X_j=X_{j'}=X_i$ a representative set for $(X_j, \pi')$ (resp.~$(\pi', X_{j'})$), then $\Gamma_i[\pi]$ defined above is a representative set for $(X_i, \pi)$.
\end{claim}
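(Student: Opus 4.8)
The plan is to mirror the join-node argument for \cmor{} (Claim~\ref{claim:join}), while threading through it the length bookkeeping demanded by Definition~\ref{def:representivesetextended}. Property (i) of a representative set is immediate: $\Gamma_i[\pi]$ is the output of {\bf Refine()} applied to $\RRR_\pi$, and by Lemma~\ref{lem:ext_refine} that output consists only of sequences of length at most $\ell$ and never retains a sequence $\SSS_2$ together with a retained $\SSS_1\neq\SSS_2$ for which $\SSS_1\preceq_i\SSS_2$ and $|\SSS_1|\le|\SSS_2|$. So all the work is in verifying property (ii).

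First I would assume, for contradiction, that some sequence $\SSS=(P_1,\ldots,P_{r-1})$ of length at most $\ell$ conforms to $(X_i,\pi)$, has no two paths sharing a vertex outside $X_i$, and yet is not dominated: no $\WWW\in\Gamma_i[\pi]$ satisfies $\WWW\preceq_i\SSS$ and $|\WWW|\le|\SSS|$. Since $X_i=X_j=X_{j'}$ is a vertex separator of $G$ between $V_j\setminus X_i$ and $V_{j'}\setminus X_i$, each nonempty $P_q$ is either a single edge with both endpoints in $X_i$, or a path whose internal vertices all lie in $V_j\setminus X_i$, or one whose internal vertices all lie in $V_{j'}\setminus X_i$. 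I would split $\SSS$ into $\SSS_1,\SSS_2$ conforming to patterns $\pi_1,\pi_2$ for $X_j,X_{j'}$ with $\sigma_q=\tau_q+\mu_q$, routing each edge-path and each ``$V_j$-side'' path to $\SSS_1$ and each ``$V_{j'}$-side'' path to $\SSS_2$ (the complementary entries being empty paths). Because each $P_q$ is routed to exactly one side, lengths split additively, $|\SSS|=|\SSS_1|+|\SSS_2|$; moreover $\chi(\SSS_1)\cup\chi(\SSS_2)=\chi(\SSS)$ and $\chi(\SSS_1)\cap\chi(\SSS_2)\subseteq\chi(X_i)$ (a color present on both sides is either already on $X_i$ or, being connected, reaches from $V_j\setminus X_i$ to $V_{j'}\setminus X_i$ and hence meets the separator $X_i$), and neither $\SSS_1$ nor $\SSS_2$ has two paths sharing a vertex outside $X_j,X_{j'}$.

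Next I would apply the representative-set hypotheses on the children: there exist $\SSS'_1\in\Gamma_j[\pi_1]$ with $\SSS'_1\preceq_j\SSS_1$, $|\SSS'_1|\le|\SSS_1|$, and $\SSS'_2\in\Gamma_{j'}[\pi_2]$ with $\SSS'_2\preceq_{j'}\SSS_2$, $|\SSS'_2|\le|\SSS_2|$. Combining them exactly as in the join construction gives a sequence $\SSS'$ conforming to $(X_i,\pi)$ with $|\SSS'|=|\SSS'_1|+|\SSS'_2|\le|\SSS|\le\ell$, so $\SSS'\in\RRR_\pi$; by Lemma~\ref{lem:ext_refine}, $\Gamma_i[\pi]$ then contains some $\SSS''$ with $\SSS''\preceq_i\SSS'$ and $|\SSS''|\le|\SSS'|$. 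Observation~\ref{obs:jtoi} (which extends verbatim, since $X_i=X_j=X_{j'}$) gives $\SSS'_1\preceq_i\SSS_1$ and $\SSS'_2\preceq_i\SSS_2$; Lemma~\ref{lem:join}, being a statement purely about color sets and $\preceq_i$, applies unchanged and yields $\SSS'\preceq_i\SSS$; and transitivity of $\preceq_i$ (Lemma~\ref{lem:transitive}) gives $\SSS''\preceq_i\SSS$. Since $|\SSS''|\le|\SSS'|\le|\SSS|$, this contradicts the choice of $\SSS$, so property (ii) holds and $\Gamma_i[\pi]$ is a representative set for $(X_i,\pi)$.

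I expect the main obstacle to be the length bookkeeping in the edge-case analysis: one must be careful that every $P_q$ — including direct edges with both endpoints in $X_i$ — is routed to exactly one of $\SSS_1,\SSS_2$ and contributes its full length there while the other side gets a genuine length-$0$ empty path, so that $|\SSS|=|\SSS_1|+|\SSS_2|$ and, after combining, $|\SSS'|\le\ell$, which is precisely what lets $\SSS'$ enter $\RRR_\pi$ and survive {\bf Refine()}. Everything else is a routine transcription of the \cmor{} join argument, since $\preceq_i$, Lemma~\ref{lem:transitive}, Observation~\ref{obs:jtoi}, and Lemma~\ref{lem:join} depend only on color sets and not on path lengths.
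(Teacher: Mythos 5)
Your proposal is correct and follows essentially the same route as the paper's proof: split $\SSS$ along the separator $X_i$ into $\SSS_1,\SSS_2$ (edges and $V_j$-side paths to one child, $V_{j'}$-side paths to the other, with lengths adding), invoke the children's representative sets, recombine into a member of $\RRR_\pi$, and conclude via Lemma~\ref{lem:ext_refine}, Observation~\ref{obs:jtoi}, Lemma~\ref{lem:join}, and transitivity of $\preceq_i$, carrying the length bounds through each step. The only difference is cosmetic: you spell out why $\chi(\SSS_1)\cap\chi(\SSS_2)\subseteq\chi(X_i)$ via color-connectivity, which the paper leaves implicit.
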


\begin{proof}
Clearly $\Gamma_i[\pi]$ satisfies property (i) due to the application of the procedure {\bf Refine()}. To argue that $\Gamma_i[\pi]$ satisfies properties (ii), suppose not, and let $\SSS =(P_1, \ldots, P_{r-1})$ be a sequence that violates property (ii). Notice that every path $P_q$, $q \in [r-1]$ is either an edge between two vertices in $X_i$, or is a path between two vertices in $X_i$ such that its internal vertices are either all in $V_j \setminus X_i$ or in  $V_{j'} \setminus X_i$; this is true because $X_i$ is a vertex-separator separating $V_j \setminus X_i$ from $V_{j'} \setminus X_i$ in $G$. Define the two sequences $\SSS_1 =(P_{1}^{1}, \ldots, P_{1}^{r-1})$ and $\SSS_2 =(P_{2}^{1}, \ldots, P_{2}^{r-1})$ as follows. For $q \in [r-1]$, if $P_q$ is empty then set both $P_{1}^{q}$ and $P_{2}^{q}$ to the empty path; if $P_q$ is an edge then set $P_{1}^{q} =P_q$ and $P_{2}^{q}$ to the empty path. Otherwise, $P_q$ is either a path in $G[V_j]$ or in $G[V_{j'}]$; in the former case set $P_{1}^{q} =P_q$ and $P_{2}^{q}$ to the empty path, and in the latter case set $P_{2}^{q} =P_q$ and $P_{1}^{q}$ to the empty path. Since no two paths in $\SSS$ share a vertex that is not in $X_i$, and $X_i=X_j=X_{j'}$, no two paths in $\SSS_1$ (resp.~$\SSS_2$) share a vertex that is not in $X_j$ (resp.~$X_{j'}$). Moreover, it is easy to see that $|\SSS_1|+|\SSS_2|=|\SSS|\le \ell$. Let $\pi_1=(s=v_1,\tau_1, \dots, \tau_{r-1}, v_r=t)$ and $\pi_2=(s=v_1,\mu_1, \dots, \mu_{r-1}, v_r=t)$ be the two patterns that $\SSS_1$ and $\SSS_2$ conform to, respectively, and observe that, for every $q \in [r-1]$, we have $\sigma_q= \tau_q + \mu_q$. Since $\Gamma_j[\pi_1]$ and $\Gamma_{j'}[\pi_2]$ are representative sets, it follows that there exist
$\SSS'_1 = (Y'_{1}, \ldots, Y'_{r-1})$ in $\Gamma_j[\pi_1]$ and $\SSS'_2 = (Z'_{1}, \ldots, Z'_{r-1})$ in $\Gamma_{j'}[\pi_2]$ such that $\SSS'_1 \preceq_j \SSS_1$, $|\SSS'_1| \le |\SSS_1|$ and $\SSS'_2 \preceq_{j'} \SSS_2$, $|\SSS'_2| \le |\SSS_2|$. Let $\SSS'=(P'_1, \ldots, P'_{r-1})$, where $P'_q = Y'_{q}$ if $Z'_{q}$ is the empty path, otherwise, $P_q = Z'_{q}$, for $q \in [r-1]$. The sequence $\SSS'$ conforms to $\pi$, is in $\RRR_\pi$,  and $|\SSS'|=|\SSS'_1|+|\SSS'_2|\le |\SSS|$. By Lemma~\ref{lem:ext_refine}, $\Gamma_i[\pi]$ contains a sequence $\SSS''$ such that $\SSS'' \preceq_i \SSS'$ and $|\SSS''| \le |\SSS'|$. From Observation~\ref{obs:jtoi}, since $X_i=X_j=X_{j'}$, from $\SSS'_1 \preceq_j \SSS_1$ and $\SSS'_2 \preceq_{j'} \SSS_2$ it follows that $\SSS'_1 \preceq_i \SSS_1$ and $\SSS'_2 \preceq_i \SSS_2$. Since $\chi(\SSS_1) \cup \chi(\SSS_2) = \chi(\SSS)$ and $\chi(\SSS'_1) \cup \chi(\SSS'_2) = \chi(\SSS')$, and since $\chi(\SSS_1) \cap \chi(\SSS_2) \subseteq \chi(X_i)$, by Lemma~\ref{lem:join}, it follows that $\SSS' \preceq_i \SSS$. Since $\SSS'' \preceq_i \SSS'$, by transitivity of $\preceq_i$, it follows that $\SSS'' \preceq_i \SSS$. Moreover $|\SSS''|\le |\SSS|$, which concludes the proof.
\end{proof}

We can now conclude with the following theorem:

\begin{theorem}
\label{thm:ext_treewidth}
There is an algorithm that on input $(G, C, \chi, s, t, k, \ell)$ of \bcmor{}, either outputs a $k$-valid $s$-$t$ path of length at most $\ell$ in $G$, or decides that no such path exists, in time $\Oh^{\star}(f(k,\ell)^{37\ell^2})$, where $f(k,\ell) =\Oh(c^{\ell k} \cdot k^k \cdot \ell^{3\ell k})$, for some constant $c >1$. Therefore, \bcmor{} parameterized by both $k$ and $\ell$ is \FPT.	
\end{theorem}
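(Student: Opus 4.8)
The plan is to follow the blueprint of Theorem~\ref{thm:treewidth}, but with two modifications: replace the colour-only ``minimal set of $k$-valid paths'' machinery of Section~\ref{sec:structural} by its length-aware counterpart from Subsections~\ref{subsec:structural} and~\ref{subsec:extendedalgo}, and carry the total length of every sequence of paths alongside its colour set throughout the dynamic program. First I would preprocess: by Lemma~\ref{lem:ext_distantvertex}, repeatedly contract an edge incident to any vertex at distance more than $\ell+1$ from $s$ (assigning the merged vertex the union of the two colour sets), obtaining an equivalent instance of radius at most $\ell+1$, hence of treewidth $\omega\le 3\ell+4$ by~\cite{rs}. Then compute in polynomial time a nice tree decomposition $(\VVV,\TTT)$ of width $\Oh(\ell)$ with $|\VVV|=\Oh(n)$, starting from a $3\omega/2$-approximate branch/tree decomposition~\cite{rat1,rs,rat2,k94}. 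Finally apply Assumption~\ref{ass:sat} (subtract $|\chi(s)\cup\chi(t)|$ from $k$, reject if it goes negative, output $(s,t)$ if $st\in E(G)$) so that $s,t$ are empty and nonadjacent, and add $s,t$ to every bag.

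Next I would run the bottom-up dynamic program, maintaining at each bag $X_i$ a table $\Gamma_i$ that stores, for every pattern $\pi$ for $X_i$ (Definition~\ref{def:pattern}), a representative set $\RRR_\pi$ for $(X_i,\pi)$ in the sense of Definition~\ref{def:representivesetextended}: a set of sequences of total length at most $\ell$ conforming to $(X_i,\pi)$ that dominate, simultaneously w.r.t.\ $\preceq_i$ and w.r.t.\ length, every conforming sequence whose paths share no vertex outside $X_i$. The base case (leaf bags) is Claim~\ref{claim:ext_leaf}. For an introduce, forget, or join node I would form the candidate set $\RRR_\pi$ exactly as in Cases~1--3 (inserting trivial edges; gluing two consecutive child paths via $\circ$; or merging two child sequences supported on opposite sides of the bag), then apply {\bf Refine()}; since {\bf Refine()} receives sequences of \emph{walks} and, by Lemmas~\ref{lem:ext_refinetime} and~\ref{lem:ext_refine}, returns sequences of induced paths of no larger length that still dominate w.r.t.\ $\preceq_i$ and length, Claims~\ref{claim:ext_introduce}, \ref{claim:ext_forget}, \ref{claim:ext_join} show that $\Gamma_i[\pi]$ is again a representative set. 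Thus the invariant propagates up $\TTT$ by induction on its height.

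At the root $X_r=\{s,t\}$ I would inspect $\Gamma_r[(s,1,t)]$. Since $G_{st}^r=G$, any $k$-valid $s$-$t$ path $P'$ of length at most $\ell$ whose colour set is minimal w.r.t.\ containment yields a one-element sequence $\SSS'=(P')$ that conforms to $(X_r,(s,1,t))$ and has no repeated internal vertex, so by property (ii) of representative sets $\Gamma_r[(s,1,t)]$ contains a sequence $\WWW\preceq_r\SSS'$ with $|\WWW|\le|\SSS'|\le\ell$; as $X_r$ consists of two empty vertices, $\WWW$ is a single $k$-valid $s$-$t$ path of length at most $\ell$. Hence the algorithm outputs such a path exactly when one exists, and correctly answers ``no'' when $\Gamma_r[(s,1,t)]=\emptyset$. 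For the running time: each bag has $\Oh^{*}(2^{\omega'}\,\omega'!)$ patterns with $\omega'=\Oh(\ell)$; at a join node each pattern admits $2^{\omega'}$ splits, and for a fixed split we iterate over all pairs of entries of the two child tables, each of size $\le h(k,\ell)^{\omega'^2}$ by Lemma~\ref{lem:representativesboundextended}, giving $\Oh^{*}(h(k,\ell)^{2\omega'^2})$ candidates, after which {\bf Refine()} costs $\Oh^{*}(h(k,\ell)^{4\omega'^2})$ by Lemma~\ref{lem:ext_refine}. Multiplying over the $\Oh(n)$ bags and substituting $\omega'\le 3\omega/2+3=\Oh(\ell)$ gives the claimed $\Oh^{\star}\big(f(k,\ell)^{37\ell^2}\big)$ with $f(k,\ell)=\Oh(c^{\ell k}\,k^k\,\ell^{3\ell k})$.

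The main obstacle I anticipate is twofold. Conceptually, one must verify that ``representative set'' is preserved by {\bf Refine()} combined with the \emph{join} construction: the two merged child sequences interact only through $\chi(X_i)$, and one needs the inclusion--exclusion identity behind Lemma~\ref{lem:join} upgraded so that length is tracked (the ``$|\WWW|\le|\SSS|$'' clause), which is precisely what Claim~\ref{claim:ext_join} does. Quantitatively, one must control the exponent: the $\ell$-dependence inside $h(k,\ell)$, the weak-composition factor $2^{|X_i|+\ell}$ hidden in Lemma~\ref{lem:representativesboundextended} (coming from enumerating how the length budget $\ell$ is distributed among the paths of a pattern), and the $3\omega/2$ approximation loss must all be bounded carefully to land at $37\ell^2$. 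The harder underlying combinatorial step — bounding the number of ``relevant'' length-$\lambda$, colour-restricted $u$-$v$ paths by a function of $k$ and $\lambda$ — is already supplied by Corollary~\ref{cor:ext_main} through Lemma~\ref{lem:ext_one_path_bound}, so what remains is disciplined bookkeeping over the tree decomposition.
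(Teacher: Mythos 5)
Your plan is correct and follows essentially the same route as the paper's own proof: the same preprocessing via Lemma~\ref{lem:ext_distantvertex} to force radius (hence treewidth) $\Oh(\ell)$, the same length-aware representative sets (Definition~\ref{def:representivesetextended}, Lemmas~\ref{lem:ext_one_path_bound} and~\ref{lem:representativesboundextended}), the same leaf/introduce/forget/join claims with \textbf{Refine()}, the same root argument, and the same join-dominated time analysis. The only cosmetic deviation is that the paper obtains the nice tree decomposition directly from the radius bound (width at most $3\ell+4$, so $\omega'\le 3\ell+7$) rather than through the $3\omega/2$-approximation you cite from Theorem~\ref{thm:treewidth}; your variant is still FPT but would inflate the exponent beyond the stated $37\ell^2$, so to match the claimed bound you should use the paper's direct construction.
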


\begin{proof}
If $d_G(s,t)>\ell$, then, by definition, there is no $s$-$t$ path of length at most $\ell$. Hence, we assume that $d_G(s,t)\le \ell$. By Lemma~\ref{lem:ext_distantvertex}, if there exists a vertex $v$ such that $d_G(s, v)>\ell+1$, we can contract any edge incident to $v$ and obtain an equivalent instance. The contraction of a single edge can be done in time polynomial in the size of the instance, and after applying Lemma~\ref{lem:ext_distantvertex} $|E|$ times, we would get a trivial instance. Moreover, from the proof of Lemma~\ref{lem:ext_distantvertex}, it follows that we can obtain a solution in the original instance from a solution in the contracted instance in polynomial time. Therefore, we can assume for the rest of the proof that we applied Lemma~\ref{lem:ext_distantvertex} exhaustively, and hence $G$ has radius at most $\ell+1$ and treewidth $\omega$ that is at most $3\ell +4$~\cite{rs}. Moreover, a tree decomposition of $G$ of width $\omega$ can be computed in (polynomial) time $\Oh(\ell \cdot n)$~\cite{ioannis}. From such a tree decomposition, in polynomial time we can compute a nice tree decomposition $(\VVV, \TTT)$ of $G$ whose width is at most $\omega \leq 3\ell+4$ and satisfying $|\VVV| = \Oh(|V(G)|)$~\cite{k94}.

The algorithm starts by removing the colors of $s$ and $t$ from $G$, and decrements $k$ by $|\chi(s)\cup\chi(t)|$ (see Assumption~\ref{ass:sat}). Afterwards, if $k<0$, the algorithm concludes that there is no $k$-valid $s$-$t$ path in $G$. If $st\in E(G)$ and $k\ge 0$, the algorithm outputs the path $(s,t)$. Now we know that $s$ and $t$ are not adjacent, and that $\chi(s)=\chi(t)=\emptyset$. The algorithm then adds $s$ and $t$ to every bag in $\TTT$, and executes the dynamic programming algorithm based on $(\VVV, \TTT)$, described in this section, to compute a table $\Gamma_i$ that contains, for each
bag $X_i$ in $\TTT$ and each pattern $\pi$ for $X_i$, a representative set $\RRR_{\pi}$ for $(X_i, \pi)$.
	
From Claims~\ref{claim:ext_leaf}, \ref{claim:ext_introduce}, \ref{claim:ext_forget}, \ref{claim:ext_join}, it follows, by induction on the height of the tree-decomposition $(\mathcal{V}, \TTT)$ (the base case corresponds to the leaves), that the root node $X_r$ contains a representative set $\Gamma_r[\pi]$ for the sequence $\pi=(s,1,t)$. If $\Gamma_r[\pi]$ is empty, the algorithm concludes that there is no $k$-valid $s$-$t$ path of length at most $\ell$ in $G$. Otherwise, noting that there is only one sequence $\SSS$ in the representative set $\Gamma_r[\pi]$ since $X_r=\{s, t\}$ and $s$ and $t$ are empty, the algorithm outputs the $k$-valid $s$-$t$ path $P$ of length at most $\ell$ formed by $\SSS$. The correctness follows from the following argument, which shows that if there is a $k$-valid $s$-$t$ path of length at most $\ell$ in $G$, then the algorithm outputs such a path. Suppose that $P'$ is a $k$-valid induced $s$-$t$ path of length at most $\ell$ and let $\SSS'=(P')$. Since $G_{st}^r=G$, it follows that $\SSS'$ conforms to $(X_r, \pi)$. Moreover $|\SSS'|\le \ell$ and  $\SSS'$ contains exactly one path that is induced, hence no two paths in $\SSS'$ share a vertex. Therefore, by property (ii) of representative sets, there exists a sequence $\SSS$ in $\Gamma_r[\pi]$ satisfying $\SSS \preceq_r\SSS'$ and $|\SSS|\le |\SSS'|$. Noting that a sequence in $\Gamma_r[\pi]$ must consist of a single $k$-valid $s$-$t$ path of length at most $\ell$, it follows that the algorithm correctly outputs such a path.

Next, we analyze the running time of the algorithm. We observe that among the three types of bags in $\TTT$, the worst running time is for a join bag. Therefore, it suffices to upper bound the running time for a join bag, and since $|\VVV|=\Oh(n)$, the upper bound on the overall running time would follow.

Consider a join bag $X_i$ with children $X_j, X_{j'}$. Let $\omega'$ be the width of $\TTT$ plus 1, which serves as an upper bound on the bag size in $\TTT$. Therefore, we have $\omega' \leq 3\ell+7$, where the (additional) plus 2 is to account for the vertices $s$ and $t$ that were added to each bag.  The algorithm starts by enumerating each pattern $\pi$ for $X_i$. The number of such patterns is at most $2^{\omega'} \cdot \omega' \cdot \omega'!=\Oh^{*}(2^{\omega'} \cdot \omega'!)$, where $\omega' \cdot \omega'!$ is an upper bound on the number of ordered selections of a subset of vertices from the bag, and $2^{\omega'}$ is an upper bound on the number of combinations for the $\sigma_i$'s in the selected pattern. Fix a pattern $\pi$ for $X_i$. To compute $\Gamma_i[\pi]$, the algorithm enumerates all ways of partitioning $\pi$ into pairs of patterns $\pi_1, \pi_2$ for the children bags; there are $2^{\omega'}$ ways of partitioning $\pi$ into such pairs, because for each $\sigma_i=1$ in $\pi$, the path between $v_i$ and $v_{i+1}$ is either reflected in $\pi_1$ or in $\pi_2$. For a fixed pair $\pi_1, \pi_2$, the algorithm iterates through all pairs of sequences in the two tables $\Gamma_j[\pi_1]$ and $\Gamma_{j'}[\pi_2]$. Since each table contains a representative set, by Lemma~\ref{lem:representativesboundextended}, the size of each table is $h_1(k,\ell)^{\omega'^2}$, where $h_1(k,\ell) =\Oh(c_1^{\ell k} \cdot k^k \cdot \ell^{3\ell k})$ for some constant $c_1>1$,
 and hence iterating over all pairs of sequences in the two tables can be done in $\Oh(h_1(k,\ell)^{2\omega'^2})$ time. From the above, it follows that the set $\RRR_{\pi}$ can be computed in time $2^{\omega'} \cdot \Oh(h_1(k,\ell)^{2\omega'^2})= \Oh(h_2(k,\ell)^{2\omega'^2})$, where $h_2(k,\ell) =\Oh(c_2^{\ell k} \cdot k^k \cdot \ell^{3\ell k})$, for some constant $c_2 >1$, which is also an upper bound on the size of $\RRR_{\pi}$. By Lemma~\ref{lem:ext_refine}, applying {\bf Refine()} to $\RRR_{\pi}$ takes time $\Oh^{*}(h_2(k,\ell)^{4\omega'^2})$. It follows from all the above that the running time taken by the algorithm to compute $\Gamma_i$ is $\Oh^{*}(h_2(k,\ell)^{4\omega'^2} \cdot 2^{\omega'} \cdot \omega'!)=\Oh^{*}(h_3(k,\ell)^{4\omega'^2})$, where $h_3(k,\ell) =\Oh(c_3^{\ell k} \cdot k^k \cdot \ell^{3\ell k})$, for some constant $c_3 >1$, and hence
the running time of the algorithm is $\Oh^{\star}(f(k,\ell)^{37\ell^2})$, where $f(k,\ell) =\Oh(c^{\ell k} \cdot k^k \cdot \ell^{3\ell k})$, for some constant $c >1$.
\end{proof}

\subsection{Applications}
\label{subsec:applications}
In this subsection, we describe some applications of Theorem~\ref{thm:ext_treewidth}. The first result is a direct consequence of Theorem~\ref{thm:ext_treewidth}.

\begin{corollary}
\label{cor:boundedlengthaapp}
For any computable function $h$, the restriction of \cmor{} to instances in which the length of the sought path is at most $h(k)$ is \FPT{} parameterized by $k$.
\end{corollary}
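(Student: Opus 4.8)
The plan is to obtain this as an immediate corollary of Theorem~\ref{thm:ext_treewidth}. Consider an instance $(G,C,\chi,s,t,k)$ of \cmor{} subject to the promise that if a $k$-valid $s$-$t$ path exists, then one of length at most $h(k)$ exists. Since a \cmor{} instance is, in particular, a color-connected planar graph with a color function, it is directly a valid input to \bcmor{}. I would simply set $\ell := h(k)$ and run the algorithm of Theorem~\ref{thm:ext_treewidth} on $(G,C,\chi,s,t,k,\ell)$, reporting its answer (and, in the positive case, the path it returns) as the answer to the \cmor{} instance.

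Correctness follows from the promise. If the \cmor{} instance is a yes-instance, then by the promise there is a $k$-valid $s$-$t$ path of length at most $\ell=h(k)$, so the \bcmor{} instance is a yes-instance and the algorithm returns such a path, which is in particular a $k$-valid $s$-$t$ path in $G$. Conversely, if the \cmor{} instance is a no-instance, then a fortiori there is no $k$-valid $s$-$t$ path of length at most $\ell$, and the \bcmor{} algorithm correctly reports failure. For the running time: Theorem~\ref{thm:ext_treewidth} runs in time $\Oh^{\star}(f(k,\ell)^{37\ell^2})$ for a computable $f$; substituting $\ell=h(k)$ and using that $h$ is computable, this becomes $g(k)\cdot |V(G)|^{\Oh(1)}$ for the computable function $g(k)=f(k,h(k))^{37h(k)^2}$, which is exactly the \FPT{} form in the single parameter $k$.

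The point worth emphasizing — and the only genuine subtlety — is \emph{why} we invoke the bounded-length variant (Theorem~\ref{thm:ext_treewidth}) rather than the treewidth variant (Theorem~\ref{thm:treewidth}): after contracting distant vertices one could bound the treewidth by a function of $h(k)$, but the $k$-valid path returned by the treewidth algorithm might exceed length $h(k)$, so nothing about the promised instance would be used. It is precisely the length-aware machinery of Section~\ref{sec:extension} (the notion of a $\lambda$-minimal set of paths and representative sets tracking path length) that lets the substitution $\ell=h(k)$ be legitimate. Beyond recording this observation and the promise semantics, no additional structural work is needed.
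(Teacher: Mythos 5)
Your proposal is correct and follows exactly the paper's route: the paper also obtains this corollary as a direct consequence of Theorem~\ref{thm:ext_treewidth} by setting $\ell=h(k)$, with the running time $\Oh^{\star}(f(k,h(k))^{37h(k)^2})$ being \FPT{} in $k$ alone since $h$ is computable. Your added remark on why the length-aware algorithm (rather than the treewidth-only one) is needed matches the discussion the paper itself gives when motivating Section~\ref{sec:extension}.
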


We note that the above restriction of \cmor{} is \NP-hard, as a consequence of (the proof of) Corollary~\ref{cor:pathwidthnphardness}.

Corollary~\ref{cor:boundedlengthaapp} directly implies Kroman \etal's results~\cite{korman}, showing that \gmor{} parameterized by $k$ is \FPT{} for unit-disk obstacles and for similar-size fat regions with constant overlapping number. Using Bereg and Kirkpatrick's result~\cite{kirkpatrick3}, the length of a shortest $k$-valid path for unit-disk obstacles is at most $3k$ (see also Lemma 3 in Korman \etal~\cite{korman}). By Corollary~2 in~\cite{korman},
the length of a shortest $k$-valid path for similar-size fat-region obstacles with constant overlapping number is linear in $k$. Corollary~\ref{cor:boundedlengthaapp} generalizes these \FPT{} results, which required quite some effort, and provides an explanation to why the problem is \FPT{} for such restrictions, namely because the path has length upper bounded by a function of the parameter. In particular, one may allow the connected obstacles to be of various shapes and sizes, provided that the length of the path is upper bounded by a function of the parameter.

The second application we describe is related to an open question posed in~\cite{lavalle}. For an instance $I=(G, C, \chi, s, t, k)$ of \cmor{}, and a color $c \in C$, define the \emph{intersection number} of $c$, denoted $\iota(c)$, to be the number of vertices in $G$ on which $c$ appears. Define the \emph{intersection number} of $G$, $\iota(G)$, as $\max\{\iota(c) \mid c \in C\}$.
Consider the following problem:

\paramproblem{\bicmor} {A planar graph $G$ such that $\iota(G) \leq i$; a set of colors $C$; $\chi: V \longrightarrow 2^{C}$; two designated vertices $s, t \in V(G)$; and $k, i \in \nat$}{Does there exist a $k$-valid $s$-$t$ path in $G$?} \\

Again, the above problem is \NP-hard, as a consequence of (the proof of) Corollary~\ref{cor:pathwidthnphardness}.

\begin{corollary}
\label{thm:boundedintersection}
\bicmor{} is \FPT{} parameterized by both $k$ and $i$.
\end{corollary}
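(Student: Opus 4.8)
The plan is to reduce \bicmor{} to \bcmor{} parameterized by both $k$ and $\ell$, and then invoke Theorem~\ref{thm:ext_treewidth}. The key observation is that when the intersection number $\iota(G)$ is bounded by a function of the parameters, a $k$-valid $s$-$t$ path can always be ``short-cut'' into a $k$-valid $s$-$t$ path whose length is bounded by a function of $k$ and $i$. First I would invoke Assumption~\ref{ass:sat} and Lemma~\ref{lem:contract} to assume the instance is such that $s,t$ are nonadjacent empty vertices and no two adjacent vertices are empty. Under this assumption, consider a $k$-valid $s$-$t$ path $P$ that is chosen to have the fewest colors, and among those, the shortest length; I would argue $P$ must be an induced path, and then bound its length.

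The core counting step: on an induced $k$-valid path $P = (s=v_0, v_1, \dots, v_r = t)$, every internal vertex is either empty or nonempty, and (since no two adjacent vertices are empty) empty vertices do not occur consecutively, so at least every other internal vertex of $P$ is nonempty. Each nonempty vertex carries a color from the at most $k$ colors used by $P$, and each such color $c$ appears on at most $\iota(G) \le i$ vertices of $G$, hence on at most $i$ vertices of $P$. Therefore the number of nonempty vertices on $P$ is at most $k \cdot i$, which forces $r \le 2ki + 1$ (counting the interleaved empty vertices and the endpoints). Thus if a $k$-valid $s$-$t$ path exists, there is one of length at most $\ell := 2ki+1$, a computable function of $k$ and $i$. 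So $(G, C, \chi, s, t, k)$ is a yes-instance of \bicmor{} if and only if $(G, C, \chi, s, t, k, 2ki+1)$ is a yes-instance of \bcmor{}.

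The reduction is clearly polynomial-time computable once the above preprocessing (via Lemma~\ref{lem:contract}) is done, and it maps the parameters $(k,i)$ to $(k, \ell)$ with $\ell$ a computable function of $k$ and $i$; this is an \FPT-reduction from \bicmor{} parameterized by $(k,i)$ to \bcmor{} parameterized by $(k,\ell)$. Applying Theorem~\ref{thm:ext_treewidth} then yields that \bicmor{} is \FPT{} parameterized by both $k$ and $i$, completing the proof. The only mild subtlety I anticipate is the bookkeeping around empty vertices — making sure that after the color-contraction preprocessing the ``no two adjacent empty vertices'' property really does hold on the whole graph (so that the interleaving argument is valid) and that an optimal path can be taken to be induced without increasing its color count; both follow from Lemma~\ref{lem:contract} and a standard short-cutting argument, so this is routine rather than a genuine obstacle.
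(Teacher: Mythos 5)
Your proposal is correct and follows essentially the same route as the paper: the paper's proof likewise observes that (after the standard preprocessing) the length of a shortest $k$-valid $s$-$t$ path is $\Oh(k\cdot i)$ because each of the at most $k$ colors appears on at most $\iota(G)\le i$ vertices, and then invokes Theorem~\ref{thm:ext_treewidth} for the parameterization by $k$ and $\ell$. Your write-up merely makes explicit the color-contraction/interleaving bookkeeping for empty vertices that the paper leaves implicit, which is a fine (and slightly more careful) rendering of the same argument.
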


\begin{proof}
Since the number of vertices in $G$ on which any color $c \in C$ appears is at most $\iota(G)$, the length of any $k$-valid $s$-$t$ path is $\Oh(k\cdot i)$. The result now follows from Theorem~\ref{thm:ext_treewidth}.
\end{proof}

The following corollary is a direct consequence of Corollary~\ref{thm:boundedintersection}:

\begin{corollary}
\label{cor:boundedintersection}
For any computable function $h$, \bicmor{} restricted to instances $(G, C, \chi, s, t, k)$ satisfying $\iota(G) \leq h(k)$ is \FPT{} parameterized by $k$.
\end{corollary}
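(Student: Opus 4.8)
The plan is to derive this as an immediate consequence of Corollary~\ref{thm:boundedintersection}, exploiting the fact that the promise $\iota(G) \le h(k)$ forces the second parameter $i$ of \bicmor{} to be bounded by a computable function of $k$. Given an instance $(G, C, \chi, s, t, k)$ satisfying $\iota(G) \le h(k)$, I would first compute $i := \iota(G)$ in polynomial time, by counting for each color $c \in C$ the number of vertices of $G$ on which $c$ appears and taking the maximum; this yields a value $i \le h(k)$. I would then invoke the \FPT{} algorithm of Corollary~\ref{thm:boundedintersection} on $(G, C, \chi, s, t, k, i)$, whose running time has the form $g(k, i)\cdot n^{\Oh(1)}$ for some computable $g$. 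Substituting $i \le h(k)$ bounds this by $g(k, h(k))\cdot n^{\Oh(1)}$, and since $g$ and $h$ are computable, $k \mapsto g(k, h(k))$ is a computable function of $k$ alone; hence the restricted problem is \FPT{} parameterized by $k$.

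Alternatively, one can argue straight from Theorem~\ref{thm:ext_treewidth} (or its packaged form Corollary~\ref{cor:boundedlengthaapp}) without passing through Corollary~\ref{thm:boundedintersection}. Since \bicmor{} is a restriction of \cmor{}, Assumption~\ref{ass:sat} and Lemma~\ref{lem:contract} let us assume that $s$ and $t$ are empty and that no two adjacent vertices of $G$ are empty. Consequently, on any $s$-$t$ path of length $\ell$, at least $(\ell-1)/2$ of the vertices are nonempty; as a $k$-valid path uses at most $k$ colors and each color appears on at most $\iota(G) \le h(k)$ vertices, such a path has at most $k\cdot h(k)$ nonempty vertices, so $\ell \le 2k\cdot h(k)+1$. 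Hence the given instance can equivalently be viewed as an instance of \cmor{} in which the sought path has length at most the computable bound $2k\cdot h(k)+1$, and Corollary~\ref{cor:boundedlengthaapp} applies, giving a running time of the form $f\!\left(k,\, 2k\cdot h(k)+1\right)\cdot n^{\Oh(1)}$, which is \FPT{} in $k$.

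There is no real obstacle here: all the technical work is already encapsulated in Theorem~\ref{thm:ext_treewidth} (equivalently Corollary~\ref{thm:boundedintersection}), and what remains is the routine observation that a bounded auxiliary parameter can be folded into a computable function of the main parameter. The only step that needs a line of care is the length bound used in the second argument, which relies on having first eliminated adjacent empty vertices via the color contraction operation of Lemma~\ref{lem:contract}, so that counting nonempty vertices faithfully controls the total path length.
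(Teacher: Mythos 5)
Your proposal is correct and follows essentially the same route as the paper, which states this corollary as a direct consequence of Corollary~\ref{thm:boundedintersection}: the promise $\iota(G)\le h(k)$ folds the second parameter into a computable function of $k$, and Corollary~\ref{thm:boundedintersection} itself is proved exactly as in your second argument, by bounding the length of a $k$-valid path by $\Oh(k\cdot \iota(G))$ and invoking Theorem~\ref{thm:ext_treewidth}. Your explicit use of Assumption~\ref{ass:sat} and Lemma~\ref{lem:contract} to eliminate adjacent empty vertices is a welcome bit of extra care that the paper leaves implicit in its $\Oh(k\cdot i)$ length bound.
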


Corollary~\ref{thm:boundedintersection} has applications pertaining to instances of \gcmor{} whose auxiliary graph is an instance of \bicmor{}. In particular, an interesting case that was studied corresponds to the case in which the obstacles are convex polygons, each intersecting at most a constant number of other polygons. The complexity of this problem was left as an open question in~\cite{lavalle,hauser}, and remains unresolved. The result in Corollary~\ref{cor:boundedintersection} subsumes this case, and even the more general case in which the obstacles are arbitrary convex obstacles satisfying that each obstacle intersects a constant number of other obstacles, as it is easy to see that the auxiliary graph of such instances will have a constant intersection number\footnote{Note that convexity is essential here, as otherwise, the intersection number of the auxiliary graph may be unbounded.}. In fact, we can even allow the intersection number to be any (computable) function of the parameter:

\begin{theorem}
\label{thm:boundedintersectiongeometric}
Let $h$ be a computable function. The restriction of \gcmor{} to any set of convex obstacles in the plane satisfying that each obstacle intersects at most $h(k)$ other obstacles, is \FPT{} parameterized by $k$.
\end{theorem}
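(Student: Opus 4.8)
The plan is to reduce \gcmor{} restricted to convex obstacles with bounded intersection number to \bicmor{}, and then invoke Corollary~\ref{cor:boundedintersection}. The first step is to build the auxiliary plane graph $G$ associated with the geometric instance, exactly as described in Section~\ref{sec:intro}: the vertices of $G$ are the (2-dimensional) regions of the plane subdivision determined by the obstacles, two regions are adjacent if they share a boundary arc, and each vertex is colored by the set of obstacles whose intersection forms (or contains) the corresponding region; $s$ and $t$ are the vertices corresponding to the regions containing the source and target points. As noted in the footnote of Section~\ref{sec:intro}, this subdivision is computable in polynomial time (thickening boundaries if necessary so that pairwise intersections become 2-D regions), and since each obstacle is connected, the resulting graph $G$ is color-connected. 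A $k$-valid $s$-$t$ path in $G$ corresponds exactly to an $s$-$t$ path in the plane crossing at most $k$ obstacles, so the geometric instance is a yes-instance if and only if $(G, C, \chi, s, t, k)$ is a yes-instance of \cmor{}.

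The key step is to bound the intersection number $\iota(G)$. Here convexity is essential. Fix a color $c$, i.e.\ an obstacle $O_c$. A vertex of $G$ on which $c$ appears corresponds to a region contained in $O_c$. Such a region is a cell of the arrangement obtained by cutting $O_c$ with the boundaries of the obstacles that intersect $O_c$; by hypothesis there are at most $h(k)$ such obstacles. I would argue that the number of cells of this arrangement inside the convex region $O_c$ is bounded by a function of $h(k)$ only: because $O_c$ is convex and each other obstacle $O'$ is convex, the set $O_c \cap O'$ is convex, so its boundary meets $\partial O_c$ in a controlled way, and more importantly the boundaries of two such convex intersection regions $O_c \cap O'$ and $O_c \cap O''$ cross $O_c$'s interior in a bounded number of arcs. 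Consequently the overlay of at most $h(k)$ convex bodies inside $O_c$ produces at most $O(h(k)^2)$ faces (this is the standard bound on the complexity of an arrangement of $m$ convex curves, each pair crossing a bounded number of times). Hence $\iota(c) = O(h(k)^2)$ for every $c$, so $\iota(G) \le h'(k)$ for a computable function $h'$ depending only on $h$ and $k$. This is exactly the condition required by Corollary~\ref{cor:boundedintersection}, which then yields that \cmor{} restricted to such $G$ is \FPT{} parameterized by $k$; pulling this back through the polynomial-time construction of $G$ gives the \FPT{} algorithm for the geometric problem.

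The main obstacle I anticipate is making the combinatorial bound on $\iota(c)$ fully rigorous, in particular handling degeneracies: obstacles need not contain their interiors, intersections of more than two obstacles may be lower-dimensional and require the boundary-thickening preprocessing, and the ``bounded number of crossings per pair of convex curves'' claim needs the convexity hypothesis applied carefully (two convex curves can in principle share boundary arcs, which the thickening step must be argued to break into finitely many transversal crossings without changing which obstacles any region intersects). Once the bound $\iota(c) \le h'(k)$ is established uniformly over all colors, the rest is a direct application of the already-proven \cmor{} result. One should also remark, as the paper does parenthetically, that convexity cannot be dropped: a single non-convex obstacle can be intersected by another obstacle in unboundedly many components, so $\iota(G)$ would be unbounded and Corollary~\ref{cor:boundedintersection} would not apply.
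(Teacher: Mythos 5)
Your overall route is the same as the paper's: construct the auxiliary plane graph, note that connectedness of the obstacles gives color-connectivity, bound the intersection number $\iota(G)$ of the auxiliary graph in terms of $h(k)$, and then invoke Corollary~\ref{cor:boundedintersection}. The paper itself does not prove the intersection-number bound --- it only asserts that convexity together with the bound on the number of pairwise intersecting obstacles ``easily'' yields it --- so what you are really doing is trying to supply that missing step, and this is where your argument has a genuine gap.

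The specific counting argument you give fails for arbitrary convex obstacles. Two boundaries of convex bodies can cross arbitrarily many times: take a unit disk $O_c$ and a convex $n$-gon $P$ whose vertices lie just outside $O_c$ while every edge crosses it; the boundaries cross $2n$ times, so there is no bound on crossings per pair that follows from convexity alone, and the $O(h(k)^2)$ arrangement bound you invoke does not apply. Worse, in this example $O_c \setminus P$ already consists of $n$ pairwise nonadjacent cells (the caps cut off by the edges), each of which is a separate vertex of the auxiliary graph colored by $O_c$; hence $\iota(c) = n+1$ even though $O_c$ intersects a single other obstacle, so no purely cell-counting argument can bound $\iota(G)$ by a function of $h(k)$ and $k$ alone, and these caps are not adjacent to one another, so color contraction does not remove them. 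Your argument would be fine if the obstacles additionally had bounded description complexity (unit disks, polygons with $O(1)$ sides, similar-size fat regions), since then each pair of boundaries does cross $O(1)$ times; but for the theorem as stated, with arbitrary convex obstacles, the passage from ``each obstacle meets at most $h(k)$ others'' to ``$\iota(G) \le h'(k)$'' is precisely the point that needs a different idea --- for example, an argument that some optimal $k$-valid path only needs length bounded in $k$, so that Theorem~\ref{thm:ext_treewidth} or Corollary~\ref{cor:boundedlengthaapp} can be applied directly --- and neither your proposed rigorization nor the paper's one-line justification closes it as written.
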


Whereas the complexity of the problem in the above theorem is open, the theorem settles its parameterized complexity by showing it to be in \FPT{}.
\fi

\ifshort \newpage \fi

\bibliography{ref}

\end{document}